\patchcmd{\@addmarginpar}{\ifodd\c@page}{\ifodd\c@page\@tempcnta\m@ne}{}{}
\renewcommand{\marginpar}[1]{}
\newtheorem{remark}[theorem]{Remark}
\newcommand{\generate}[2]{\langle #1 \rangle_{\mathsf{#2}}}
\newcommand{\mc}[1]{\mathcal{#1}}
\newcommand{\mf}[1]{\mathfrak{#1}}
\newcommand{\mbb}[1]{\mathbb{#1}}
\newcommand{\C}{\mathbb{C}}
\newcommand{\R}{\mathbb{R}}
\newcommand{\N}{\mathbb{N}}
\newcommand{\cV}{\mathcal{V}}
\newcommand{\bG}{\mathbf{G}}
\newcommand{\bK}{\mathbf{K}}
\newcommand{\iu}{\mathrm{i}\mkern1mu}
\newcommand{\down}{\shortdownarrow}
\newcommand{\diag}{{\rm diag}}
\newcommand{\relint}{{\rm relint}}
\newcommand{\supp}{{\rm supp}}
\newcommand{\conv}{\operatorname{conv}}
\newcommand{\linspan}{{\rm span}}
\newcommand{\SU}{{\rm SU}}
\newcommand{\GL}{{\rm GL}}
\newcommand{\U}{{\rm U}}
\newcommand{\tr}{{\rm tr}}
\newcommand{\spec}{{\rm spec}}
\newcommand{\Lat}{{\rm Lat}}
\newcommand{\Ad}{{\rm Ad}}
\newcommand{\ad}{{\rm ad}}
\newcommand{\stab}{\mathsf{stab}}
\newcommand{\reach}{\mathsf{reach}}
\newcommand{\derv}{\mathsf{derv}}
\newcommand{\todo}[1]{
\textcolor{blue}{TODO: #1}
}
\renewcommand{\epsilon}{\varepsilon}
\newcommand{\e}{\mathbf e}
\newcommand{\wkl}{{\mf w_{\sf{GKSL}}}}
\title{Reachability, Coolability, and Stabilizability of Open Markovian Quantum Systems with Fast Unitary Control}
\author{
Emanuel Malvetti\thanks{School of Natural Sciences, Technische Universit\"at M\"unchen, 85737 Garching, Germany, and Munich Centre for Quantum Science and Technology (MCQST) \& Munich Quantum Valley (MQV)} %, 80799 M{\"u}nchen, Germany}
\and 
Frederik vom Ende\thanks{Dahlem Center for Complex Quantum Systems, Freie Universit{\"a}t Berlin, 14195 Berlin, Germany} 
\and
Gunther Dirr\thanks{Department of Mathematics, University of W{\"u}rzburg, 97074 W{\"u}rzburg, Germany}
\and 
\mbox{Thomas Schulte-Herbr\"uggen\footnotemark[1]}}
\begin{document}

%\title{Reachability, Coolability, and Stabilizability of Markovian Quantum Systems with Fast Unitary Control}
%\author{Emanuel Malvetti, Frederik vom Ende, Gunther Dirr,\\\& Thomas Schulte-Herbr\"uggen}

\maketitle
%\centerline{\today} % remove before submit

\begin{abstract}
Open Markovian quantum systems with fast and full Hamiltonian control can be reduced to an equivalent control system on the standard simplex modelling the dynamics of the eigenvalues of the density matrix describing the quantum state. 
We explore this reduced control system for answering questions on reachability and stabilizability with immediate applications to the cooling of Markovian quantum systems. 
We show that for certain tasks of interest, the control Hamiltonian can be chosen time-independent.
--- The reduction picture is an example of dissipative interconversion 
between equivalence classes of states, where the classes are induced by fast controls.
\end{abstract}

\begin{keywords}
Markovian quantum systems, quantum control, cooling, bilinear control theory, reduced control system
\end{keywords}

\begin{MSCcodes}
81Q93, % Quantum control (Primary)
37N20, % Dynamical systems in other branches of physics (quantum mechanics, general relativity, laser physics)
15A18, % Eigenvalues, singular values, and eigenvectors
15A51, % Stochastic matrices
47A15, % Invariant subspaces
93B03 % Attainable sets
\end{MSCcodes}

\section{Introduction}

Often a major obstacle towards realizing quantum technologies
roots in uncontrolled or unmitigated noise.
Hence systematic effort is being put into achieving significant progress in reducing noise in current hardware 
(see, e.g., the quantum technology roadmap~\cite{Roadmap2018} and refs.\ therein) on one hand. 
On the other hand, quantum optimal control~\cite{dAless21,DiHeGAMM08}
lends itself to complement these efforts  
%are being complemented by (optimal) control strategies 
to further mitigate noise on the `software' side, 
or in other cases to modulate noise in
order to even exploit it as additional control resource beyond coherent controls (see, e.g., the quantum control roadmap~\cite{Koch22} and refs.\ therein). 
A practical instance is quantum error correction with noise-assisted quantum feedback~\cite{Rouchon19}.
In any case, every quantum system that can be externally controlled must interact with its environment and hence is also subject to decoherence.
Thus we accept noise as natural `part of the game'
when studying what can be achieved in spite of---or even thanks to---such noise.

Moreover, in this work we assume the noise to be Markovian and time-independent in the sense that it is described by a master equation of \textsc{gks}--Lindblad form~\cite{GKS76,Lindblad76}. 
Furthermore, in the systems of concern, we assume that unitary control is fast compared to dissipation.
Corresponding results can be obtained assuming that the noise itself is switchable as in the experimental set-up of~\cite{Mart14,McDermott_TunDissip_2019}.
We address several fundamental control-theoretic questions.
For example: Which states and subspaces can be stabilized? Which states can be reached? 
Is the system coolable, controllable, or accessible? 
We emphasize that certain stabilization and reachability tasks
of interest---although assuming fast control over the entire unitary group for our
analysis---can be implemented by time-independent control Hamiltonians. Indeed,
some results in this direction already exist in the literature~\cite{Kraus08,TV09,Schirmer10}, and our results
show what improvements can (and cannot) be obtained from using time-dependent
Hamiltonian control. A particular section is devoted to the unital case where our
general answers can be further specialized.

%Rather precise answers will be given for the unital case, and we emphasize that although we assume fast control over the entire unitary group, we show that certain tasks of interest can be implemented using a time-independent control Hamiltonian.
%Indeed, some results on stabilizability and reachability of subspaces using time-independent Hamiltonian control exist in the literature~\cite{Kraus08,TV09,Schirmer10}, and our results show what improvements can (and cannot) be obtained from using time-dependent Hamiltonian control.

The main tool used here is a reduction of a full bilinear control system~\cite{Elliott09,Jurdjevic97} 
$\dot X(t) = (A+\sum_ju_jB_j) X(t)$ evolving on the space of Hermitian matrices $X(t)$,
to a reduced one $\dot \lambda(t) = -L_U \lambda(t)$ describing the dynamics of the eigenvalues of $X(t)$.
The reduced system is obtained by factoring out the unitary action, which is possible as soon as one has fast unitary controllability.
In the finite-dimensional quantum dynamical systems treated here, henceforth $X(t)\equiv\rho(t)$ is a density matrix representing the state of the system, and so its eigenvalues sum up to $1$.
%and so $\lambda(t)$ is defined on the eigenvalues of the density matrix summing up to $1$. 
Thus $\lambda(t)$ lives in the standard simplex, which then forms the reduced state space.
%Thus the resulting reduced state space is the standard simplex representing the set of eigenvalue vectors. 
Obviously, such a reduced system is easier to analyze (and visualize) than the full set of density matrices. 
See Figure~\ref{fig:sketch} for an illustration of this approach.

\begin{figure}[ht]
\centering
\includegraphics[width=0.94\textwidth]{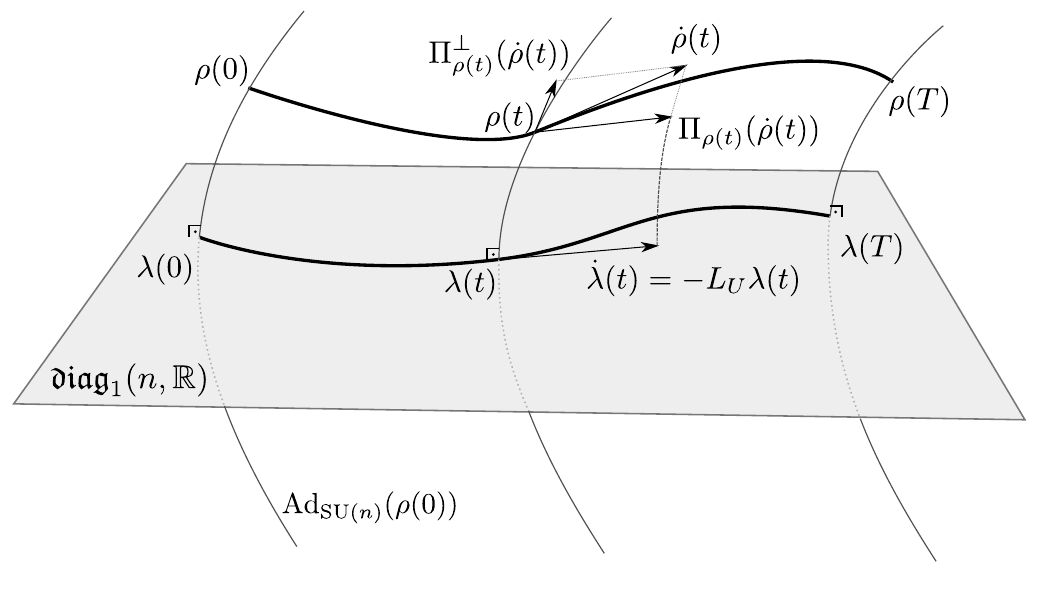}
\caption{Relationship between the time evolutions of a bilinear control system~\eqref{eq:bilinear-control-system} 
on density matrices $\rho(t)$ and the reduced control system~\eqref{eq:simplex-control-system} governing the dynamics of the eigenvalues of $\rho(t)$, where ``the'' vector $\lambda(t)$ of eigenvalues is depicted by the respective diagonal matrix $\diag(\lambda(t))$, see Section~\ref{sec:tools}.
The derivative $\dot\rho(t)$ can always be split into a part orthogonal to the orbit (using the orthogonal projection $\Pi_{\rho(t)}$ onto the commutant of $\rho(t)$), and a part tangent to the orbit (using the complementary projection $\Pi_{\rho(t)}^\perp$). 
We depict only the regular case, where $\Pi_{\rho(t)}=\Ad_{U(t)}\circ\Pi_{\diag}\circ\Ad_{U(t)}^{-1}$ for $\rho(t)=U(t)\diag(\lambda(t))U(t)^*$.
A central result in this work is the Equivalence Theorem~\ref{thm:equivalence} in the main text which details the equivalence of the two control systems.
}
\label{fig:sketch}
\end{figure}

The general idea of reducing control systems admitting fast control on a Lie group action has been addressed, %considered
e.g., in~\cite[Ch.~22]{Agrachev04}---however under two simplifying assumptions: (i) commuting controls and (ii)
that the reduced state space has no singularities. 
(In this work, the singularities are exactly the collisions of eigenvalues of the density matrix).
In a companion paper~\cite{MDES23}, we have 
generalized this idea in the setting of symmetric Lie algebras without invoking such assumptions.
Indeed the singularities are an inherent feature and present %entail
the main complication of the reduction.
The reduced control system in the present Lindbladian setting 
has, to the best of our knowledge, 
first been formulated in~\cite{Sklarz04}, in~\cite{Yuan10} for a single qubit, and it has been studied in~\cite{rooney2018}. --- Our treatment in~\cite{MDES23} also %is fully rigorous and 
removes those assumptions made in~\cite{rooney2018} where certain singularities are essentially ignored.
A natural way to simplify the reduced control system is to restrict its controls to values in a finite set (in our case the permutations of the eigenvalues of the state). 
This is also considered in~\cite{rooney2018}, and explored more thoroughly in~\cite{CDC19,OSID23,vE_PhD_2020,MTNS2020_1} 
%under physically reasonable assumptions related to 
in the context of quantum thermodynamics: 
there, \/`thermal operations\/' come with separate time evolutions of diagonal and off-diagonal terms in the
density matrix, which naturally inspires the reduction to \/`toy models\/' of diagonal states.
Applications to unital systems were already given in~\cite{Yuan10,Styliaris19} and the results will be recovered here as special cases.

Note that in the context of quantum control theory similar ideas have been explored:
if the quotient %reduced state 
space 
$\bG/\bK$ (where $\bG$ is the entire group of transformations and $\bK$ is the group brought about by fast controls)
is a Riemannian symmetric space, strong results on time-optimal control have been derived~\cite{Khaneja01b}.
Unfortunately, beyond two-qubit systems such scenarios are rare in practice~\cite{Khaneja01b,Khaneja02,Khaneja08}:
For general $n$-body systems with $n>2$ the time-optimal control problem in the setting above becomes a
hard \textit{sub-Riemannian} geodesic problem~\cite{Khaneja02}
%for $n$-body systems with $n>2$ the time-optimal control problem in the setting above becomes a {\em sub-Riemannian} geodesic problem
(as the coupling Hamiltonians are limited to the usual two-body interactions).\footnote{% 
Since sub-Riemannian geometry is notoriously intricate~\cite{BelRis96,Mo02,Agrachev19},  
%[the arc-length of an unphysical proxy geodesic on $\bG/\bK$ (treating fictitious $n$-body interactions with the same coupling strength as the physical two-body interactions) may be taken as a lower bound to the arc-length of the physical sub-Riemannian geodesic in order]
%
meaningful limits to minimal control times have been given under more special conditions, e.g., a single control~\cite[Prop.~4.6]{chambrion2009controllability}.
For an overview on the relation of time-optimal control to \/`quantum-speed limits\/' see~\cite{deffnerJPA2017}.}
Since quotient spaces $\bG/\bK$ are rather complicated, often
one contents oneself with finding their diameters
to derive general speed limits~\cite{Gauthier21,Burgarth23}.

\section*{Outline and Main Results}

The main tools used in this paper are presented in Section~\ref{sec:tools}.
We start with the \textsc{gks}--Lindblad equation and introduce two associated matrix algebras, whose invariant subspaces are important in the structure theory of Kossakowski--Lindblad generators as well as the study of time-independent Hamiltonian control.
%Then 
We then introduce the 
%relevant control systems; in particular we define the 
reduced control system which is of central importance.
It turns out that the aforementioned algebras occur repeatedly in the study of the reduced control system.

When proceeding to study the control-theoretic properties of our open Markovian
quantum system, we use (i) the reduced control system to find algebraic characterizations and (ii) the equivalence principle to lift the results to the full control system.
In Section~\ref{sec:stab} we establish
stabilizability of individual points and the entire system (Thm.~\ref{thm:stab-system}), 
the viability of faces of the simplex, and the accessibility of the system
(Prop.~\ref{prop:accessibility}, and for unital systems Prop.~\ref{prop:unital-dir-access}).
Section~\ref{sec:reach} is devoted to reachability, in particular (asymptotic) coolability (Thm.~\ref{thm:asymptotic-coolability}), reverse coolability, and the reachability of faces, with the conditions for (approximate) controllability of the system settled by Prop.~\ref{prop:approx-ctrl}.
Finally, Section~\ref{sec:unital} treats
the special case of unital quantum systems, where the algebraic structure simplifies considerably and allows to derive stronger results 
on reachability (Thm.~\ref{thm:reach-unital}). 
A compact overview is provided in Table~\ref{tab:conclusion}.

\section{Tools and Methods} \label{sec:tools}

The following definition encapsulates what we mean by an open Markovian quantum system with fast and full Hamiltonian control.
Let $\{H_j\}_{j=1}^m$ be a set of Hermitian matrices, called \emph{control Hamiltonians}, and $I$ an interval of the form $[0,T]$ or $[0,\infty)$.
A path $\rho:I\to\mf{pos}_1(n)$ of density matrices is a solution of the bilinear control system
\begin{align} \label{eq:bilinear-control-system} \tag{\sf B}
\dot\rho(t) = 
-\Big(\iu \sum_{j=1}^m u_j(t) \ad_{H_j}+L\Big)(\rho(t)), 
\quad \rho(0)=\rho_0\in\mf{pos}_1(n)
\end{align}
with locally integrable control functions $u_j:I\to\R$ if $\rho$ is absolutely continuous and satisfies~\eqref{eq:bilinear-control-system} almost everywhere. 
We will always assume that the control Hamiltonians generate at least the special unitary Lie algebra: $\generate{\iu H_j:j=1,\ldots,m}{Lie}\supseteq\mf{su}(n)$.

\subsection{Kossakowski--Lindblad Generators and their Relaxation Algebras} \label{sec:lindblad}

Throughout this work, we use $n$-dimensional Hilbert spaces ($2\leq n<\infty$) represented by $\C^n$.
A \emph{state} $\rho$ is a density matrix, i.e.~a positive semi-definite operator with unit trace. 
The set of all states is denoted $\mf{pos}_1(n)$.
The Markovian evolution of a state is described by the \emph{\textsc{gks}--Lindblad equation}~\cite{GKS76,Lindblad76}, which has the form
\begin{align} \label{eq:lindblad-equation}
\dot\rho = -L(\rho) = -\Big(\iu\ad_{H_0} +\sum_{k=1}^r\Gamma_{V_k}\Big)(\rho),
\end{align}
where $\ad_{H_0}(\rho):=[H_0,\rho]$ and $\Gamma_{V_k}(\rho) := \frac{1}{2}(V_k^*V_k\rho+\rho V_k^*V_k)- V_k\rho V_k^*$.
The Hamiltonian $H_0\in\iu\mf{u}(n)$ is a Hermitian matrix and the Lindblad terms $\{V_k\}_{k=1}^r\subset\C^{n\times n}$ are arbitrary matrices. 
We call $-L$ a \emph{Kossakowski--Lindblad generator}\footnote{The signs are chosen such that the real parts of the eigenvalues of $-L$ are non-positive.}, and we denote the set of all Kossakowski--Lindblad generators in $n$-dimensions by $\wkl(n)$, called the \emph{Kossakowski--Lindblad Lie wedge}, cf.~\cite{DHKS08}.

Throughout the paper we assume that $-L\in\wkl(n)$ is a Kossakowski--Lindblad generator on the $n$-dimensional Hilbert space $\C^n$, given by Lindblad terms $\{V_k\}_{k=1}^r$ and Hamiltonian $H_0$.
We say that a given set $\{V'_k\}_{k=1}^s\subset\mathbb C^{n\times n}$ is a choice of Lindblad terms of $-L$ if there exists a Hermitian $H_0'\in\mathbb C^{n\times n}$ such that $-L=-(\iu\ad_{H_0'}+\sum_{k=1}^s\Gamma_{V_k'})$.
The freedom of representation of $-L$ is summarized in Lemma~\ref{lemma:freedom-of-reps}.
Recall that $-L$ is called \emph{unital} if $L(\mathds1)=0$ and \emph{purely Hamiltonian} if $-L=-\iu\ad_{H_0}$. 

We will briefly introduce the main concepts here, referring to Appendix~\ref{app:lindblad-eq} for precise statements.
To the generator $-L$ we define an associated (complex) matrix algebra $\cV$, called its \emph{relaxation algebra}, generated by ``the'' Lindblad terms $\{V_k\}_{k=1}^r$ and the identity matrix, as well as its \emph{extended relaxation algebra} $\cV^+$ additionally generated by $K=\iu H_0 + \frac12\sum_{k=1}^rV_k^*V_k$.
Importantly, these matrix algebras are well-defined and their invariant subspaces encode important information about the generator $-L$. 
Indeed, many results about the structure of Kossakowski--Lindblad generators presented in~\cite{BNT08b} can be formulated succinctly using the algebras $\cV$ and $\cV^+$.
Moreover, $\cV$ was used to give a sufficient condition for the evolution to be relaxing in the sense of having a unique attractive fixed point, cf.~\cite{Davies70,Spohn77}.
The invariant subspaces of $\cV$ are called \emph{lazy subspaces}, and invariant subspaces of $\cV^+$ are called \emph{collecting subspaces}. Under the evolution of $-L$, a state supported on a lazy subspace will only leave the subspace ``slowly'', whereas a state supported on a collecting subspace will not leave the subspace at all. Importantly, using time-independent controls, any lazy subspace can be turned into a collecting one.
A collecting subspace whose orthocomplement is also collecting is called an \emph{enclosure}, and it corresponds to a symmetry of the generator $-L$, cf.~\cite{BNT08b,Albert14}.
Interestingly, in the unital case, $\cV$ and $\cV^+$ turn out to be $*$-algebras, which are highly structured.
This allows us to derive strong results, beyond the known fact that a unital system is relaxing if and only if $\cV=\{V_k:1,\ldots,r\}''=\C^{n\times n}$ (with $\{\cdot\}''$ denoting the double-commutant in $\C^{n\times n}$),
%\marginpar{previously, this said "iff $\cV=\cV''=\C^{n\times n}$" which seems nonsensical; if $\cV$ is everything then $\cV''$ is trivially everything. based on the Spohn result this, if anything, should read $\cV''=\C^{n\times n}$}
which is similar to the result in~\cite{Spohn77}.

%\MAGE{beyond the well-known one (cp.~\cite{Spohn77}) that a unital system is relaxing, i.e.~always driving into the unique
%canonical equilibrium state, if and only if one has the double-commutant identity $\cV=\cV''=\C^{n\times n}$.}

These relaxation algebras are useful in understanding what can be achieved using time-independent Hamiltonian control, cf.~\cite{Kraus08,TV09,Schirmer10} as well as Appendix~\ref{app:lindblad-eq}, and they also turn out to be crucial in the study of the reduced control system, which we will introduce in the following section.

%\MAGE{Include that \cite{Spohn77,Frigerio77} gave the condition that a semigroup is relaxing to a unique globally
%attractive fixed-point state iff the $\{V_k\}$ are irreducible in the sense $\{V_k\}''=\mathcal B(H)$. (E: NO!!)  -- For Emanuel:
%\cite{Hazewinkel04} is in bib now.}

%\marginpar{Maybe incorporate App.~\ref{app:lift-and-wedges}}

%\MAGE{In this work, we exploit properties of the associative matrix algebra generated by the Lindblad terms $\{V_k\}$
%mainly in two ways: for instance,
%(i) in the general case of non-unital dissipative systems treated  in Sec.\ref{sec:general_sys}, we 
%build on Lie-algebraic arguments to connect stabilizability to simultaneous triangularizability of the $\{V_k\}$ (Thm.~\ref{thm:stab-system});
%(ii) for the unital case, however, in Sec.~\ref{sec:unital_sys} we use (unitary) decomposability of matrix algebras
% into block diagonals for proving properties of \/`lazy subspaces\/' etc.}

\subsection{Reduced Control System}

Our main focus is the reduced control system obtained from the full system~\eqref{eq:bilinear-control-system} using the fast controllability on the unitary group.
The definitions in this section specialize those of~\cite[Sec.~2]{MDES23}, and the results of~\cite{MDES23} then establish equivalence of these systems in a certain sense.

Since the bilinear control system~\eqref{eq:bilinear-control-system} allows for unbounded control functions and since the control Hamiltonians generate the entire special unitary Lie algebra---meaning that we have fast unitary control---we can quickly move within the unitary orbits. Thus we may concentrate on the dynamics of the eigenvalues of the state (as two density matrices have the same spectrum if and only if they lie on the same unitary orbit).\footnote{As mentioned in the introduction, if the noise term is switchable, then one can effectively emulate~\eqref{eq:bilinear-control-system} even if the unitary control is not fast, at the expense of working on slower time scales.}

The reduced state space will be the standard simplex
\begin{align*}
\Delta^{n-1}=\Big\{(x_1,\ldots,x_n)\in\R^n : \sum_{i=1}^nx_i=1,\, x_i\geq0\,\; \forall i\Big\}\,,
\end{align*}
representing the subset of diagonal density matrices.
The standard simplex $\Delta^{n-1}$ is a convex polytope of dimension $n-1$ and its faces are lower dimensional simplices.
An important group action on the simplex, stemming from the action of $\SU(n)$ on $\mf{pos}_1(n)$, is that of the symmetric group $S_n$ acting by coordinate permutations.
Indeed every unitary orbit in $\mf{pos}_1(n)$ intersects $\Delta^{n-1}$ a finite number of times and the intersections form a permutation group orbit (this is just the unitary diagonalization of Hermitian matrices).
Two faces of the same dimension can always be mapped to each other and can thus be considered equivalent.
The points in the (relative) interior of a $(d-1)$-dimensional face correspond to quantum states of rank $d$.
In particular the vertices $e_1,\ldots,e_n$ correspond to the pure states and the barycenter $\e/n$ where, here and henceforth, $\e:=(1,\ldots,1)^\top$ corresponds to the maximally mixed state $\mathds1/n$.
Moreover we use the notation $\Delta^{n-1}_\down := \{x\in\Delta^{n-1} : x_1\geq\ldots\geq x_n\}$, which we also call the \emph{ordered Weyl chamber}, as well as $\mathrm{spec}^\down:\mf{pos}_1(n)\to\Delta^{n-1}_\down$ for the map which arranges the eigenvalues of the input into a vector in non-increasing order.
Conversely, $\diag:\Delta^{n-1}\to\mf{pos}_1(n)$ maps a vector to the corresponding diagonal matrix.
Finally, for $\lambda\in\Delta^{n-1}$ we write $\lambda^\down$ for the non-increasingly ordered version in $\Delta^{n-1}_\down$.

The next step is to define the appropriate control system on the standard simplex.
To motivate the definition, consider any solution $\rho:I\to\mf{pos}_1(n)$ to the bilinear control system~\eqref{eq:bilinear-control-system} and assume that $\rho$ is regular\footnote{A state $\rho$ is called \emph{regular} if its eigenvalues are all distinct.\label{footnote_regular}} on $I$.
Moreover let $\rho=\Ad_U(\diag(\lambda)):=U\diag(\lambda)U^*$ be a differentiable (in time) diagonalization of $\rho$.
Then by differentiating (cf.~\cite[Lem.~2.3]{Malvetti22-diag}) one can show that $\dot\lambda=-L_U\lambda$, where
\begin{align} \label{eq:LU}
-L_U 
:= 
-\Pi_{\diag}\circ\Ad_U^{-1}\circ L\circ \Ad_U\circ \diag(\cdot)\,,
%=J(U)-\diag(J(U)^\top\e)
\end{align}
and where $\Pi_{\diag}:\mf{pos}_1(n)\to\Delta^{n-1}$ maps a matrix to the vector of its diagonal elements.
The $-L_U$ are clearly linear and we call them \emph{induced vector fields} on $\mathbb R^n$.
Note that, by definition, the $L_U$ are independent of the choice of Lindblad terms $V_k$.
We sketched the situation in Figure~\ref{fig:sketch}.
%A short computation, carried out in Appendix~\ref{app:comps}, cf.~\eqref{eq:J-comp}, yields
%$-L_U=J(U)-\diag(J(U)^\top\e)$,
%where $\e:=(1,\ldots,1)^\top$ and
A more explicit form of the induced vector fields is given by
\begin{equation}\label{eq:def_JU}
-L_U=J(U)-\diag(J(U)^\top\e), \qquad J(U):=\sum_{k=1}^r U^*V_kU\circ\overline{U^*V_kU}
\end{equation}
with $\circ$ the Hadamard product, i.e.~$\langle i|J(U)|j\rangle=\sum_{k=1}^r|\langle i|U^*V_kU|j\rangle|^2$ for all $i,j$.
Indeed this follows from the following computation:
\begin{align} 
-(L_U)_{ij} 
&= \langle i| (\Ad_U^{-1}\circ  L \circ \Ad_U)(|j\rangle\langle j|)  |i\rangle \notag\\
%&= -\langle i} \Ad_U^\star(L) (\ketbra{j}) |i\rangle
&= \sum_{k=1}^r |(U^*V_kU)_{ij}|^2 - \delta_{ij} \sum_{k=1}^r (U^*V_k^*V_kU)_{ii}\notag
\\&= \sum_{k=1}^r |(U^*V_kU)_{ij}|^2 - \delta_{ij} \sum_{k=1}^r \sum_{\ell=1}^n|(U^*V_kU)_{\ell i}|^2
= J_{ij}(U) - \delta_{ij}\sum_{\ell=1}^nJ_{\ell i}(U)\,.\label{eq:J-comp}
\end{align}

We denote the set of induced vector fields as
$$
\mf L:=\{-L_U:U\in\SU(n)\}\,.
$$
Note that $\mf L$ is the image of a compact set under the continuous function $U\mapsto -L_U$, hence compact itself.
Also, the elements of $\mf L$ are generators of stochastic matrices, i.e.~$\e^\top L_U=0$ and the off-diagonal elements are non-negative.
We write this as $\mf L\subseteq\mf{stoch}(n)\,$, where $\mf{stoch}(n)$ denotes the Lie wedge corresponding to $\mathrm{Stoch}(n)$ which is the closed subsemigroup of $\GL(n,\R)$ consisting of all invertible stochastic matrices.
Now $\mf L\subseteq\mf{stoch}(n)$ in particular means that the standard simplex $\Delta^{n-1}$ is (forward) invariant under the flow of the induced vector fields $-L_U$.%\marginpar{F: no? $-L_U$ maps probability vectors to vectors with sum zero. if anything, the \textbf{exponential} of $-L_U$ (i.e.~the corresponding solutions) leave $\Delta^{n-1}$ invariant}

We define on $\Delta^{n-1}$ the set-valued function $\derv$ of \emph{achievable derivatives} by
\begin{align*}
\derv(\lambda)
:= 
\{-L_U\lambda : U\in\SU(n)\} = \mf L\lambda \subset {\sf T}_\lambda\Delta^{n-1}
\end{align*}
where ${\sf T}_\lambda\Delta^{n-1}$ denotes the tangent cone at $\lambda$, which can always be identified with a subset of 
$\R^n_0:=\{x\in\R^n:x_1+\cdots+x_n=0\}$.

With this we are ready to define the reduced control system (in two equivalent ways):

\begin{definition}
A function $\lambda:I\to\Delta^{n-1}$ is a solution of the control system
\begin{align}
\label{eq:simplex-control-system}\tag{\sf R}
\dot\lambda(t) = -L_{U(t)} \lambda(t)\,,
\quad \lambda(0)=\lambda_0\in\Delta^{n-1}
\end{align}
with measurable control function $U:I\to\SU(n)$, if $\lambda$ is absolutely continuous and satisfies \eqref{eq:simplex-control-system} almost everywhere. 
Equivalently\footnote{
This is due to Filippov's Theorem, cf.~\cite[Thm.~2.3]{Smirnov02}. Here by ``equivalent'' we mean that the two systems have exactly the same set of solutions.
},
a solution $\lambda:I\to\Delta^{n-1}$ is an absolutely continuous function which satisfies the differential inclusion
\begin{align*}%\label{eq:diff-inclusion}\tag{\sf I}
\dot\lambda(t)\in\derv(\lambda(t)), \quad \lambda(0)=\lambda_0\in\Delta^{n-1}
\end{align*}
almost everywhere.
\end{definition}

A convenient relaxation of the reduced control system can be obtained by allowing convex combinations of achievable derivatives.

\begin{definition}
A function $\lambda:I\to\Delta^{n-1}$ is a solution of the control system
\begin{align}
\label{eq:relaxed-control-system} \tag{\sf C}
\dot\lambda(t) \in \conv(\derv(\lambda(t))), 
\quad \lambda(0)=\lambda_0,
\end{align}
if it is absolutely continuous and satisfies the differential inclusion~\eqref{eq:relaxed-control-system} almost everywhere.
\end{definition}
\noindent 
The relaxation to the convex hull will slightly enlarge the set of solutions; however, every solution of~\eqref{eq:relaxed-control-system} can still be approximated uniformly (on compact time intervals) by solutions to~\eqref{eq:simplex-control-system}, see~\cite[Ch.~2.4, Thm.~2]{Aubin84}.

\begin{remark} \label{rmk:reduced-aliter}
Again due to Filippov's Theorem, a solution $\lambda$ to~\eqref{eq:simplex-control-system} (resp.~\eqref{eq:relaxed-control-system}) is an absolutely continuous function for which there exists $(-M_\tau)_{\tau\in I}\subset \mf L$ (resp.~$\operatorname{conv}\mf L$) measurable such that $\lambda(t)=\lambda_0+\int_0^t (-M_\tau) \lambda(\tau) \,d\tau$ holds for all $t\in I$.%, cf.~\cite[App.~C]{Sontag98}.
\end{remark}

%\begin{remark} \todo{splitting of dynamics (as shown in Figure~\ref{fig:sketch} and proven in Lemma~\ref{lemma:induced-drift}) motivates the definitions}
%\end{remark}

The main results of~\cite{MDES23}, namely Theorems~3.8 and 3.14, pertain to this setting as follows.

\begin{theorem}[Equivalence Theorem] \label{thm:equivalence}
Let $\rho:[0,T]\to\mf{pos}_1(n)$ be a solution to the bilinear control system~\eqref{eq:bilinear-control-system} and let $\lambda^\down:[0,T]\to\Delta^{n-1}_\down$ be the unique path which satisfies $\lambda^\down=\mathrm{spec}^\down(\rho)$.
Then $\lambda^\down$ is a solution to the reduced control system~\eqref{eq:simplex-control-system}.
Conversely, let $\lambda:[0,T]\to\Delta^{n-1}$ be a solution to the reduced control system~\eqref{eq:simplex-control-system} with control function $U:[0,T]\to\SU(n)$. Then for every $\epsilon>0$ there exists a solution $\rho_{\varepsilon}:[0,T]\to\mf{pos}_1(n)$ to the bilinear control system~\eqref{eq:bilinear-control-system} such that $$\|\Ad_U(\diag(\lambda))-\rho_{\varepsilon}\|_{\infty}\leq\epsilon\,.$$
\end{theorem}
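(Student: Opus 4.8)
The plan is to recognize both assertions as the Lindbladian specialization of the abstract reduction established in \cite{MDES23}, and to verify the translation. The Hermitian matrices $\iu\mf u(n)$ carry the adjoint action of $\SU(n)$; its orbits are the isospectral sets, so the orbit space is parametrized by the ordered spectrum, i.e.\ by $\Delta^{n-1}_\down$. The drift is the fixed generator $-L$, the fast controls satisfy $\generate{\iu H_j}{Lie}\supseteq\mf{su}(n)$, and, crucially, the induced vector fields $-L_U$ of \eqref{eq:LU}--\eqref{eq:def_JU} are by construction the projection of $\Ad_U^{-1}\circ L\circ\Ad_U$ onto the commutant (the diagonal). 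With this dictionary the two halves of the statement become \cite[Thm.~3.8]{MDES23} and \cite[Thm.~3.14]{MDES23}; below I sketch the mechanism of each.

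For the forward direction the key observation is that along any solution $\rho$ of \eqref{eq:bilinear-control-system} the Hamiltonian control term $-\iu\sum_j u_j\ad_{H_j}(\rho)$ is tangent to the unitary orbit and hence leaves the spectrum invariant, so the evolution of the eigenvalues is governed entirely by the commutant-projection of $-L(\rho)$. First I would record that $\lambda^\down=\mathrm{spec}^\down(\rho)$ is absolutely continuous: $\mathrm{spec}^\down$ is $1$-Lipschitz on Hermitian matrices by Weyl's perturbation inequality and $\rho$ is absolutely continuous, so $\dot\lambda^\down$ exists almost everywhere. At every time where $\rho(t)$ is regular, a locally smooth diagonalization $\rho(t)=U(t)\diag(\lambda^\down(t))U(t)^*$ exists; differentiating and projecting onto the diagonal annihilates the orbit-tangent part and returns exactly $\dot\lambda^\down(t)=-L_{U(t)}\lambda^\down(t)\in\derv(\lambda^\down(t))$, the computation already indicated in the text. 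This gives the differential inclusion $\dot\lambda^\down\in\derv(\lambda^\down)$ on the (open) regular set.

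The main obstacle is the set of times at which $\rho(t)$ has a repeated eigenvalue, because there the diagonalizing unitary is neither unique nor differentiable and the projection argument collapses. What saves the day is that $\derv(\lambda)=\mf L\lambda$ is \emph{compact} (the image of the compact set $\mf L$ under evaluation) and that $U\mapsto -L_U$ is continuous: one must show that the a.e.-derivative at such a time is realized as a limit of admissible velocities $-L_{U_k}\lambda^\down(t_k)$ taken at regular times $t_k\to t$, whence compactness of $\SU(n)$ produces a convergent subsequence of frames $U_k\to U$ and continuity places $\dot\lambda^\down(t)=-L_U\lambda^\down(t)$ in the closed set $\derv(\lambda^\down(t))$. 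Carrying this limiting argument out rigorously---rather than discarding the collision set as in \cite{rooney2018}, and in particular landing in $\derv$ and not merely its convex hull---is precisely the delicate content of \cite[Thm.~3.8]{MDES23}, and it is the step I expect to be genuinely hard.

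For the converse the target curve $\Ad_{U(t)}(\diag(\lambda(t)))$ is generally \emph{not} itself a solution of \eqref{eq:bilinear-control-system}, which is why only approximation is claimed. The strategy is a fast-control (time-scale separation) argument: since $\generate{\iu H_j}{Lie}\supseteq\mf{su}(n)$ one can steer the eigenframe of a genuine solution $\rho_\epsilon$ with arbitrarily strong controls so that it tracks the prescribed measurable path $U$, while the fixed dissipator $L$ drives the eigenvalues. By the forward mechanism the eigenvalues of $\rho_\epsilon$ then satisfy $\dot\lambda_\epsilon=-L_{\tilde U}\lambda_\epsilon$ with $\tilde U$ the actual eigenframe; as the control speed increases the tracking error $\tilde U\to U$ tends to zero, continuity of $U\mapsto-L_U$ makes the right-hand side converge to $-L_U\lambda_\epsilon$, and a Gr\"onwall estimate yields $\lambda_\epsilon\to\lambda$ uniformly, hence $\rho_\epsilon\to\Ad_U(\diag\lambda)$ uniformly. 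The hard part is the quantitative, uniform-in-$t$ error bound that converts local frame-tracking errors into the global estimate $\|\Ad_U(\diag\lambda)-\rho_\epsilon\|_\infty\le\epsilon$---again complicated near eigenvalue collisions, where the eigenframe is ill-defined; this is the substance of \cite[Thm.~3.14]{MDES23}, whose relaxation to $\conv(\derv(\cdot))$ in \eqref{eq:relaxed-control-system} and compactness of $\mf L$ underpin the estimate.
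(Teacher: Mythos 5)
Your route coincides with the paper's at the top level: both proofs consist of a dictionary placing the Lindbladian system in the framework of~\cite{MDES23} followed by a citation of Thms.~3.8 and 3.14 there, and your sketches of the internal mechanisms of those two theorems are material the paper never attempts to reproduce. But your dictionary has a genuine gap, and it sits exactly where the paper's appendix does its work. The theorems of~\cite{MDES23} are formulated for control systems evolving in the $-1$ eigenspace of a \emph{semisimple} orthogonal symmetric Lie algebra, with (at most affine) linear drift. You take that space to be all Hermitian matrices $\iu\mf u(n)$ with the adjoint $\SU(n)$-action; however $\mf u(n)\oplus\iu\mf u(n)=\mf{gl}(n,\C)$ is not semisimple, the relevant symmetric Lie algebra is $\mf{sl}(n,\C)=\mf{su}(n)\oplus\mf{herm}_0(n,\C)$, and its $-1$ eigenspace consists of \emph{traceless} Hermitian matrices. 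Density matrices have unit trace, so the system~\eqref{eq:bilinear-control-system} as you have set it up is not literally an instance of that framework, and the citation you rely on does not yet apply.

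The repair---which is the actual content of the paper's proof---is the normalization $\bar\rho=\rho-\mathds1/n\in\mf{herm}_0(n,\C)$. Under this shift the drift becomes the map $\bar L(\bar\rho)=L(\bar\rho)+L(\mathds1)/n$, which is genuinely affine (not linear) whenever $-L$ is non-unital; one then checks that the shifted system is state space equivalent to~\eqref{eq:bilinear-control-system} (using that the shift intertwines with $\Ad_U$ and that $\ad_H(\rho)=\ad_H(\bar\rho)$), and verifies on the reduced side that $L_U\lambda=\bar L_U\bar\lambda$ with $\bar\lambda=\lambda-\e/n$, so that the shifted reduced system is state space equivalent to~\eqref{eq:simplex-control-system}. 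Only after these verifications---which also require that the results of~\cite{MDES23} tolerate an affine-linear drift term, as the paper notes explicitly---do Thms.~3.8 and 3.14 of~\cite{MDES23} yield Theorem~\ref{thm:equivalence}. Your proposal announces that the translation will be verified, but the trace normalization, the only non-routine point of that translation, is never confronted.
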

The proof is given in Appendix~\ref{app:equivalence}. \smallskip

%Applying Theorem~\ref{thm:equivalence} to the case where the control Hamiltonians linearly span $\iu\mf{su}(n)$ yields the following:
%\marginpar{F: slightly expand to make clear why the assumption on the control Hamiltonians is needed}

\begin{corollary}
\label{coro:ham-control-system}
Let $H:[0,T]\to\iu\mf{su}(n)$ be an integrable Hamiltonian. Consider a solution $\rho:[0,T]\to\mf{pos}_1(n)$ to $\dot\rho(t)=-(\iu\ad_{H(t)}+L)\rho(t)$. 
Then $\spec^\down(\rho(t))$ is a solution to \eqref{eq:simplex-control-system} and a fortiori to~\eqref{eq:relaxed-control-system}.
\end{corollary}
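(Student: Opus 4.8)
The plan is to reduce the statement to the forward direction of the Equivalence Theorem~\ref{thm:equivalence}, whose proof (in Appendix~\ref{app:equivalence}) establishes precisely this claim for solutions of the controlled system~\eqref{eq:bilinear-control-system}. The only formal gap is that the generator $\iu\ad_{H(t)}+L$ carries an arbitrary integrable Hamiltonian $H(t)\in\iu\mf{su}(n)$ rather than a control term of the special form $\iu\sum_j u_j(t)\ad_{H_j}$, so I would verify that the argument proving the forward direction never uses the particular shape of the Hamiltonian.

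First I would recall the differentiation argument sketched before~\eqref{eq:LU}. Writing a regular solution as $\rho(t)=U(t)\diag(\lambda(t))U(t)^*$ with $U,\lambda$ differentiable and putting $\Omega:=U^*\dot U\in\mf u(n)$, one obtains $U^*\dot\rho\,U=\diag(\dot\lambda)+[\Omega,\diag(\lambda)]$. Equating this with $U^*\big(-(\iu\ad_{H_{\mathrm{tot}}(t)}+\sum_k\Gamma_{V_k})(\rho)\big)U=-\iu[U^*H_{\mathrm{tot}}U,\diag(\lambda)]-\sum_k\Gamma_{U^*V_kU}(\diag(\lambda))$ and applying $\Pi_{\diag}$ annihilates both commutators, since the diagonal of a commutator with a diagonal matrix vanishes. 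What survives is exactly $\dot\lambda=-L_U\lambda$ with $-L_U$ as in~\eqref{eq:def_JU}, depending only on the $V_k$. This computation never sees whether $H_{\mathrm{tot}}=H_0+H(t)$ arises from a control term or from a general integrable $H(t)$; the hypothesis $\generate{\iu H_j}{Lie}\supseteq\mf{su}(n)$ enters only in the converse direction of Theorem~\ref{thm:equivalence}, where one must realize a prescribed $U(\cdot)$ by admissible controls.

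Accordingly, I would run the proof of the forward direction of Theorem~\ref{thm:equivalence} verbatim, substituting $H(t)$ for $\sum_j u_j(t)H_j$ throughout. This yields a measurable control $U:[0,T]\to\SU(n)$ such that $\spec^\down(\rho)$ is absolutely continuous and satisfies $\dot\lambda=-L_{U(t)}\lambda$ almost everywhere, i.e.~it solves~\eqref{eq:simplex-control-system}. The ``a fortiori'' claim is then immediate: since $-L_{U(t)}\lambda(t)\in\derv(\lambda(t))\subseteq\conv(\derv(\lambda(t)))$ almost everywhere, any solution of~\eqref{eq:simplex-control-system} is in particular a solution of~\eqref{eq:relaxed-control-system}.

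The step I expect to be the main obstacle is the one already at the heart of Theorem~\ref{thm:equivalence}: the passage from the regular, pointwise computation to a statement valid on all of $[0,T]$ in the presence of eigenvalue crossings, where the differentiable diagonalization $\rho=U\diag(\lambda)U^*$ breaks down and the ordering map $\spec^\down$ is only Lipschitz rather than smooth. Since this is exactly what the appendix proof of Theorem~\ref{thm:equivalence} is built to control---absolute continuity of $\spec^\down(\rho)$, a measurable selection of the diagonalizing $U(t)$, and almost-everywhere validity of the reduced equation across the crossing set---I would not reprove it but only confirm that none of those arguments invoke the control structure of the Hamiltonian, which by the computation above they do not.
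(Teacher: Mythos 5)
Your proposal is correct in substance, but it takes a genuinely different---and heavier---route than the paper. You resolve the mismatch between the arbitrary integrable Hamiltonian $H(t)$ and a control term of the form $\sum_j u_j(t)H_j$ by re-opening the proof of the forward direction of Theorem~\ref{thm:equivalence} and checking that it never uses the particular shape of the Hamiltonian. The paper instead closes this gap with a one-line observation that lets it use Theorem~\ref{thm:equivalence} as a black box: choose the control Hamiltonians $\{H_j\}_{j=1}^m$ to \emph{linearly span} $\iu\mf{su}(n)$; then $H(t)=\sum_{j=1}^m u_j(t)H_j$ with integrable coefficients $u_j$ (coordinates with respect to a spanning set are linear functionals of $H(t)$, hence integrable), so $\rho$ is literally a solution of~\eqref{eq:bilinear-control-system} for this choice of controls, and the forward direction of Theorem~\ref{thm:equivalence} applies exactly as stated. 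This exploits precisely the structural fact you correctly identify---that the reduced system~\eqref{eq:simplex-control-system} and the induced vector fields $-L_U$ are independent of the choice of control Hamiltonians, and that the Lie-generation hypothesis is only needed for the converse (lifting) direction---but it avoids any appeal to the internals of the Equivalence Theorem. That matters here because the actual proof of Theorem~\ref{thm:equivalence} is deferred to the companion paper~\cite{MDES23}; your ``run the proof verbatim'' step is a claim about arguments you have not inspected (your regular-case computation supports it, but the treatment of eigenvalue crossings is exactly the part you take on trust), whereas the paper's reduction makes any such verification unnecessary. Your handling of the ``a fortiori'' clause, namely that every solution of~\eqref{eq:simplex-control-system} solves~\eqref{eq:relaxed-control-system} since $\derv(\lambda)\subseteq\conv(\derv(\lambda))$, is fine and matches the intended reading.
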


\begin{proof}
Choose any control Hamiltonians $\{H_j\}_{j=1}^m$ which linearly span $\iu\mf{su}(n)$.
Then we may write $H(t)=\sum_{j=1}^m u_j(t) H_j$ with integrable control functions $u_j$. 
Hence $\rho$ can be seen as a solution to~\eqref{eq:bilinear-control-system} (with the chosen controls) and so we may apply Theorem~\ref{thm:equivalence} to obtain that $\spec^\down(\rho(t))$ is a solution of~\eqref{eq:simplex-control-system}.
\end{proof}
The proof above uses the fact that while ``the'' bilinear control system~\eqref{eq:bilinear-control-system} requires a choice of control Hamiltonians to be fully defined, the reduced control system~\eqref{eq:simplex-control-system} is independent of this choice so long as the control Hamiltonians generate the entire special unitary Lie algebra.

We emphasize that although the ``lifting'' part in Theorem~\ref{thm:equivalence} is only approximate in general, in many relevant cases one can obtain stronger results. When the path is regular
%(i.e.~if the eigenvalues are non-degenerate),
(recall footnote~\ref{footnote_regular}),
then the lift can be performed exactly and explicitly using~\cite[Prop.~3.10]{MDES23}. Moreover, due to the structure theory of Kossakowski--Lindblad generators, it is often the case that one can choose time-independent Hamiltonians to achieve practical tasks. This is summarized in Appendix~\ref{app:lindblad-eq}. %and~\ref{app:time-indep}, see in particular Theorem~\ref{thm:fixed-ham-control}.

\subsection{Operator Lifts and Lie Wedges} %\label{app:lift-and-wedges} 
%\marginpar{\todo{Compare to~\cite[Coro.~3.6]{DHKS08}}}

Although not used in this paper, we briefly discuss the operator lifts of the full bilinear system~\eqref{eq:bilinear-control-system} and of the reduced system~\eqref{eq:simplex-control-system}.
The definitions, in a more general context, are recalled in~\cite[Sec.~2.2 \& 2.3]{MDES23}.
Considering the operator lifts immediately leads to the study of Lie semigroups and Lie wedges, cf.~\cite{HHL89,Lawson99,DHKS08}.
In particular, the reachable sets of the operator lifts are determined by the Lie saturate $\generate{\cdot}{LS}$ of their respective generators, see~\cite[Sec.~6]{Lawson99}. The generators of~\eqref{eq:bilinear-control-system} are given by $\Omega=\{X+\iu\ad_{H_j}:j=1,\ldots,m\}$, and the generators of~\eqref{eq:simplex-control-system} are exactly the induced vector fields $\mf L$.

\begin{proposition}
The Lie saturate of the full control system~\eqref{eq:bilinear-control-system} is given by $\generate{\Omega}{LS}=\ad_{\mf{su}(n)}\oplus\generate{\Ad_U^{-1}\circ L\circ \Ad_U:U\in\SU(n)}{wedge}$.
\end{proposition}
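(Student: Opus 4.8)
The plan is to use the standard machinery of Lie wedges and Lie saturates from~\cite[Sec.~6]{Lawson99}, exploiting that the Hamiltonian controls are unbounded and generate all of $\mf{su}(n)$. Write $X:=-L$ for the drift superoperator, so that the generators available to~\eqref{eq:bilinear-control-system} form the affine family $X+\linspan_{\R}\{\iu\ad_{H_j}:j=1,\ldots,m\}$, of which $\Omega$ is a set of representatives, and set $W:=\ad_{\mf{su}(n)}\oplus\generate{\Ad_U^{-1}\circ L\circ\Ad_U:U\in\SU(n)}{wedge}$ for the claimed right-hand side. I will use throughout that $\generate{\Omega}{LS}$ is a Lie wedge containing the closed convex cone generated by $\Omega$, that $\Omega\mapsto\generate{\Omega}{LS}$ is monotone under inclusion, and that for any Lie wedge $W'$ one has $e^{\ad Z}W'=W'$ for every $Z$ in the edge $E(W'):=W'\cap(-W')$.

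First I would prove the inclusion $\supseteq$. Since the controls enter linearly and are unbounded, rescaling a generator $X-\iu u_j\ad_{H_j}$ by $1/|u_j|$ and letting $|u_j|\to\infty$ shows that both $\pm\iu\ad_{H_j}$ lie in $\generate{\Omega}{LS}$, so each $\iu\ad_{H_j}$ lies in the edge $E(\generate{\Omega}{LS})$. Because the edge is a Lie subalgebra and $\generate{\iu H_j}{Lie}\supseteq\mf{su}(n)$, we get $\ad_{\mf{su}(n)}\subseteq E(\generate{\Omega}{LS})$. Now $X\in\Omega\subseteq\generate{\Omega}{LS}$, and applying the edge action $e^{\ad(\iu\ad_H)}(\cdot)=\Ad_U\circ(\cdot)\circ\Ad_U^{-1}$ with $U=e^{\iu H}$ keeps us inside the wedge; as $\SU(n)$ is connected and exponential this yields the entire conjugation orbit of the drift, $\{\Ad_U^{-1}\circ X\circ\Ad_U:U\in\SU(n)\}\subseteq\generate{\Omega}{LS}$. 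Being a Lie wedge, $\generate{\Omega}{LS}$ then contains the smallest Lie wedge generated by $\ad_{\mf{su}(n)}$ together with this orbit, which (up to the overall sign convention in $\generate{\cdot}{wedge}$) is exactly $W$.

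For the reverse inclusion I would argue by monotonicity: since $\Omega\subseteq W$, it suffices to show that $W$ is already Lie saturated, i.e.\ $\generate{W}{LS}=W$, for then $\generate{\Omega}{LS}\subseteq\generate{W}{LS}=W$. The routine part is to verify that $W$ is a genuine Lie wedge and to locate its edge: $W$ is a closed convex cone by construction, its edge contains $\ad_{\mf{su}(n)}$, and this edge normalizes $W$ because $e^{\ad(\iu\ad_H)}$ acts as $\SU(n)$-conjugation, preserving $\ad_{\mf{su}(n)}$ and the conjugation-invariant dissipative summand separately. It then remains to check that none of the enlargement operations of~\cite[Sec.~6]{Lawson99} produce anything outside $W$.

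The hard part will be precisely this last verification: that the genuinely dissipative directions do not saturate into new reversible ones. Concretely, one must show that the edge of $W$ is exactly $\ad_{\mf{su}(n)}$ together with the (necessarily Hamiltonian) lineality of the dissipative wedge, and that the limit operations extracting one-parameter groups from unbounded flows yield nothing beyond $\ad_{\mf{su}(n)}$. This rests on the fact that the only reversible (Hilbert--Schmidt skew-adjoint) parts obtainable from $X=-L$ and its conjugates are the Hamiltonian pieces $\iu\ad_H$, which are already absorbed into $\ad_{\mf{su}(n)}$ modulo the trivial direction $\ad_{\mathds1}=0$, whereas the strictly dissipative part of $L$ contributes no further one-parameter subgroup to the closed system semigroup. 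Making this rigorous---that $W$ is invariant under the full list of saturation operations---is where the real work lies, and the direct-sum structure of the statement then simply records the split of $W$ into its reversible edge $\ad_{\mf{su}(n)}$ and the irreversible dissipative wedge.
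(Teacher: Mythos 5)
Your forward inclusion ($\supseteq$) is correct and is essentially the standard enlargement argument. The genuine gap is in the reverse inclusion, and it is exactly the part you defer: you reduce everything to the claim that $W$ is Lie saturated, $\generate{W}{LS}=W$, and propose to establish this by checking that $W$ is stable under the enlargement operations of~\cite[Sec.~6]{Lawson99}. That strategy cannot work as stated, because the Lie saturate is defined \emph{semantically}---as the tangent wedge of the closure of the semigroup generated by $\exp(\R_+\Omega)$---and not as a fixed point of a list of formal operations. A Lie wedge can be closed, convex, invariant under conjugation by its edge, and stable under every enlargement operation, and still have a strictly larger Lie saturate; this happens precisely when the wedge fails to be \emph{global}. (Standard example: the closed invariant causal cone in $\mf{sl}(2,\R)$ is a perfectly good Lie wedge in this formal sense, yet in $\mathrm{SL}(2,\R)$ the semigroup it generates is the whole group, because the compact directions in its interior close up; hence its Lie saturate is all of $\mf{sl}(2,\R)$, and globality holds only in the universal cover.) So verifying stability under the operations gives no upper bound on $\generate{\Omega}{LS}$ whatsoever; an upper bound requires controlling the closure of the reachable semigroup, which is a non-local, topological statement. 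Your closing remark that the only reversible parts obtainable from $-L$ and its conjugates are Hamiltonian is an infinitesimal observation and does not provide this control.

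This missing globality input is precisely where the paper's proof gets its leverage. After disposing of the purely Hamiltonian case (there $\generate{\Omega}{LS}=\ad_{\mf{su}(n)}$ simply because $\ad_{\mf{su}(n)}$ is a compact Lie algebra), it invokes three facts: the Kossakowski--Lindblad wedge $\wkl(n)$ is a \emph{global} Lie wedge (\cite[Thm.~3.3]{DHKS08}); its edge equals $\ad_{\mf{su}(n)}$ (via Lemma~\ref{lemma:purely-Hamiltonian}~\ref{it:pure-ham-eig}, i.e.~elements of $\wkl(n)$ whose spectrum is purely imaginary are purely Hamiltonian); and the companion-paper result \cite[Prop.~2.1~(iii)]{MDES23} (itself based on \cite[Prop.~1.37]{HN12}), which says exactly that for a drift lying in a global wedge whose edge is a compact algebra containing the controls, the Lie saturate equals that edge plus the wedge generated by the edge-orbit of the drift. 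To repair your proof along your own lines you would have to either reprove globality of $\wkl(n)$ (or of $W$ itself) or otherwise bound the closure of the generated semigroup from above; neither is routine, and without it the reverse inclusion remains unproven.
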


\begin{proof}
If $-L=\iu\ad_H$ for some $\iu H\in\mf{su}(n)$, then it is clear that the Lie wedge is $\ad_{\mf{su}(n)}$ since it
%$\ad_{\mf{su}(n)}$ 
is a compact Lie algebra.
%$\generate{\Omega}{LS}=\ad_{\mf{su}(n)}$ since $\ad_{\mf{su}(n)}$ is a compact Lie algebra.
Hence we can focus on the case where $-L\notin\ad_{\mf{su}(n)}$. Since $\wkl(n)$ is a global Lie wedge (cf.~\cite[Thm.~3.3]{DHKS08}) and since the edge of $\wkl(n)$ equals $\ad_{\mf{su}(n)}$ (e.g.~by Lemma~\ref{lemma:purely-Hamiltonian}~\ref{it:pure-ham-eig}), the result follows from~\cite[Prop.~2.1~(iii)]{MDES23} (which itself is based on~\cite[Prop.~1.37]{HN12}).
\end{proof}

Furthermore, the Lie saturate of the reduced control system~\eqref{eq:simplex-control-system} is $\generate{\mf L}{LS}=\generate{\mf L}{wedge}$.
This follows immediately from the fact that $\mf{stoch}(n)$ is a pointed (hence global) Lie wedge by use of~\cite[Prop.~1.37]{HN12}.
As a consequence we obtain that the Lie saturates are related by $\Pi_{\diag}\circ\generate{\Omega}{LS}\circ\diag=\generate{\mf L}{LS}$, see also~\cite[Lem.~2.3]{MDES23}.

\section{Stabilizability, Viability and Accessibility} \label{sec:stab}

An important task in control theory is that of keeping the state in a certain region of the state space, called viability, or close to some desired state, called stabilizability. 
In this section we characterize viability and stabilizability in the reduced control system and deduce the implications for the full bilinear system.
Moreover we study accessibility in the reduced system and show that non-unital systems are generically directly accessible. 

\subsection{Stabilizable and Strongly Stabilizable Points}

We begin with stabilizable points, emphasizing that our systems do not have feedback, hence why we talk about open-loop stabilizability only.

\begin{definition}
A point $\lambda\in\Delta^{n-1}$ is called \emph{stabilizable} for~\eqref{eq:simplex-control-system} if it holds that $0\in\conv(\derv(\lambda))$. 
The set of all stabilizable points is denoted $\stab_{\ref{eq:simplex-control-system}}$. 
We say that $\lambda$ is \emph{strongly stabilizable} for~\eqref{eq:simplex-control-system} if $0\in\derv(\lambda)$.
\end{definition}
Using previously established notation, $\lambda$ is strongly 
stabilizable if and only if $0\in\mf L\lambda$, and $\lambda$ is stabilizable if and only if $0\in\conv(\mf L\lambda)=(\conv\mf L)\lambda$. \smallskip

It is clear that strong stabilizability implies stabilizability. 
Intuitively, $\lambda$ is stabilizable if any solution to~\eqref{eq:simplex-control-system} starting at $\lambda$ can remain close to $\lambda$ for an arbitrarily long amount of time, and $\lambda$ is strongly stabilizable if the constant path at $\lambda$ is a solution to~\eqref{eq:simplex-control-system}.
For a more precise statement, see~\cite[Sec.~4.3]{MDES23}. 
The following result is a direct specialization of~\cite[Prop.~4.5]{MDES23}.

%\begin{proposition} \label{prop:ham-stab-is-strongly-stab}
%Suppose that $\rho_0\in\mf{pos}_1(n)$, $\lambda_0\in\Delta^{n-1}$ and $U\in\SU(n)$ are given such that $\rho_0=\Ad_U(\diag(\lambda_0))$, and that there is some Hamiltonian $H$ such that $-(\iu\ad_H+L)(\rho)=0$. 
%Then $L_U\lambda_0=0$ and hence $\lambda_0$ is strongly stabilizable for~\eqref{eq:simplex-control-system}.
%
%Conversely, let $\lambda_0\in\Delta^{n-1}$ be regular and strongly stabilizable for~\eqref{eq:simplex-control-system}, and let $U\in\SU(n)$ with $-L_U\lambda_0=0$ and set $\rho_0=\Ad_U(\diag(\lambda_0))$.
%Then the \emph{compensating Hamiltonian} $H_c:=\ad_{\rho_0}^{-1}\circ\,\Pi_{\rho_0}^\perp\circ L(\rho_0)$ satisfies
%\begin{align*}
%-(\iu\ad_{H_c} + L)(\rho_0) = 0\,.
%\end{align*}
%\end{proposition}

\begin{proposition} \label{prop:ham-stab-is-strongly-stab}
Let $-L\in\wkl(n)$ be any Kossakowski--Lindblad generator, and let $\rho_0\in\mf{pos}_1(n)$, $\lambda_0\in\Delta^{n-1}$ and $U\in\SU(n)$ be given such that $\rho_0=\Ad_U(\diag(\lambda_0))$. Then the following hold: \smallskip
\begin{enumerate}[(i)]
\item If there is some Hamiltonian $H$ such that $-(\iu\ad_H+L)(\rho_0)=0$, then $-L_U\lambda_0=0$ and hence $\lambda_0$ is strongly stabilizable for~\eqref{eq:simplex-control-system}.
\item Conversely, if $\lambda_0$ is regular and strongly stabilizable for~\eqref{eq:simplex-control-system} with $-L_U\lambda_0=0$, then the \emph{compensating Hamiltonian} $H_c:=\ad_{\rho_0}^{-1}\circ\,\Pi_{\rho_0}^\perp\circ L(\rho_0)$ satisfies
\begin{align*}
-(\iu\ad_{H_c} + L)(\rho_0) = 0\,.
\end{align*}
\end{enumerate}
\end{proposition}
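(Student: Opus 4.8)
The plan is to use the explicit unfolding of the induced vector field in \eqref{eq:LU} together with the elementary fact that the diagonal of a commutator with a diagonal matrix vanishes; for the converse I would additionally invoke the regular-case identity $\Pi_{\rho_0}=\Ad_U\circ\Pi_{\diag}\circ\Ad_U^{-1}$ and the invertibility of $\ad_{\rho_0}$ on the off-diagonal block. For (i), I would start from the hypothesis $-(\iu\ad_H+L)(\rho_0)=0$, i.e.\ $L(\rho_0)=-\iu[H,\rho_0]$. Since $\rho_0=\Ad_U(\diag(\lambda_0))$, unfolding \eqref{eq:LU} gives $-L_U\lambda_0=-\Pi_{\diag}(\Ad_U^{-1}(L(\rho_0)))$, and conjugating the commutator yields $\Ad_U^{-1}(L(\rho_0))=-\iu[U^*HU,\diag(\lambda_0)]$. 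The key point is that the diagonal entries of any commutator $[A,\diag(\lambda_0)]$ equal $A_{ii}(\lambda_0)_i-(\lambda_0)_iA_{ii}=0$, so $\Pi_{\diag}$ annihilates it and $-L_U\lambda_0=0$. Because $-L_U\in\mf L$, this gives $0\in\mf L\lambda_0=\derv(\lambda_0)$, which is exactly strong stabilizability of $\lambda_0$.

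For (ii), I would first translate the reduced condition $-L_U\lambda_0=0$ into an operator statement. Reading the computation above backwards, $-L_U\lambda_0=0$ says precisely that the diagonal of $\Ad_U^{-1}(L(\rho_0))$ vanishes; since $\lambda_0$ is regular, the commutant of $\rho_0$ equals $\Ad_U$ applied to the diagonal matrices and $\Pi_{\rho_0}=\Ad_U\circ\Pi_{\diag}\circ\Ad_U^{-1}$, so this is equivalent to $\Pi_{\rho_0}(L(\rho_0))=0$ and hence $L(\rho_0)=\Pi_{\rho_0}^\perp(L(\rho_0))$. Regularity also guarantees that $\ad_{\rho_0}$ is invertible on the range of $\Pi_{\rho_0}^\perp$ (in the eigenbasis it multiplies the $(i,j)$ entry by the nonzero factor $(\lambda_0)_i-(\lambda_0)_j$), so that $H_c$ is well defined; I would then verify that $H_c$ is Hermitian, hence admissible as a control Hamiltonian. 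Substituting and using $\ad_{\rho_0}\circ\ad_{\rho_0}^{-1}=\id$ on this range, a direct computation gives $\iu\ad_{H_c}(\rho_0)=-\Pi_{\rho_0}^\perp(L(\rho_0))=-L(\rho_0)$, i.e.\ $-(\iu\ad_{H_c}+L)(\rho_0)=0$.

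The hard part will be (ii): both the passage from $-L_U\lambda_0=0$ to $\Pi_{\rho_0}(L(\rho_0))=0$ and the well-definedness and Hermiticity of $H_c$ rest on regularity (distinct eigenvalues), through the identification of $\Pi_{\rho_0}$ with the conjugated diagonal projection and the invertibility of $\ad_{\rho_0}$ on the off-diagonal subspace. By contrast, (i) reduces to the single observation that passing to the diagonal kills the entire Hamiltonian contribution, so no regularity is needed there.
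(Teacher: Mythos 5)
Your argument follows the route the paper intends: the paper itself does not spell out a proof but cites \cite[Prop.~4.5]{MDES23}, and the computation it sketches immediately after the proposition (equivariance, $\Pi_{\rho_0}=\Ad_U\circ\Pi_{\diag}\circ\Ad_U^{-1}$ in the regular case, inversion of $\ad_{\rho_0}$ off the commutant) uses exactly the ingredients you propose. Part (i) of your proposal is complete and correct: the hypothesis gives $\Ad_U^{-1}(L(\rho_0))=-\iu[U^*HU,\diag(\lambda_0)]$, the diagonal of any commutator with a diagonal matrix vanishes, hence $-L_U\lambda_0=0$ and $0\in\derv(\lambda_0)$.

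In part (ii) there is one concrete slip, inherited from a typo in the proposition itself: with the literal definition $H_c=\ad_{\rho_0}^{-1}\circ\Pi_{\rho_0}^\perp\circ L(\rho_0)$, the matrix $H_c$ is \emph{anti}-Hermitian, not Hermitian. Indeed, writing $A:=\Pi_{\rho_0}^\perp(L(\rho_0))$ (Hermitian, with zero diagonal in the eigenbasis of $\rho_0$) and letting $\lambda_1,\dots,\lambda_n$ denote the distinct eigenvalues of $\rho_0$, one has $(H_c)_{ij}=A_{ij}/(\lambda_i-\lambda_j)$ for $i\neq j$, whence $(H_c)^*=-H_c$. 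For this $H_c$ the direct computation gives $\ad_{H_c}(\rho_0)=-\ad_{\rho_0}(H_c)=-A=-L(\rho_0)$ \emph{without} the factor $\iu$; so your two claims---that $H_c$ is Hermitian and that $\iu\ad_{H_c}(\rho_0)=-L(\rho_0)$---cannot both hold for that matrix. The repair is to set $H_c:=-\iu\,\ad_{\rho_0}^{-1}\circ\Pi_{\rho_0}^\perp\circ L(\rho_0)$: this \emph{is} Hermitian (the factor $-\iu$ turns an anti-Hermitian matrix into a Hermitian one), and then $\iu\ad_{H_c}(\rho_0)=[\ad_{\rho_0}^{-1}(A),\rho_0]=-A=-L(\rho_0)$, which is exactly your final display. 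The same factor-of-$\iu$ inconsistency appears in the paper (compare the proposition's definition of $H_c$ with the subsequent in-text formula $-\iu H_c=\Ad_U\circ\ad_\lambda^{-1}\circ\Pi_{\diag}^\perp\circ\Ad_U^{-1}\circ L\circ\Ad_U(\diag(\lambda))$), so this is a typo-level repair rather than a conceptual gap in your reasoning; but a correct writeup must carry the $\iu$ explicitly instead of asserting two incompatible claims.
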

Note that the assumption on regularity is generally necessary, as exemplified in~\cite[Ex.~3.12]{MDES23}.

Denoting $\tilde V_k=U^*V_kU$, and analogously $\tilde H_c$ and $\tilde H_0$, the compensating Hamiltonian $H_c$ for a regular, strongly stabilizable state $\rho=\Ad_U(\diag(\lambda))$ can be written more explicitly as follows.
Using basic equivariance properties and the fact that, by regularity of $\rho$, we have $\Pi_\lambda=\Pi_\diag$, we find
$-\iu H_c 
%=\ad_\rho^{-1}\circ\Pi_\rho^\perp\circ L
= \Ad_U\circ\ad_\lambda^{-1}\circ\Pi_{\diag}^\perp\circ\Ad_U^{-1}\circ L\circ\Ad_U(\diag(\lambda))$.
%Hence, if we denote $\tilde V_k:=\Ad_U^{-1}(V_k)$, then this becomes
Hence
$-\iu\langle i|\tilde H_c+\tilde H_0|j\rangle
=
\sum_{k=1}^r {\langle i|\Gamma_{\tilde V_k}(\lambda)|j\rangle}/{(\lambda_i-\lambda_j)}
%\notag\\&=
%\sum_{k=1}^r \frac{\frac{\lambda_i+\lambda_j}2 \langle i|\tilde V_k^*\tilde V_k|j\rangle - \sum_{\ell=1}^n\lambda_\ell\langle i|\tilde V_k|\ell\rangle\langle \ell|\tilde V_j^*|j\rangle}{\lambda_i-\lambda_j}
$
which expands to
$$
-\iu \langle i|\tilde H_c|j\rangle
=
\iu \langle i|\tilde H_0|j\rangle
+
\sum_{k=1}^r \frac{\frac{\lambda_i+\lambda_j}2 \langle i|\tilde V_k^*\tilde V_k|j\rangle - \sum_{\ell=1}^n\lambda_\ell\langle i|\tilde V_k|\ell\rangle\langle \ell|\tilde V_k^*|j\rangle}{\lambda_i-\lambda_j}\,,
$$
as desired.
%The details are given in Appendix~\ref{app:comps} (see~\eqref{eq:comp-ham}), compare also to~\cite[Prop.~III.1]{rooney2018}.
%Next we derive an explicit form of the compensating Hamiltonian $-\iu H_c=\ad_\rho^{-1}\circ\Pi_\rho^\perp\circ L(\rho)$.

Note that more generally, for regular $\rho_0=\Ad_U(\diag(\lambda_0))$ the compensating Hamiltonian $H_c$ exactly cancels out the part of $-L(\rho_0)$ which is tangent to the orbit, cf.~Figure~\ref{fig:sketch}.

Computing the set of all (strongly) stabilizable states is difficult in general. Even in the three-dimensional toy model case considered in~\cite[Sec.~4.2 \& Fig.~5]{OSID23} this is non-trivial, see also~\cite[Sec.~IV]{rooney2018}.

\subsection{Viable Faces}

Viable subsets of the state space are those in which one can remain for an arbitrary amount of time. More precisely they are defined as follows.

\begin{definition}
A subset $S\subseteq\Delta^{n-1}$ is called \emph{viable} for~\eqref{eq:relaxed-control-system} if for every initial point $\lambda_0\in S$, there exists a solution $\lambda:[0,\infty)\to S$ to~\eqref{eq:relaxed-control-system} with $\lambda(0)=\lambda_0$.
%, that remains in $S$ for all time. 
%It is called \emph{completely viable} if there exists some $\tilde L\in\conv(\mf L)$ whose flow leaves $S$ invariant.
\end{definition}
Note that
%one can show that 
the singleton set $\{\lambda\}$ is viable for~\eqref{eq:relaxed-control-system} if and only if $\lambda$ is stabilizable for~\eqref{eq:simplex-control-system}, cf.~\cite[Sec.~4.4]{MDES23}.

We are primarily interested in the case where the subset in question is a face of the simplex $\Delta^{n-1}$.
Due to the permutation symmetry of the reduced system
as well as the geometry of the simplex $\Delta^{n-1}$, all $(d-1)$-dimensional faces of the simplex are equivalent.
%The points in the relative interior of a $(d-1)$-dimensional face correspond exactly to the density matrices of rank $d$.
%Recall that a lazy subspace for $-L$, cf.~\cite[Sec.~3.3]{BNT08b}, is a common invariant subspace for any choice of Lindblad terms $V_k$ which define $-L$ (or equivalently of the relaxation algebra $\cV$), cf.~Lemma~\ref{lemma:lazy}.
The main result of this section will relate viable faces of dimension $d-1$ to lazy subspaces (i.e.~common invariant subspaces of the relaxation algebra $\cV$,
or equivalently, of any choice of Lindblad terms $V_k$ which define $-L$,
cf.~Appendix~\ref{app:lindblad-eq}) of dimension $d$.
We begin with a simple lemma:

\begin{lemma}
\label{lemma:tangent-face-lazy-subspace}
Let $F$ be a face of $\Delta^{n-1}$ of dimension $d-1$ with $1\leq d\leq n$. 
Let $\lambda$ be a point in the relative interior\footnote{The relative interior of a set $S$, denoted $\relint(S)$, is the interior of $S$ within its affine hull. Note that for a singleton set $\{s\}$ the relative interior is $\{s\}$ itself.} of $F$ and let $-M\in\conv(\mf L)$ be such that $-M\lambda$ is tangent to $F$. Then there exists a lazy subspace of dimension $d$.
\end{lemma}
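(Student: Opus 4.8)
The plan is to identify the face $F$ with an index set and then read off, purely from the tangency condition combined with the non-negativity of off-diagonal entries of generators in $\conv(\mf L)$, a block-triangular structure on the Lindblad terms that directly exhibits a lazy subspace.

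First I would fix the index set $S\subseteq\{1,\ldots,n\}$ with $|S|=d$ for which $F=\{x\in\Delta^{n-1}:x_i=0\text{ for }i\notin S\}$; since $\lambda\in\relint(F)$, we have $\lambda_j>0$ for $j\in S$ and $\lambda_j=0$ otherwise. At the relative interior point $\lambda$ the tangent cone of $F$ is the linear space $\{v\in\R^n_0:v_i=0\text{ for }i\notin S\}$, so the hypothesis that $-M\lambda$ is tangent to $F$ reduces exactly to the conditions $(-M\lambda)_i=0$ for all $i\notin S$. (The remaining constraint $\e^\top(-M\lambda)=0$ is automatic, since every element of $\conv(\mf L)\subseteq\mf{stoch}(n)$ satisfies $\e^\top M=0$.)

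The key step is to exploit sign information. Every $-M\in\conv(\mf L)$ inherits from $\mf L\subseteq\mf{stoch}(n)$ non-negative off-diagonal entries, so for $i\notin S$ and $j\in S$ (hence $i\neq j$) we have $(-M)_{ij}\geq0$. Expanding $(-M\lambda)_i=\sum_{j\in S}(-M)_{ij}\lambda_j=0$ as a sum of non-negative terms and using $\lambda_j>0$, I conclude that $(-M)_{ij}=0$ for all $i\notin S,\,j\in S$. Next I would write $-M=\sum_\alpha c_\alpha(-L_{U_\alpha})$ as a finite convex combination with all $c_\alpha>0$ (by definition of the convex hull, or Carathéodory). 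Because each $(-L_{U_\alpha})_{ij}\geq0$ for $i\neq j$, there is no possibility of cancellation, so $(-M)_{ij}=0$ forces $(-L_{U_\alpha})_{ij}=J_{ij}(U_\alpha)=0$ for every $\alpha$ and all $i\notin S,\,j\in S$. This passage from a convex combination back to a single generator, powered entirely by the absence of sign cancellation, is the crux of the argument and the only genuine obstacle; everything else is bookkeeping.

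Finally I would translate the vanishing of $J$ into invariance. Fixing one index $\alpha$ and recalling from~\eqref{eq:def_JU} that $J_{ij}(U_\alpha)=\sum_{k=1}^r|(U_\alpha^*V_kU_\alpha)_{ij}|^2$, the vanishing $J_{ij}(U_\alpha)=0$ gives $(U_\alpha^*V_kU_\alpha)_{ij}=0$ for all $k$ and all $i\notin S,\,j\in S$. In block form relative to $\C^n=\C^S\oplus\C^{S^c}$ this says each $U_\alpha^*V_kU_\alpha$ has vanishing $(S^c,S)$-block, i.e.~the coordinate subspace $\C^S=\linspan\{e_j:j\in S\}$ is invariant under every $U_\alpha^*V_kU_\alpha$. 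Hence $W:=U_\alpha\C^S$ is a $d$-dimensional subspace invariant under every $V_k$, and since the relaxation algebra $\cV$ is generated by the $V_k$ together with the identity, $W$ is invariant under $\cV$, i.e.~a lazy subspace of dimension $d$, as claimed.
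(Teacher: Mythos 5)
Your proof is correct and follows essentially the same route as the paper: normalize the face to a coordinate subspace, use positivity of $\lambda$ on $S$ together with non-negativity of off-diagonal entries to kill the $(S^c,S)$-block of $M$, then pass to a single $U_\alpha$ and to the Lindblad terms. The only cosmetic difference is that you re-derive inline (via Carath\'eodory and the no-cancellation and sum-of-squares arguments) what the paper delegates to its Lemma~\ref{lemma:LU-zeros}.
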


\begin{proof}
Using permutations we can assume that $F$ consist of all vectors in $\Delta^{n-1}$ whose last $n-d$ entries are zero. Since $\lambda$ lies in the relative interior or $F$, its first $d$ entries must be strictly positive. By assumption, the last $n-d$ entries of $M\lambda$ must be zero. But this can only be if the block $(d+1,\ldots,n)\times(1,\ldots,d)$ of $M$ is identically $0$. By Lemma~\ref{lemma:LU-zeros} 
%the same must be true for all $U_i^*V_kU_i$. In particular 
there exists $U\in\U(n)$ such that all $U^*V_kU$ are block upper triangular, hence the first $d$ columns of any such $U^*$ span the desired lazy subspace.
\end{proof}

\begin{proposition} \label{prop:viable-faces}
Let $F$ be a face of $\Delta^{n-1}$ of dimension $d-1$ ($1\leq d\leq n$).
Given any Kossakowski--Lindblad generator $-L\in\wkl(n)$, the following are equivalent:\smallskip
\begin{enumerate}[(i)]
% viability (reduced control system):
\item \label{it:face-stab} $F$ is viable for~\eqref{eq:relaxed-control-system}.
\item \label{it:face-comp-stab} There exists some $-M\in\conv(\mf L)$ whose flow leaves $F$ invariant.
% lazy subspace (algebra)
\item \label{it:face-lazy} There exists a lazy subspace of dimension $d$.
\item \label{it:face-fix-point} There exists a stabilizable point for~\eqref{eq:simplex-control-system} in the relative interior of $F$.
\item \label{it:face-fix-point-strong} There exists a strongly stabilizable point for~\eqref{eq:simplex-control-system} in the relative interior of $F$.
\item \label{it:face-tangent} There exists some $-M\in\conv(\mf L)$ and some $\lambda$ in the relative interior of $F$ such that $-M\lambda$ is tangent to $F$.
\item \label{it:face-tangent2} There exists some $U\in\mathrm U(n)$ and some $\lambda$ in the relative interior of $F$ such that $-L_U\lambda$ is tangent to $F$.
\item \label{it:fixed-rho} There exists a stabilizing Hamiltonian $H_S$ and a state $\rho$ of rank $d$ such that $\rho$ is the unique fixed point of $-(\iu\ad_{H_S}+L)$ restricted to the support of $\rho$ (which is automatically a collecting subspace).
\end{enumerate}
\end{proposition}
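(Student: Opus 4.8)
The plan is to prove the eight conditions equivalent by establishing a single cycle, treating the reduced-system statements (i)--(vii) by elementary means and reserving the genuine work for the passage to the lifted statement (viii). First I would dispose of the cheap equivalences. Conditions (vi) and (vii) are essentially the same, since $-L_U\in\mf L\subseteq\conv(\mf L)$ gives (vii)$\Rightarrow$(vi) immediately, while (vi)$\Rightarrow$(iii) is precisely Lemma~\ref{lemma:tangent-face-lazy-subspace}. Conversely, given a lazy subspace of dimension $d$, the block-triangular normal form from the proof of that lemma (via Lemma~\ref{lemma:LU-zeros}) makes the $(d+1,\dots,n)\times(1,\dots,d)$ block of $J(U)$, and hence of $-L_U$, vanish by the explicit form~\eqref{eq:def_JU}, so no probability can flow out of the first $d$ coordinates. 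Thus the flow of $-L_U$ leaves the (normalised) face $F$ invariant, which simultaneously yields (iii)$\Rightarrow$(ii), (iii)$\Rightarrow$(vii), and, via constant controls, (ii)$\Rightarrow$(i). The remaining reduced-system implication (i)$\Rightarrow$(vi) is a Nagumo-type tangency statement: viability of $F$ forces, at any interior point, an achievable relaxed derivative tangent to $F$, which is exactly (vi); I would quote the viability characterisation of the companion paper for this.

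Next I would close the loop through the stabilizability and fixed-point conditions. The easy half reads (viii)$\Rightarrow$(v)$\Rightarrow$(iv)$\Rightarrow$(vi): a rank-$d$ state $\rho$ fixed by $-(\iu\ad_{H_S}+L)$ has spectrum in $\relint(F)$, and Proposition~\ref{prop:ham-stab-is-strongly-stab}(i) converts this fixed point into strong stabilizability of $\lambda_0=\spec^\down(\rho)$, giving (v); strong stabilizability trivially implies $0\in\conv(\mf L)\lambda_0$, and the zero tangent vector certifies (vi). Together with the first paragraph this leaves a single substantive implication to anchor the cycle, namely (iii)$\Rightarrow$(viii), which is where the real content and the main obstacle lie.

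For (iii)$\Rightarrow$(viii) the strategy is to leave the purely reduced picture and use the structure theory of Kossakowski--Lindblad generators summarised in Appendix~\ref{app:lindblad-eq}. Starting from a lazy subspace $W$ of dimension $d$, I would first promote it to a collecting subspace by adding a suitable time-independent Hamiltonian $H_S$ (the ``lazy-can-be-made-collecting'' mechanism of Section~\ref{sec:lindblad}), so that the modified generator $-(\iu\ad_{H_S}+L)$ maps $\mf{pos}_1(W)$ into itself and restricts to a genuine Kossakowski--Lindblad generator on $W\cong\C^d$. That restricted generator has a stationary state, and the task is to produce one of full rank $d$ which is the unique fixed point on $W$; its support is then automatically collecting, delivering $\rho$ as required. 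The easy direction above shows this $\rho$ returns the strongly stabilizable interior point of (v), so everything is consistent.

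The hard part will be exactly this last step: guaranteeing that the restricted dynamics admits a \emph{full-rank} and \emph{unique} stationary state, rather than one supported on a proper (hence lower-dimensional) collecting subspace. I expect to handle it by using the residual unitary freedom $U_1\in\SU(d)$ acting within $W$ to render the induced stochastic generator $-L_{U_1}$ on $\Delta^{d-1}$ irreducible; Perron--Frobenius then yields a strictly positive stationary distribution, whose diagonalisation is the spectrum of the desired rank-$d$ attractor, and Proposition~\ref{prop:ham-stab-is-strongly-stab}(ii) supplies a compensating Hamiltonian realising it in the full system. The degenerate case in which the restriction carries no surviving dissipation (so the induced field vanishes and every interior point is already stabilizable) must be treated separately, and reconciling it with the uniqueness clause of (viii) is the most delicate bookkeeping in the argument.
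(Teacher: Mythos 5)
Your cycle matches the paper's on everything routine: (iii)$\Rightarrow$(ii)$\Rightarrow$(i)$\Rightarrow$(vi) via the block-triangular form of $-L_U$, (vi)$\Rightarrow$(vii)$\Rightarrow$(iii) via Lemma~\ref{lemma:tangent-face-lazy-subspace}, and (viii)$\Rightarrow$(v)$\Rightarrow$(iv)$\Rightarrow$(vi) via Proposition~\ref{prop:ham-stab-is-strongly-stab}. The genuine gap is in your (iii)$\Rightarrow$(viii), and it is not the degenerate bookkeeping you flag at the end but the uniqueness clause itself. Irreducibility of a single induced classical generator $-L_{U_1}$ on $\Delta^{d-1}$ is far weaker than what (viii) demands: Perron--Frobenius gives uniqueness of the stationary distribution \emph{within the commutative family} $\{\Ad_{U_1}(\diag\nu):\nu\in\Delta^{d-1}\}$, but says nothing about fixed points of the quantum generator $-(\iu\ad_H+L)$ restricted to $W$ that are not diagonal in that basis. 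Concretely, suppose the restriction to $W$ carries a single normal Lindblad term, say $V|_W=\sigma_z$ with $d=2$ and no Hamiltonian: for generic $U_1$ the matrix $J(U_1)$ is symmetric with positive off-diagonal entries, so $-L_{U_1}$ is irreducible and doubly stochastic, its stationary vector is the uniform one, and the corresponding quantum fixed point $\mathds1/2$ exists---yet it is not unique, since \emph{every} state diagonal in the $\sigma_z$-eigenbasis is fixed. No choice of $U_1$ repairs this; uniqueness must be enforced through the choice of the Hamiltonian block on $W$. That is exactly what the paper does (packaged as Lemma~\ref{lemma:unique-attractive-fp}): after Corollary~\ref{coro:stab-ham} makes $W$ collecting, Corollary~\ref{coro:generate-matrix-algebra} picks the free block of the Hamiltonian so that the restricted extended relaxation algebra is all of $\C^{d\times d}$, i.e.~$W$ becomes a \emph{minimal} collecting subspace, and Lemma~\ref{lemma:min-collecting} then delivers the unique full-rank fixed point. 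Classical irreducibility of one $-L_{U_1}$ cannot substitute for this Burnside-type irreducibility of the algebra.

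A second, related flaw: your final lifting step invokes Proposition~\ref{prop:ham-stab-is-strongly-stab}(ii), which requires the stationary vector $\mu$ to be \emph{regular}. Perron--Frobenius gives strict positivity, not regularity---in the example above $\mu=\e/2$ is maximally degenerate---and the paper stresses that regularity cannot be dropped there (citing the counterexample in the companion paper). So even the existence, let alone the uniqueness, of the quantum fixed point does not follow from your argument as written. Both defects disappear in the algebraic route, which also clarifies your ``degenerate case'': when all restricted Lindblad terms are scalar on $W$, no Hamiltonian can make the restricted algebra irreducible, so the minimal-collecting mechanism is genuinely unavailable there rather than being a case one can reconcile by separate bookkeeping.
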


\begin{proof}
The implications 
\ref{it:face-lazy} $\Rightarrow$ 
\ref{it:face-comp-stab} $\Rightarrow$ 
\ref{it:face-stab} $\Rightarrow$
\ref{it:face-tangent} 
%as well as \ref{it:face-comp-stab} $\Rightarrow$ \ref{it:face-fix-point} 
are easy. 
%Then also \ref{it:face-fix-point} $\Rightarrow$ \ref{it:face-tangent} is easy. 
%For the implication .. first choose $M_U$ which leaves $F$ invariant and define $M$ as the average of all $P^\topM_UP$ where $P$ leaves $F$ invariant. 
%Then the implication follows from follows from Brouwer's fixed point theorem\footnote{Brouwer's fixed point theorem states that every continuous function from a convex compact subset of Euclidean space to itself has a fixed point.}. 
By Lemma~\ref{lemma:tangent-face-lazy-subspace} and its proof it holds that \ref{it:face-tangent} $\Rightarrow$ \ref{it:face-tangent2} $\Rightarrow$ \ref{it:face-lazy}.
%, and the fact that $M$ can be chosen of the form $L_U$ follows from the proof of the lemma. 
That~\ref{it:face-lazy} $\Rightarrow$ \ref{it:fixed-rho} follows from Lemma~\ref{lemma:unique-attractive-fp}.
% by choosing a flag containing only a lazy subspace of dimension $d$.
Finally, \ref{it:fixed-rho} $\Rightarrow$ \ref{it:face-fix-point-strong} follows from Proposition~\ref{prop:ham-stab-is-strongly-stab}, and \ref{it:face-fix-point-strong} $\Rightarrow$ \ref{it:face-fix-point} $\Rightarrow$ \ref{it:face-tangent}
%\marginpar{F: is \ref{it:face-fix-point} $\Rightarrow$ \ref{it:face-tangent} clear? I'm not sure about the ''$-M$ can be chosen of the form $-L_U$`` portion for this implication}
are clear. 
%Finally the equivalence of \ref{it:f-face-stab} and \ref{it:face-fix-point} is immediate from Lemma~\ref{lemma:stab-closed} and Lemma~\ref{lemma:f-constraints}.
\end{proof}
Recall from Corollary~\ref{coro:stab-ham} that for a lazy subspace $S$, any $H_S$ satisfying 
$P_S^\perp H_S P_S=-P_S^\perp\big(H_0+\frac{1}{2\iu}\sum_{k=1}^rV_k^*V_k\big)P_S$
is a stabilizing Hamiltonian,
where $P_S$ is the orthogonal projection onto $S$.
This is a well-known result, cf.~\cite{Lidar98DFS,BNT08b}, and Proposition~\ref{prop:viable-faces} shows that such time-independent control is always sufficient for making a face viable.

For the vertices $e_i$ of $\Delta^{n-1}$---which correspond to pure states---Proposition~\ref{prop:viable-faces} specializes as follows:

\begin{corollary}
Given any Kossakowski--Lindblad generator $-L\in\wkl(n)$, the following are equivalent:\smallskip
\begin{enumerate}[(i)]
\item Some (equivalently: each) $e_i$ is stabilizable for~\eqref{eq:simplex-control-system}.
\item Some (equivalently: each) $e_i$ is strongly stabilizable for~\eqref{eq:simplex-control-system}.
\item For some (equivalently: each) choice of Lindblad terms $\{V_k\}_{k=1}^r$ of $-L$, the $V_k$ have a common eigenvector. %\marginpar{\todo{relate to dark states}}
\item There is a stabilizing Hamiltonian $H_\psi$ for a pure state\footnote{
Note that this does \textit{not} follow from Proposition~\ref{prop:ham-stab-is-strongly-stab} since $e_i$ is not regular.},
that is, there exists $\|\psi\|=1$ such that
 $-(\iu\ad_{H_\psi}+L)|(|\psi\rangle\langle\psi|)=0$.\smallskip %"L|\psi>" makes no sense; fixed it to be |psi><psi|
\end{enumerate}
%{\color{blue}
Moreover, if $|\psi\rangle$ is a common eigenvector of the $V_k$ with $V_k|\psi\rangle=\lambda_k|\psi\rangle$, then any Hermitian  $H_\psi$ satisfying
$H_\psi\ket\psi = -\big(H_0+(\lambda-\tfrac1{2\iu}\sum_{k=1}^r|\lambda_k|^2)\mathds1+\frac1{2\iu}\sum_{k=1}^r V_k^*V_k\big)\ket\psi$ where $\lambda\in\R$ is arbitrary
%}
%\marginpar{\color{red}
%%WRONG! inserting this $H_S$ into $-(\iu\ad_{H_S}+L)|(|\psi\rangle\langle\psi|)$ yields $\sum_k|V_k\psi\rangle\langle V_k|\psi=(\sum_k|\lambda_k|^2)|\psi\rangle\langle\psi|$.  problem: $\lambda\mathds1$ for $\lambda\in\mathbb R$ vanishes in the commutator, but choosing $\lambda\in\iu\mathbb R$---which would be necessary to negate this term---
%adding the imaginary term $\tfrac1{2\iu}\sum_{k=1}^r|\lambda_k|^2\mathds1$
%makes the def.~of $H_S$ impossible because then $\langle\psi|H_S|\psi\rangle$ is not real anymore!!}
%$(\mathds 1-|\psi\rangle\langle\psi|) H_S |\psi\rangle\langle\psi|= -(\mathds 1-|\psi\rangle\langle\psi|) \big(H_0+\frac1{2\iu}\sum_{k=1}^r V_k^*V_k\big) |\psi\rangle\langle\psi|$
%\begin{align*}
%H_S&=\big(\mathds 1-|\psi\rangle\langle\psi|+|\psi\rangle\langle\psi|\big)H_S\big(\mathds 1-|\psi\rangle\langle\psi|+|\psi\rangle\langle\psi|\big)\\
%&=c_S|\psi\rangle\langle\psi|-(\mathds 1-|\psi\rangle\langle\psi|) \big(H_0+\frac1{2\iu}\sum_{k=1}^r V_k^*V_k\big) |\psi\rangle\langle\psi|\\
%&-|\psi\rangle\langle\psi| \big(H_0-\frac1{2\iu}\sum_{k=1}^r V_k^*V_k\big)(\mathds 1-|\psi\rangle\langle\psi|)+\big(\mathds 1-|\psi\rangle\langle\psi|\big)H_S'\big(\mathds 1-|\psi\rangle\langle\psi|\big)
%\end{align*}
%for some $c_S\in\mathbb R$, $H_S'\in i\mathfrak u(n)$
is an admissible choice for the stabilizing Hamiltonian. 
%\marginpar{more explicit expression?} F: expression becomes rather messy
\end{corollary}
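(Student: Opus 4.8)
The statement splits into the equivalence (i)--(iv) and the explicit construction of $H_\psi$ in the concluding clause. For the equivalences the plan is simply to specialize Proposition~\ref{prop:viable-faces} to the $0$-dimensional face $F=\{e_i\}$, i.e.\ to the case $d=1$. A lazy subspace of dimension $d=1$ is a one-dimensional common invariant subspace of the relaxation algebra $\cV$, which is the same thing as a common eigenvector of any choice of Lindblad terms $\{V_k\}$; this identifies item~\ref{it:face-lazy} of Proposition~\ref{prop:viable-faces} with our (iii). Since $\relint(\{e_i\})=\{e_i\}$, items~\ref{it:face-fix-point} and~\ref{it:face-fix-point-strong} there read exactly as stabilizability and strong stabilizability of $e_i$, i.e.\ our (i) and (ii). Finally item~\ref{it:fixed-rho} reduces, for a rank-$1$ state $\rho=\ket\psi\!\bra\psi$ whose one-dimensional support carries a unique state (so the uniqueness is automatic), to the existence of a stabilizing $H_\psi$ with $-(\iu\ad_{H_\psi}+L)(\ket\psi\!\bra\psi)=0$, which is our (iv). As Proposition~\ref{prop:viable-faces} asserts all its items equivalent, we read off (i)$\Leftrightarrow$(ii)$\Leftrightarrow$(iii)$\Leftrightarrow$(iv).

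For the ``some (equivalently each)'' phrasing I would invoke the permutation symmetry of the reduced system: the set $\mf L$ is invariant under conjugation by permutation matrices, which follows in one line from the definition~\eqref{eq:LU} together with $\diag(P_\sigma x)=\Ad_{P_\sigma}\diag(x)$ and $\Pi_{\diag}\circ\Ad_{P_\sigma}=P_\sigma\circ\Pi_{\diag}$. Hence the system is permutation-equivariant and (strong) stabilizability of one vertex transfers to every $e_j$, as already noted for faces of equal dimension. For (iii) the independence of the choice of Lindblad terms is immediate from the freedom of representation (Lemma~\ref{lemma:freedom-of-reps}): two choices are related by invertible affine-linear combinations $V_k'=\sum_\ell u_{k\ell}V_\ell+c_k\mathds1$, and such combinations manifestly preserve the set of common eigenvectors.

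The concluding clause is a direct verification, for which the key structural observation is that both $L(\ket\psi\!\bra\psi)$ and $\iu\ad_{H_\psi}(\ket\psi\!\bra\psi)=\iu\big((H_\psi\ket\psi)\bra\psi-\ket\psi\,(H_\psi\ket\psi)^{*}\big)$ depend on $H_\psi$ only through the single vector $H_\psi\ket\psi$ (the second using Hermiticity). I would first use $V_k\ket\psi=\lambda_k\ket\psi$, hence $\bra\psi V_k^*=\overline{\lambda_k}\bra\psi$, to get $\Gamma_{V_k}(\ket\psi\!\bra\psi)=\tfrac12\lambda_k(V_k^*\ket\psi)\bra\psi+\tfrac12\overline{\lambda_k}\ket\psi(\bra\psi V_k)-|\lambda_k|^2\ket\psi\!\bra\psi$. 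Substituting the prescribed $H_\psi\ket\psi$ (which, together with $H_0$, contributes $\tfrac{\iu}{2}\sum_k V_k^*V_k\ket\psi=\tfrac{\iu}{2}\sum_k\lambda_k(V_k^*\ket\psi)$ plus scalar multiples of $\ket\psi$) into $\iu\ad_{H_\psi+H_0}(\ket\psi\!\bra\psi)$ and adding $\sum_k\Gamma_{V_k}(\ket\psi\!\bra\psi)$, the off-diagonal rank-one terms $(V_k^*\ket\psi)\bra\psi$ and their adjoints cancel in pairs, as do all the $\ket\psi\!\bra\psi$ contributions, yielding $-(\iu\ad_{H_\psi}+L)(\ket\psi\!\bra\psi)=0$.

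The one genuine subtlety, which I expect to be the main point to get right, is that $H_\psi$ is prescribed only on $\ket\psi$, so I must check that a \emph{Hermitian} operator realizing this action exists. A Hermitian extension of an assignment $H_\psi\ket\psi=\ket\phi$ exists precisely when $\langle\psi|\phi\rangle\in\R$. Pairing the prescribed right-hand side with $\bra\psi$ and using $\bra\psi V_k^*V_k\ket\psi=\|V_k\ket\psi\|^2=|\lambda_k|^2$, the imaginary contributions from the $V_k^*V_k$ term and from the scalar $-\tfrac{1}{2\iu}\sum_k|\lambda_k|^2\mathds1$ term cancel exactly, leaving $-\bra\psi H_0\ket\psi-\lambda\in\R$. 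This is precisely the role of that scalar correction and of the free real parameter $\lambda$: they ensure reality of $\bra\psi H_\psi\ket\psi$ and hence the existence of the claimed Hermitian $H_\psi$.
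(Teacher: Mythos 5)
Your proposal is correct and follows essentially the same route as the paper, which introduces this corollary precisely as the $d=1$ specialization of Proposition~\ref{prop:viable-faces} (with item~\ref{it:fixed-rho} there collapsing to (iv) since a one-dimensional support carries a unique state), leaving the ``Moreover'' formula as a computation consistent with Corollary~\ref{coro:stab-ham}. Your direct verification of that formula—including the observation that $\langle\psi|V_k^*V_k|\psi\rangle=|\lambda_k|^2$ makes $\langle\psi|H_\psi|\psi\rangle$ real, so a Hermitian extension exists—is exactly the detail the paper omits, and you correctly avoid the trap flagged in the paper's footnote by not invoking Proposition~\ref{prop:ham-stab-is-strongly-stab} at the non-regular point $e_i$.
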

Later we will see how certain systems can be asymptotically cooled into a pure state.
The corollary above then shows how such a pure state can be stabilized.
%In~\cite{Kraus08} \emph{dark states} of $-L$ are defined as convex combinations of pure states $\ketbra\psi$ in the kernel of $L$. \marginpar{What exactly is a dark state supposed to be???}

\begin{remark}
The permutation symmetry of the reduced control system (cf.~\cite[Lem.~A.2]{MDES23}) yields some further viable subsets.
Indeed every degeneracy plane (i.e.~the fixed point set of some permutation group) is viable. Similarly one can show that the ordered Weyl chamber $\Delta^{n-1}_\down$ as well as each of its permutations are viable.
\end{remark}

\subsection{Stabilizable Systems}

If every point in $\Delta^{n-1}$ is stabilizable for~\eqref{eq:simplex-control-system}, we say that the system is
\emph{stabilizable} for~\eqref{eq:simplex-control-system}. It turns out that stabilizable systems can be characterized in simple algebraic terms.

\begin{theorem} \label{thm:stab-system} %\marginpar{implication for full system}
Given any $-L\in\wkl(n)$, the following are equivalent: \smallskip
\begin{enumerate}[(i)]
\item %\label{it:sys-stab} 
The system is stabilizable for~\eqref{eq:simplex-control-system}.
\item %\label{it:triang}  
For some (equivalently: each) choice of Lindblad terms $\{V_k\}_{k=1}^r$ of $-L$, all $V_k$ are simultaneously triangularizable.
\item %\label{it:solvable} 
For some (equivalently: each) choice of Lindblad terms $\{V_k\}_{k=1}^r$ of $-L$, the $V_k$ generate a solvable Lie algebra.
\end{enumerate}
\end{theorem}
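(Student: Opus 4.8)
The plan is to prove the algebraic equivalence (ii)$\Leftrightarrow$(iii) on its own and then sandwich it against (i). For (ii)$\Leftrightarrow$(iii) I would invoke Lie's theorem: if the $V_k$ generate a solvable Lie algebra, then over $\C$ this algebra stabilizes a complete flag, so in particular the $V_k$ are simultaneously triangularizable; conversely, matrices that are simultaneously upper-triangular lie in the solvable Lie algebra of upper-triangular matrices, so the Lie algebra they generate is solvable. The clause ``some (equivalently: each)'' is then handled by observing that both properties depend only on the lattice of common invariant subspaces, which is exactly the lattice of invariant subspaces of the relaxation algebra $\cV$; since $\cV$ is well-defined independently of the chosen Lindblad terms (Lemma~\ref{lemma:freedom-of-reps}), the condition does not depend on the representation.

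For (ii)/(iii)$\Rightarrow$(i) I would use a separation (Farkas) characterization of stabilizability. Since $\derv(\lambda)=\mf L\lambda$ is compact, $0\in\conv(\derv(\lambda))$ if and only if for every $c\in\R^n$ there is some $U$ with $\langle c,-L_U\lambda\rangle\ge0$. A direct computation from the explicit form~\eqref{eq:def_JU} gives $\langle c,-L_U\lambda\rangle=\sum_{i,j}J_{ij}(U)\,\lambda_j\,(c_i-c_j)$ with $J_{ij}(U)\ge0$. Triangularizability supplies a unitary $U_0$ with all $U_0^*V_kU_0$ upper triangular, so $J(U_0)$ is upper triangular; replacing $U_0$ by $U_0P_\sigma$ for the permutation $\sigma$ that sorts the entries of $c$ in non-increasing order makes $J_{ij}$ vanish unless $c_i\ge c_j$, whence every surviving summand is non-negative and $\langle c,-L_U\lambda\rangle\ge0$. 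As this holds for all $c$ and all $\lambda$, every point is stabilizable.

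The hard direction is (i)$\Rightarrow$(ii), which I would prove by contraposition and induction on $n$. Assume the $V_k$ are not simultaneously triangularizable. If they have no common eigenvector, then by the vertex characterization (the corollary following Proposition~\ref{prop:viable-faces}) no vertex $e_i$ is stabilizable, so the system is not stabilizable and we are done. Otherwise fix a common eigenvector spanning a one-dimensional lazy subspace $W_1$; writing the $V_k$ block upper-triangular with respect to $W_1$ (using Lemma~\ref{lemma:LU-zeros}) produces induced quotient terms $\tilde V_k$ on $\C^n/W_1$, which are again not simultaneously triangularizable, since any flag for the $\tilde V_k$ would lift through $W_1$ to one for the $V_k$. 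By the inductive hypothesis the reduced system of the $\tilde V_k$ is not stabilizable, and it then remains to descend this conclusion to the full system, i.e.\ to establish the contrapositive ``full system stabilizable $\Rightarrow$ quotient system stabilizable''.

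This descent is the main obstacle. Stabilizing a lifted boundary point $\lambda=(0,\tilde\lambda)$ gives $-M\in\conv\mf L$ with $M\lambda=0$; non-negativity of the off-diagonal rates forces no inflow into the $W_1$-coordinate from $\supp(\tilde\lambda)$, and from this one can extract $\tilde M\tilde\lambda=0$ for a convex combination of reduced generators built from the compressions of $U_t^*V_kU_t$ to $W_1^\perp$. The trouble is twofold: once $U_t$ mixes the eigenvector direction these compressions need not be simultaneously unitarily conjugate to the $\tilde V_k$, so $-\tilde M$ need not lie in the quotient's $\conv\mf L$; and, more fundamentally, the orthocomplement $W_1^\perp$ does not agree with the intrinsic quotient $\C^n/W_1$ unless $W_1$ is reducing (an enclosure), which is exactly what fails outside the unital case where $\cV$ is a $*$-algebra. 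Overcoming this—controlling the full $\SU(n)$ freedom against a single non-reducing invariant line, or instead constructing directly a linear Lyapunov functional $c$ that strictly decreases under every achievable derivative at a point adapted to an irreducible subquotient of dimension $\ge2$ (via Burnside's theorem together with the zero-block rigidity of Lemma~\ref{lemma:LU-zeros})—is where the real work lies.
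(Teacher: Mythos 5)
Your proposal proves (ii)$\Leftrightarrow$(iii) and (ii)$\Rightarrow$(i), but the hard direction (i)$\Rightarrow$(ii) is left open---you say yourself that the descent to the quotient ``is where the real work lies''---so as it stands the proof is genuinely incomplete. The obstacle you identify is real: the induced vector fields of the quotient system on $\C^n/W_1$ come from conjugating the quotient terms $\tilde V_k$ by unitaries of $\SU(n-1)$, whereas a stabilizing generator of the full system is a convex combination of $-L_U$ with $U\in\SU(n)$ arbitrary, and compressing $U^*V_kU$ to $W_1^\perp$ does not produce matrices unitarily conjugate to the $\tilde V_k$ once $U$ mixes $W_1$ with its complement. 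The paper never attempts this descent; it proves (i)$\Rightarrow$(ii) directly, with no induction and no quotients. One evaluates the stabilizability hypothesis along the single curve $\lambda(t)=(1-(t+t^2+\cdots+t^{n-1}),\,t,\,t^2,\ldots,t^{n-1})$, obtaining $-M_t\in\conv(\mf L)$ with $M_t\lambda(t)=0$; the elementary estimate $A_{ij}v_j\le\|A\|_{\max}v_i$ of Lemma~\ref{lemma:matrix-element-bound} (valid for any matrix with non-negative off-diagonal, non-positive diagonal, and non-negative kernel vector $v$) gives $|(M_t)_{ij}|\le m^\star\lambda_i(t)/\lambda_j(t)\to0$ for $i>j$, because the entries of $\lambda(t)$ are hierarchically separated; compactness of $\conv(\mf L)$ yields a subsequential limit $-M\in\conv(\mf L)$ which is upper triangular; and Lemma~\ref{lemma:LU-zeros} transfers this zero pattern first to each $L_{U_k}$ in the convex combination representing $M$, and then to the conjugated Lindblad terms $U_k^*V_lU_k$, which are therefore simultaneously triangular. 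Note that only stabilizability of the points $\lambda(t)$, not of the whole simplex, is used. Your closing idea of a linear Lyapunov functional adapted to an irreducible subquotient is closer in spirit to this mechanism than the inductive descent is, but it is not carried out.

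The parts you do prove are correct, and your argument for (ii)$\Rightarrow$(i) takes a genuinely different---and arguably slicker---route than the paper's. The paper proceeds by induction on the block size, averaging permuted copies $L_{UP}$ and invoking the Brouwer fixed-point theorem to assemble a stabilizing convex combination. Your route---$0\in\conv(\derv(\lambda))$ iff for every $c$ some $U$ gives $\langle c,-L_U\lambda\rangle\ge0$ (strict separation for the compact convex set $\conv(\mf L\lambda)$), combined with the identity $\langle c,-L_U\lambda\rangle=\sum_{i,j}J_{ij}(U)\lambda_j(c_i-c_j)$ and the choice $U=U_0P_\sigma$ with $\sigma$ sorting $c$ non-increasingly, so that $J_{ij}(U)\neq0$ forces $c_i\ge c_j$---avoids both the induction and the fixed-point theorem, and it stabilizes every $\lambda$ with a $U$ depending only on $c$. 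Two minor points to patch: $U_0P_\sigma$ may have determinant $-1$, but $-L_U$ is insensitive to a global phase of $U$, so one can rescale back into $\SU(n)$; and the ``some/each'' clause is most cleanly handled by the well-definedness of the relaxation algebra $\cV$ (Lemma~\ref{lemma:lindblad-alg-well-def}, itself a consequence of Lemma~\ref{lemma:freedom-of-reps}), exactly as you indicate. The equivalence (ii)$\Leftrightarrow$(iii) via Lie's theorem coincides with the paper's argument.
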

In particular if all Lindblad terms $V_k$ commute (e.g., if there is only one $V_k$), then the system is stabilizable for~\eqref{eq:simplex-control-system}.
If the system is stabilizable for~\eqref{eq:simplex-control-system}, then one can show that in the full system~\eqref{eq:bilinear-control-system}, every $\SU(n)$-orbit is approximately viable, i.e.~one can stay arbitrarily close to any orbit, see~\cite[Prop.~4.7]{MDES23}.

We will prove the theorem as a sequence of lemmas:

\begin{lemma}
\label{lemma:matrix-element-bound}
Let $A\in\R^{n\times n}$ be a matrix with non-positive values on the diagonal and non-negative values on the off-diagonal. Let $v\in\R^n$ be a vector with non-negative elements such that $Av=0$. For all $i\neq j$ we have that
%\begin{align}
$A_{ij}\,v_j\leq\|A\|_{\max}\,v_i$,
%\end{align}
where $\|A\|_{\max}:=\max_{i,j}|A_{ij}|$.
\end{lemma}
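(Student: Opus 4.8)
The plan is to read off the desired inequality directly from a single row of the equation $Av=0$, exploiting the prescribed sign pattern of $A$ together with the non-negativity of $v$. First I would fix a pair $i\neq j$ and write out the $i$-th component of $Av=0$, which reads $A_{ii}v_i+\sum_{k\neq i}A_{ik}v_k=0$. Rearranging isolates the off-diagonal contribution in terms of the diagonal one:
\begin{align*}
\sum_{k\neq i}A_{ik}v_k=-A_{ii}v_i\,.
\end{align*}

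Next I would invoke the two sign hypotheses. For every $k\neq i$ the off-diagonal entry $A_{ik}$ is non-negative and $v_k\geq 0$, so each summand $A_{ik}v_k$ is non-negative. This is precisely what allows one to bound the single term of interest by the entire sum, namely $A_{ij}v_j\leq\sum_{k\neq i}A_{ik}v_k=-A_{ii}v_i$. Finally, since $A_{ii}\leq 0$ we have $-A_{ii}=|A_{ii}|\leq\|A\|_{\max}$, and since $v_i\geq 0$ this last step yields $A_{ij}v_j\leq\|A\|_{\max}\,v_i$, as claimed.

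I do not anticipate any genuine obstacle here: the statement follows in a handful of lines once the $i$-th row equation is written down. The only point demanding a little care is the bookkeeping of signs—one must confirm that \emph{every} off-diagonal summand $A_{ik}v_k$ is non-negative, since it is exactly this fact that licenses discarding all terms except $A_{ij}v_j$ while still retaining a valid upper bound. The non-positivity of the diagonal is then used only to convert $-A_{ii}$ into $|A_{ii}|$ and hence into $\|A\|_{\max}$.
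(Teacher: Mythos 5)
Your proof is correct and is essentially identical to the paper's own argument: both isolate the $i$-th row of $Av=0$, use the non-negativity of the off-diagonal terms $A_{ik}v_k$ to bound the single summand $A_{ij}v_j$ by $-A_{ii}v_i=|A_{ii}|v_i$, and then bound $|A_{ii}|$ by $\|A\|_{\max}$. No gaps; nothing further to add.
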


\begin{proof}
Note that $(Av)_i=0$ is equivalent to $\sum_{k\neq i} A_{ik}v_k=-A_{ii}v_i=|A_{ii}|v_i$.
This yields
%
%Using the $i^\mathrm{th}$ equation in the system $Av=0$ and the signs of the matrix and vector elements we compute
$A_{ij}v_j\leq  \sum_{k\neq i}A_{ik}v_k=|A_{ii}|v_i\leq\|A\|_{\max}v_i$,
for all $i\neq j$
as desired.
\end{proof}

\begin{lemma} %\label{lemma:stab-triang}
If the system is stabilizable, then the Lindblad terms $\{V_k\}_{k=1}^r$ are simultaneously triangularizable.
%\footnote{A flag in a vector space is a sequence of nested subspaces. A full flag is one where each subspace has codimension $1$ in the next larger space. A set of matrices stabilize a full flag if and only if all matrices can be simultaneously triangularized.}.
\end{lemma}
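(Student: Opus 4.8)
The plan is to induct on the dimension $n$, peeling off a one-dimensional common invariant subspace of the $V_k$ and reducing to an $(n-1)$-dimensional problem; the case $n=1$ is vacuous. I will use repeatedly that the reduced data $\mf L$, and hence stabilizability, is invariant under replacing each $V_k$ by $U_0^*V_kU_0$ for a fixed $U_0\in\SU(n)$, since this merely reparametrizes $J(U)$ via $J(U_0U)$; thus I am free to change orthonormal basis at each stage.

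First I would extract a common eigenvector. As the system is stabilizable, the vertex $e_n$ is stabilizable, so there is $-M\in\conv(\mf L)$ with $(-M)e_n=0$. Since $-M$ has non-negative off-diagonal and non-positive diagonal entries, Lemma~\ref{lemma:matrix-element-bound} applied with $v=e_n$ forces $(-M)_{in}=0$ for every $i\neq n$, i.e.\ the $n$-th column of $-M$ vanishes. Writing $-M=\sum_t c_t(-L_{U_t})$ with $c_t>0$ and recalling that each off-diagonal entry $(-L_{U_t})_{in}=J_{in}(U_t)=\sum_k|(U_t^*V_kU_t)_{in}|^2\geq0$, every summand vanishes, so $(U_t^*V_kU_t)_{in}=0$ for all $i\neq n$ and all $k$. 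Hence $U_te_n$ is a common eigenvector of the $V_k$, that is, a one-dimensional lazy subspace (cf.\ Lemma~\ref{lemma:LU-zeros}). Conjugating by $U_t$, I may assume $e_1$ is this eigenvector, so that in the new basis each $V_k$ is block upper triangular with top-left entry $\mu_k$ and lower-right block $A_k$, the latter being the induced action on the quotient $\C^n/\C e_1\cong\C^{n-1}$.

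Next I would show that the $(n-1)$-dimensional system with Lindblad terms $\{A_k\}$ is again stabilizable, so that the induction hypothesis furnishes a complete flag of common invariant subspaces for $\{A_k\}$; pulling this flag back through the quotient map and prepending $\C e_1$ yields a complete flag for $\{V_k\}$, which is simultaneous triangularizability. To this end, fix $\lambda'$ in the relative interior of $\Delta^{n-2}$ and consider $\lambda=(0,\lambda')$, lying in the relative interior of the facet $\{\lambda_1=0\}$. Stabilizability gives $-M=\sum_tc_t(-L_{U_t})$ with $(-M)\lambda=0$, and Lemma~\ref{lemma:matrix-element-bound} (the vanishing coordinate now in the first slot, the rest positive) forces $(-M)_{1j}=0$, hence $J_{1j}(U_t)=0$ and $(U_t^*V_kU_t)_{1j}=0$, for all $j\geq2$. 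Thus $W_t:=U_t\linspan\{e_2,\dots,e_n\}$ is a common invariant hyperplane, and these same vanishings cancel the ``leakage'' into coordinate $1$, so the restriction $G_t$ of $-L_{U_t}$ to the coordinates $\{2,\dots,n\}$ is an honest stochastic generator with $\sum_tc_tG_t\lambda'=0$.

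The crux---and the step I expect to be hardest---is to identify each $G_t$ as an induced vector field of the compressed system $\{A_k\}$. By construction $G_t$ is the induced vector field of the restricted terms $\{V_k|_{W_t}\}$ written in the basis $\{U_te_2,\dots,U_te_n\}$, and this coincides with the reduced data of $\{A_k\}$ precisely when $W_t$ is complementary to the eigenvector line $\C e_1$: only then does the projection $W_t\to\C^n/\C e_1$ intertwine $V_k|_{W_t}$ with $A_k$. When the $V_k$ admit several invariant hyperplanes (multiplicity of composition factors), the stabilizing controls could select a $W_t$ containing $e_1$, for which $V_k|_{W_t}\not\cong A_k$, and the reduction breaks. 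I would try to close this gap either by arguing that the stabilizing $U_t$ may always be chosen with $e_1\notin W_t$, or---more robustly---by switching to the contrapositive: if $\{V_k\}$ is not triangularizable then, passing to a minimal invariant subspace on which triangularizability fails, one obtains an irreducible family of dimension $\geq2$, which has no common eigenvector, so by the first step such a family cannot have a stabilizable vertex, contradicting stabilizability. The same multiplicity phenomenon resurfaces there, in transferring non-stabilizability from the irreducible sub-block back to the full simplex, and handling it carefully is the real content of the proof.
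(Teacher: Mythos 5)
Your first step is sound, and it is genuinely the right start: stabilizability of a vertex plus Lemma~\ref{lemma:matrix-element-bound} and Lemma~\ref{lemma:LU-zeros} does yield a common eigenvector of the $V_k$ (the paper records this equivalence in the corollary on stabilizable vertices following Proposition~\ref{prop:viable-faces}). The induction step, however, contains the gap you yourself flag, and it is fatal; neither proposed repair closes it. The compression $G_t$ is the induced vector field of the family $\{V_k|_{W_t}\}$ on the Hilbert space $W_t$, and induced vector fields are invariants of the \emph{unitary} equivalence class of a matrix family, not of its similarity class: $J(U)$ consists of squared moduli of matrix entries in orthonormal bases. So even in your ``good'' case $e_1\notin W_t$, the intertwiner $W_t\to\C^n/\C e_1\cong e_1^\perp$ (projection along $\C e_1$) is not an isometry unless $W_t=e_1^\perp$ exactly; complementarity alone does not place $G_t$ in the reduced data of $\{A_k\}$. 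Moreover, distinct summands $t$ (and distinct base points $\lambda'$) can produce genuinely different invariant hyperplanes $W_t$---these correspond to common eigenvectors of $\{V_k^*\}$, and the restrictions to different ones need not even be similar, as $V=\left(\begin{smallmatrix}a&b\\0&c\end{smallmatrix}\right)$ with $b\neq0$, $a\neq c$ already shows (its two invariant lines carry multiplication by $a$ and by $c$, respectively). So your identity $\sum_tc_tG_t\lambda'=0$ mixes reduced data of \emph{different} $(n-1)$-dimensional systems and witnesses stabilizability of none of them; this persists even if you work with restrictions instead of quotients. Your contrapositive fallback rests on an additional false statement: a minimal invariant subspace on which triangularizability fails need not carry an irreducible family, since irreducible obstructions are sub\emph{quotients}, not subspaces (for $B$ irreducible and a non-split extension $\left(\begin{smallmatrix}a&\ast\\0&B\end{smallmatrix}\right)$, the only invariant subspaces are $\{0\}\subset\C e_1\subset\C^n$, so the minimal non-triangularizable one is $\C^n$ itself, which is reducible)---and passing stabilizability down to a subquotient is exactly the same unproven descent step again.

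For contrast, the paper's proof needs no induction and no quotients: stabilize the single interior curve $\lambda(t)=(1-(t+\cdots+t^{n-1}),t,t^2,\ldots,t^{n-1})$, apply Lemma~\ref{lemma:matrix-element-bound} to the stabilizing generators $-M_t\in\conv(\mf L)$ to get $|(M_t)_{ij}|\leq m^\star\,\lambda_i(t)/\lambda_j(t)\to0$ for all $i>j$, and pass to a limit point $-M\in\conv(\mf L)$ by compactness; $M$ is then upper triangular, and Lemma~\ref{lemma:LU-zeros} turns this single matrix into simultaneous upper triangularity of all $U^*V_kU$ for the unitaries appearing in its convex decomposition. Separating all eigenvalue scales at once makes the entire triangularizing flag appear in a single limit---precisely the coordination that your one-eigenvector-at-a-time scheme cannot achieve.
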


\begin{proof}
For $\varepsilon>0$ small enough we consider the curve $\lambda:[0,\varepsilon]\to\Delta^{n-1}$ defined as
$\lambda(t) = (1- (t+t^2+t^3+\ldots+t^{n-1}), t, t^2, t^3, \ldots, t^{n-1})$.
By assumption, for all $\varepsilon>0$, there exists $-M_t\in\conv(\mf L)$ such that $-M_t\lambda(t)=0$.
Applying Lemma~\ref{lemma:matrix-element-bound} to $A=-M_t$
and $v=\lambda(t)$
we have that if $i>j$,
\begin{equation}\label{eq:ineq_Mij}
|(M_t)_{ij}|\leq m^\star\frac{\lambda_i(t)}{\lambda_j(t)}\to0
\end{equation}
as $t\to 0$ where $m^\star = \sup_t \|M_t\|_{\max}\leq\sup_{U\in\SU(n)}\|L_U\|_{\max}$ which is finite by compactness of $\SU(n)$ and continuity of $-L$.
Moreover, by compactness of $\conv(\mf L)$ we can pass to a subsequence of the $-M_t$ which converges to some $-M\in\conv(\mf L)$.
Eq.~\eqref{eq:ineq_Mij} then implies that $M$ is 
upper triangular. 
With this, Lemma~\ref{lemma:LU-zeros} shows that all Lindblad terms can be simultaneously triangularized.
% (cf. proof of Lemma~\ref{lemma:tangent-face-lazy-subspace}).
%\marginpar{\color{red}F:?? don't yet see how the proof of Lemma 3.4 pertains to this case?}}
\end{proof}

\begin{lemma} \label{lemma:triang-stab}
If all Lindblad terms are simultaneously triangularizable, then the system is stabilizable.
\end{lemma}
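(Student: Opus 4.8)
The plan is to unwind the definition: the system is stabilizable for~\eqref{eq:simplex-control-system} precisely when $0\in\conv(\mf L\lambda)=\conv(\derv(\lambda))$ for every $\lambda\in\Delta^{n-1}$, so I fix an arbitrary $\lambda$ and aim to place $0$ in this compact convex set. First I would extract one convenient generator from the triangularizability hypothesis: by the explicit form~\eqref{eq:def_JU} together with Lemma~\ref{lemma:LU-zeros}, choosing $U_0\in\SU(n)$ so that all $U_0^*V_kU_0$ are upper triangular forces $J(U_0)$ to be upper triangular (since $J_{ij}(U_0)=\sum_k|(U_0^*V_kU_0)_{ij}|^2$), and hence $A:=-L_{U_0}$ is an upper-triangular element of $\mf L$, i.e.~an upper-triangular generator with nonnegative off-diagonal entries, nonpositive diagonal, and zero column sums $\e^\top A=0$. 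Next I would invoke the permutation symmetry of the reduced control system: for every permutation matrix $P_\pi$ the conjugate $A_\pi:=P_\pi^{-1}AP_\pi$ again lies in $\mf L$, so $\{A_\pi\lambda:\pi\in S_n\}\subseteq\derv(\lambda)$ and it suffices to prove $0\in\conv\{A_\pi\lambda:\pi\in S_n\}$.

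For this I would argue by separation. If $0$ were not in this compact convex set, there would exist $c\in\R^n$ with $\langle c,A_\pi\lambda\rangle>0$ for all $\pi$; I will contradict this by exhibiting a single permutation making the pairing nonpositive. Writing $d=P_\pi c$ and $\mu=P_\pi\lambda$ (the \emph{same} reordering applied to both $c$ and $\lambda$) and using $P_\pi^{-1}=P_\pi^\top$, the pairing becomes $\langle c,A_\pi\lambda\rangle=\langle d,A\mu\rangle$. The crucial step is to combine upper-triangularity ($A_{ij}=0$ for $i>j$) with the zero-column-sum relation $A_{jj}=-\sum_{i<j}A_{ij}$, which gives the identity
\[
\langle d,A\mu\rangle=\sum_{i<j}A_{ij}\,\mu_j\,(d_i-d_j).
\]
Now I would choose $\pi$ so that the reordered vector $d=P_\pi c$ is nondecreasing, so that $d_i-d_j\le 0$ whenever $i<j$; since $A_{ij}\ge0$ (off-diagonal of a generator) and $\mu_j\ge0$ (a permutation of the nonnegative $\lambda$), every summand is $\le0$ and hence $\langle c,A_\pi\lambda\rangle\le0$, contradicting strict separation. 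Therefore $0\in\conv(\derv(\lambda))$ for every $\lambda\in\Delta^{n-1}$, i.e.~every point is stabilizable and the system is stabilizable, as claimed.

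I expect the main obstacle to be locating the right reformulation rather than any hard estimate: the inequalities from Lemma~\ref{lemma:matrix-element-bound} used in the forward direction do not directly close the converse, and a naive full symmetrization $\frac1{n!}\sum_\pi A_\pi$ only fixes the barycenter. The decisive idea is the $\lambda$-independent ``sort $c$ increasingly'' choice of permutation, which turns the sign structure (nonnegative off-diagonal, zero column sums, nonnegative $\mu$) into termwise nonpositivity in the identity above; verifying that identity and that the permutation conjugates indeed remain in $\mf L$ are the two points requiring care, the rest being the standard compactness-and-separation packaging.
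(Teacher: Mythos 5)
Your proof is correct, and it takes a genuinely different route from the paper's. Both arguments begin the same way---fix a triangularizing $U_0\in\SU(n)$, observe that $A:=-L_{U_0}$ is upper triangular with nonnegative off-diagonal entries and zero column sums, and work only with the permutation conjugates $-P^\top L_{U_0}P=-L_{U_0P}\in\mf L$---but from there the paper proceeds by induction on the dimension of the simplex: it introduces truncated generators $-L_U^{(m)}$ built from the lower-right $m\times m$ block of $J(U)$, applies the induction hypothesis to the renormalized vectors $\lambda^{(q)}$ obtained by deleting one coordinate, and glues the resulting sign-patterned derivatives $x_q$ into a stochastic matrix $\mathds 1+\epsilon X$ whose Brouwer fixed point supplies the convex weights. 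You replace that entire inductive/fixed-point machinery with a single duality step: strict separation of $0$ from $\conv\{P^\top AP\lambda:P\in S_n\}$ is impossible because, for any functional $c$, the permutation sorting $c$ nondecreasingly gives $\langle Pc,AP\lambda\rangle=\sum_{i<j}A_{ij}(P\lambda)_j\big((Pc)_i-(Pc)_j\big)\le 0$, an identity that exploits exactly upper-triangularity plus zero column sums. Both proofs are non-constructive at their core (Brouwer versus hyperplane separation), and since both use only permutation conjugates of one induced vector field, the paper's subsequent remark---that when the $V_k$ are upper triangular in the standard basis only permutations are needed, so the argument covers the toy model of~\cite{CDC19,OSID23}---applies equally to yours. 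What yours buys is brevity and elementarity (no induction, and finite-dimensional separation is cheaper than Brouwer); what the paper's recursion buys is an explicit dimension-by-dimension assembly of the stabilizing convex combination. The two points you flagged as needing care are indeed unproblematic: upper-triangularity of $J(U_0)$ is immediate from $J_{ij}(U_0)=\sum_{k}|(U_0^*V_kU_0)_{ij}|^2$, and closure of $\mf L$ under permutation conjugation is precisely the fact $P^\top L_UP=L_{UP}$ (cf.~\cite[Lem.~A.2]{MDES23}) that the paper's own proof invokes as well.
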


\begin{proof}
%Let $\{V_k\}_{k=1}^r$ be a choice of Lindblad terms defining the system. 
By assumption there exists $U\in\SU(n)$ such that $U^*V_kU$ is upper triangular for all $k=1,\ldots,r$.
%With this, given any $m=1,\ldots,n$ we define $(U^*V_kU)^{(m)}$ as the $m\times m$ block in the lower right corner of $U^*V_kU$,
Given any $m=1,\ldots,n$ we define $J^{(m)}(U)$ as the $m\times m$ block in the lower right corner of $J(U)$,
as well as $-L_U^{(m)}:=J^{(m)}(U)-\diag((J^{(m)}(U))^\top\e)$ (note that $L_U^{(m)}$ is a stochastic generator, but it is generally not a submatrix of $L_U$).
%where $J^{(m)}(U):=(\sum_{k=1}^m|\langle  i|(U^*V_kU)^{(m)}|j\rangle|^2 )_{i,j=1}^n$.
This definition yields an inductive structure:
because all $U^*V_kU$ are upper triangular, so is $J^{(m)}(U)$, meaning
for all $m=1,\ldots,n-1$ one has
\begin{equation*}
-L_U^{(m+1)}= \begin{pmatrix}
0&(J^{(m+1)}(U))_{1,2}&\cdots&\cdots&(J^{(m+1)}(U))_{1,m}\\
0&-(J^{(m+1)}(U))_{1,2}&0&\cdots&0\\
\vdots&\ddots&\ddots&\ddots&\vdots\\
\vdots&&\ddots&\ddots&0\\
0&\cdots&\cdots&0&-(J^{(m+1)}(U))_{1,m}
\end{pmatrix} +\begin{pmatrix}
0&0\\0&-L_U^{(m)}
\end{pmatrix}
\end{equation*}
as is readily verified.
Our goal is to show that for every $1\leq m\leq n$ and for every
$\lambda\in\Delta^{m-1}$ there exist $\ell\in\mathbb N$, $\mu\in\Delta^{\ell-1}$, and permutation matrices $P_1,\ldots,P_\ell\in\mathbb R^{m\times m}$ such that
$
-\sum_{i=1}^\ell\mu_iP_i^\top L_U^{(m)} P_i\lambda=0
$.
This would conclude the proof as then
$
-\sum_{i=1}^\ell\mu_iP_i^\top L_U^{(n)} P_i=-\sum_{i=1}^\ell\mu_i L_{U P_i}\in\conv\mf L
$
maps $\lambda$ to zero.

We proceed by induction on $m$. 
The case $m=1$ is trivial as $-L_U^{(1)}=0$.
For the induction step 
let any $1\leq m<n$ and any $\lambda\in\Delta^{m}$ be given.
We distinguish two cases:
if there exists a permutation matrix $P\in\mathbb R^{(m+1)\times (m+1)}$ such that $P\lambda=e_1$, then $-P^\top L_U^{(m+1)}P\lambda=0$ by the triangular structure of $L_U$.
If this is not the case, then the vector $\lambda^{(q)}:=(\lambda_1,\ldots,\lambda_{q-1},\lambda_{q+1},\ldots,\lambda_{m+1})/(1-\lambda_q)\in\Delta^{m-1}$ is well-defined for all $q=1,\ldots,m+1$.
By the induction hypothesis, for all $q=1,\ldots,m+1$ there exist
$\ell_q\in\mathbb N$, $\mu_q\in\Delta^{m-1}$, as well as permutation matrices 
$P_{q,1},\ldots, P_{q,\ell_q}\in\mathbb R^{m\times m}$ such that
$
-\sum_{i=1}^{\ell_q}(\mu_q)_iP_{q,i}^\top L_U^{(m)} P_{q,i}\lambda^{(q)}=0
$.
Defining $P'_{q,i}:=1\oplus P_{q,i}\in\mathbb R^{(m+1)\times(m+1)}$ for all $i=1,\ldots,\ell_q$, and $q=1,\ldots,m+1$,
the block structure of $-L_U^{(m+1)}$ from above
readily implies that
$
-\sum_{i=1}^{\ell_q}(\mu_q)_i(P_{q,i}')^\top L_U^{(m+1)} P_{q,i}'(\lambda_q,\lambda^{(q)}(1-\lambda_q))^\top$
has a non-negative first element and all other elements are non-positive.
Therefore
\begin{align*}
x_q:=&-\sum_{i=1}^{\ell_q}(\mu_q)_i (P_{k,i}'\pi_q)^\top L_U^{(m+1)} P_{q,i}' \pi_q\lambda\\
&=-\pi_q\Big(\sum_{i=1}^{\ell_q} (\mu_q)_i  (P_{q,i}')^\top L_U^{(m+1)} P_{q,i}' {\begin{pmatrix}
\lambda_q\\\lambda^{(q)}(1-\lambda_q)
\end{pmatrix}}\Big)
\end{align*}
for all $q=1,\ldots,m+1$ has a non-negative element in the $q^\text{th}$ component while all others are non-positive; here $\pi_q\in\mathbb R^{(m+1)\times (m+1)}$ which only swaps the first and the $q$-th component.
All that remains to do is to find $\xi\in\Delta^{m}$ such that $\sum_{q=1}^{m+1}\xi_q x_q=0$ as then
$
0=-\sum_{q=1}^{m+1}\sum_{i=1}^{\ell_q}\xi_q\mu_i \pi_q^\top (P_{q,i}')^\top L_U^{(m+1)} P_{q,i}' \pi_q \lambda
$
so $(\xi_k\mu_i)_{q,i}$ and $\{ P_{q,i}' \pi_q \,:\,q,i\}$ satisfy the property we are aiming to verify.

For the final step the key is that
$
X:=\begin{pmatrix}
x_1&\cdots&x_{m+1}
\end{pmatrix}\in\mathbb R^{(m+1)\times (m+1)}
$
satisfies $\e^\top X=0$ and $X_{ij}\geq 0$ for all $i\neq j$ by construction of the $x_q$. Therefore there exists $\epsilon>0$ such that $\mathds 1+\epsilon X$ is a stochastic matrix (if $X=0$ this is trivial, else choose $\epsilon:=(\max_j|X_{jj}|)^{-1}$).
In particular $(\mathds 1+\epsilon X)\Delta^m\subseteq\Delta^m$ so 
by the Brouwer fixed-point theorem~\cite{Brouwer11} this matrix has a fixed point $\xi\in\Delta^m$. But
this means $\xi+\epsilon X\xi=\xi$ which, due to $\epsilon>0$, is equivalent to $X\xi=0$, as desired.
\end{proof}

Note that if the $V_k$ are upper triangular in the standard basis, only permutations are used in the proof of Lemma~\ref{lemma:triang-stab} to stabilize a given state. Hence the proof also works for the toy model discussed in~\cite{CDC19,OSID23}.

\begin{lemma}
A set of matrices in $\C^{n\times n}$ is simultaneously triangularizable if and only if they generate a solvable Lie algebra $\mf g$.
\end{lemma}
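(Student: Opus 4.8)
The plan is to recognize this lemma as the conjunction of an elementary observation with the classical Lie Theorem. First I would treat the direction ``simultaneously triangularizable $\Rightarrow$ solvable''. Suppose there is an invertible $T$ with each $T^{-1}V_kT$ upper triangular. Then, since conjugation by $T$ is a Lie algebra automorphism of $\mf{gl}(n,\C)$, the algebra $\mf g$ generated by the $V_k$ is isomorphic to a Lie subalgebra of the algebra $\mf t(n)$ of all upper triangular matrices. As $[\mf t(n),\mf t(n)]$ lands in the strictly upper triangular matrices, which form a nilpotent algebra, the derived series of $\mf t(n)$ terminates; hence $\mf t(n)$ is solvable, and since solvability passes to subalgebras and is invariant under isomorphism, $\mf g$ is solvable.

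For the converse ``solvable $\Rightarrow$ triangularizable'' I would invoke Lie's Theorem. Because we work over $\C$, which is algebraically closed of characteristic zero, a nonzero solvable Lie subalgebra $\mf g\subseteq\mf{gl}(n,\C)$ admits a complete flag $\{0\}=W_0\subset W_1\subset\cdots\subset W_n=\C^n$ with $\dim W_j=j$ and each $W_j$ invariant under all of $\mf g$. Choosing a basis adapted to this flag makes every element of $\mf g$, in particular each $V_k$, upper triangular at once. I would either cite Lie's Theorem outright or, for a self-contained treatment, reduce it to the statement that a solvable Lie algebra of operators on a nonzero finite-dimensional complex space has a common eigenvector, and then build the flag by induction on $n$.

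The heart of the matter---and the only genuinely nontrivial step---is this common-eigenvector statement. I would follow the standard route: choose an ideal $\mf h\subseteq\mf g$ of codimension one (which exists because $\mf g/[\mf g,\mf g]$ is a nonzero abelian algebra when $\mf g\neq0$ is solvable), apply the inductive hypothesis to obtain a common eigenvector of $\mf h$ with some weight $\lambda\in\mf h^*$, and then show that the weight space $W_\lambda=\{w:hw=\lambda(h)w\ \forall h\in\mf h\}$ is $\mf g$-invariant. The invariance rests on the identity $\lambda([\mf g,\mf h])=0$, proved by restricting to a cyclic subspace generated by iterating some $X\in\mf g\setminus\mf h$ and computing a trace of a commutator; this is exactly where characteristic zero enters, since one must divide by the dimension of that subspace. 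Picking an eigenvector of $X$ inside the $\mf g$-invariant $W_\lambda$ (possible because $\C$ is algebraically closed) then produces the desired common eigenvector. This algebraically-closed, characteristic-zero input is precisely why the equivalence holds over $\C^{n\times n}$ with no extra hypotheses.

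Finally I would note that ``simultaneously triangularizable'' may be taken to mean \emph{unitarily} triangularizable without loss of generality, matching the usage elsewhere in the paper: given any triangularizing $T$, its QR factorization $T=QR$ with $Q$ unitary and $R$ upper triangular gives $Q^*V_kQ=R(T^{-1}V_kT)R^{-1}$, a product of upper triangular matrices and hence itself upper triangular. This reconciles the purely algebraic criterion with the unitary triangularization produced by Lemma~\ref{lemma:LU-zeros}, closing the loop with the control-theoretic statements.
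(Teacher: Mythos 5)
Your proof is correct and follows essentially the same route as the paper: the paper likewise disposes of the forward direction by noting that upper triangular matrices form a solvable algebra, and cites Lie's Theorem (\cite[Thm.~1.25]{Knapp02}) for the converse. Your additional material---the inductive proof sketch of Lie's Theorem and the QR-factorization remark connecting algebraic to unitary triangularization---is sound and matches the paper's own footnote on that point, but is not needed beyond the citation.
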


\begin{proof}
By Lie's Theorem~\cite[Thm.~1.25]{Knapp02}, if $\mf g$ is solvable, then it is triangularizable.
Conversely, it is clear that a Lie algebra generated by simultaneously triangularizable matrices is solvable.
\end{proof}
This completes the proof of Theorem~\ref{thm:stab-system}. \smallskip

The theorem shows that determining whether a collection of square matrices
% over $\C$ 
is simultaneously (unitarily)\footnote{
A family of complex matrices is triangularizable if and only if it is unitarily triangularizable. This can be seen, for instance, using the QR decomposition.} triangularizable
is a relevant task for studying our (reduced) control system.
Simultaneous triangularization certainly holds if
%This is certainly the case when 
all matrices commute, but we can say more.
Many equivalent conditions are listed in~\cite[Ch.~1]{Radjavi00}.
Also recall that for a Lie algebra $\mf g\subseteq\mf{gl}(n,\C)$ it holds that $\mf g$ is strictly triangularizable if and only if every element of $\mf g$ is nilpotent (Engel's Theorem~\cite[Thm.~1.35]{Knapp02}). 
%An efficient algorithm for deciding if a set of matrices is simultaneously triangularizable and for finding a corresponding basis is presented in Appendix~\ref{app:simul-triang}.

An interesting follow-up question would be to understand when every state in $\lambda\in\Delta^{n-1}$ is strongly stabilizable and how to find $U$ such that $-L_U\lambda=0$.

\subsection{Accessibility} \label{sec:accessible} %\marginpar{\todo{Relate to accessibility on the semigroup level}}

As we will see below, the systems we consider are never (exactly) controllable. 
This is due to the dissipative nature of the dynamics. 
In this case, relevant notions are approximate controllability or accessibility.
Indeed, we will show that our systems are directly accessible almost everywhere.
The reduced system~\eqref{eq:simplex-control-system} is \emph{directly accessible} at some $\lambda\in\Delta^{n-1}$ if $\linspan(\derv(\lambda))=\R^n_0$. %\{x\in\R^n:x_1+\ldots+x_n=0\}.
Intuitively this means that one can move directly in all directions of some cone with non-empty interior.
Using~\cite[Prop.~4.14 \& 4.15]{MDES23} we immediately obtain:

\begin{proposition}\label{prop:accessibility}
If $-L$ is non-unital, then the reduced control system~\eqref{eq:simplex-control-system} is generically\,\footnote{\label{footnote:generic}We say that a property holds generically on a set if it holds on an open, dense subset which has full measure.} directly accessible,
and hence the full bilinear system~\eqref{eq:bilinear-control-system} is generically accessible.
\end{proposition}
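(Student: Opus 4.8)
The plan is to reduce the statement to a rank condition, verify it at a single point using non-unitality, upgrade to a generic statement via analyticity, and finally lift to the full system. First I would exploit the linearity of $M\mapsto M\lambda$: for fixed $\lambda$ one has $\linspan(\derv(\lambda))=\linspan\{-L_U\lambda:U\in\SU(n)\}=\mf L_0\lambda$, where $\mf L_0:=\linspan(\mf L)$ is a \emph{fixed} finite-dimensional subspace of matrices (the image of a linear span is the span of the image). Fixing a basis $M_1,\dots,M_p$ of $\mf L_0$, direct accessibility at $\lambda$ becomes the condition that $M_1\lambda,\dots,M_p\lambda$ span $\R^n_0$, i.e.\ that $[M_1\lambda\,|\,\cdots\,|\,M_p\lambda]$ has rank $n-1$. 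Since the entries of this matrix are linear in $\lambda$, the failure set (where the rank drops below $n-1$) is the common zero locus of the $(n-1)\times(n-1)$ minors, hence a real-algebraic subvariety of the affine hull of $\Delta^{n-1}$.

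Second, I would establish that this subvariety is \emph{proper} by exhibiting direct accessibility at the barycenter $\e/n$, which is exactly where non-unitality enters. Since $\diag(\e)=\mathds1$ and $\Ad_U(\mathds1)=\mathds1$, the definition~\eqref{eq:LU} gives directly $-L_U\,\e=-\Pi_{\diag}(\Ad_U^{-1}L(\mathds1))$, whence $\derv(\e/n)=\tfrac1n\{-\Pi_{\diag}(\Ad_U^{-1}L(\mathds1)):U\in\SU(n)\}$. As $-L$ is non-unital, $L(\mathds1)$ is a nonzero, traceless, Hermitian matrix and hence has at least two distinct eigenvalues. By the Schur--Horn theorem the diagonals of its unitary orbit fill out the permutohedron of its spectrum, a nondegenerate polytope inside $\R^n_0$; therefore $\linspan(\derv(\e/n))=\R^n_0$, i.e.\ $\e/n$ is directly accessible.

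Third, combining these two facts yields the generic reduced statement: the failure set is a proper real-algebraic subvariety of an $(n-1)$-dimensional affine space, hence nowhere dense and of measure zero, while its complement is open (rank $n-1$ is an open condition), dense, and of full measure, as required by footnote~\ref{footnote:generic}; restricting to $\relint(\Delta^{n-1})$, which is of full measure in $\Delta^{n-1}$, suffices. Finally, I would lift direct accessibility of the reduced system to accessibility of the full bilinear system: fast full unitary control lets one move freely within each $\SU(n)$-orbit, while direct accessibility supplies the remaining $n-1$ independent directions transverse to the orbits, so the reachable set acquires non-empty interior at generic $\rho$. This last step is the content of \cite[Prop.~4.15]{MDES23}, which, combined with the direct-accessibility criterion \cite[Prop.~4.14]{MDES23}, gives the proposition.

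I expect the main obstacle to be the lifting step, since transferring accessibility from the reduced simplex system to the full matrix system requires controlling the interplay between the (fast) unitary directions along each orbit and the (slow) eigenvalue directions transverse to it; the genuine work there is hidden in \cite[Prop.~4.15]{MDES23}, whereas the reduced-system argument is elementary once the Schur--Horn witness at the barycenter is in hand.
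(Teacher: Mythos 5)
Your proposal is correct, but it takes a more self-contained route than the paper does. The paper's entire proof of this proposition is a citation: generic direct accessibility of the reduced system is outsourced to \cite[Prop.~4.14]{MDES23}, and the lift to the full system to \cite[Prop.~4.15]{MDES23}. You instead prove the reduced-system half from scratch and only cite \cite[Prop.~4.15]{MDES23} for the lift (your additional mention of Prop.~4.14 at the end is redundant given your steps 1--3, though harmless). Your argument is sound at each step: the identity $\linspan(\derv(\lambda))=\linspan(\mf L)\,\lambda$ is correct (span of the image of a set under the linear evaluation map equals image of the span), so the failure locus is indeed the common zero set of the $(n-1)\times(n-1)$ minors of $[M_1\lambda\,|\,\cdots\,|\,M_p\lambda]$, polynomials in $\lambda$; your barycenter witness is exactly where non-unitality must enter, since $-L_U(\e/n)=-\tfrac1n\Pi_{\diag}(\Ad_U^{-1}L(\mathds1))$ and $L(\mathds1)=\sum_k[V_k^*,V_k]$ is Hermitian, traceless and nonzero, so its permutohedron (via Schur--Horn, of which only the easy direction is even needed) linearly spans $\R^n_0$; and a proper real-algebraic subvariety of the affine hull is closed, nowhere dense and Lebesgue-null, giving exactly the open/dense/full-measure notion of footnote~\ref{footnote:generic}. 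It is worth noting that your barycenter computation is not foreign to the paper: it is precisely the mechanism behind Lemma~\ref{lemma:unital-systems} (equivalence of unitality with $\derv(\e/n)=\{0\}$, proved there via Lemma~\ref{lemma:MU-row-column-sums} and Schur--Horn), so your proof can be read as the contrapositive of that lemma plus a polynomial genericity upgrade. What your approach buys is an elementary, explicitly verifiable argument (polynomial minors rather than the real-analyticity machinery that the companion paper relies on); what it does not replace is the genuinely hard lifting step, where you, like the paper, must invoke \cite[Prop.~4.15]{MDES23} --- your closing assessment of where the real work hides is accurate.
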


Direct accessibility in the unital case will be considered in Section~\ref{sec:unital-access}. 
The accessibility of the full system~\eqref{eq:bilinear-control-system} and its operator lift have been addressed in greater generality in~\cite{KDH12} using Lie-theoretic methods. %\marginpar{Is the case of full Hamiltonian control known?}

\section{Reachability, Coolability and Controllability} \label{sec:reach}

One of the main questions in control theory is that of reachability, i.e.~given an initial state, what is the set of all states that can be generated within the control system?
Reaching pure states is of particular importance since it corresponds to coolability of the system.\footnote{Due to the assumption of fast unitary control, reaching a pure state, i.e.~a vertex of $\Delta^{n-1}$, is sufficient for reaching an energy minimizing state.}
Furthermore we will characterize which faces of the simplex $\Delta^{n-1}$ can be reached and when the system is controllable.

First we define some important notions. The \emph{reachable set} of $\lambda_0$ at time $T\geq0$ of the reduced control system~\eqref{eq:simplex-control-system}, denoted $\reach_{\ref{eq:simplex-control-system}}(\lambda_0,T)$, is the set of all $\lambda(T)$ where $\lambda:[0,T]\to\Delta^{n-1}$ is a solution to~\eqref{eq:simplex-control-system} with $\lambda(0)=\lambda_0$. 
The all-time reachable set is $\reach_{\ref{eq:simplex-control-system}}(\lambda_0)=\bigcup_{T\geq0}\reach_{\ref{eq:simplex-control-system}}(\lambda_0,T)$.
Moreover we say that $\lambda_f$ is \emph{approximately reachable} from $\lambda_0$ if $\lambda_f\in\overline{\reach_{\ref{eq:simplex-control-system}}(\lambda_0)}$ and it is \emph{asymptotically reachable} if there is a solution $\lambda:[0,\infty)\to\Delta^{n-1}$ with $\lambda(0)=\lambda_0$ and $\lambda(t)\to\lambda_f$ as $t\to\infty$.
The definitions for other control systems are entirely analogous.

The Equivalence Theorem~\ref{thm:equivalence} implies that the reachable sets of~\eqref{eq:simplex-control-system} and~\eqref{eq:bilinear-control-system} are closely related, see~\cite[Prop.~4.3]{MDES23} for a proof.

\begin{proposition} \label{prop:reach-equiv}
Given any $-L\in\wkl(n)$, 
let $\rho_0\in\mf{pos}_1(n)$ and $\lambda_0\in\Delta^{n-1}$ be such that $\spec^\down(\rho_0)=\lambda^\down_0$.
Then it holds that
\begin{align*}
\reach_{\ref{eq:bilinear-control-system}}(\rho_0,T) 
\subseteq 
\{U\lambda U^*:\lambda\in\reach_{\ref{eq:simplex-control-system}}(\lambda_0,T),U\in\SU(n)\}
\subseteq
\overline{\reach_{\ref{eq:bilinear-control-system}}(\rho_0,T)}.
\end{align*}
\end{proposition}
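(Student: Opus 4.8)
The plan is to derive both inclusions from the Equivalence Theorem~\ref{thm:equivalence}, bridging the two systems by unitary diagonalization and by the permutation equivariance of~\eqref{eq:simplex-control-system}. Throughout I read $U\lambda U^*$ as $\Ad_U(\diag(\lambda))$. A preliminary observation I would record is that for any permutation matrix $P$ one has $-L_{UP^\top}=-PL_UP^\top$: this follows directly from~\eqref{eq:def_JU}, since the Hadamard product commutes with simultaneous row/column permutations and $P^\top\e=\e$. Consequently, if $\lambda(\cdot)$ solves~\eqref{eq:simplex-control-system} with control $U(\cdot)$ then $P\lambda(\cdot)$ solves it with control $U(\cdot)P^\top$, so that $P\cdot\reach_{\ref{eq:simplex-control-system}}(\lambda_0,T)=\reach_{\ref{eq:simplex-control-system}}(P\lambda_0,T)$.

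For the first inclusion I would take $\rho_f\in\reach_{\ref{eq:bilinear-control-system}}(\rho_0,T)$, witnessed by a solution $\rho:[0,T]\to\mf{pos}_1(n)$ of~\eqref{eq:bilinear-control-system}. The forward part of Theorem~\ref{thm:equivalence} then makes $\spec^\down(\rho)$ a solution of~\eqref{eq:simplex-control-system}, and since $\spec^\down(\rho_0)=\lambda_0^\down$ this gives $\spec^\down(\rho_f)\in\reach_{\ref{eq:simplex-control-system}}(\lambda_0^\down,T)$. Letting $P$ be the sorting permutation with $P\lambda_0=\lambda_0^\down$, equivariance yields $\mu:=P^\top\spec^\down(\rho_f)\in\reach_{\ref{eq:simplex-control-system}}(\lambda_0,T)$. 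Finally I diagonalize $\rho_f=V\diag(\spec^\down(\rho_f))V^*$ with $V\in\U(n)$ and set $U:=VP$ (rescaled to $\SU(n)$ by a global phase, which does not affect conjugation); then $\rho_f=\Ad_U(\diag(\mu))$ lies in the middle set.

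For the second (lifting) inclusion I would fix a target $\Ad_{U_f}(\diag(\lambda_f))$ with $\lambda_f\in\reach_{\ref{eq:simplex-control-system}}(\lambda_0,T)$ and $U_f\in\SU(n)$, together with $\varepsilon,\delta>0$. Starting from any solution of~\eqref{eq:simplex-control-system} that reaches $\lambda_f$ at time $T$ with control $\tilde U(\cdot)$, I would adjust the control in two harmless ways: first, take $\tilde U(0)$ to be a diagonalizing unitary of $\rho_0$ (possible since $\spec^\down(\rho_0)=\lambda_0^\down$), so that the lifting frame begins exactly at $\Ad_{\tilde U(0)}(\diag(\lambda_0))=\rho_0$; second, replace the control on $[T-\delta,T]$ by the constant value $U_f$, which displaces the reached point by only $O(\delta)$ because $\mf L$ is bounded. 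Applying the lifting part of Theorem~\ref{thm:equivalence} to this modified reduced solution then produces a solution $\rho_\varepsilon$ of~\eqref{eq:bilinear-control-system} with $\rho_\varepsilon(0)=\rho_0$ and $\|\rho_\varepsilon(T)-\Ad_{U_f}(\diag(\lambda_f))\|_\infty=O(\varepsilon+\delta)$, so that $\rho_\varepsilon(T)\in\reach_{\ref{eq:bilinear-control-system}}(\rho_0,T)$ lies within $O(\varepsilon+\delta)$ of the target. Letting $\varepsilon,\delta\to0$ places the target in $\overline{\reach_{\ref{eq:bilinear-control-system}}(\rho_0,T)}$.

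The hard part will be exactly this lifting direction, for two reasons that must be addressed simultaneously within the fixed horizon $T$: the lift supplied by Theorem~\ref{thm:equivalence} is only approximate, and one must (i) initialize it precisely at $\rho_0$ rather than merely near it, and (ii) steer the terminal unitary orientation to the prescribed $U_f$. My remedy for (ii) is the short constant-control tail, whose $O(\delta)$ perturbation is absorbed once we pass to the closure. The remedy for (i) hinges on reading the lift as starting at the frame value $\Ad_{\tilde U(0)}(\diag(\lambda_0))$ and aligning that frame with $\rho_0$; the delicate point I would have to verify is that the tracking construction underlying~\cite{MDES23} indeed begins exactly at this frame value (its approximation error vanishing at $t=0$), so that no residual initial-condition gap survives and $\rho_\varepsilon$ is a genuine solution emanating from $\rho_0$.
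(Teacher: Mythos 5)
Two preliminary remarks: the paper itself does not prove this proposition in-text but defers to~\cite[Prop.~4.3]{MDES23}, so your reconstruction from Theorem~\ref{thm:equivalence} plus permutation equivariance is exactly the intended route; and your proof of the first inclusion is complete and correct (the identity $-L_{UP^\top}=-PL_UP^\top$ is~\cite[Lem.~A.2]{MDES23}, and your determinant/phase bookkeeping for landing in $\SU(n)$ is fine).

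The genuine gap is in the lifting inclusion, precisely at the point you flag, and your two remedies are not of equal status. The terminal trick (constant control $U_f$ on $[T-\delta,T]$) is sound: it costs $O(\delta)$, and because the control is constant on a set of positive measure, continuity of $\rho_\varepsilon$ and $\lambda$ pins $\rho_\varepsilon(T)$ within $\varepsilon$ of $\Ad_{U_f}(\diag(\lambda(T)))$ even under an almost-everywhere reading of the tracking estimate. By contrast, redefining $\tilde U$ at the single point $t=0$ accomplishes nothing: the control enters~\eqref{eq:simplex-control-system} only through an a.e.\ condition, so a modification on a null set changes neither the reduced solution nor any essential-supremum tracking statement, while under a pointwise reading it makes the frame discontinuous at $0$, which no continuous $\rho_\varepsilon$ could track. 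The correct move is to mirror your own tail trick at $t=0$: prepend a constant control $W$ with $\Ad_W(\diag(\lambda_0))=\rho_0$ on $[0,\delta]$. But even then, Theorem~\ref{thm:equivalence} as stated only gives $\|\rho_\varepsilon(0)-\rho_0\|\leq\varepsilon$, not equality, so $\rho_\varepsilon(T)$ lies in $\reach_{\ref{eq:bilinear-control-system}}(\rho_\varepsilon(0),T)$ rather than in $\reach_{\ref{eq:bilinear-control-system}}(\rho_0,T)$, and the desired conclusion does not yet follow. The missing ingredient is a Lipschitz transfer to the exact initial condition, uniform in the control: let $\sigma$ solve~\eqref{eq:bilinear-control-system} with the \emph{same} control functions as $\rho_\varepsilon$ but with $\sigma(0)=\rho_0$. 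The difference $\sigma-\rho_\varepsilon$ evolves under the propagator of the linear equation, which is a composition of CPTP maps (each instantaneous generator lies in $\wkl(n)$) and hence a trace-norm contraction; therefore
\begin{equation*}
\|\sigma(T)-\rho_\varepsilon(T)\|_1\;\leq\;\|\rho_0-\rho_\varepsilon(0)\|_1\;\leq\;c_n\,\varepsilon
\end{equation*}
independently of the control. (A naive Gronwall bound would \emph{not} suffice here, since the controls $u_j$ are unbounded.) Then $\sigma(T)\in\reach_{\ref{eq:bilinear-control-system}}(\rho_0,T)$ lies within $O(\varepsilon+\delta)$ of the target, and letting $\varepsilon,\delta\to0$ finishes the argument. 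Alternatively one may invoke the sharper lifting statement of~\cite{MDES23}, which fixes the initial condition exactly and is what~\cite[Prop.~4.3]{MDES23} actually uses; but from the statement of Theorem~\ref{thm:equivalence} alone, the contraction argument (or an equivalent substitute) is indispensable, and without it your proof of the second inclusion is incomplete.
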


Let us continue with some general facts about reachable states. The following result extends some previous results derived in the toy model, see~\cite[Lem.~4 \& Lem.~6]{OSID23}. We omit the proof, since it is essentially the same.

\begin{proposition} \label{prop:closure-of-reach-contractible}
Let $\lambda\in\Delta^{n-1}$ and assume that $-L\in\wkl(n)$ is not purely Hamiltonian.
%\footnote{See Lemma~\ref{lemma:purely-Hamiltonian} for some equivalent definitions.}. 
Then $\overline{\reach_{\ref{eq:relaxed-control-system}}(\lambda)}$ contains $\e/n$ and is contractible.
\end{proposition}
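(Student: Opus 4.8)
The plan is to exhibit a single constant control $-M\in\conv(\mf L)$ whose flow simultaneously certifies that $\e/n$ lies in the reachable set and supplies a contracting homotopy; both assertions then follow from one construction.

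First I would produce $-M$ by symmetrization over the permutation action. Since $-L$ is not purely Hamiltonian, not every $V_k$ is a scalar multiple of $\mathds1$, so there is some $U_0\in\SU(n)$ for which the matrix $J(U_0)$ of~\eqref{eq:def_JU} has a strictly positive off-diagonal entry, i.e.\ $J_{ij}(U_0)=\sum_k|(U_0^*V_kU_0)_{ij}|^2>0$ for some $i\neq j$. (If instead all off-diagonal entries of $J(U)$ vanished for every $U$, then $U^*V_kU$ would be diagonal for all $U$ and all $k$, forcing each $V_k=c_k\mathds1$ and hence $-L$ purely Hamiltonian.) Averaging over $S_n$, which leaves $\mf L$ invariant (cf.~\cite[Lem.~A.2]{MDES23}), I set $-M:=\frac{1}{n!}\sum_{P\in S_n}P^\top(-L_{U_0})P\in\conv(\mf L)$. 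Being invariant under conjugation by every permutation matrix, $-M$ has constant diagonal and constant off-diagonal, so $-M=\beta(\e\e^\top-n\mathds1)$ with $\beta=\frac{1}{n(n-1)}\sum_{a\neq b}J_{ab}(U_0)>0$.

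Next I would analyze the flow of $-M$. Writing any $\mu\in\Delta^{n-1}$ as $\mu=\e/n+w$ with $w\in\R^n_0$, one computes $e^{-tM}\mu=\e/n+e^{-n\beta t}w$, which converges to $\e/n$ uniformly over the simplex as $t\to\infty$ (the gap $n\beta>0$ is what gives uniformity). Since $-M\in\conv(\mf L)$, the curve $t\mapsto e^{-tM}\mu$ is a solution of~\eqref{eq:relaxed-control-system}; concatenating it with a solution reaching $\mu$ from $\lambda$ shows that $\reach_{\ref{eq:relaxed-control-system}}(\lambda)$ is forward-invariant under $e^{-tM}$, and hence so is its closure. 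Taking any $\mu\in\reach_{\ref{eq:relaxed-control-system}}(\lambda)$ and letting $t\to\infty$ then yields $\e/n\in\overline{\reach_{\ref{eq:relaxed-control-system}}(\lambda)}$, which proves the first claim. For contractibility I would define $H:[0,1]\times\overline{\reach_{\ref{eq:relaxed-control-system}}(\lambda)}\to\overline{\reach_{\ref{eq:relaxed-control-system}}(\lambda)}$ by $H(s,\mu)=e^{-t(s)M}\mu$ for a continuous increasing reparametrization $t:[0,1)\to[0,\infty)$ (e.g.\ $t(s)=s/(1-s)$), together with $H(1,\mu):=\e/n$. Forward-invariance of the closure ensures $H$ maps into $\overline{\reach_{\ref{eq:relaxed-control-system}}(\lambda)}$, while $H(0,\cdot)=\mathrm{id}$ and $H(1,\cdot)\equiv\e/n$, so $H$ is the desired contraction.

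I expect the main obstacle to be twofold: first, the equivariance bookkeeping needed to confirm that the symmetrized generator genuinely stays in $\conv(\mf L)$ (which rests on the permutation symmetry of $U\mapsto-L_U$); and second, more delicately, the continuity of the homotopy $H$ at the endpoint $s=1$. The latter does not follow from mere pointwise convergence $e^{-tM}\mu\to\e/n$ but requires its uniformity over the compact simplex, which is exactly what the spectral gap $n\beta>0$ of $-M$ provides.
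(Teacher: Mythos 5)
Your proof is correct, and its overall skeleton coincides with the paper's: symmetrize a non-vanishing induced vector field over $S_n$ to obtain $-M\in\conv(\mf L)$, use its flow to reach $\e/n$, and use the same flow as a deformation retraction of $\overline{\reach_{\ref{eq:relaxed-control-system}}(\lambda)}$. Where you genuinely diverge is in how attractivity of $\e/n$ is established. The paper shows that $\e/n$ is the \emph{unique} fixed point of the averaged generator in $\Delta^{n-1}$ (if another fixed point existed, the span of its permutation orbit would lie in the kernel and force the average to vanish, contradicting Lemma~\ref{lemma:purely-Hamiltonian}) and then invokes ergodicity of irreducible continuous-time Markov chains to conclude convergence; continuity of the resulting homotopy at the endpoint is merely asserted. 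You instead identify the averaged generator exactly: since it commutes with every permutation matrix, Schur's lemma (the commutant of the permutation representation is spanned by $\mathds1$ and $\e\e^\top$) together with non-negativity of off-diagonal entries and $\e^\top M=0$ gives $-M=\beta(\e\e^\top-n\mathds1)$ with $\beta>0$, whence $e^{-tM}\mu=\e/n+e^{-n\beta t}(\mu-\e/n)$. This buys two things: the argument is self-contained (no Markov-chain ergodicity needed), and the explicit spectral gap $n\beta$ delivers the \emph{uniform} convergence that is exactly what is required for continuity of the homotopy at $s=1$ --- a point you correctly flag as the delicate step and which the paper's write-up glosses over. The paper's kernel argument, on the other hand, would survive in situations where one cannot diagonalize the averaged generator by hand; here, where the symmetrization is fully explicit, your route is the cleaner one.
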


If a state can be reached asymptotically in the relaxed control system~\eqref{eq:relaxed-control-system}, then it is stabilizable. The proof is similar to the one of~\cite[Lem.~4.4~(ii)]{MDES23} and hence omitted.

\begin{lemma} \label{lemma:conv-stab}
Let $\lambda:[0,\infty)\to\Delta^{n-1}$ be a solution to~\eqref{eq:relaxed-control-system} such that 
$\lambda(t)\to\mu$ as $t\to\infty$ for some $\mu\in\Delta^{n-1}$.
%$\lim_{t\to\infty}\lambda(t)=\mu\in\Delta^{n-1}$.
%Then $\mu\in\overline{\reach(\lambda(0))}$ and $\mu\in\stab$.
Then $\mu\in\stab_{\ref{eq:simplex-control-system}}$.
\end{lemma}

%\begin{proof}
%Let $(t_n)_{n\in\N}\subset[0,\infty)$ be any sequence going to $\infty$. 
%Consider any sequence $t_n\to\infty$ with $t_n\geq0$ for all $n$.
%Then, clearly, $\lambda(t_n)\in\reach_{\ref{eq:relaxed-control-system}}(\lambda(0))$ and hence $\mu\in\overline{\reach_{\ref{eq:relaxed-control-system}}(\lambda(0))}$.
%The second part follows from the proof of~\cite[Lem.~4.4~(ii)]{MDES23}.
%For the second part we argue by contraposition. 
%Assume that $\mu\notin\stab$. 
%\todo{define $\stab_\alpha$}
%Then, since $\stab_\alpha$ is closed by Lemma~\ref{lemma:stab-closed}, there is some $\alpha$ and an open neighborhood $U$ of $\mu$ which does not intersect $\stab_\alpha$. Now let $\pi:U\to\R$ be the orthogonal projection onto the line parallel to $\alpha$ and passing through $\mu$. Then $\pi(\lambda)$ must be a strictly monotonic function on $\pi(U)$. Hence $\pi(\lambda)$ must leave the set $\pi(U)$, contradicting convergence of $\lambda$.
%\end{proof}

A similar result also holds for faces.

\begin{lemma} \label{lemma:reach-viable}
Let $F$ be a face of $\Delta^{n-1}$ which is not a vertex.
Assume that there is some $\mu\in\relint(F)$ and some $\lambda\neq\mu$ such that $\mu\in\overline{\reach_{\ref{eq:relaxed-control-system}}(\lambda)}$.
Then $F$ is viable for~\eqref{eq:relaxed-control-system}.
\end{lemma}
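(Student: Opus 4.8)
The plan is to reduce everything to the tangency criterion of Proposition~\ref{prop:viable-faces} and then argue by contradiction, using the net outflow across the face as a barrier and running time backward from the reaching endpoint.

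\emph{Setup and reduction.} After a permutation I may assume $F=\{x\in\Delta^{n-1}:x_{d+1}=\dots=x_n=0\}$, with $d\ge 2$ since $F$ is not a vertex. For $-M\in\conv(\mf L)\subseteq\mf{stoch}(n)$ write it in block form with respect to the splitting $\{1,\dots,d\}\mid\{d+1,\dots,n\}$ and let $C(M)\in\R^{(n-d)\times d}$ be its lower-left block; as off-diagonal entries of $-M$ are non-negative, $C(M)\ge 0$ entrywise. For $\nu\in\relint(F)$ one has $\nu=(\nu^{(1)},0)$ with $\nu^{(1)}>0$, so the last $n-d$ components of $-M\nu$ equal $C(M)\nu^{(1)}$; hence $-M\nu$ is tangent to $F$ iff $C(M)\nu^{(1)}=0$ iff $C(M)=0$. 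By Proposition~\ref{prop:viable-faces} (equivalence of \ref{it:face-stab} and \ref{it:face-tangent}) it therefore suffices to produce a single $-M\in\conv(\mf L)$ with $C(M)=0$. I will obtain this by assuming $F$ is \emph{not} viable and contradicting the hypothesis.

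\emph{Barrier.} Assume $F$ is not viable, so $C(M)\neq 0$ for every $-M\in\conv(\mf L)$. Consider the outflow functional $s(\nu):=\sum_{i=d+1}^n\nu_i$. Along any solution of~\eqref{eq:relaxed-control-system} with velocity $-M\nu$, using the column-sum condition $\e^\top M=0$, one finds that $\dot s$ equals the inflow $\sum_{i>d,\,j\le d}(-M)_{ij}\nu_j$ minus a non-negative term of order $s$. Because $\mu^{(1)}>0$ and each $C(M)\ge 0$ is nonzero, compactness of $\conv(\mf L)$ yields $c:=\min_{M}\sum_{i>d,\,j\le d}(-M)_{ij}\mu_j>0$. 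By continuity there is a cylinder $\mathrm{Cyl}=\{\|\nu^{(1)}-\mu^{(1)}\|_1<\delta\}$ and a threshold so that on the ``core'' $\mathrm{Cyl}\cap\{s\le\delta\}$ (which contains $\mu$) every admissible velocity satisfies $\dot s\ge c/2>0$. Thus $s$ is a strict barrier: inside the core $s$ can only increase.

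\emph{Backward-time contradiction.} By approximate reachability pick solutions $\lambda^{(k)}:[0,T_k]\to\Delta^{n-1}$ with $\lambda^{(k)}(0)=\lambda$ and $\lambda^{(k)}(T_k)\to\mu$; all solutions are uniformly Lipschitz since $\conv(\mf L)$ is compact. For large $k$ the endpoint lies deep in the core with $s_k:=s(\lambda^{(k)}(T_k))\to 0$. Running time backward from $T_k$: while in the core $\dot s\ge c/2$, so backward $s$ strictly decreases and stays below $s_k<\delta$; hence the trajectory cannot exit the core across $\{s=\delta\}$ and must either (i) remain in the core down to $t=0$, or (ii) first exit through the side $\{\|\nu^{(1)}-\mu^{(1)}\|_1=\delta\}$. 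In case (i), integrating $\dot s\ge c/2$ forces $T_k\le 2s_k/c\to 0$, so the uniform Lipschitz bound gives $\lambda=\lambda^{(k)}(0)\to\mu$, contradicting $\lambda\neq\mu$. In case (ii) the first block must travel a distance $\ge \tfrac34\delta$ inside the core, so the Lipschitz bound gives a uniform lower bound on the dwell time there; integrating $\dot s\ge c/2$ over that time forces $s_k$ to exceed a positive constant, contradicting $s_k\to 0$. Either way we reach a contradiction, so $F$ is viable (then one may invoke Lemma~\ref{lemma:tangent-face-lazy-subspace} only implicitly, the tangency already sufficing).

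The main obstacle is that the hypothesis only guarantees that $\mu$ is touched in the limit at the final time, not that any solution dwells near $\mu$; the real difficulty is converting this weak, one-sided information into the algebraic tangency $C(M)=0$. This forces the argument to run backward from the endpoint, and it is exactly here that both the strict positivity $c>0$ (which genuinely needs $\mu\in\relint(F)$, i.e.\ $\mu^{(1)}>0$) and the Lipschitz travel-time lower bound become essential: together they exclude the delicate scenario of a trajectory sliding into $\mu$ along the face while keeping $s$ arbitrarily small.
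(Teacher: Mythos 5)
Your proof is correct and follows essentially the same route as the paper's: both convert non-viability, via the tangency criterion of Proposition~\ref{prop:viable-faces} and compactness of $\conv(\mf L)$, into a uniform strict-outflow barrier near $\mu$ (your functional $s$ is, up to sign, exactly the paper's functional $\alpha$), and then combine this with the uniform speed bound to conclude that $\mu$ cannot be approached from any point at a fixed distance. The only difference is bookkeeping: the paper argues forward in time across an annulus $B(R,\mu)\setminus B(r,\mu)$, whereas you run backward from the approximating endpoints through a cylinder with a two-case exit analysis --- an equivalent argument.
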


\begin{proof}
We show the contrapositive, so assume that $F$ is not viable for~\eqref{eq:relaxed-control-system}.
By Proposition~\ref{prop:viable-faces} it holds for every $\mu\in\relint(F)$ that no element of $\derv(\mu)$ is tangent to $F$.
In particular, if $\alpha$ is any linear functional on $T_\mu\R^n_0$ 
%(where $\R^n_0=\{x\in\R^n:x_1+\dots+x_n=0\}$) 
which vanishes on tangent vectors along $F$ and is non-positive on the tangent cone $T_\mu\Delta^{n-1}$ (which is a subset of $T_\mu\R^n_0$), then $\alpha(\derv(\mu))\subset[-\infty,-\varepsilon]$ for some $\varepsilon>0$.
By continuity, and possibly shrinking $\varepsilon$, we may assume that $\alpha(\derv(\nu))\subset[-\infty,-\varepsilon]$ for every $\nu$ in some ball $B(R,\mu)$ with $R>0$.
Moreover, by compactness there is some $C>0$ such that $\derv(\nu)\subseteq B(C,0)$ for every $\nu$ in $B(R,\mu)$.
Hence there exists some $r>0$ small enough such that any solution to~\eqref{eq:relaxed-control-system} starting outside of $B(R,\mu)$ cannot enter $B(r,\mu)$.
In particular, for any $\lambda$ we can choose $R<\|\lambda-\mu\|$ and hence $\mu$ is not approximately reachable from $\lambda$.
\end{proof}

Note, however, that there are reachable states that are not stabilizable. This property and more can be shown explicitly in the toy model, see~\cite[Sec.~4.2 \& Fig.~7]{OSID23}. In general it is very difficult to compute the reachable set, hence why we will focus on the reachability of faces of the simplex (including coolability), and we will characterize approximate controllability, which is the situation where all states are approximately reachable from all other states.

\subsection{Asymptotically Coolable Systems}

One of the DiVincenzo criteria~\cite{VincCriteria} necessary for quantum computation requires the ability to initialize the system in a simple reference state.
%Arguably the most important state to prepare in quantum control theory is the ground state of the system Hamiltonian. 
In practice one often starts in a thermal state and cools the system to a temperature near absolute zero.
Since we assume fast unitary controllability, we will use the term ``cooling'' to refer to the preparation of any pure state.
In this section we characterize (asymptotic) coolability and show how it can be implemented in the bilinear control system~\eqref{eq:bilinear-control-system}.

We start with some technical results. If we think of the dynamics on the simplex $\Delta^{n-1}$ as a (controlled) Markov chain on $n$ states, the task of cooling corresponds to moving the entire population into a single state. To achieve this one has to minimize the outflow of this state and maximize the inflow. The next result yields some bounds on the ratio of inflow to outflow for certain systems.

\begin{lemma}
\label{lemma:bounded-flow-ratio}
Let $\{V_k\}_{k=1}^r$ denote a finite set of complex $n\times n$ matrices such that every common eigenvector of all $V_k$ is also a common eigenvector of all $V_k^*$. With $J(U)$ as in~\eqref{eq:def_JU}
%:=\sum_{k=1}^r U^*V_kU\circ\overline{U^*V_kU}$, then 
there is a constant $C>0$ depending on the $V_k$ such that the flow ratio satisfies
%\marginpar{\color{red}F: $J_{ij}(U)$ vs.~$J(U)_{ij}$ (Appendix); make notation consistent!}
\begin{align}\label{eq:flow-ratio}
\sum_{i=2}^n J_{1i}(U) \,\big/\, \sum_{i=2}^n J_{i1}(U) \leq C
\end{align}
for all $U$ where the expression is defined.
\end{lemma}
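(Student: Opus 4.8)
The plan is to fix $U\in\SU(n)$, write $v:=Ue_1$ for its first column (a unit vector, and as $U$ ranges over $\SU(n)$ this $v$ ranges over the entire unit sphere of $\C^n$), and reduce everything to a statement about $v$. Writing $u_i:=Ue_i$ so that $J_{1i}(U)=\sum_k|\langle v|V_k|u_i\rangle|^2$ and $J_{i1}(U)=\sum_k|\langle u_i|V_k|v\rangle|^2$, the completeness relation $\sum_i|u_i\rangle\langle u_i|=\mathds1$ collapses the $i$-sums, with only the $i=1$ (diagonal) term splitting off. This gives
\begin{align*}
\sum_{i=2}^n J_{1i}(U)=\Phi(v):=\sum_{k=1}^r\big(\|V_k^* v\|^2-|\langle v|V_k|v\rangle|^2\big),\\
\sum_{i=2}^n J_{i1}(U)=\Psi(v):=\sum_{k=1}^r\big(\|V_k v\|^2-|\langle v|V_k|v\rangle|^2\big).
\end{align*}
Both are nonnegative, since $\Phi(v)=\sum_k\|(\mathds1-|v\rangle\langle v|)V_k^* v\|^2$ and $\Psi(v)=\sum_k\|(\mathds1-|v\rangle\langle v|)V_k v\|^2$. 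Here $\Psi(v)=0$ exactly when $v$ is a common eigenvector of all $V_k$, and $\Phi(v)=0$ exactly when $v$ is a common eigenvector of all $V_k^*$; the hypothesis says precisely that the former implies the latter. The lemma is therefore equivalent to the pointwise estimate $\Phi(v)\le C\,\Psi(v)$ on the unit sphere, which is exactly what is needed wherever $\Psi(v)\neq0$.

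Next I would use the identity $\Phi(v)-\Psi(v)=\sum_k\langle v|[V_k,V_k^*]|v\rangle=\langle v|D|v\rangle$, where $D:=\sum_k[V_k,V_k^*]$ is Hermitian. The structural content of the hypothesis is extracted as follows: the joint eigenspaces $E_\mu:=\bigcap_k\ker(V_k-\mu_k)$ are mutually orthogonal and each is invariant under every $V_k^*$ (with $V_k^*$ acting as the scalar $\bar\mu_k$ there); hence $E:=\bigoplus_\mu E_\mu$ and $N:=E^\perp$ both reduce all $V_k$, and the restrictions $\{V_k|_N\}$ have \emph{no} common eigenvector. A short computation gives $Dv=0$ for $v\in E$, so $D=P_N D P_N$ and therefore $\Phi(v)-\Psi(v)=\langle v|D|v\rangle\le\|D\|\,\|P_N v\|^2$ in operator norm.

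The crux, and the step I expect to be the main obstacle, is the reverse \emph{linear} lower bound $\Psi(v)\ge c\,\|P_N v\|^2$ for some $c>0$. Mere continuity plus compactness is insufficient, because $\Phi$ and $\Psi$ both vanish on $E$ and a naive argument would only control the ratio up to a {\L}ojasiewicz exponent that need not be $\ge1$. To obtain the correct rate I would invoke the elementary estimate $\|(\mathds1-|v\rangle\langle v|)x\|\ge\|Qx\|$, valid for any orthoprojection $Q$ with $Qv=0$, applied to $x=V_k v$ with $Q$ the projection onto $N\ominus\C v_N$, where $v_N:=P_N v$ (one checks $Qv=0$). Since $N$ is invariant, only the $N$-component of $V_k v$ survives, yielding $\Psi(v)\ge\|v_N\|^2\,\Psi(v_N/\|v_N\|)$. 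Finally, $\Psi$ restricted to the unit sphere of $N$ is continuous and strictly positive, because it vanishes only at common eigenvectors of the $V_k$ and $N$ has none; compactness then bounds it below by some $c>0$ (the case $N=\{0\}$ being vacuous, with $D=0$ and $\Phi=\Psi$). Combining the two bounds gives $\Phi(v)\le(1+\|D\|/c)\,\Psi(v)$, so $C:=1+\|D\|/c$ is the desired constant.
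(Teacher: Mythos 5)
Your proof is correct, and it rests on the same skeleton as the paper's while replacing the decisive estimate by a genuinely different mechanism. Both arguments reduce to the unit vector $w=Ue_1$ via the row/column-sum identities, decompose $\C^n=E\oplus N$ where $E$ is the span of the common eigenvectors (the paper builds this space greedily as an orthogonal sum $S$ of common eigenvectors; under the hypothesis $S=E$), exploit that $E$ and $N$ are invariant under all $V_k$ and $V_k^*$, note that the $E$-contributions to numerator and denominator coincide, and extract the crucial strictly positive constant from a compactness argument over a sphere in $N$. The difference lies in how the $N$-part is handled: the paper applies the mediant inequality, bounds the $N$-part of the numerator crudely by $\sum_{k=1}^r(2\|V_k\|_\infty)^2\,\|\Pi_{S^\perp}w\|^2$, and bounds the $N$-part of the denominator below by the minimum $C_D$ of a \emph{two-variable} function $b(U,\psi)$ (two variables are forced because the paper's shift $\langle e_1|U^*V_kU|e_1\rangle$ depends on $U$, not only on $\psi$), arriving at $C=\max\{1,4\sum_k\|V_k\|_\infty^2/C_D\}$. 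You instead use the commutator identity $\Phi(v)-\Psi(v)=\langle v|D|v\rangle$ with $D=\sum_{k=1}^r[V_k,V_k^*]$, show $D=P_NDP_N$, and reduce everything to the single linear lower bound $\Psi(v)\geq c\,\|P_Nv\|^2$, which your projection/homogeneity trick $\Psi(v)\geq\|P_Nv\|^2\,\Psi\big(P_Nv/\|P_Nv\|\big)$ converts into a \emph{one-variable} minimization of the optimally shifted quantity $\Psi$ over the unit sphere of $N$. This buys a cleaner, more explicit constant $C=1+\|D\|/c$ and avoids both the mediant inequality and the two-variable compactness argument; the paper's route avoids the commutator identity but pays with the coarser operator-norm bound. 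Both proofs dispose of the degenerate cases ($P_Nw=0$, respectively $N=\{0\}$ or $S=\C^n$) in the same way, the ratio there collapsing to $1$.
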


\begin{proof}
Our key object will be the subspace $S\subseteq\mathbb C^n$ defined as follows.
If there is no eigenvector which all the $V_k$ have in common set $S:=\{0\}$.
Else let $v_1$ denote such a common (unit) eigenvector of $V_k$ and set $S_1:=\linspan(v_1)$.
From this we proceed inductively over $i=1,\ldots,n$: 
if $S_i^\perp$ contains a common (unit) eigenvector $v_{i+1}$ we set $S_{i+1}=S_i\oplus\linspan(v_{i+1})$.
Else $S:=S_i$.
The advantage of this explicit construction is that---if $S\neq\{0\}$---we now have an orthonormal basis $\{v_i\}_{i=1}^s$ ($s\geq 1$) of $S$
consisting of common eigenvectors of all $V_k$.

Assume w.l.o.g.~that $S\neq\mathbb C^n$
%it may happen that 
%In particular $S$ is non-empty as it contains $\linspan(We_1)$;
%note that $S$ might equal\footnote{
%%Note that if $S=\mathbb C^n$ then 
(else the $V_k$ are all normal and commute with each other so the flow ratio equals $1$ where it is defined).
This assumption is equivalent to $S^\perp\neq\{0\}$ which means the domain of the following map is well defined:
\begin{align*}
b:\SU(n)\times\{\psi\in S^\perp\,.\,\|\psi\|=1\}&\to[0,\infty)\\
(U,\psi)&\mapsto \sum_{k=1}^r\big\| (V_k-\langle e_1|U^*V_kU|e_1\rangle\mathds1)\psi \big\|^2
\end{align*}
As $b$ is a continuous function on a compact domain it attains its minimum, that is, $C_D:=\min_{U\in\SU(n),\psi\in S^\perp,\|\psi\|=1}b(U,\psi)\geq 0$ exists.
Most importantly, $C_D>0$:
if $C_D$ were zero, then there exist $U\in\SU(n)$ and a normalized vector $\psi\in S^\perp$ such that $V_k\psi=\langle e_1|U^*V_kU|e_1\rangle\psi$ for all $k$.
Thus $\psi$ is a common eigenvector for all $V_k$ which contradicts $\psi\in S^\perp$.

With this let $U\in\SU(n)$ be given such that the flow ratio is well-defined (that is, $ \sum_{i=2}^n J_{i1}(U)>0$).
Our goal is to show that\footnote{
Here $\|\cdot\|_\infty$ denotes the usual operator norm, i.e.~the largest singular value of the input.
}
\begin{equation}\label{eq:lemma:bounded-flow-ratio_goal}
R(U)\leq\max\Big\{1,\frac{4\sum_{k=1}^r\|V_k\|_\infty^2}{C_D}\Big\}
\end{equation}
which would conclude the proof because the r.h.s.~of Eq.~\eqref{eq:lemma:bounded-flow-ratio_goal} is finite as we just saw.
Defining $w:=Ue_1$ and using Lemma~\ref{lemma:MU-row-column-sums}, a straightforward computation shows
\begin{align*}
R(U)
&=
\frac{\sum_{i=1}^n J_{1i}(U)-J_{11}(U)}{\sum_{i=1}^n J_{i1}(U)-J_{11}(U)}\\
&=
\frac{\sum_{k=1}^r \langle w|V_kV_k^*|w\rangle -|\langle w|V_k|w\rangle|^2}{\sum_{k=1}^r \langle w|V_k^*V_k|w\rangle -|\langle  w|V_k|w\rangle|^2}=
\frac{\sum_{k=1}^r\|(V_k^*-\langle w|V_k^*|w\rangle\mathds1)w\|^2}{\sum_{k=1}^r\|(V_k-\langle w|V_k|w\rangle\mathds1)w\|^2}\,.
\end{align*}
%%%FULL COMPUTATION
%\begin{align*}
%R(U)
%&=
%\frac{\sum_{i=1}^n J_{1i}(U)-J_{11}(U)}{\sum_{i=1}^n J_{i1}(U)-J_{11}(U)}\\
%&=
%\frac{\sum_{k=1}^r \langle w|V_kV_k^*|w}-|\langle  w|V_k|w\rangle|^2}{\sum_{k=1}^r \langle w|V_k^*V_k|w}-|\langle  w|V_k|w\rangle|^2}\\
%&=
%\frac{\sum_{k=1}^r \langle w|V_kV_k^*|w}-\langle w|V_k|w\rangle\langle w|V_k^*|w\rangle}{\sum_{k=1}^r \langle w|V_k^*V_k|w}-\langle w|V_k^*|w\rangle\langle w|V_k|w\rangle}\\
%&=
%\frac{\sum_{k=1}^r\langle w|V_k(\mathds1-|w\rangle\langle w|)V_k^*|w\rangle}{\sum_{k=1}^r\langle w|V_k^*(\mathds1-|w\rangle\langle w|)V_k|w\rangle}\\
%&=
%\frac{\sum_{k=1}^r\|(\mathds1-|w\rangle\langle w|)V_k^*w\|^2}{\sum_{k=1}^r\|(\mathds1-|w\rangle\langle w|)V_kw\|^2}=
%\frac{\sum_{k=1}^r\|(V_k^*-\langle w|V_k^*|w\rangle\mathds1)w\|^2}{\sum_{k=1}^r\|(V_k-\langle w|V_k|w\rangle\mathds1)w\|^2}\,.
%\end{align*}
%Here we used that $\|w\|=1$ guarantees that $\mathds1-|w\rangle\langle w|$ is an orthogonal projection.
%%%END OF FULL COMPUTATION
Next we use the orthogonal projection onto $S$---denoted $\Pi_S$---to split up the above norms 
via $\|\psi\|^2=\|\Pi_S\psi\|^2+\|(\mathds1-\Pi_S)\psi\|^2=\|\Pi_S\psi\|^2+\|\Pi_{S^\perp}\psi\|^2$.
Importantly, by assumption on $V_k$, both $V_k$ and $V_k^*$ commute with $\Pi_S$ (and thus with $\Pi_{S^\perp}$).
Thus we find that $R(U)$ is equal to
\begin{align*}
&\frac{\sum_{k=1}^r\|\Pi_S(V_k^*-\langle w|V_k^*|w\rangle\mathds1)w\|^2+\|\Pi_{S^\perp}(V_k^*-\langle w|V_k^*|w\rangle\mathds1)w\|^2}{\sum_{k=1}^r\|\Pi_S(V_k-\langle w|V_k|w\rangle\mathds1)w\|^2+\|\Pi_{S^\perp}(V_k-\langle w|V_k|w\rangle\mathds1)w\|^2}\\
&\ =\frac{\sum_{k=1}^r\|(V_k^*-\langle w|V_k^*|w\rangle\mathds1)\Pi_Sw\|^2+\|(V_k^*-\langle w|V_k^*|w\rangle\mathds1)\Pi_{S^\perp}w\|^2}{\sum_{k=1}^r\|(V_k-\langle w|V_k|w\rangle\mathds1)\Pi_Sw\|^2+\|(V_k-\langle w|V_k|w\rangle\mathds1)\Pi_{S^\perp}w\|^2}\\
&\ =\frac{\sum_{k=1}^r\|(V_k^*-\langle e_1|U^*V_k^*U|e_1\rangle\mathds1)\Pi_SUe_1\|^2+\|(V_k^*-\langle e_1|U^*V_k^*U|e_1\rangle\mathds1)\Pi_{S^\perp}Ue_1\|^2}{\sum_{k=1}^r\|(V_k-\langle e_1|U^*V_kU|e_1\rangle\mathds1)\Pi_SUe_1\|^2+\|(V_k-\langle e_1|U^*V_kU|e_1\rangle\mathds1)\Pi_{S^\perp}Ue_1\|^2}\,.
\end{align*}
At this point observe that $\|(V_k^*-\langle e_1|U^*V_k^*U|e_1\rangle\mathds1)\psi\|^2$ and $\|(V_k-\langle e_1|U^*V_kU|e_1\rangle\mathds1)\psi\|^2$ are equal for all $k$, all $U\in\SU(n)$, and all $\psi\in S$.
This can be seen, e.g., by expanding $\|\cdot\|^2$ into $\sum_{i=1}^s|\langle v_i|\,\cdot\,\rangle|^2$ (Parseval's identity)
and using that if $\lambda_{k,i}$ is the eigenvalue of $V_k$ w.r.t.~the eigenvector $v_i$, then\footnote{
By assumption $v_i$ is a normalized eigenvector of $V_k^*$ (to an eigenvalue $\mu_{k,i}$) which implies
$\overline{\lambda_{k,i}}=\langle V_k v_i|v_i\rangle=\langle v_i|V_k^*v_i\rangle=\mu_{k,i}$.
}
$V_k^*v_i=\overline{\lambda_{k,i}}v_i$.
Therefore the first summand of the numerator and the denominator of $R(U)$ coincide.
At this point we have to distinguish two cases: if $\Pi_{S^\perp}Ue_1=0$, then $R(U)=1$. Thus~\eqref{eq:lemma:bounded-flow-ratio_goal} holds and we are done.
Else we use the mediant inequality to obtain the (well-defined) expression
$$
R(U)\leq\max\Big\{1, \frac{\sum_{k=1}^r\|(V_k^*-\langle e_1|U^*V_k^*U|e_1\rangle\mathds1)\Pi_{S^\perp}Ue_1\|^2}{\sum_{k=1}^r\|(V_k-\langle e_1|U^*V_kU|e_1\rangle\mathds1)\Pi_{S^\perp}Ue_1\|^2}  \Big\}
$$
Finally we upper bound the second argument of this maximum as follows:
\begin{align*}
& \frac{\sum_{k=1}^r\|(V_k^*-\langle e_1|U^*V_k^*U|e_1\rangle\mathds1)\Pi_{S^\perp}Ue_1\|^2}{\sum_{k=1}^r\|(V_k-\langle e_1|U^*V_kU|e_1\rangle\mathds1)\Pi_{S^\perp}Ue_1\|^2} \\
 &\qquad=\frac{\sum_{k=1}^r\|(V_k^*-\langle e_1|U^*V_k^*U|e_1\rangle\mathds1)\frac{\Pi_{S^\perp}Ue_1}{\|\Pi_{S^\perp}Ue_1\|}\|^2}{\sum_{k=1}^r\|(V_k-\langle e_1|U^*V_kU|e_1\rangle\mathds1)\frac{\Pi_{S^\perp}Ue_1}{\|\Pi_{S^\perp}Ue_1\|}\|^2}  \\
 &\qquad\leq \frac{\sum_{k=1}^r\max_{\psi\in \mathbb C^n,\|\psi\|=1}\|(V_k^*-\langle e_1|U^*V_k^*U|e_1\rangle\mathds1)\psi\|^2}{\min_{\psi\in S^\perp,\|\psi\|=1}\sum_{k=1}^r\|(V_k-\langle e_1|U^*V_kU|e_1\rangle\mathds1)\psi\|^2} \\
 &\qquad\leq  \frac{\sum_{k=1}^r\|(V_k^*-\langle e_1|U^*V_k^*U|e_1\rangle\mathds1)\|_\infty^2}{\min_{\psi\in S^\perp,\|\psi\|=1}b(U,\psi)}\leq  \frac{\sum_{k=1}^r(2\|V_k\|_\infty)^2}{C_D} \,.\notag
\end{align*}
In either case~\eqref{eq:lemma:bounded-flow-ratio_goal} holds so we are done.
\end{proof}

The lemma above shows that under certain circumstances, the ``flow ratio'' of a set of matrices, as defined in~\eqref{eq:flow-ratio}, remains bounded, even if it is undefined at some points. This limits the ability of solutions to~\eqref{eq:relaxed-control-system} to converge to a vertex of $\Delta^{n-1}$:

\begin{corollary}
\label{coro:coolable-implies-eigenvector}
Let $\lambda:[0,\infty)\to\Delta^{n-1}$ be a solution to~\eqref{eq:relaxed-control-system}.
If every common eigenvector of the Lindblad terms $\{V_k\}_k$ is also a common eigenvector of $\{V_k^*\}_k$, then the first component $\lambda_1(t)\leq \max\{\lambda_1(0),\frac{C}{1+C}\}$ for all $t\geq 0$, where $C$ is as in~\eqref{eq:flow-ratio}.
Therefore, if $\lambda(0)\neq e_1$ and if there is a sequence $t_n\to\infty$ with $\lambda(t_n)$ converging to $e_1$, then there exists a common eigenvector of all Lindblad terms which is not a common left eigenvector.
\end{corollary}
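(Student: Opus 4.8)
The plan is to bound the first component $\lambda_1(t)$ directly from the flow-ratio estimate of Lemma~\ref{lemma:bounded-flow-ratio} and then obtain the final claim by contraposition. Writing the relaxed dynamics as $\dot\lambda(t)=-M_t\lambda(t)$ with $-M_t\in\conv(\mf L)$ (cf.~Remark~\ref{rmk:reduced-aliter}), I would note that any such $-M_t$ is a convex combination of induced vector fields and that $-N\mapsto(-N\lambda)_1$ is linear; hence it suffices to prove the pointwise estimate $(-L_U\lambda)_1\le0$ for every $U\in\SU(n)$ whenever $\lambda_1\ge\tfrac{C}{1+C}$, the bound then extending to $\conv(\mf L)$ by convexity. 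By \eqref{eq:def_JU} one has $(-L_U)_{1j}=J_{1j}(U)$ for $j\neq1$ and $(-L_U)_{11}=-\sum_{i\ge2}J_{i1}(U)$, so the first component decomposes into inflow minus outflow, $(-L_U\lambda)_1=\sum_{j\ge2}J_{1j}(U)\,\lambda_j-\big(\sum_{i\ge2}J_{i1}(U)\big)\lambda_1$.

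For the core estimate I abbreviate $a:=\sum_{i\ge2}J_{i1}(U)$ and $b:=\sum_{j\ge2}J_{1j}(U)$. Since $\lambda\in\Delta^{n-1}$ forces $\lambda_j\le1-\lambda_1$ for each $j\ge2$, the inflow is at most $b(1-\lambda_1)$, whence $(-L_U\lambda)_1\le b-(a+b)\lambda_1$, a quantity non-increasing in $\lambda_1$. If $a>0$ the flow ratio is defined and Lemma~\ref{lemma:bounded-flow-ratio} gives $b\le Ca$; evaluating at $\lambda_1=\tfrac{C}{1+C}$ yields $b-(a+b)\tfrac{C}{1+C}=\tfrac{b-aC}{1+C}\le0$, so $(-L_U\lambda)_1\le0$ for all $\lambda_1\ge\tfrac{C}{1+C}$. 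The delicate case is $a=0$, where the flow ratio is undefined and Lemma~\ref{lemma:bounded-flow-ratio} provides no information: then $\langle i|U^*V_kU|1\rangle=0$ for all $i\ge2$ and all $k$, i.e.~$Ue_1$ is a common eigenvector of all $V_k$. This is precisely where the hypothesis of the corollary is used: $Ue_1$ is then also a common eigenvector of all $V_k^*$, which forces $\langle1|U^*V_kU|j\rangle=0$ for every $j\neq1$ and hence $b=0$, so that $(-L_U\lambda)_1=0$. I expect this undefined-ratio case to be the main obstacle, since it is the only place the eigenvector versus left-eigenvector assumption enters, and it is tempting to overlook that \eqref{eq:flow-ratio} carries no content when the outflow vanishes.

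Granting the pointwise estimate, $\dot\lambda_1(t)\le0$ holds for almost every $t$ with $\lambda_1(t)\ge\tfrac{C}{1+C}$. Setting $c^*:=\max\{\lambda_1(0),\tfrac{C}{1+C}\}$, a barrier argument closes the first statement: the function $h(t):=\max(\lambda_1(t)-c^*,0)$ is absolutely continuous with $h(0)=0$ and $\dot h(t)=\dot\lambda_1(t)\,\mathbf 1_{\{\lambda_1(t)>c^*\}}\le0$ almost everywhere, because $\lambda_1(t)>c^*\ge\tfrac{C}{1+C}$ on the relevant set; hence $h\equiv0$ and $\lambda_1(t)\le c^*$ for all $t$.

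For the final assertion I would argue by contradiction. If every common eigenvector of the $V_k$ were also a common left eigenvector, equivalently a common eigenvector of the $V_k^*$, then the hypothesis of the first part holds and $\lambda_1(t)\le\max\{\lambda_1(0),\tfrac{C}{1+C}\}$ for all $t$. But $\lambda_1(0)<1$ because $\lambda(0)\neq e_1$, and $\tfrac{C}{1+C}<1$ because $C<\infty$, so $\lambda_1$ would stay bounded away from $1$. This contradicts $\lambda(t_n)\to e_1$, which forces $\lambda_1(t_n)\to1$. Therefore some common eigenvector of the $V_k$ must fail to be a common left eigenvector.
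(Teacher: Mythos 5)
Your proposal is correct and follows essentially the same route as the paper's own proof: the bound from Lemma~\ref{lemma:bounded-flow-ratio} gives $(-M\lambda)_1\le 0$ for every $-M\in\conv(\mf L)$ whenever $\lambda_1\ge \tfrac{C}{1+C}$, a barrier argument then yields $\lambda_1(t)\le\max\{\lambda_1(0),\tfrac{C}{1+C}\}$, and the second claim follows by contraposition. If anything, your explicit handling of the degenerate case where the outflow vanishes (zero outflow forces $Ue_1$ to be a common eigenvector, hence by hypothesis a common eigenvector of the $V_k^*$, hence zero inflow) is more careful than the paper's, which only remarks in passing that the inequality ``also holds'' in that case.
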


\begin{proof}
It suffices to prove the first statement as the second one is an immediate consequence.
%Our idea will be to argue by contraposition; more precisely, we will see that if every common eigenvector is also a left eigenvector, then
%every solution $\lambda$ to~\eqref{eq:relaxed-control-system} satisfies $\lambda_1(t)\leq \max\{(\lambda_0)_1,\frac{C}{1+C}\}$ for all $t\geq 0$, where $C\geq 0$ is the constant from Lemma~\ref{lemma:bounded-flow-ratio}.
%
%We argue by contraposition: Assume that every common eigenvector is also a left eigenvector. Then by Lemma~\ref{lemma:bounded-flow-ratio}
For this assume that every common eigenvector is also a left eigenvector. 
Indeed, a direct consequence of Lemma~\ref{lemma:bounded-flow-ratio} and the mediant inequality is that
%there exists a constant $C\geq 0$ such that
$
\sum_{i=2}^n (-M_{1i})  \leq C\sum_{i=2}^n (-M_{i1})
$
for all $-M\in\operatorname{conv}\mf L$.
Note that this also holds when $\sum_{i=2}^n (-M_{i1})=0$.
%the ratio~\eqref{eq:flow-ratio} is bounded where it is defined by some value $C\geq0$.
Now let $\tilde\lambda\in\Delta^{n-1}$ and $-M\in\operatorname{conv}\mf L$ be given
such that
%$\tilde\lambda_1\geq\frac12$ and 
$(-M\tilde\lambda)_1>0$.
Then, because $(-M_{i1})\geq 0$ for all $i\neq 1$ we compute
\begin{align*}
0
&<
%(-M\tilde\lambda)_1
(-M_{11})\tilde\lambda_1 
+ \sum_{i=2}^n (-M_{1i} )\tilde\lambda_i 
=-\tilde\lambda_1 \sum_{i=2}^n (-M_{i1})
+ \sum_{i=2}^n (-M_{1i} )\tilde\lambda_i 
\\
&\leq-\tilde\lambda_1 \sum_{i=2}^n (-M_{i1})
+ \sum_{i=2}^n (-M_{1i} )\Big(\sum_{j=2}^n\tilde\lambda_j\Big)
= -\tilde\lambda_1 \sum_{i=2}^n(- M_{i1})
+ (1-\tilde\lambda_1) \sum_{i=2}^n(- M_{1i})\\
&\leq -\tilde\lambda_1 \sum_{i=2}^n(- M_{i1})
+ (1-\tilde\lambda_1)  C\sum_{i=2}^n (-M_{i1})=\Big(\sum_{i=2}^n (-M_{i1})\Big)(C-\tilde\lambda_1(1+C)) \,.
\end{align*}
Rearranging this inequality yields $\tilde\lambda_1<\frac{C}{1+C}$.
Now let $\lambda$ be any solution to~\eqref{eq:relaxed-control-system} and denote $C':=\max\{\lambda_1(0),\frac{C}{1+C}\}$.
Towards a contradiction assume that there is some $t_1>0$ such that $\lambda_1(t_1)>C'$.
Let $t_0=\max(\{t\in[0,t_1]: \lambda_1(t)=C'\})$.
Then $t_0<t_1$ and $\lambda_1(t)\geq C'$ for all $t\in[t_0,t_1]$, and by the above $(-M\lambda(t))_1\leq0$ for every $-M\in\conv(\mf L)$ on the same interval.
By Remark~\ref{rmk:reduced-aliter} there is $(M_t)_{t\in[t_0,t_1]}$ corresponding to $\lambda$. Then it holds that $\lambda_1(t_1)=\lambda_1(t_0)+\int_{t_0}^{t_1}(-M_t\lambda(t))_1\,dt\leq \lambda_1(t_0)=C'$, which yields the desired contradiction.
%{\color{red}The previous computation shows that if $\lambda_1(t)>\frac{C}{1+C}$ for any $t\geq 0$ (i.e.~this holds on some interval $[t,t+\varepsilon]$ by continuity), then $(-M_\tau\lambda(\tau))_1 <0$ for all $\tau\in [t,t+\varepsilon]$.
%Hence $\lambda_1$ shrinks at $t$ meaning it can never exceed $\max\{\lambda_1(0),\frac{C}{1+C}\}$. MAKE RIGOROUS via $\lambda(t+\varepsilon)=\lambda_t+\int_t^{t+\varepsilon}(-M_\tau)\lambda(\tau)\,d\tau$ by absolute continuity}
%...........
%Consider the first coordinate $\lambda_1(t)$ of the solution. 
%This function is absolutely continuous and hence\footnote{
%Let $f:[0,\infty)\to[0,1)$ be absolutely continuous and satisfy $f(0)=0$ and $\lim_{t\to\infty}f(t)=1$. Then one can show that for every $T$ the set $\{t\geq T:f'(t)>0\}$ has positive measure.}
%% by Lemma~\ref{lemma:converge-abs-cont}, 
%for every $T\geq0$ the set $\{t\geq T:\dot\lambda(t)_1>0\}$ has positive measure. 
%Hence if $\lambda$ converges to $e_1$ then we can choose $T$ such that $\lambda_1(t)\geq C/(1+C)$ for all $t\geq T$. 
%However this yields a contradiction and hence $\lambda$ cannot converge to $e_1$.
\end{proof}
This shows that determining upper bounds for the flow ratio~\eqref{eq:flow-ratio} of a system allows to find bounds on the purest reachable state.

Now we are ready to characterize asymptotically coolable systems.

\begin{theorem} %[Asymptotically coolable systems] 
\label{thm:asymptotic-coolability}
Given any $-L\in\wkl(n)$,
the following are equivalent.\smallskip
\begin{enumerate}[(i)]
\item \label{it:cool-evec} For each choice of Lindblad terms $\{V_k\}_{k=1}^r$ of $-L$,
there exists a common eigenvector of all $V_k$ which is not a common left eigenvector.
\item \label{it:cool-ham} There exists a (time-independent) Hamiltonian $H$ such that $-(\iu\ad_H + L)$ has a (unique) attractive fixed point\footnote{We say that $\rho$ is an attractive fixed point if every solution converges to $\rho$. If such an attractive fixed point exists, it is clearly unique.}, and this fixed point is pure.
\item \label{it:cool-conv} For every initial state, there exists some solution $\lambda$ converging to $e_1$.
%\item \label{it:cool-reach} $e_1\in\overline{\reach(\lambda)}$ for all $\lambda\in\Delta^{n-1}$;
%\item \label{it:cool-conv-some} There exists some solution $\lambda$ converging to $e_1$ with $\lambda(0)\neq e_1$;
\item \label{it:cool-reach-some} $e_1\in\overline{\reach(\lambda)}$ for some $\lambda\in\Delta^{n-1}\setminus\{e_1\}$.
\end{enumerate}
\end{theorem}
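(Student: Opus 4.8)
Theorem (asymptotic coolability) — my proof plan.

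The plan is to prove the theorem as a cycle of implications
\ref{it:cool-evec} $\Rightarrow$ \ref{it:cool-ham} $\Rightarrow$ \ref{it:cool-conv} $\Rightarrow$ \ref{it:cool-reach-some} $\Rightarrow$ \ref{it:cool-evec},
so that each arrow is as direct as possible. The easy arrows are the two in the middle. For \ref{it:cool-ham} $\Rightarrow$ \ref{it:cool-conv}: if $-(\iu\ad_H+L)$ has a pure attractive fixed point $\rho=|\psi\rangle\langle\psi|$, then by the Equivalence Theorem~\ref{thm:equivalence} (or Corollary~\ref{coro:ham-control-system}, using the constant control $H$) the curve $\spec^\down(\rho(t))$ is a solution of~\eqref{eq:simplex-control-system}; since $\rho(t)\to\rho$ in operator norm, its ordered spectrum converges to $\spec^\down(\rho)=e_1$. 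Because every unitary orbit meets the simplex, this gives a solution converging to (a permutation of) $e_1$ from \emph{any} initial spectrum, which after relabeling is \ref{it:cool-conv}. The arrow \ref{it:cool-conv} $\Rightarrow$ \ref{it:cool-reach-some} is immediate: a solution converging to $e_1$ witnesses $e_1\in\overline{\reach(\lambda)}$ for any $\lambda\neq e_1$ in its domain.

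The arrow \ref{it:cool-reach-some} $\Rightarrow$ \ref{it:cool-evec} is exactly the contrapositive packaged in Corollary~\ref{coro:coolable-implies-eigenvector}. Indeed, suppose for some choice of Lindblad terms every common eigenvector is also a common left eigenvector. Then Corollary~\ref{coro:coolable-implies-eigenvector} furnishes a uniform bound $\lambda_1(t)\le\max\{\lambda_1(0),\tfrac{C}{1+C}\}<1$ for every solution of~\eqref{eq:relaxed-control-system} (hence of~\eqref{eq:simplex-control-system}, which has fewer solutions), so no solution started away from $e_1$ can approach $e_1$; this contradicts \ref{it:cool-reach-some}. I must remember that \ref{it:cool-evec} is stated for \emph{every} choice of Lindblad terms, but by Lemma~\ref{lemma:freedom-of-reps} the existence of a common (right) eigenvector that fails to be a common left eigenvector is a representation-independent property, so testing one representation suffices.

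The substantial arrow is \ref{it:cool-evec} $\Rightarrow$ \ref{it:cool-ham}: from a common right eigenvector $|\psi\rangle$ of all $V_k$ that is not a common left eigenvector, I must build a time-independent $H$ making $|\psi\rangle\langle\psi|$ an \emph{attractive} fixed point. First I would ensure $|\psi\rangle\langle\psi|$ is a fixed point: since $V_k|\psi\rangle=\lambda_k|\psi\rangle$, the stabilizing-Hamiltonian formula already recorded in the Corollary after Proposition~\ref{prop:viable-faces} gives an explicit Hermitian $H_\psi$ with $-(\iu\ad_{H_\psi}+L)(|\psi\rangle\langle\psi|)=0$. The real work is attractivity. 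The natural mechanism is that the failure of $|\psi\rangle$ to be a \emph{left} eigenvector means the span of $|\psi\rangle$ is a lazy (but not enclosed) subspace, so under the Hamiltonian-augmented generator it becomes a one-dimensional collecting subspace with no competing stationary states; I would invoke the structure theory of Kossakowski--Lindblad generators (Lemma~\ref{lemma:unique-attractive-fp} and the relaxation-algebra machinery of Appendix~\ref{app:lindblad-eq}) to upgrade ``unique fixed point on a collecting subspace'' to ``globally attractive.'' Concretely, the extended relaxation algebra $\cV^+$ generated by the $V_k$ together with $K=\iu H_\psi+\tfrac12\sum_k V_k^*V_k$ must have $\linspan\{|\psi\rangle\}$ as its \emph{only} minimal invariant subspace; the left-eigenvector obstruction is precisely what prevents a second stationary state or a decoherence-free complement. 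I expect this attractivity step to be the main obstacle, since it requires translating the purely spectral condition of \ref{it:cool-evec} into the absence of any other $\cV^+$-invariant stationary structure and then citing a relaxation result (e.g.~in the spirit of~\cite{Spohn77,BNT08b}) to conclude uniqueness \emph{and} global convergence rather than mere local stability.
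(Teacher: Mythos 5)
Your proposal is correct and follows essentially the same route as the paper: the identical cycle \ref{it:cool-evec}~$\Rightarrow$~\ref{it:cool-ham}~$\Rightarrow$~\ref{it:cool-conv}~$\Rightarrow$~\ref{it:cool-reach-some}~$\Rightarrow$~\ref{it:cool-evec}, with Lemma~\ref{lemma:unique-attractive-fp} supplying the Hamiltonian for the first arrow, Corollary~\ref{coro:ham-control-system} for the second, and Corollary~\ref{coro:coolable-implies-eigenvector} closing the loop. The attractivity step you flag as the main obstacle is exactly what Lemma~\ref{lemma:unique-attractive-fp} already packages (lazy but not an enclosure $\Rightarrow$ unique attractive fixed point supported there), so no additional work is needed beyond citing it, and your remark on representation-independence via Lemma~\ref{lemma:freedom-of-reps} is a correct, if implicit, detail.
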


\begin{proof}
\ref{it:cool-evec} $\Rightarrow$ \ref{it:cool-ham} follows from Lemma~\ref{lemma:unique-attractive-fp},
\ref{it:cool-ham} $\Rightarrow$ \ref{it:cool-conv} follows from Corollary~\ref{coro:ham-control-system}, and
\ref{it:cool-conv} $\Rightarrow$ \ref{it:cool-reach-some} is trivial.
Finally \ref{it:cool-reach-some} $\Rightarrow$ \ref{it:cool-evec} is Corollary~\ref{coro:coolable-implies-eigenvector}.
\end{proof}

Most of our effort went into proving that if the system is coolable, then the Lindblad terms must have a common eigenvector which is not a common left eigenvector. The other implications are mostly known and have been rediscovered several times, see for instance~\cite{BNT08b,Kraus08,TV09}.
%\marginpar{F: if possible, please add citations to underpin the last statement}
Theorem~\ref{thm:asymptotic-coolability} shows that for the purpose of cooling, time-independent controls are sufficient.
In the following section we will consider the reachability of faces, where time-dependent controls will offer new possibilities.

\subsection{Directly and Indirectly Reachable Faces}

In the previous section we characterized coolability by studying reachability of vertices of $\Delta^{n-1}$ in the reduced system. 
One consequence of Theorem~\ref{thm:asymptotic-coolability} is that for vertices, asymptotic and approximate reachability coincide. 
For interior points of higher-dimensional faces of $\Delta^{n-1}$ this need not hold anymore.
The reachability of faces of the simplex corresponds to the reachability of certain subspaces of Hilbert space.
Of particular interest might be the reachability of decoherence free subspaces~\cite{Lidar98DFS}.

Faces of $\Delta^{n-1}$ whose interior is reachable in an approximate sense satisfy the following dichotomy: Either the interior can be reached directly, or one first has to approach the boundary and then move parallel to the face, which we call indirect reachability. The precise result is given in Proposition~\ref{prop:dichotomy} at the end of the section.
Note that indirectly reachable faces provide a concrete problem for which time-independent controls are insufficient.

We start by characterizing direct reachability of faces.
Like in the previous section, flow ratios play an important role here, so we begin by formalizing the concept.

\begin{definition} \label{def:flows}
Let $J$ be an $n\times n$ matrix with non-negative off-diagonal entries.
Given any $d\in\{1,\ldots,n-1\}$,
the \emph{$d$-dimensional inflow} $f_{\sf in}^d$, the \emph{$d$-dimensional outflow} $f_{\sf out}^d$, and the \emph{$d$-dimensional flow ratio $R_d$} of $J$ are defined by
$$
f_{\sf in}^d:= \sum_{i=1}^d \sum_{j=d+1}^n J_{ij}\,, \quad
f_{\sf out}^d := \sum_{i=d+1}^n \sum_{j=1}^d J_{ij}\,, \quad
R_d := \frac{f_{\sf in}^d}{f_{\sf out}^d}\,,
$$
respectively.
If the matrix $J$ is of the form $J(U)$ as in~\eqref{eq:def_JU}, then we denote the objects above by $f_{\sf in}^d(U)$, $f_{\sf out}^d(U)$ and $R_d(U)$\footnote{
Since $-L_U$ is well-defined, so are the off diagonal elements of $J(U)$, and hence also all the quantities defined here.
}.
We allow $R_d(U)$ to take values in $[0,+\infty]$, where expressions of the form $\frac{c}{0}$ with $c>0$ are interpreted as $+\infty$. Only expressions of the form $\frac00$ are considered undefined.\marginpar{could we not replace ``undefined'' with ``zero''? because then we don't have to assume anything implicit about sup and inf(?)}\marginpar{if anything it should be 1, but I would prefer not to (will think about it)}
We will say that the system has \emph{bounded $d$-dimensional flow ratio} if 
$\sup\{R_d(U)\,:\,U\in\SU(n)\} <\infty\,,$
otherwise we say it is \emph{unbounded}. Here and henceforth, suprema and infima of this form always implicitly ignore undefined values.
\end{definition}

Note that the flow ratio $R_d(U)$ is infinite or undefined whenever the first $d$ columns of $U^*$ span a lazy subspace, cf.~Lemma~\ref{lemma:lazy}~\ref{it:unitary}. 
Outside of these points, however, $R_d$ is a continuous function. 
The behavior of $R_d(U)$ near these singularities has important consequences for the reachability of faces, see Remark~\ref{rmk:directly-reach}.

In the proof of Corollary~\ref{coro:coolable-implies-eigenvector} we showed that if the $1$-dimensional flow ratio of the system is bounded, then it is impossible to reach a vertex of $\Delta^{n-1}$. In higher dimensions the situation becomes more nuanced, as a bounded flow ratio only prohibits approaching the interior of a face directly, but it does not prohibit approaching the boundary of the face:

\begin{lemma} \label{lemma:bounded-s-flow-ratio}
Let $1\leq d\leq n-1$ and set $p_d:\Delta^{n-1}\to\R$, $p_d(\lambda):=\sum_{i=1}^d\lambda_i$.
Given $\lambda\in\Delta^{n-1}$, $\epsilon>0$ assume that $\lambda_i\geq\varepsilon p_d(\lambda)$ for all $i=1,\ldots,d$,
and that the system has bounded flow ratio $R_d(U)$ with $R$ denoting the supremum.
Then if $p_d(\lambda)\geq\frac{R}{R+\varepsilon}$,
% $\geq1-\frac{\varepsilon}{R}$
%\marginpar{F: assumption? statement? (E: always true)} 
it holds that $p_d(\derv(\lambda))\subseteq(-\infty,0]$.
In particular, no solution to~\eqref{eq:relaxed-control-system} can converge to the interior of a $(d-1)$-dimensional face if it starts outside of the face.
\end{lemma}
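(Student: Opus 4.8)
The plan is to compute $p_d(-L_U\lambda)$ in closed form from the explicit shape $-L_U=J(U)-\diag(J(U)^\top\e)$ in~\eqref{eq:def_JU} and then estimate it with the two hypotheses. Writing $A=\{1,\dots,d\}$ and $B=\{d+1,\dots,n\}$, and recalling from~\eqref{eq:J-comp} that $(-L_U)_{ij}=J_{ij}(U)$ for $i\neq j$ while $(-L_U)_{ii}=-\sum_{\ell\neq i}J_{\ell i}(U)$, I would expand
\[
p_d(-L_U\lambda)=\sum_{i\in A}\Big(\sum_{j\neq i}J_{ij}(U)\lambda_j-\lambda_i\sum_{\ell\neq i}J_{\ell i}(U)\Big).
\]
The key algebraic observation is that the contributions with both indices in $A$ cancel, since relabelling gives $\sum_{i,j\in A,\,i\neq j}J_{ij}(U)\lambda_j=\sum_{i,\ell\in A,\,i\neq\ell}J_{\ell i}(U)\lambda_i$. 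What survives is exactly the net probability flow across the cut $A\mid B$:
\[
p_d(-L_U\lambda)=\sum_{i\in A}\sum_{j\in B}J_{ij}(U)\lambda_j-\sum_{i\in A}\sum_{\ell\in B}J_{\ell i}(U)\lambda_i .
\]

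Next I would bound the two terms. For the inflow term, each $\lambda_j$ with $j\in B$ satisfies $\lambda_j\le 1-p_d(\lambda)$, so it is at most $(1-p_d(\lambda))\,f_{\sf in}^d(U)$. For the outflow term I would use $\lambda_i\ge\varepsilon\,p_d(\lambda)$ for $i\in A$ to bound it below by $\varepsilon\,p_d(\lambda)\,f_{\sf out}^d(U)$. Hence $p_d(-L_U\lambda)\le(1-p_d(\lambda))f_{\sf in}^d(U)-\varepsilon\,p_d(\lambda)f_{\sf out}^d(U)$. When $f_{\sf out}^d(U)>0$ the boundedness assumption gives $f_{\sf in}^d(U)\le R\,f_{\sf out}^d(U)$, so the right-hand side is at most $\big((1-p_d(\lambda))R-\varepsilon\,p_d(\lambda)\big)f_{\sf out}^d(U)$, which is $\le0$ precisely because $p_d(\lambda)\ge\frac{R}{R+\varepsilon}$ (rearranging $R\le p_d(\lambda)(R+\varepsilon)$). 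The one case needing separate care is $f_{\sf out}^d(U)=0$: non-negativity of the off-diagonal entries of $J(U)$ forces every $J_{\ell i}(U)=0$ with $i\in A,\ell\in B$, so the outflow term vanishes, and since the flow ratio is bounded (hence never $+\infty$) we must also have $f_{\sf in}^d(U)=0$, so the inflow term vanishes too and $p_d(-L_U\lambda)=0$. This yields $p_d(\derv(\lambda))\subseteq(-\infty,0]$ for every $U$, and by linearity of $p_d$ the same holds on $\conv(\derv(\lambda))$, hence along every solution of~\eqref{eq:relaxed-control-system} at $\lambda$.

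For the final assertion I would localize near the face. After a permutation take $F=\{\lambda:\lambda_{d+1}=\dots=\lambda_n=0\}$, so $\relint F=\{p_d=1,\ \lambda_i>0\ (i\le d)\}$; fix $\mu\in\relint F$ and set $\varepsilon:=\tfrac12\min_{i\le d}\mu_i>0$. Continuity of $\lambda\mapsto(\lambda_i/p_d(\lambda),\,p_d(\lambda))$ provides a neighborhood $N$ of $\mu$ on which both hypotheses ($\lambda_i\ge\varepsilon\,p_d(\lambda)$ and $p_d(\lambda)\ge\frac{R}{R+\varepsilon}$) hold, so $p_d$ is non-increasing along any solution while it stays in $N$. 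If a solution converged to $\mu$, it would lie in $N$ for all $t\ge T$, where $p_d\circ\lambda$ is non-increasing with limit $1$ and bounded above by $1$, forcing $p_d(\lambda(t))\equiv1$, i.e.~$\lambda(t)\in F$, for $t\ge T$.

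The remaining, and I expect hardest, step is to make precise and exclude "converging from outside". I would rule out finite-time entry into $F$: with $q:=\sum_{j\in B}\lambda_j=1-p_d$ and $\dot\lambda=-M_t\lambda$, $-M_t\in\conv\mf L$ of uniformly bounded entries (Remark~\ref{rmk:reduced-aliter}), a column-sum estimate gives $\dot q\ge -c\,q$ for some $c>0$, so Grönwall yields $q(t)\ge q(0)e^{-ct}>0$ for all finite $t$ whenever $q(0)>0$. Thus a solution starting off $F$ never reaches $F$ in finite time, contradicting $\lambda(T)\in F$. The main obstacles are therefore this no-finite-time-entry argument and the careful handling of the degenerate locus where $f_{\sf out}^d(U)=0$ and $R_d(U)$ is undefined or infinite; the central cancellation and the two one-sided estimates are routine bookkeeping by comparison.
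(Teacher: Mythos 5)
Your proof is correct, and its core --- the cut decomposition $p_d(-L_U\lambda)=\sum_{i\le d}\sum_{j>d}J_{ij}(U)\lambda_j-\sum_{i>d}\sum_{j\le d}J_{ij}(U)\lambda_j$, the two one-sided bounds giving $(1-p_d(\lambda))f_{\sf in}^d(U)-\varepsilon\, p_d(\lambda)f_{\sf out}^d(U)$, and the case split on $f_{\sf out}^d(U)=0$ --- is exactly the paper's argument. You diverge in two places, both to your advantage. First, in the degenerate case $f_{\sf out}^d(U)=0$ you conclude $f_{\sf in}^d(U)=0$ directly from the convention of Definition~\ref{def:flows} that $c/0=+\infty$ with $c>0$ is a \emph{defined} value entering the supremum, so boundedness excludes it; the paper instead invokes a continuity/analyticity argument (the zero set is either everything or nowhere dense, as in Lemma~\ref{lemma:generate-matrix-algebra}), which reaches the same conclusion by a more roundabout route. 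Second, and more substantially, your treatment of the convergence claim is more complete than the paper's. The paper argues ``as in the proof of Corollary~\ref{coro:coolable-implies-eigenvector}'': once the proportionality condition $\lambda_i(t)\geq\varepsilon p_d(\lambda(t))$ holds (say for $t\geq T$), monotonicity caps $p_d$ at $\max\{p_d(\lambda(T)),\tfrac{R}{R+\varepsilon}\}$, contradicting $p_d(\lambda(t))\to1$ --- but this yields a contradiction only if $p_d(\lambda(T))<1$, i.e.~if the solution has not already entered $F$ by time $T$; the paper leaves that case implicit. Your Gr\"onwall estimate $\dot q\geq -cq$ for $q=1-p_d$ (valid because $\conv\mf L$ is compact, so the selected generators have uniformly bounded entries, and their off-diagonal entries are non-negative) rules out finite-time entry into $F$ from outside and thereby closes exactly this case. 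So your write-up is, if anything, the more rigorous of the two.
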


\begin{proof}
Using~\eqref{eq:J-comp} we
for any $U\in\SU(n)$ compute
\begin{align*}
p_d(-L_U\lambda)
%&=\sum_{i=1}^d(-L_U\lambda)_i=\sum_{i=1}^d\sum_{j=1}^n(-L_U)_{ij}\lambda_j\\
%&=\sum_{i=1}^d\sum_{j=1}^n\Big( J_{ij}(U) - \delta_{ij}\sum_{\ell=1}^nJ_{\ell i}(U)\Big)\lambda_j\\
&=\sum_{i=1}^d\sum_{j=1}^nJ_{ij}(U) \lambda_j-\sum_{i=1}^n\sum_{j=1}^dJ_{i j}(U)\lambda_j\\
&=\sum_{i=1}^d\sum_{j=d+1}^nJ_{ij}(U) \lambda_j-\sum_{i=d+1}^n\sum_{j=1}^dJ_{i j}(U)\lambda_j\\
&\leq  (1-p_d(\lambda))f_{\sf in}^d(U)-\epsilon p_d(\lambda)f_{\sf out}^d(U)\,,
\end{align*}
where we used $J_{ij}(U)\geq 0$ for all $i\neq j$, as well as $\lambda_i\geq\varepsilon p_d(\lambda)$ for all $i=1,\ldots,d$.
We distinguish two cases: 
If $f_{\sf out}^d(U)=0$, then $f_{\sf in}^d(U)=0$. The reason for this is that $R$ is a continuous, and by assumption bounded, function whose zero set is either all of $\U(n)$ or nowhere dense (as in the proof of Lemma~\ref{lemma:generate-matrix-algebra})
%\footnote{The result behind this observation is that, given $V\in\mathbb C^{n\times n}$ and $i\neq j$, if there exists $\epsilon>0$ such that $\langle i|U^*VU|/j\rangle=0$ for all $U\in B_\varepsilon(\mathds1)\cap\U(n)$, then $V=\alpha\mathds1$ for some $\alpha\in\mathbb C$. 
%One way to see this is using that for all $H\in\mathbb C^{n\times n}$ Hermitian, all derivatives of $\langle i|e^{-itH}Ve^{itH}|j\rangle$ at zero vanish;
%this readily implies $\langle i|e^{-iH}Ve^{iH}|j\rangle$ for all $H$ by means of the Taylor expansion, i.e.~the relation holds on the full unitary group. Letting $U$ run through all permutations shows that $V$ has to be diagonal, from which it is easy to see that all diagonal entries of $V$ have to coincide.}
so if $f_{\sf in}^d(U)$ were not zero for some $U$ (while $f_{\sf out}^d(U)$ is), then $R$ cannot be bounded.
Thus $p_d(\derv(\lambda))\subseteq(-\infty,0]$ either way.
Now if $f_{\sf out}^d(U)\neq 0$, then the 
%right-hand side of the 
above estimate is non-positive if and only if $p_d(\lambda)\geq1-\frac{\epsilon}{R_d(U)+\varepsilon}$. 
Hence if $p_d(\lambda)\geq\frac{R}{R+\varepsilon}=1-\frac{\epsilon}{R+\varepsilon}$, then $-p_d(L_U\lambda)\leq0$ for all $U\in\SU(n)$.
Now consider any solution $\lambda:[0,\infty)\to\Delta^{n-1}$ to~\eqref{eq:relaxed-control-system} and let $F$ be the 
%$(d-1)$-dimensional face which is the 
convex hull of the $d$ first standard basis vectors.
If $\lambda$ converges to some $\mu\in\relint(F)$, then for $t$ large enough and some $\varepsilon>0$ it holds that $\lambda_i(t)\geq\varepsilon p_d(\lambda(t))$ for all $i=1,\ldots,d$. At the same time $p_d(\lambda(t))$ converges to $1$.
However, this contradicts the first part of this lemma as in the proof of Corollary~\ref{coro:coolable-implies-eigenvector}.
\end{proof}

\begin{remark}
We have shown that if $p_d(\lambda)\geq\frac{R}{R+\varepsilon}$ then $p_d(\dot\lambda)\leq0$. 
This however does not necessarily imply that for $p_d(\lambda)>\frac{R}{R+\varepsilon}$ we have $p_d(\dot\lambda)<0$. 
For instance one might consider a system with a single normal $V$ which is not a multiple of the identity (see Coro.~\ref{lemma:unital-stabilizable}). 
In this case every state is stabilizable but the one-dimensional flow ratio is equal to $1$ whenever it is defined. 
%On the other hand, if there are no lazy subspaces of dimension $d$, then one can use Lemma~\ref{lemma:bounded-s-flow-ratio} to show that for $p_d(\lambda)>\frac{R}{R+\varepsilon}$ we have $p_d(\dot\lambda)<0$ and deduce some more precise bounds.
\end{remark}

With this result at our disposal, we can start characterizing the direct reachability of faces in the simplex.
We begin by considering a stronger notion, i.e.~which faces of $\Delta^{n-1}$ can be reached using a time-independent Hamiltonian, as shown in~\cite{TV09}.
%\marginpar{cite usual suspects}

\begin{lemma} \label{lemma:ham-reachable}
The following statements are equivalent. \smallskip
\begin{enumerate}[(i)]
\item \label{it:ham-reach-lazy-enc} There exists a $d$-dimensional lazy subspace which is not an enclosure (recall Lemma~\ref{lemma:enclosure}).
%\footnote{The definition of enclosures is given in Lemma~\ref{lemma:enclosure}.}.
%\item \label{it:ham-reach-subspace} There exists a Hamiltonian $H$ and subspace $S$ of dimension $d+1$ such that $S$ is the unique attractive subspace of $-(\iu\ad_H+L)\in\wkl(n)$.
\item \label{it:ham-reach-fix-pt} There exists a state $\rho$ of rank $d$ and a Hamiltonian $H$ such that $\rho$ is the unique fixed point of $-(\iu\ad_H+L)\in\wkl(n)$, so in particular $\rho$ is attractive.
\end{enumerate}
\end{lemma}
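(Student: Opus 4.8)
The plan is to prove the two implications separately, relying on three ingredients established earlier. First, the characterization of enclosures from Lemma~\ref{lemma:enclosure}: a subspace is an enclosure exactly when $P_S$ commutes with all $V_k$, equivalently when both $S$ and $S^\perp$ are invariant under every $V_k$, i.e.\ the $V_k$ are block diagonal with respect to the splitting $S\oplus S^\perp$. Second, Corollary~\ref{coro:stab-ham}, which turns a lazy subspace into a collecting one by a time-independent stabilizing Hamiltonian, leaving the block $P_SHP_S$ free. Third, Lemma~\ref{lemma:unique-attractive-fp}, which produces the unique attractive fixed point on a collecting subspace. Throughout I would write $P_S$ for the orthogonal projection onto the candidate subspace $S$, $P_S^\perp=\mathds1-P_S$, and decompose operators into their blocks relative to $S\oplus S^\perp$.

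For \ref{it:ham-reach-fix-pt}~$\Rightarrow$~\ref{it:ham-reach-lazy-enc} I would start from the rank-$d$ attractive fixed point $\rho$ and set $S:=\supp\rho$. First I would check that $S$ is a $d$-dimensional lazy subspace: writing out $(\iu\ad_H+L)(\rho)=0$ and pairing it with $P_S^\perp$ makes the Hamiltonian contribution drop out (since $P_S^\perp\rho=\rho P_S^\perp=0$), leaving $\sum_k\|P_S^\perp V_k\sqrt{\rho}\|^2=0$, so every $V_k$ leaves $S=\operatorname{ran}\sqrt{\rho}$ invariant. It then remains to exclude that $S$ is an enclosure. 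Assuming it were, the $V_k$ would be block diagonal, so $\sum_k\Gamma_{V_k}(\rho)$ would again be supported on $S$; reading the fixed-point equation $\iu[H+H_0,\rho]+\sum_k\Gamma_{V_k}(\rho)=0$ on the off-diagonal block $P_S^\perp(\cdot)P_S$ would force $P_S^\perp(H+H_0)P_S\,\rho_{SS}=0$, and since $\rho$ has full rank $d$ on $S$ this forces $H+H_0$ to be block diagonal as well. But then the generator decouples into independent pieces on $S$ and $S^\perp$, so a state started on $S^\perp$ never reaches $\rho$, contradicting attractiveness. Hence $S$ is not an enclosure.

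For the converse \ref{it:ham-reach-lazy-enc}~$\Rightarrow$~\ref{it:ham-reach-fix-pt} I would take a $d$-dimensional lazy subspace $S$ that is not an enclosure and use Corollary~\ref{coro:stab-ham} to pick a stabilizing Hamiltonian $H$ making $S$ collecting, so that states on $S$ stay on $S$ and the restriction of $-(\iu\ad_H+L)$ to $S$ is again of \textsc{gks}--Lindblad form. Applying Lemma~\ref{lemma:unique-attractive-fp} to this restricted generator would produce a fixed point $\rho$ supported on $S$ which is the unique attractor of the restricted dynamics; exploiting the freedom Corollary~\ref{coro:stab-ham} leaves in the block $P_SHP_S$, one arranges $\rho$ to have full rank $d$ on $S$. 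Finally, because $S$ is not an enclosure the $V_k$ are not all block diagonal, so $S^\perp$ is not $V_k$-invariant and there is genuine dissipative transport out of $S^\perp$ into $S$; this is exactly what promotes attractiveness within $S$ to attractiveness on all of $\C^n$, making $\rho$ the unique attractive fixed point of $-(\iu\ad_H+L)\in\wkl(n)$ of rank $d$.

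I expect the main obstacle to be this last point of the converse: upgrading stability within $S$ to \emph{global} attractiveness while simultaneously guaranteeing that the rank is exactly $d$ rather than collapsing to that of a proper sub-enclosure of $S$. Both subtleties are governed by the non-enclosure hypothesis---global attractiveness through the leakage from $S^\perp$, the exact rank through a suitable choice of the unconstrained within-$S$ Hamiltonian---and are precisely what I would isolate and prove in Lemma~\ref{lemma:unique-attractive-fp}. By comparison, the direction \ref{it:ham-reach-fix-pt}~$\Rightarrow$~\ref{it:ham-reach-lazy-enc} is essentially a short block-matrix computation once one knows that an enclosure forces the $V_k$ to be block diagonal.
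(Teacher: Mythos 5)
Your proof is correct and shares the paper's skeleton: (i)~$\Rightarrow$~(ii) is an invocation of Lemma~\ref{lemma:unique-attractive-fp} (your sketch via Corollary~\ref{coro:stab-ham} and the free block $P_SHP_S$ essentially re-derives that lemma's proof, so you should simply apply it to the full generator with $S$ as the lazy non-enclosure subspace, rather than to a ``restricted generator'', which does not quite parse), and (ii)~$\Rightarrow$~(i) sets $S=\supp(\rho)$, shows laziness, and excludes the enclosure case by contradiction. The differences are all in the converse direction, and they cut both ways. In your favour: because ``enclosure'' in (i) refers to the drift $-L$ while the fixed point lives in the controlled generator $-(\iu\ad_H+L)$, one genuinely needs your intermediate step that the off-diagonal block of the fixed-point equation together with invertibility of $\rho_{SS}$ forces $P_S^\perp(H+H_0)P_S=0$; only then does the controlled generator decouple over $S\oplus S^\perp$. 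The paper's own proof (``if $S$ were an enclosure there would be two fixed points, one on $S$ and one on $S^\perp$'') silently skips this point, so your version is more complete here. Against you: your contradiction invokes attractiveness, but attractiveness is exactly the ``so in particular'' clause of (ii), i.e.\ a consequence of uniqueness that itself requires justification (the paper derives it from \cite[Thm.~18]{BNT08b}); as written this is circular unless attractiveness is read as part of the hypothesis. The fix is immediate and aligns your argument with the paper's: once the controlled generator is block diagonal, $S^\perp$ is collecting for it, so Lemma~\ref{lemma:conf-fixed} yields a second fixed point supported in $S^\perp$, contradicting uniqueness directly, with no appeal to attractiveness. (Your laziness computation $\sum_k\|P_S^\perp V_k\sqrt{\rho}\|^2=0$ is fine; it reproduces the argument of Lemma~\ref{lemma:lazy}, where the paper instead cites Lemma~\ref{lemma:conf-fixed}.)
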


\begin{proof}
\ref{it:ham-reach-lazy-enc} $\Rightarrow$ \ref{it:ham-reach-fix-pt}: 
Follows from Lemma~\ref{lemma:unique-attractive-fp}. % by choosing the flag $\{0\}\subset S\subset \mathbb C^n$ where $S$ is a lazy subspace of dimension $d$ which is not an enclosure.
\ref{it:ham-reach-fix-pt} $\Rightarrow$ \ref{it:ham-reach-lazy-enc}:
Let $S=\supp(\rho)$. 
Since $\rho$ is a fixed point, $S$ is collecting by Lemma~\ref{lemma:conf-fixed}, and hence lazy.
If $S$ was an enclosure, there would be at least two fixed points (one supported on $S$, and one on $S^\perp$).
Since $-(\iu\ad_H+L)$ has a unique fixed point $\rho$, by~\cite[Thm.~18]{BNT08b} the generator has only one eigenvalue equal to $0$ corresponding to the fixed point, and all other eigenvalues have strictly negative real part.
This shows that $\rho$ is attractive.
\end{proof}

Finally we can give the proper notion of directly reachable faces and characterize them via unbounded flow ratios:

\begin{proposition}%[Directly reachable faces] 
\label{prop:directly-reachable-faces}
Let $1\leq d\leq n-1$ and let $F$ be a $(d-1)$-dimensional face of $\Delta^{n-1}$.
Then the following are equivalent:\smallskip
\begin{enumerate}[(i)]
\item \label{it:dir-reach-flow} The $d$-dimensional flow ratios $R_d(U)$ are unbounded.
\item \label{it:dir-reach-conv} There exists a solution $\lambda:[0,\infty)\to\Delta^{n-1}$ to~\eqref{eq:relaxed-control-system} with initial state $\lambda(0)\not\in F$ such that $\lim_{t\to\infty}\lambda(t)=:\lambda_F\in\relint(F)$.\smallskip
\end{enumerate}
In this case we say that $F$ is \emph{directly reachable}.
\end{proposition}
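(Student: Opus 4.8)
The plan is to fix, by the permutation symmetry of the reduced system, the representative $F=\conv(e_1,\dots,e_d)$, so that $p_d(\lambda)=\sum_{i=1}^d\lambda_i$ measures proximity to $F$ (with $p_d\equiv 1$ on $F$) and $R_d$ is the flow ratio across the block partition $\{1,\dots,d\}\sqcup\{d+1,\dots,n\}$ of Definition~\ref{def:flows}. The implication \ref{it:dir-reach-conv}~$\Rightarrow$~\ref{it:dir-reach-flow} is then immediate: it is precisely the contrapositive of the final assertion of Lemma~\ref{lemma:bounded-s-flow-ratio}, which says that a bounded $R_d$ forbids any solution of~\eqref{eq:relaxed-control-system} starting off $F$ from converging to $\relint(F)$.

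The content lies in \ref{it:dir-reach-flow}~$\Rightarrow$~\ref{it:dir-reach-conv}. Since $\sup_U R_d(U)=\infty$, I would pick a sequence $U_m$ with $R_d(U_m)\uparrow\infty$ and symmetrize each induced field over the block-preserving subgroup $G=S_d\times S_{n-d}\leq S_n$: setting $-\bar M_m:=\frac1{|G|}\sum_{P\in G}P(-L_{U_m})P^\top$ and using $P L_U P^\top=L_{UP^\top}$ \cite[Lem.~A.2]{MDES23} gives $-\bar M_m\in\conv\mf L$. Because $G$ preserves the block partition, both $f_{\sf in}^d$ and $f_{\sf out}^d$ are invariant under conjugation by $P\in G$, so averaging leaves them unchanged and hence $R_d(\bar M_m)=R_d(U_m)$. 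The payoff of symmetrizing is twofold: $\bar M_m$ commutes with the $G$-action, so the affine set $\Sigma$ of $G$-symmetric probability vectors (those equal to $\tfrac{p}{d}$ on the first block and $\tfrac{1-p}{n-d}$ on the second) is invariant under the flow of $\bar M_m$; and the limit point will automatically sit in $\relint(F)$ rather than on its boundary.

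Restricted to $\Sigma$ the dynamics reduces to the scalar linear ODE
\[
\dot p=\frac{f_{\sf in}^d(\bar M_m)}{n-d}\,(1-p)-\frac{f_{\sf out}^d(\bar M_m)}{d}\,p\,,
\]
whose attractive equilibrium is $p^*_m=\frac{d\,R_d(U_m)}{d\,R_d(U_m)+(n-d)}$, and $p^*_m\uparrow 1$ as $R_d(U_m)\to\infty$. I would then start at the symmetric point $\e/n\notin F$ and build a single solution of~\eqref{eq:relaxed-control-system} by concatenation: run $-\bar M_m$ on a time window long enough that $p$ comes within $1/m$ of $p^*_m$, then switch to $-\bar M_{m+1}$. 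Staying in $\Sigma$ throughout, the coordinate $p$ increases monotonically (each $p^*_m$ exceeds the current value once the sequence is chosen with $p^*_1>d/n$) and satisfies $p\to1$, so the trajectory converges to $\tfrac1d(e_1+\dots+e_d)$, the barycenter of $F$, which lies in $\relint(F)$. This furnishes the required solution; the cases where $f_{\sf out}^d(U_m)=0$, i.e.\ $R_d(U_m)=\infty$, are only easier since then $p^*_m=1$ already.

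The main obstacle is the forward direction, and within it two points deserve care: verifying that symmetrization preserves the flow ratio (so that the averaged, hence admissible, field $\bar M_m\in\conv\mf L$ still has flow ratio tending to infinity), and above all upgrading the mere unboundedness of $\sup_U R_d(U)$---which need not be attained at any single $U$---into an actual trajectory reaching $\relint(F)$ in the limit. The symmetrization handles the interior issue for free, while the time-dependent, increasing-flow-ratio concatenation handles the non-attainment; one must then check that $p\to1$ (not merely $\sup p<1$) and that the intermediate switches neither drive the state off $\Sigma$ nor away from $F$.
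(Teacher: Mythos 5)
Your proposal is correct and follows essentially the same route as the paper's proof: the backward direction via Lemma~\ref{lemma:bounded-s-flow-ratio}, and the forward direction by averaging $-L_{U_m}$ over the face-stabilizing permutation subgroup, restricting to the invariant symmetric segment, and reading off the attractive equilibrium $p^*_m=\tfrac{dR_d(U_m)}{dR_d(U_m)+n-d}\to1$. The only difference is that you spell out the concatenation of the flows $-\bar M_m$ explicitly, a step the paper leaves implicit in its final sentence.
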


\begin{proof} 
\ref{it:dir-reach-conv} $\Rightarrow$ \ref{it:dir-reach-flow}: Immediate from Lemma~\ref{lemma:bounded-s-flow-ratio}.
\ref{it:dir-reach-flow} $\Rightarrow$ \ref{it:dir-reach-conv}: 
Due to the permutation symmetry of~\eqref{eq:relaxed-control-system} we may assume that $F=\conv(e_1,\ldots,e_d)$.
By assumption there is a sequence $(U_i)_{i=1}^\infty$ such that $R_d(U_i)\to\infty$ as $i\to\infty$.
Let $S_n^F$ be the subgroup of permutation matrices which map $F$ to itself and let $A\subseteq\Delta^{n-1}$ be the line segment consisting of all points fixed by $S_n^F$.
We can parametrize $A$ via $\iota:[0,1]\to A$, $a\mapsto \iota(a):=(\tfrac{a}{d},\ldots,\tfrac{a}{d},\tfrac{1-a}{n-d},\ldots\tfrac{1-a}{n-d})$. 
Note that $R_d(U_i)=R_d(U_iP)$ for all $P\in S_n^F$.
If we set $M_i=\frac{1}{|S_n^F|}\sum_{P\in S_n^F}P^\top L_{U_i}P=\frac{1}{|S_n^F|}\sum_{P\in S_n^F}L_{U_iP}$, then all $-M_i\in\operatorname{conv}\mf L$ and they leave $A$ invariant.
Using
%the map
$p_d:\Delta^{n-1}\to\R:\lambda\mapsto\sum_{i=1}^d\lambda_i$ 
%(which is a left inverse of $\iota$) 
we compute the derivative along $A$ as 
$-p_d(M_i\iota(a))=-p_d(L_{U_i}\iota(a))=-\tfrac{a}{d}f^d_{\sf out}(U_i)+\tfrac{1-a}{n-d}f^d_{\sf in}(U_i)$.
Hence the unique (attractive) fixed point on $A$ is located at $a=\tfrac{dR_d(U_i)}{dR_d(U_i)+n-d}$ which can be made arbitrarily close to $1$ for $i$ large enough.
This shows that there exists a solution to~\eqref{eq:relaxed-control-system} converging to $\mu:=\iota(1)\in A\cap F$.
%Indeed, in our parametrization we can $M_i$ as $a\mapsto -p(M_i\lambda)
%We choose $\mu=\iota(1)\in A$ and $\mu\neq\lambda_0\in A$.
%Then  whenever $p(\lambda)<\frac{R_d(U_i)}{1+R_d(U_i)}$.
%{\color{red}fill in some details: why does $M_i$ make $\lambda(0)$ flow to $\mu$?}
\end{proof}

As a consequence of Lemma~\ref{lemma:ham-reachable} together with Corollary~\ref{coro:ham-control-system} we obtain the following:

%The second implication follows from Lemma~\ref{lemma:conv-stab} and Lemma~\ref{lemma:bounded-s-flow-ratio}.
\begin{corollary} \label{coro:directly-reachable-faces}
Let $F$ be a face of $\Delta^{n-1}$ of dimension $d-1$. 
If there exists a lazy subspace of dimension $d$ which is not an enclosure, then $F$ is directly reachable.
%Consider the following statements:
%\begin{enumerate}[(i)]
%\item \label{it:dir-face-lazy-enc} There exists a lazy subspace of dimension $s+1$ which is not an enclosure;
%\item \label{it:dir-face-convergence} $F$ is directly reachable.
%\todo{Is this the right definition?} There exists a solution $\lambda:[0,\infty)$ to~\eqref{eq:relaxed-control-system} such that $\lim_{t\to\infty}\lambda(t)=\mu\in\relint(F)$ and such that $\lambda(0)$ does not lie in $F$.
%\item \label{it:dir-face-lazy} There exists a lazy subspace of dimension $s+1$ and the $s$-dimensional flow ratio is unbounded.
%\end{enumerate}
%Then we have the following implications: 
%\ref{it:dir-face-lazy-enc} $\Rightarrow$ \ref{it:dir-face-convergence} $\Rightarrow$ \ref{it:dir-face-lazy}.
%If~\ref{it:dir-face-convergence} is satisfied we say that $F$ is a \emph{directly reachable face}.
\end{corollary}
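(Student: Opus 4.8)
The claim follows by specializing Lemma~\ref{lemma:ham-reachable} to produce an attractive fixed point and then pushing its basin of attraction down to the simplex via Corollary~\ref{coro:ham-control-system}; it then remains only to verify condition~\ref{it:dir-reach-conv} of Proposition~\ref{prop:directly-reachable-faces}. By the permutation symmetry of~\eqref{eq:relaxed-control-system} we may assume without loss of generality that $F=\conv(e_1,\ldots,e_d)$, so that $\relint(F)$ consists exactly of those $\lambda\in\Delta^{n-1}$ with $\lambda_1,\ldots,\lambda_d>0$ and $\lambda_{d+1}=\cdots=\lambda_n=0$.

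First I would invoke Lemma~\ref{lemma:ham-reachable}: since by hypothesis there is a $d$-dimensional lazy subspace which is not an enclosure, part~\ref{it:ham-reach-fix-pt} yields a state $\rho_*$ of rank $d$ together with a time-independent Hamiltonian $H$ such that $\rho_*$ is the unique, and hence attractive, fixed point of the generator $-(\iu\ad_H+L)$. Attractivity means that every solution of $\dot\rho=-(\iu\ad_H+L)\rho$ converges to $\rho_*$ as $t\to\infty$.

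Next I would pick an initial state of full rank, for instance the maximally mixed state $\rho_0=\mathds1/n$ (full rank is available since $d\leq n-1$), and let $\rho:[0,\infty)\to\mf{pos}_1(n)$ be the solution of $\dot\rho=-(\iu\ad_H+L)\rho$ with $\rho(0)=\rho_0$. By Corollary~\ref{coro:ham-control-system} the path $\lambda(t):=\spec^\down(\rho(t))$ is a solution of~\eqref{eq:simplex-control-system}, and a fortiori of~\eqref{eq:relaxed-control-system}. Since $\rho_0$ has full rank, $\lambda(0)=\spec^\down(\rho_0)$ has all $n$ entries strictly positive, so in particular $\lambda(0)\notin F$. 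By attractivity $\rho(t)\to\rho_*$, and since $\spec^\down$ is continuous we obtain $\lambda(t)\to\spec^\down(\rho_*)=:\lambda_F$. Because $\rho_*$ has rank exactly $d$, the vector $\lambda_F$ has precisely $d$ nonzero entries arranged in non-increasing order, whence $\lambda_F\in\relint(F)$.

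This exhibits a solution of~\eqref{eq:relaxed-control-system} starting outside $F$ and converging to a point of $\relint(F)$, which is precisely condition~\ref{it:dir-reach-conv} of Proposition~\ref{prop:directly-reachable-faces}; hence $F$ is directly reachable. The only point requiring care is the identification of the limit with the \emph{relative interior} of the face: one must confirm that $\rho_*$ having rank $d$ forces $\spec^\down(\rho_*)$ into $\relint(\conv(e_1,\ldots,e_d))$ rather than onto a lower-dimensional subface, which is immediate since the rank equals the number of nonzero eigenvalues. No genuine analytic obstacle arises, as the substantive work—existence of the attractive fixed point and the lift of Hamiltonian flows to the simplex—has already been carried out in the two cited results.
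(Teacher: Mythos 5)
Your proposal is correct and follows exactly the route the paper intends: the paper derives this corollary directly from Lemma~\ref{lemma:ham-reachable} (attractive rank-$d$ fixed point via a time-independent Hamiltonian) combined with Corollary~\ref{coro:ham-control-system} (projection of the Hamiltonian-driven flow to a solution of the reduced system), verifying condition~\ref{it:dir-reach-conv} of Proposition~\ref{prop:directly-reachable-faces}. You have merely spelled out the details (choice of full-rank initial state, continuity of $\spec^\down$, and identification of the limit with $\relint(F)$) that the paper leaves implicit.
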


\begin{remark} \label{rmk:directly-reach}
Currently, we do not know whether the converse to Corollary~\ref{coro:directly-reachable-faces} also holds.
A possible generalization of Lemma~\ref{lemma:bounded-flow-ratio} to invariant subspaces of dimension higher than $d=1$ could be used to prove this, but the proof of said lemma does not seem to generalize in a straightforward manner.
If the converse does not hold, this would imply that time-dependent Hamiltonians allow one to directly reach faces not directly reachable using time-independent Hamiltonians.%\marginpar{??; probably the second ``time-dependent'' should read ``time-independent'' instead?}
\end{remark}

The following corollaries yield special cases where the converse of Corollary~\ref{coro:directly-reachable-faces} does hold.

\begin{corollary}
Let $F$ be a face of $\Delta^{n-1}$ of dimension $d-1$. 
If the commutant of $\{H_0,V_1,\ldots,V_r\}$ contains only multiples of the identity and $F$ is directly reachable, then there exists a lazy subspace of dimension $d$ which is not an enclosure.
\end{corollary}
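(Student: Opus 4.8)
The plan is to combine the flow-ratio characterization of direct reachability (Proposition~\ref{prop:directly-reachable-faces}) with the commutant hypothesis, which I will show rules out all proper enclosures outright. First I would invoke Proposition~\ref{prop:directly-reachable-faces} to translate the direct reachability of $F$ into the statement that the $d$-dimensional flow ratios $R_d(U)$ are unbounded. If $R_d$ attains the value $+\infty$ at some $U$, then $f_{\sf out}^d(U)=0$ and this $U$ already produces the subspace we want; otherwise there is a sequence $(U_i)\subset\SU(n)$ with finite $R_d(U_i)\to\infty$. Since $f_{\sf in}^d$ is continuous on the compact group $\SU(n)$ it is bounded by some constant $C$, so $f_{\sf out}^d(U_i)=f_{\sf in}^d(U_i)/R_d(U_i)\le C/R_d(U_i)\to 0$. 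Passing to a convergent subsequence $U_i\to U_0$ and using continuity of $f_{\sf out}^d$ gives $f_{\sf out}^d(U_0)=0$. By the remark following Definition~\ref{def:flows} (cf.~Lemma~\ref{lemma:lazy}~\ref{it:unitary}), the first $d$ columns of $U_0^*$ then span a lazy subspace $S$; since the columns of a unitary are orthonormal, $\dim S=d$ exactly.

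It remains to show that $S$ is not an enclosure, and here the commutant hypothesis does all the work, as it forbids \emph{any} subspace of dimension strictly between $0$ and $n$ from being an enclosure. Suppose $S$ were an enclosure, so that both $S$ and $S^\perp$ are collecting and in particular invariant under every $V_k$ and under $K=\iu H_0+\tfrac12\sum_{k=1}^r V_k^*V_k$. Invariance of both $S$ and $S^\perp$ under each $V_k$ is equivalent to the orthogonal projection $P_S$ commuting with every $V_k$; taking adjoints, $P_S$ then also commutes with every $V_k^*$, hence with $\tfrac12\sum_{k=1}^r V_k^*V_k$. As $S$ is collecting, $P_S$ commutes with $K$ as well, and therefore with $\iu H_0=K-\tfrac12\sum_{k=1}^r V_k^*V_k$, i.e.~with $H_0$. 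Thus $P_S$ lies in the commutant of $\{H_0,V_1,\dots,V_r\}$, which by hypothesis contains only multiples of the identity. But $P_S$ projects onto a $d$-dimensional subspace with $1\le d\le n-1$, so $P_S\notin\{0,\mathds1\}$, a contradiction. Hence $S$ is the desired $d$-dimensional lazy subspace that is not an enclosure.

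I do not anticipate a serious obstacle. The only mildly technical point is the compactness extraction in the first paragraph, where an unbounded flow ratio must be shown to force the outflow to vanish in the limit so that the limiting unitary pins down a genuine lazy subspace; this is immediate once one notes that $f_{\sf in}^d$ is bounded. The conceptual heart is the observation that triviality of the commutant of $\{H_0,V_k\}$ precludes every proper enclosure, which reduces the statement to the mere existence of a $d$-dimensional lazy subspace. One could instead package the enclosure computation by citing the characterization of enclosures in Lemma~\ref{lemma:enclosure}, but since deriving that $P_S$ commutes with $H_0$ and all $V_k$ is short, I would carry it out directly.
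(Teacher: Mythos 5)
Your proof is correct, but it reaches the $d$-dimensional lazy subspace by a different route than the paper. The paper's proof is a two-line citation chain: direct reachability gives, via Proposition~\ref{prop:directly-reachable-faces}~\ref{it:dir-reach-conv}, a solution converging to a point in $\relint(F)$; Lemma~\ref{lemma:conv-stab} makes that limit point stabilizable; and Proposition~\ref{prop:viable-faces} (\ref{it:face-fix-point} $\Rightarrow$ \ref{it:face-lazy}) then yields the lazy subspace of dimension $d$. You instead work with item~\ref{it:dir-reach-flow} of Proposition~\ref{prop:directly-reachable-faces}: from the unbounded flow ratio you extract, by compactness of $\SU(n)$ and continuity (hence boundedness) of $f_{\sf in}^d$ and $f_{\sf out}^d$, a limiting unitary $U_0$ with $f_{\sf out}^d(U_0)=0$, whose first $d$ columns span the lazy subspace via Lemma~\ref{lemma:lazy}~\ref{it:unitary}. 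Your argument is more self-contained and purely analytic---it avoids the dynamical results (convergence of solutions, stabilizability, viability) entirely and produces the subspace from an explicit limiting unitary---whereas the paper's version is shorter because it reuses machinery already established. For the enclosure part the two proofs coincide in substance: you rederive by hand exactly the equivalence $[P_S,V_k]=[P_S,H]=0$ of Lemma~\ref{lemma:enclosure}~\ref{it:commute}, which the paper simply cites; your explicit derivation (pass to adjoints to get commutation with $\sum_k V_k^*V_k$, then subtract from $K$ to isolate $H_0$) is a correct unpacking of that lemma. Either way the commutant hypothesis kills all proper non-trivial enclosures, and the statement follows.
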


\begin{proof}
By Lemma~\ref{lemma:enclosure}~\ref{it:commute} there are no proper non-trivial enclosures, and since $F$ is directly reachable, by Lemma~\ref{lemma:conv-stab} and Proposition~\ref{prop:viable-faces} there is a lazy subspace of dimension $d$.
\end{proof}

For the second corollary we observe the following duality relations for flow ratios:

\begin{lemma} \label{lemma:flow-duality}
Let $-L\in\wkl(n)$ and
some choice of Lindblad terms $\{V_k\}_{k=1}^r$ of $-L$ be given,
and let $-L'\in\wkl(n)$ be the Kossakowski--Lindblad generator represented by the Lindblad terms $\{V_k^*\}_{k=1}^r$.
Then the following hold (here we use $1/0=+\infty$ and the suprema and infima are defined as in Definition~\ref{def:flows}):
\begin{enumerate}[(i)]
\item \label{it:dual-codim} $\sup_{U\in\SU(n)} R_{n-d}(U) = (\inf_{U\in\SU(n)} R_{d}(U))^{-1}$
\item \label{it:dual-trans} $\sup_{U\in\SU(n)} R'_d(U) = (\inf_{U\in\SU(n)} R_d(U))^{-1}$\smallskip
\end{enumerate}
where $R_d(U)$ is the flow ratio of $-L$ and $R_d'(U)$ is the flow ratio of $-L'$.
\end{lemma}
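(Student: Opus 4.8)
The plan is to reduce both identities to pointwise relations between the matrices $J(U)$ of the relevant generators, after which the passage to suprema and infima is immediate. Throughout I use that $J$ is invariant under a global phase, $J(e^{\iu\theta}U)=J(U)$, which is clear from~\eqref{eq:def_JU}, together with the conventions $1/0=+\infty$, $1/\infty=0$ from Definition~\ref{def:flows} and the rule that $0/0$ is undefined and is ignored in suprema and infima.

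I would prove \ref{it:dual-trans} first, since it is pointwise. Writing $J'(U)$ for the matrix~\eqref{eq:def_JU} built from $\{V_k^*\}$, the identity $(U^*V_k^*U)^*=U^*V_kU$ gives $\langle i|U^*V_k^*U|j\rangle=\overline{\langle j|U^*V_kU|i\rangle}$, hence $J'(U)_{ij}=J(U)_{ji}$, i.e.\ $J'(U)=J(U)^\top$. Transposition exchanges the two index blocks $\{1,\ldots,d\}$ and $\{d+1,\ldots,n\}$ in the double sums defining the flows, so the $d$-dimensional inflow of $-L'$ equals the $d$-dimensional outflow of $-L$ and vice versa. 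Thus $R'_d(U)=1/R_d(U)$ for every $U\in\SU(n)$ (with the convention-governed cases matching), and taking suprema gives $\sup_U R'_d(U)=\sup_U 1/R_d(U)=(\inf_U R_d(U))^{-1}$.

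For \ref{it:dual-codim} the mechanism is permutation symmetry rather than transposition. A direct computation from~\eqref{eq:def_JU} shows that for a permutation matrix $P$ with associated permutation $\pi$ one has $J(UP)_{ij}=J(U)_{\pi(i),\pi(j)}$ (the matrix-level counterpart of $P^\top L_U P=L_{UP}$, cf.~\cite[Lem.~A.2]{MDES23}). I then take $\pi$ to be any permutation sending $\{1,\ldots,n-d\}$ bijectively onto $\{d+1,\ldots,n\}$ and $\{n-d+1,\ldots,n\}$ bijectively onto $\{1,\ldots,d\}$, i.e.\ one that swaps the two blocks of the $(n-d)$-partition onto those of the $d$-partition with the roles of the two blocks interchanged. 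Relabelling the summation indices through this $\pi$ turns the $(n-d)$-dimensional inflow of $U$ into the $d$-dimensional outflow of $U$ and conversely, yielding $R_{n-d}(UP)=1/R_d(U)$ for each fixed $U$.

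It remains to upgrade this to the extremal identity over $\SU(n)$. Since $\det P=\pm1$, multiplying $P$ by a phase $e^{\iu\pi/n}$ when $\det P=-1$ produces $\tilde P\in\SU(n)$, and phase invariance of $J$ leaves every flow ratio unchanged, so $R_{n-d}(U\tilde P)=1/R_d(U)$. As $\tilde P\in\SU(n)$, right multiplication $U\mapsto U\tilde P$ is a bijection of $\SU(n)$, whence $\{R_{n-d}(U):U\in\SU(n)\}=\{1/R_d(U):U\in\SU(n)\}$, and undefined values correspond to undefined values because $R_{n-d}(U\tilde P)$ is undefined exactly when $R_d(U)$ is. Taking suprema then gives $\sup_U R_{n-d}(U)=(\inf_U R_d(U))^{-1}$. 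The computations are routine; the only points needing care are the index-block bookkeeping under the chosen $\pi$ in \ref{it:dual-codim} and the consistent treatment of the $1/0$, $1/\infty$, and $0/0$ conventions when passing between the pointwise and the extremal statements, in particular verifying that undefined values are matched to undefined values so they may be ignored on both sides.
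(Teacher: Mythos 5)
Your proof is correct and follows essentially the same route as the paper's: part (ii) via the transpose identity $J'(U)=J(U)^\top$ obtained by replacing $V_k$ with $V_k^*$, and part (i) via a permutation that swaps the two index blocks (the paper uses the order-reversing permutation $\pi=\sum_{i=1}^n e_{n+1-i}e_i^\top$). Your version merely spells out bookkeeping the paper leaves implicit, namely the determinant/phase adjustment needed to stay in $\SU(n)$ and the matching of undefined values on both sides.
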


\begin{proof}
\ref{it:dual-codim}: Follows from the observation that $f_{\sf in}^d(U)=f_{\sf out}^{n-d}(\pi U\pi^\top)$ and vice-versa where $\pi=\sum_{i=1}^n e_{n+1-i}e_{i}^\top$.
%denotes the permutation matrix which reverses the order of $n$-elements, then .
%Then the result follows from the definitions.
\ref{it:dual-trans}: 
Replacing all $V_k$ by $V_k^*$ transforms the $d$-dimensional inflow into the
$d$-dimensional outflow.
\end{proof}

\begin{corollary}
If $F$ is a facet\footnote{A facet of $\Delta^{n-1}$ is a face of dimension $n-2$.} of $\Delta^{n-1}$ and if $F$ is directly reachable, then exists a lazy subspace of dimension $n-1$ which is not an enclosure.
\end{corollary}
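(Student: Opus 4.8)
The plan is to recognize that the facet case ($d=n-1$) is precisely the dual of the coolability/vertex case ($d=1$) already handled in Corollary~\ref{coro:coolable-implies-eigenvector}, so that the flow-ratio bound of Lemma~\ref{lemma:bounded-flow-ratio} can be reused once we pass to the dual generator $-L'$ represented by the adjoint Lindblad terms $\{V_k^*\}_{k=1}^r$.

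First I would observe that since $F$ is a facet we have $d=n-1$, so by Proposition~\ref{prop:directly-reachable-faces} the direct reachability of $F$ is equivalent to $\sup_{U\in\SU(n)}R_{n-1}(U)=\infty$. Next I would apply Lemma~\ref{lemma:flow-duality} with $d=1$: part~\ref{it:dual-codim} gives $\sup_U R_{n-1}(U)=(\inf_U R_1(U))^{-1}$, and part~\ref{it:dual-trans} gives $\sup_U R'_1(U)=(\inf_U R_1(U))^{-1}$, where $R'$ denotes the flow ratio of $-L'$. Combining these two identities (using the conventions $1/0=+\infty$ and that suprema and infima ignore undefined values, cf.~Definition~\ref{def:flows}) yields $\sup_U R'_1(U)=\sup_U R_{n-1}(U)=\infty$; that is, the one-dimensional flow ratio of the dual generator $-L'$ is unbounded.

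Then I would invoke the contrapositive of Lemma~\ref{lemma:bounded-flow-ratio}, applied now to the matrix family $\{V_k^*\}$: since the one-dimensional flow ratio of this family is unbounded, there must exist a common unit eigenvector $\psi$ of all $V_k^*$ which fails to be a common eigenvector of all $(V_k^*)^*=V_k$. Setting $S:=\psi^\perp$, a subspace of dimension $n-1$, a one-line computation shows $S$ is lazy: writing $V_k^*\psi=\mu_k\psi$, for any $v\in S$ we get $\langle\psi|V_k v\rangle=\langle V_k^*\psi|v\rangle=\overline{\mu_k}\langle\psi|v\rangle=0$, so $V_k v\in S$; hence $S$ is invariant under every $V_k$ and therefore a lazy subspace of dimension $n-1$.

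Finally I would argue that $S$ is not an enclosure. If it were, then by definition its orthocomplement $S^\perp=\linspan(\psi)$ would itself be collecting, hence in particular lazy (since $\cV\subseteq\cV^+$), i.e.\ invariant under all $V_k$; but invariance of the line $\linspan(\psi)$ under every $V_k$ means exactly that $\psi$ is a common eigenvector of all $V_k$, contradicting the choice of $\psi$. This produces the desired lazy subspace of dimension $n-1$ which is not an enclosure. I expect the only genuinely delicate step to be the bookkeeping in the duality argument: tracking the $1/0=+\infty$ and undefined-value conventions through Lemma~\ref{lemma:flow-duality} so that ``unbounded $(n-1)$-dimensional flow ratio of $-L$'' really does translate into ``unbounded $1$-dimensional flow ratio of $-L'$''. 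Everything after that point is the verbatim dual of the vertex argument underlying Corollary~\ref{coro:coolable-implies-eigenvector}.
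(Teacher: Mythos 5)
Your proof is correct and takes essentially the same route as the paper's: Proposition~\ref{prop:directly-reachable-faces} to convert direct reachability into an unbounded $(n-1)$-dimensional flow ratio, Lemma~\ref{lemma:flow-duality} (both parts, with $d=1$) to transfer this to the $1$-dimensional flow ratio of the dual generator $-L'$ with Lindblad terms $\{V_k^*\}_{k=1}^r$, and Lemma~\ref{lemma:bounded-flow-ratio} applied to that family to extract the eigenvector. The only difference is presentational: the paper argues by contradiction (assuming every $(n-1)$-dimensional lazy subspace is an enclosure), whereas you run the contrapositive directly and make explicit the passage from the common eigenvector $\psi$ of the $V_k^*$ to the lazy, non-enclosure subspace $\psi^\perp$ --- a step the paper leaves implicit.
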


\begin{proof}
Proposition~\ref{prop:directly-reachable-faces} shows that the $(n-1)$-dimensional flow ratio is unbounded.
Towards a contradiction assume that all $n-1$ dimensional lazy subspaces are enclosures.
Consider the system $-L'$ with Lindblad terms $\{V_k^*\}_{k=1}^r$.
Then all common eigenvectors of the $V_k^*$ are also common eigenvectors of all $V_k$.
By Lemma~\ref{lemma:bounded-flow-ratio} it holds that $\sup_{U\in\SU(n)} R_1'(U)<C$ for some $C<\infty$.
Then by Lemma~\ref{lemma:flow-duality}~\ref{it:dual-codim} and~\ref{it:dual-trans} (with $d=1$) we have $\sup_{U\in\SU(n)} R_{n-1}(U)=\sup_{U\in\SU(n)} R_1'(U)$,
%\marginpar{F: fill out ... so it's clear to the reader how this follows from Lemma 4.16}
i.e.~the $(n-1)$-dimensional flow ratio is bounded, yielding the desired contradiction.
\end{proof} %\todo{construct systems that are controllable but not directly reachable on the sides of the simplex.}

\begin{corollary}
If $n=3$, then the converse of Corollary~\ref{coro:directly-reachable-faces} holds for all faces.
\end{corollary}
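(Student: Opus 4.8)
The plan is to split the argument according to the dimension of the face, exploiting the fact that for $n=3$ the only proper faces of $\Delta^2$ are the vertices (dimension $0$, i.e.~$d=1$) and the edges (dimension $1$, i.e.~$d=2=n-1$); there is no intermediate dimension $1<d<n-1$, which is precisely the regime left open in Remark~\ref{rmk:directly-reach} for general $n$. The edge case $d=n-1$ is a facet, so it is already settled by the preceding corollary on facets: a directly reachable facet yields an $(n-1)$-dimensional lazy subspace which is not an enclosure. Thus the only genuinely new work is the vertex case $d=1$.

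For the vertex case, I would suppose that a vertex (say $e_1$, after applying a permutation) is directly reachable. By Proposition~\ref{prop:directly-reachable-faces} with $d=1$ this is equivalent to the $1$-dimensional flow ratios $R_1(U)$ being unbounded. I then invoke the contrapositive of Lemma~\ref{lemma:bounded-flow-ratio}: since $R_1$ is unbounded, it cannot hold that every common eigenvector of the $V_k$ is also a common eigenvector of the $V_k^*$; hence there exists a common eigenvector $v$ of all $V_k$ which is \emph{not} a common left eigenvector. (Note this also guarantees that such a $v$ exists in the first place.) The line $\linspan(v)$ is then a $1$-dimensional invariant subspace of all $V_k$, i.e.~a $1$-dimensional lazy subspace.

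It remains to check that $\linspan(v)$ is not an enclosure, which is exactly where the failure of the left-eigenvector condition pays off. If $S:=\linspan(v)$ were an enclosure, then $S^\perp$ would be collecting, hence in particular invariant under every $V_k$; invariance of $v^\perp$ under $V_k$ is precisely the statement that $V_k^* v\in\linspan(v)$ for all $k$, i.e.~that $v$ is a common eigenvector of the $V_k^*$, a common left eigenvector. This contradicts the choice of $v$, so $S$ is not an enclosure. This is the same mechanism used in the proof of the facet corollary, and can be read off from Lemma~\ref{lemma:enclosure}. Combining the two cases establishes the converse of Corollary~\ref{coro:directly-reachable-faces} for every face when $n=3$.

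The main obstacle is conceptual rather than computational: one must recognize that $n=3$ removes the problematic intermediate dimensions and that both surviving dimensions ($d=1$ and $d=n-1$) are controlled by the flow-ratio/eigenvector dichotomy of Lemma~\ref{lemma:bounded-flow-ratio} together with the enclosure characterization. For larger $n$ this breaks down precisely because Lemma~\ref{lemma:bounded-flow-ratio} bounds only the $d=1$ flow ratio (and, via Lemma~\ref{lemma:flow-duality}, its facet dual), leaving the faces of intermediate dimension untreated, in line with Remark~\ref{rmk:directly-reach}.
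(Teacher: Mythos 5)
Your proof is correct and takes essentially the route the paper intends (the paper states this corollary without an explicit proof, as an immediate consequence of what precedes it): for $n=3$ the only proper faces are vertices ($d=1$) and facets ($d=2=n-1$), the facet case is exactly the preceding corollary, and the vertex case follows from the contrapositive of Lemma~\ref{lemma:bounded-flow-ratio} via Proposition~\ref{prop:directly-reachable-faces} (equivalently, from Theorem~\ref{thm:asymptotic-coolability}), with the non-enclosure property read off from Lemma~\ref{lemma:enclosure} just as in the facet argument. Your observation that the unbounded flow ratio itself guarantees the existence of a common eigenvector (so the statement is not vacuous) is a worthwhile detail to make explicit.
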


Finally we can prove the promised dichotomy alluded to in the beginning. We say that a face $F$ of $\Delta^{n-1}$ is approximately reachable if there exists $\lambda\notin F$ such that $\overline{\reach(\lambda)}\cap \relint(F)\neq\emptyset$. A face that is approximately reachable but not directly reachable is called \emph{indirectly reachable face}.

\begin{proposition} %[Approximately reachable faces] 
\label{prop:dichotomy}
Let $F$ be a face of $\Delta^{n-1}$. 
Consider the following statements:\smallskip
\begin{enumerate}[(i)] 
\item \label{it:approx-sufficient} $F$ is directly reachable, otherwise $F$ is viable but not purely Hamiltonian\footnote{\label{footnote:purely-ham-restr}
For a viable face $F$ there exists at least some $-L_U\in\mf L$ whose restriction to $F$ is tangent to $F$. If every such vector field vanishes on $F$, we say that $F$ is purely Hamiltonian.} and some lower dimensional face is approximately reachable.
\item \label{it:approx-reach} $F$ is approximately reachable.
\item \label{it:approx-dir-indir} $F$ is directly reachable, otherwise $F$ is viable and some lower dimensional face is approximately reachable.
\end{enumerate} \smallskip
Then we have the implications: \ref{it:approx-sufficient} $\Rightarrow$ \ref{it:approx-reach} $\Rightarrow$ \ref{it:approx-dir-indir}.
\end{proposition}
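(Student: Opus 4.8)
The plan is to prove the two implications separately, in each case disposing of the ``directly reachable'' alternative at once and concentrating on the ``indirect'' alternative. For \ref{it:approx-sufficient} $\Rightarrow$ \ref{it:approx-reach}, if $F$ is directly reachable then by Proposition~\ref{prop:directly-reachable-faces} there is a solution starting at some $\lambda(0)\notin F$ and converging to $\lambda_F\in\relint(F)$, so $\lambda_F\in\overline{\reach_{\ref{eq:relaxed-control-system}}(\lambda(0))}\cap\relint(F)$ and $F$ is approximately reachable. Otherwise $F$ is viable, not purely Hamiltonian, and some lower-dimensional face $F'$ in its boundary is approximately reachable; I would concatenate two legs, first steering from outside $F$ to near $\relint(F')\subseteq\partial F$, then moving within $F$ into $\relint(F)$. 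For the second leg, after a permutation take $F=\conv(e_1,\dots,e_d)$; viability together with Proposition~\ref{prop:viable-faces} yields a $d$-dimensional lazy subspace, and choosing $U$ so that all $U^*V_kU$ are block upper triangular, the top-left $d\times d$ blocks define a Kossakowski--Lindblad generator on $\C^d$ whose reduced system lives on $F\cong\Delta^{d-1}$; here ``not purely Hamiltonian'' for $F$ says precisely that this sub-generator is not purely Hamiltonian, so Proposition~\ref{prop:closure-of-reach-contractible} gives that the within-$F$ reachable closure of any point contains the barycenter of $F$, which lies in $\relint(F)$. Concatenating, and using continuous dependence of solutions on the initial condition on compact time intervals to push the $F'$-witness slightly off $F$ if needed (possible since $F$ is a proper face), produces a point of $\relint(F)$ in $\overline{\reach_{\ref{eq:relaxed-control-system}}(\lambda_0)}$ with $\lambda_0\notin F$.

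For \ref{it:approx-reach} $\Rightarrow$ \ref{it:approx-dir-indir}, again if $F$ is directly reachable the first alternative holds, so assume $F$ is approximately reachable but not directly reachable. If $F$ were a vertex, then Theorem~\ref{thm:asymptotic-coolability} (asymptotic and approximate reachability coincide for vertices) would force direct reachability, a contradiction; hence $F$ is not a vertex and Lemma~\ref{lemma:reach-viable} gives that $F$ is viable. It remains to produce an approximately reachable lower-dimensional face. Fix $\lambda_0\notin F$ and $\mu\in\relint(F)$ with $\mu\in\overline{\reach_{\ref{eq:relaxed-control-system}}(\lambda_0)}$, and take solutions $\lambda^{(k)}$ from $\lambda_0$ with $\lambda^{(k)}(T_k)\to\mu$. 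Since $F$ is not directly reachable, Proposition~\ref{prop:directly-reachable-faces} gives that the flow ratios $R_d(U)$ are bounded by some $R<\infty$, so Lemma~\ref{lemma:bounded-s-flow-ratio} applies with $p_d(\lambda):=\sum_{i=1}^d\lambda_i$.

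The core is to exploit this monotonicity obstruction. For each $k$ choose $\eta_k\to0$ with $p_d(\lambda^{(k)}(T_k))>1-\eta_k$ and set $\varepsilon_k:=R\eta_k/(1-\eta_k)\to0$, so that $1-\eta_k=R/(R+\varepsilon_k)$ is exactly the threshold of Lemma~\ref{lemma:bounded-s-flow-ratio}. Let $u_k\leq T_k$ be the last time with $p_d(\lambda^{(k)}(u_k))=1-\eta_k$; on $[u_k,T_k]$ one has $p_d\geq1-\eta_k$ while $p_d$ increases from $1-\eta_k$ to near $1$. Were the interior condition $\lambda_i\geq\varepsilon_k\,p_d(\lambda)$ (for all $i\leq d$) to hold throughout $[u_k,T_k]$, the lemma would force $p_d$ non-increasing there, a contradiction. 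Hence at some $\sigma_k\in[u_k,T_k]$ it fails: $\lambda^{(k)}_{i_k}(\sigma_k)<\varepsilon_k$ for some $i_k\leq d$, while $p_d(\lambda^{(k)}(\sigma_k))\geq1-\eta_k$. Passing to a subsequence with $i_k\equiv i^*$ and $\lambda^{(k)}(\sigma_k)\to q^*$, the limit satisfies $p_d(q^*)=1$ and $q^*_{i^*}=0$, so $q^*\in\partial F$, hence $q^*\in\relint(F'')$ for some proper face $F''$ of $F$. Since each $\lambda^{(k)}(\sigma_k)\in\reach_{\ref{eq:relaxed-control-system}}(\lambda_0)$, we obtain $q^*\in\overline{\reach_{\ref{eq:relaxed-control-system}}(\lambda_0)}\cap\relint(F'')$ with $\lambda_0\notin F$, whence $\lambda_0\notin F''$; thus $F''$ is an approximately reachable lower-dimensional face, as required.

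The main obstacle is exactly this last step. The easy implications and the viability claim reduce to cited results (Proposition~\ref{prop:directly-reachable-faces}, Theorem~\ref{thm:asymptotic-coolability}, Lemma~\ref{lemma:reach-viable}, Proposition~\ref{prop:closure-of-reach-contractible}), but converting the \emph{quantitative} flow-ratio bound of Lemma~\ref{lemma:bounded-s-flow-ratio} into the \emph{qualitative} conclusion that indirect reachability forces a lower face to be approximately reachable is the genuinely new content: one must show that any approximating trajectory is obstructed from raising $p_d$ while in the interior of $F$, hence must skirt $\partial F$ carrying almost all its mass on $F$, and then verify that the resulting limit point lands in the relative interior of a \emph{proper} sub-face rather than degenerating. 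Some additional care is needed in the concatenation step of the first implication to guarantee that the witness can be taken strictly outside $F$.
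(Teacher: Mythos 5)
Your proof is correct and follows essentially the same route as the paper's: for \ref{it:approx-sufficient} $\Rightarrow$ \ref{it:approx-reach} it combines reachability of a boundary face with the viable/not-purely-Hamiltonian center-reaching argument of Proposition~\ref{prop:closure-of-reach-contractible}, and for \ref{it:approx-reach} $\Rightarrow$ \ref{it:approx-dir-indir} it uses the bounded flow ratio from Proposition~\ref{prop:directly-reachable-faces} and Lemma~\ref{lemma:bounded-s-flow-ratio} plus a compactness/limit-point extraction and Lemma~\ref{lemma:reach-viable}, exactly as the paper does. The paper states these steps far more tersely; your quantitative threshold construction of $\sigma_k$, the explicit exclusion of the vertex case via Theorem~\ref{thm:asymptotic-coolability}, and the perturbation of the witness off $F$ are careful elaborations of details the paper leaves implicit.
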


\begin{proof}
\ref{it:approx-sufficient} $\Rightarrow$ \ref{it:approx-reach}: 
If $F$ is directly reachable then~\ref{it:approx-reach} is clearly true. Otherwise,
some point on the boundary is asymptotically reachable, and since the face is viable and not purely Hamiltonian we can reach, for instance, the center of $F$ as in Proposition~\ref{prop:closure-of-reach-contractible}.
\ref{it:approx-reach} $\Rightarrow$ \ref{it:approx-dir-indir}:
If $F$ is not directly reachable, then the flow ratio is bounded by Proposition~\ref{prop:directly-reachable-faces}.
So in order to approach an interior point of $F$, one must increase the value of $p_s(\lambda)$ (as defined in Lemma~\ref{lemma:bounded-s-flow-ratio}), which implies by compactness that some point on the boundary of $F$ is approximately reachable, and hence some face of lower dimension is approximately reachable. 
If $F$ is not viable, then no interior point of $F$ is approximately reachable from any other point, cf.~Lemma~\ref{lemma:reach-viable}.
%, as in the proof of~\cite[Lem.~4.4]{MDES23}. \marginpar{check, use Lemma~\ref{lemma:reach-viable} instead}
\end{proof}
The reason we do not obtain equivalence is that even if every lazy subspace of appropriate dimension has purely Hamiltonian dynamics, it might still be possible to move along the face with arbitrarily small outflow.
This can be made rigorous using flow ratio arguments as above, but we will not do so here.

\subsection{Reverse Coolable and Controllable Systems}

So far we have studied under which conditions faces of the simplex can be reached, with special emphasis on reachability of vertices.
In this section we try to understand under which conditions all states in $\Delta^{n-1}$ can be reached. More precisely, if for all $\lambda,\mu\in\Delta^{n-1}$ it holds that $\mu\in\overline{\reach_{\ref{eq:simplex-control-system}}(\lambda)}$, then we say that the system~\eqref{eq:simplex-control-system} is \emph{approximately controllable}. (Recall that~\eqref{eq:simplex-control-system} and ~\eqref{eq:relaxed-control-system} have the same reachable sets after taking the closure.) Note that since one can never exactly reach the boundary of the simplex from the interior in finite time, the system is never controllable in the usual sense~\cite[Thm.~3.10]{DiHeGAMM08}.

It turns out to be useful to consider time-reversed dynamics on the simplex.
Note that under such dynamics $\Delta^{n-1}$ ceases to be forward invariant.
We say that the reduced control system is \emph{reverse coolable} if $\overline{\reach_{\ref{eq:simplex-control-system}}(e_1)} \supseteq \Delta_\down^{n-1}$.%\marginpar{F: which reach is this? R? C? specify!}
The reason for introducing this artificial concept is that asymptotic coolability together with reverse coolability characterizes controllable systems.
The results of this section generalize results on approximate controllability of a quantum system coupled to a heat bath of temperature zero, cf.~\cite[Thm.~1 \& 2]{CDC19} and~\cite{BSH16}.

\begin{proposition} \label{prop:approx-ctrl}
The following are equivalent: \smallskip
\begin{enumerate}[(i)]
\item \label{it:approx-contr} The system~\eqref{eq:simplex-control-system} is approximately controllable.
\item \label{it:bicoolable} The system~\eqref{eq:simplex-control-system} is asymptotically coolable and reverse coolable.
\item \label{it:approx-contr-full} The system~\eqref{eq:bilinear-control-system} is approximately controllable.
\end{enumerate}
\end{proposition}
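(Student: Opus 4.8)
The plan is to establish the cycle of implications $\ref{it:approx-contr}\Rightarrow\ref{it:bicoolable}\Rightarrow\ref{it:approx-contr}$ for the reduced system, and then close the loop to the full system via $\ref{it:approx-contr}\Leftrightarrow\ref{it:approx-contr-full}$. The easiest direction is $\ref{it:approx-contr}\Rightarrow\ref{it:bicoolable}$: approximate controllability means every state is approximately reachable from every other, so in particular $e_1\in\overline{\reach_{\ref{eq:simplex-control-system}}(\lambda)}$ for all $\lambda$, which is (a fortiori) the coolability condition \ref{it:cool-reach-some} of Theorem~\ref{thm:asymptotic-coolability}, and $\overline{\reach_{\ref{eq:simplex-control-system}}(e_1)}=\overline{\Delta^{n-1}}\supseteq\Delta^{n-1}_\down$ is exactly reverse coolability. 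So this implication is essentially unpacking definitions.

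The substantive direction is $\ref{it:bicoolable}\Rightarrow\ref{it:approx-contr}$. Here I would argue as follows. Fix arbitrary $\lambda,\mu\in\Delta^{n-1}$; I want $\mu\in\overline{\reach_{\ref{eq:simplex-control-system}}(\lambda)}$. By asymptotic coolability (Theorem~\ref{thm:asymptotic-coolability}\ref{it:cool-conv}), starting from $\lambda$ there is a solution converging to the vertex $e_1$, so $e_1\in\overline{\reach_{\ref{eq:simplex-control-system}}(\lambda)}$. By reverse coolability, $\mu^\down\in\Delta^{n-1}_\down\subseteq\overline{\reach_{\ref{eq:simplex-control-system}}(e_1)}$, and since the reduced system carries the full permutation symmetry (the set $\mf L$ is invariant under $L_U\mapsto P^\top L_U P=L_{UP}$, cf.~\cite[Lem.~A.2]{MDES23}), the reachable set of $e_1$ is permutation-invariant, so in fact $\mu\in\overline{\reach_{\ref{eq:simplex-control-system}}(e_1)}$. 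The remaining point is transitivity of approximate reachability: from $\lambda$ one approaches $e_1$ arbitrarily closely, and from points near $e_1$ one can, by continuous dependence of solutions on initial conditions over compact time intervals (Filippov/Gronwall-type estimates, cf.~\cite[Ch.~2.4, Thm.~2]{Aubin84}), approximately track the solution that steers $e_1$ to $\mu$. Concatenating these two approximate trajectories yields $\mu\in\overline{\reach_{\ref{eq:simplex-control-system}}(\lambda)}$, giving approximate controllability.

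Finally, $\ref{it:approx-contr}\Leftrightarrow\ref{it:approx-contr-full}$ follows from the relationship between the reachable sets of the two systems in Proposition~\ref{prop:reach-equiv}. Given $\rho_0,\rho_f\in\mf{pos}_1(n)$ with spectra $\lambda_0^\down,\mu^\down$, the sandwich $\reach_{\ref{eq:bilinear-control-system}}(\rho_0,T)\subseteq\{U\lambda U^*:\lambda\in\reach_{\ref{eq:simplex-control-system}}(\lambda_0,T),\,U\in\SU(n)\}\subseteq\overline{\reach_{\ref{eq:bilinear-control-system}}(\rho_0,T)}$ shows that the closure of the bilinear reachable set of $\rho_0$ equals the $\SU(n)$-saturation of the closure of the reduced reachable set of $\lambda_0$. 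Since approximate controllability of~\eqref{eq:simplex-control-system} makes every spectrum approximately reachable from every other, the saturation gives every state approximately reachable in~\eqref{eq:bilinear-control-system}, and conversely approximate controllability of~\eqref{eq:bilinear-control-system} forces approximate controllability of the spectra.

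The main obstacle I expect is the transitivity/concatenation step in $\ref{it:bicoolable}\Rightarrow\ref{it:approx-contr}$: one must justify that an \emph{asymptotic} approach to $e_1$ can be truncated at a finite time with small error and then spliced to an approximately reachable path out of $e_1$, with the total error controlled. This requires a continuous-dependence estimate for solutions of the differential inclusion~\eqref{eq:relaxed-control-system} on the (non-smooth) simplex near the vertex $e_1$, together with care that the permutation symmetry used to place $\mu$ (rather than only $\mu^\down$) in $\overline{\reach(e_1)}$ is legitimately applied before, not after, taking closures.
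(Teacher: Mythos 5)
Your overall architecture---establish \ref{it:approx-contr} $\Leftrightarrow$ \ref{it:bicoolable} through a transitivity (``concatenation'') argument for the relation $\lambda\rightsquigarrow\mu$ defined by $\mu\in\overline{\reach_{\ref{eq:simplex-control-system}}(\lambda)}$, then connect to \ref{it:approx-contr-full}---is the same as the paper's, and your explicit Filippov/Gronwall justification of the concatenation step is precisely what the paper leaves implicit when it says ``one can show that this is a preorder.'' However, your proof of \ref{it:bicoolable} $\Rightarrow$ \ref{it:approx-contr} contains a genuine error: permutation symmetry does \emph{not} make $\overline{\reach_{\ref{eq:simplex-control-system}}(e_1)}$ permutation-invariant. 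The symmetry $P^\top L_U P = L_{UP}$ maps a solution through $\lambda_0$ to a solution through $P\lambda_0$, hence $P\,\overline{\reach_{\ref{eq:simplex-control-system}}(e_1)}=\overline{\reach_{\ref{eq:simplex-control-system}}(Pe_1)}$: it permutes the reachable sets of the $n$ vertices among each other rather than fixing each of them. Only permutations stabilizing $e_1$ preserve $\overline{\reach_{\ref{eq:simplex-control-system}}(e_1)}$, and with those alone your argument only reaches targets $\mu$ whose \emph{first} coordinate is maximal. The repair is what the paper does: by Theorem~\ref{thm:asymptotic-coolability} together with permutation symmetry one has $\lambda\rightsquigarrow e_i$ for \emph{every} vertex $e_i$; writing $\mu=P\mu^\down$ and conjugating the reverse-cooling path from $e_1$ to $\mu^\down$ by $P$ gives $e_{i_0}\rightsquigarrow\mu$ for the particular vertex $e_{i_0}=Pe_1$; transitivity then yields $\lambda\rightsquigarrow e_{i_0}\rightsquigarrow\mu$. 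In short: you must cool to the vertex adapted to $\mu$, not always to $e_1$.

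The same base-point issue resurfaces in your sketch of \ref{it:approx-contr-full} $\Rightarrow$ \ref{it:approx-contr}. Pushing solutions of~\eqref{eq:bilinear-control-system} down via $\spec^\down$ (Theorem~\ref{thm:equivalence}, or the sandwich of Proposition~\ref{prop:reach-equiv}) only yields $\mu^\down\in\overline{\reach_{\ref{eq:simplex-control-system}}(\lambda_0^\down)}$, i.e.~approximate controllability between \emph{ordered} spectra; ``controllability of the spectra'' in this sense is strictly weaker than approximate controllability of~\eqref{eq:simplex-control-system}, which demands that $\mu$ itself, in its given coordinate order, be reachable from $\lambda_0$ itself. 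One way to close this gap is to route through the barycenter: $\e/n$ is fixed by every permutation, so $\overline{\reach_{\ref{eq:simplex-control-system}}(\e/n)}$ \emph{is} permutation-invariant, contains $\Delta^{n-1}_\down$ by the above, hence all of $\Delta^{n-1}$, and transitivity finishes; alternatively one confines solutions to the ordered Weyl chamber via~\cite[Prop.~A.4]{MDES23}. The paper avoids the issue entirely by citing~\cite[Prop.~4.16]{MDES23} for \ref{it:approx-contr} $\Leftrightarrow$ \ref{it:approx-contr-full}. Your directions \ref{it:approx-contr} $\Rightarrow$ \ref{it:bicoolable} and \ref{it:approx-contr} $\Rightarrow$ \ref{it:approx-contr-full} are fine as written.
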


\begin{proof}
Consider the reachability relation defined by $\lambda\rightsquigarrow\mu$ if $\mu\in\overline{\reach_{\ref{eq:simplex-control-system}}(\lambda)}$.
One can show that this is a preorder, in particular, that it is transitive. 
\ref{it:approx-contr} $\Rightarrow$ \ref{it:bicoolable}:
Clearly $\lambda\rightsquigarrow e_1$ for all $\lambda\in\Delta^{n-1}$ and hence by Theorem~\ref{thm:asymptotic-coolability} the system is asymptotically coolable. Reverse coolability is clear.
\ref{it:bicoolable} $\Rightarrow$ \ref{it:approx-contr}:
Again by Theorem~\ref{thm:asymptotic-coolability} $\lambda\rightsquigarrow e_i$ for all $\lambda\in\Delta^{n-1}$ and $i=1,\ldots,n$.
By reverse coolability $e_1\rightsquigarrow\lambda$ for all $\lambda\in\Delta_\down^{n-1}$. By permutation symmetry of the system every point in $\Delta^{n-1}$ is reachable from some vertex and by transitivity the system is approximately controllable.
\ref{it:approx-contr} $\Leftrightarrow$ \ref{it:approx-contr-full}: Direct consequence of~\cite[Prop.~4.16]{MDES23}.
\end{proof}

This result motivates us to characterize reverse coolability, generalizing a toy model result~\cite[Lem.~3]{CDC19}:

\begin{proposition} %[Reverse coolable systems]
\label{prop:rev-coolable}
%The following are equivalent.\smallskip
%Given any $-L\in\wkl(n)$
%and any choice of Lindblad terms $\{V_k\}_{k=1}^r$ of $L$,
Consider the following statements: \smallskip
\begin{enumerate}[(i)]
\item \label{it:rev-cool-inv} All faces (except possibly vertices) of $\Delta^{n-1}$ are viable for~\eqref{eq:relaxed-control-system} but not purely Hamiltonian (cf.~footnote~\ref{footnote:purely-ham-restr}).
%\marginpar{F: see marginpar on p.~19 regarding ``restriction is [...] purely Hamiltonian''}
%is not purely Hamiltonian for $d>1$. %\marginpar{formulate in invariant way (purely Hamiltonian)}
\item \label{it:rev-cool-reach} The system \eqref{eq:simplex-control-system} is reverse coolable. %$\overline{\reach(e_1)} \supseteq \Delta_\down^{n-1}$.
\item \label{it:rev-cool-lazy} All faces (except possibly vertices) of $\Delta^{n-1}$ are viable for~\eqref{eq:relaxed-control-system}. \smallskip
\end{enumerate}
Then we have the following implications:
\ref{it:rev-cool-inv} $\Rightarrow$ \ref{it:rev-cool-reach} $\Rightarrow$ \ref{it:rev-cool-lazy}.
%We call systems satisfying condition~\ref{it:rev-cool-reach} \emph{reverse coolable}.
\end{proposition}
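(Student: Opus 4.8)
The plan is to prove the two implications \ref{it:rev-cool-inv} $\Rightarrow$ \ref{it:rev-cool-reach} and \ref{it:rev-cool-reach} $\Rightarrow$ \ref{it:rev-cool-lazy} separately. For the second, easier implication I would argue the contrapositive: if some non-vertex face $F$ fails to be viable, then I want to show the system is not reverse coolable. The key link is the time reversal. Reverse coolability asks that $\Delta^{n-1}_\down \subseteq \overline{\reach_{\ref{eq:simplex-control-system}}(e_1)}$, so every interior point $\mu$ of a face $F$ (appearing in $\Delta^{n-1}_\down$ up to permutation) must be approximately reachable from $e_1 \neq \mu$. But Lemma~\ref{lemma:reach-viable} says that if a non-vertex face $F$ has an interior point approximately reachable from some distinct point, then $F$ is viable. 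Applying this to each non-vertex face whose relative interior meets $\Delta^{n-1}_\down$, and then using the permutation symmetry of the system to transport viability to all faces of the same dimension, would show that reverse coolability forces \emph{all} non-vertex faces to be viable, which is exactly~\ref{it:rev-cool-lazy}.

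For the harder implication \ref{it:rev-cool-inv} $\Rightarrow$ \ref{it:rev-cool-reach}, the goal is to reach every point of the ordered Weyl chamber $\Delta^{n-1}_\down$ approximately from $e_1$. The natural strategy is an induction on the dimension of the face, building up from the vertex $e_1$ outward along the face lattice. The base case is a single vertex, which is trivially in the reachable set. For the inductive step, suppose all points in faces of dimension $< d-1$ are approximately reachable from $e_1$; I would take an interior point $\mu$ of a $(d-1)$-dimensional face $F$ and show $\mu \in \overline{\reach(e_1)}$. The hypothesis~\ref{it:rev-cool-inv} gives that $F$ is viable and \emph{not} purely Hamiltonian, so by Proposition~\ref{prop:viable-faces} there is some $-M \in \conv(\mf L)$ whose flow leaves $F$ invariant but does not vanish identically on $F$. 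The idea is that this genuinely dissipative tangential flow, combined with the permutation-averaged version of $M$ (as in the contractibility argument of Proposition~\ref{prop:closure-of-reach-contractible}), lets one move the reachable boundary of $F$ inward so as to cover the center, and hence all of $\relint(F)$, by a deformation-retraction / attractive-fixed-point argument internal to $F$. Since by induction one can already approximately reach the boundary of $F$ from $e_1$, transitivity of the reachability preorder $\rightsquigarrow$ then yields $\mu \in \overline{\reach(e_1)}$.

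The main obstacle I anticipate is the combination of two subtleties in the inductive step. First, the reachability here is merely \emph{approximate} (closure of the reachable set), so I must be careful that reaching the boundary of $F$ only asymptotically, then flowing inward within $F$, still lands in the closure of the reachable set from $e_1$ — this is where transitivity of $\rightsquigarrow$ (asserted in the proof of Proposition~\ref{prop:approx-ctrl}) is essential and must be invoked cleanly. Second, the viability restricted to $F$ gives a flow tangent to $F$, but to cover $\relint(F)$ I need that flow to have attractive behavior toward interior points of $F$; the ``not purely Hamiltonian'' hypothesis is precisely what rules out a flow that merely rotates within $F$ without any net contraction. Making this attractivity rigorous — essentially replaying the continuous-time Markov chain ergodicity argument of Proposition~\ref{prop:closure-of-reach-contractible}, but relativized to the face $F$ rather than the full simplex — is the technical heart of the proof, and I expect the bookkeeping of restricting generators to a face and verifying irreducibility there to be the most delicate part.
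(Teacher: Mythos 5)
Your proposal takes essentially the same route as the paper's proof for both implications: \ref{it:rev-cool-reach} $\Rightarrow$ \ref{it:rev-cool-lazy} is the contrapositive via Lemma~\ref{lemma:reach-viable}, and \ref{it:rev-cool-inv} $\Rightarrow$ \ref{it:rev-cool-reach} proceeds by induction over face dimensions (upward from $e_1$ in your version, as a downward chain from $\lambda$ in the paper's---immaterial), using a generator averaged over the permutations preserving the face, whose unique attractive fixed point is the center of that face, together with transitivity of $\rightsquigarrow$ and permutation symmetry. One warning: the crux you gloss with ``cover the center, and hence all of $\relint(F)$'' is where the paper invokes a genuinely topological fact ($\pi_n(S^n)=\mathbb{Z}$), namely that the boundary sphere contracting to the center must sweep through every interior point; reaching the center alone does not imply reaching arbitrary interior points, and this covering argument---rather than the Markov-chain irreducibility bookkeeping you single out as the technical heart, which the paper settles by citing~\cite{OSID23}---is the step that actually needs care.
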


\begin{proof}
\ref{it:rev-cool-inv} $\Rightarrow$ \ref{it:rev-cool-reach}:
Let $F$ be any face of dimension $d-1$ of the simplex $\Delta^{n-1}$ which is not a vertex ($d>1$), and let $\lambda_d\in F$ be arbitrary.
First we show that there is some $\lambda_{d-1}\in\partial F$ on the boundary such that $\lambda_{d-1}\rightsquigarrow\lambda_d$.
If $\lambda_d\in\partial F$ we set $\lambda_{d-1}=\lambda_d$.

By assumption there is some $L_U$ such that $F$ is invariant but not fixed. 
If $S_n^F$ denotes the permutation subgroup which leaves $F$ invariant, then $M=\frac{1}{|S_n^F|}\sum_{P\in S_n^F}{L_{UP}}$ still leaves $F$ invariant with the center being the unique attractive fixed point (cf.~\cite[Lem.~6]{OSID23}). 
Since the entire boundary of $F$ converges to its center, it passes through every point\footnote{This can be shown rigorously using the fact that $\pi_n(S^n)=\mathbb Z$, where $\pi_n$ denotes the $n$-th homotopy group~\cite[Sec.~4.1]{Hatcher02}.} of $F$, and hence every point is approximately reachable from the boundary. 
To be precise we have proven the claim in~\eqref{eq:relaxed-control-system}, but by the relaxation theorem (cf.~\cite[Ch.~2.4, Thm.~2]{Aubin84}),
%\marginpar{F: which approximation theorem? cite/ref!}
it still holds in~\eqref{eq:simplex-control-system}.
Putting everything together, if we start with any $\lambda\in\Delta_\down^{n-1}$, we set $\lambda_n=\lambda$ and find a sequence of $\lambda_d$ for $d=n-1,\ldots,1$ where necessarily $\lambda_1=e_i$ for some $i=1,\ldots,n$.
By transitivity we get that $e_i\rightsquigarrow\ldots\rightsquigarrow\lambda$.
Using the permutation symmetry~\cite[Lem.~A.2]{MDES23} and forcing the solution to stay in the ordered Weyl chamber~\cite[Prop.~A.4]{MDES23} we find that $\lambda=\lambda^\down\in\overline{\reach_{\ref{eq:simplex-control-system}}(e_1)}$.

\ref{it:rev-cool-reach} $\Rightarrow$ \ref{it:rev-cool-lazy}:
Let $\lambda\in\relint(F)$ and assume that $F$ is not viable. 
Then by Lemma~\ref{lemma:reach-viable}, $\lambda$ is not approximately reachable from any other point.
%Then by ... it is not reachable from outside an $\varepsilon$-ball.} \marginpar{check, use Lemma~\ref{lemma:reach-viable} instead}
\end{proof}

For stabilizable systems we can strengthen the result above:

\begin{corollary} %\label{coro:stab-control}
If~\eqref{eq:simplex-control-system} is stabilizable and has a two-dimensional lazy subspace which is not purely Hamiltonian, then it is reverse coolable.
\end{corollary}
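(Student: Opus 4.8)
The plan is to reduce the claim to hypothesis~\ref{it:rev-cool-inv} of Proposition~\ref{prop:rev-coolable}: if I can show that every non-vertex face of $\Delta^{n-1}$ is viable for~\eqref{eq:relaxed-control-system} but not purely Hamiltonian, then reverse coolability follows from the implication \ref{it:rev-cool-inv} $\Rightarrow$ \ref{it:rev-cool-reach} already proved there. Two features will be used repeatedly: by~\eqref{eq:def_JU} the induced vector fields $-L_U$ depend only on the Lindblad terms $V_k$ and not on $H_0$, and the reduced system is permutation symmetric, so that viability and the purely-Hamiltonian property depend only on the dimension of a face.

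First I would settle viability. Stabilizability gives, via Theorem~\ref{thm:stab-system}, a simultaneously triangularizable choice of Lindblad terms $\{V_k\}_{k=1}^r$; a common triangularizing basis produces a complete flag $\{0\}\subset S_1\subset\cdots\subset S_n=\C^n$ of lazy subspaces with $\dim S_d=d$. Hence a lazy subspace of every dimension exists, and Proposition~\ref{prop:viable-faces} (\ref{it:face-lazy} $\Rightarrow$ \ref{it:face-stab}) together with permutation symmetry makes every face of dimension $d-1$, $d\geq2$, viable.

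For the purely-Hamiltonian part, let $S_0$ be the given two-dimensional lazy subspace, which is not purely Hamiltonian. Since the $V_k$ are simultaneously triangularizable, the maps they induce on the quotient $\C^n/S_0$ are again simultaneously triangularizable, so I can complete $S_0$ to a flag $S_0\subset S^{(3)}\subset\cdots\subset S^{(n)}=\C^n$ of lazy subspaces of all dimensions $d\geq 2$ (writing $S^{(2)}:=S_0$). I claim each $S^{(d)}$ gives a non-purely-Hamiltonian face. If it did not, then every $-L_U$ adapted to that face would vanish on it, which by~\eqref{eq:def_JU} forces $U^*V_k|_{S^{(d)}}U$ to be diagonal for all relevant $U$, i.e.\ all $V_k|_{S^{(d)}}$ scalar; restricting to the lazy subspace $S_0\subseteq S^{(d)}$ would then make all $V_k|_{S_0}$ scalar, contradicting that $S_0$ is not purely Hamiltonian. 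Permutation symmetry upgrades this to all faces of each dimension $d-1$, establishing~\ref{it:rev-cool-inv}.

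The hard part is pinning down the equivalence ``the face attached to a lazy subspace $S$ is purely Hamiltonian $\iff$ all $V_k|_S$ are scalar,'' and in particular that the property descends from $S^{(d)}$ to the smaller lazy subspace $S_0$ even though the ambient Hamiltonian need not preserve $S_0$. What makes this harmless is that the reduced dynamics on any face is governed solely by the Hadamard data $J(U)_{ij}=\sum_k|(U^*V_kU)_{ij}|^2$ of~\eqref{eq:def_JU}, which is manifestly independent of $H_0$; this is what lets the scalar-restriction argument go through cleanly. A secondary point to check is that a common invariant subspace can always be completed to a full flag of common invariant subspaces, which follows from triangularizability of the induced maps on the quotient (equivalently, from solvability of the generated Lie algebra in Theorem~\ref{thm:stab-system}).
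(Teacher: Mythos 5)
Your proof is correct and takes essentially the same route as the paper: Theorem~\ref{thm:stab-system} yields simultaneous triangularizability of the Lindblad terms, a triangularizing chain is arranged to pass through the given two-dimensional lazy subspace, and the claim then follows from Proposition~\ref{prop:rev-coolable}, implication \ref{it:rev-cool-inv}~$\Rightarrow$~\ref{it:rev-cool-reach}. The only deviations are that where the paper simply cites \cite[Lem.~1.5.2]{Radjavi00} to extend the two-dimensional subspace to a full triangularizing chain, you reprove that step elementarily (solvability passes to the quotient $\C^n/S_0$, and the resulting flag pulls back to lazy subspaces of every dimension), and you make explicit the verification---via ``diagonal in every adapted basis implies scalar'' and restriction from $S^{(d)}$ down to $S_0$---that the corresponding faces are not purely Hamiltonian, a point the paper's proof leaves implicit.
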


\begin{proof}
By Theorem~\ref{thm:stab-system} the Lindblad terms are simultaneously triangularizable.
From~\cite[Lem.~1.5.2]{Radjavi00} and the assumption it follows that we can choose a triangularization such that the two-dimensional subspace in the chain is not purely Hamiltonian.
Hence we satisfy Proposition~\ref{prop:rev-coolable}~\ref{it:rev-cool-inv}.
\end{proof}

\section{Special Structure for Unital Systems} \label{sec:unital}

In this section we focus on unital systems, which are precisely those systems for which the identity is a fixed point or, equivalently, on the level of generators, those which satisfy $L(\mathds1)=0$. 
Such systems stand in contrast to coolable and controllable systems studied above, since all reachable states are majorized by the initial state. 
In particular the purity of a state can never increase.

\begin{lemma} %[Unital systems]
\label{lemma:unital-systems}
%Let $L$ be a Kossakowski--Lindblad generator, and let $H$ and $\{V_k\}$ be a choice of Hamiltonian and Lindblad terms representing $L$. 
Given any $-L\in\wkl(n)$, the following statements are equivalent.\smallskip
\begin{enumerate}[(i)]
\item \label{it:unital} $-L$ is unital, i.e.~$L(\mathds1)=0$.
\item \label{it:unital-commutator} $\sum_{k=1}^r[V_k,V_k^*]=0$ for some (equivalently: each) choice of Lindblad terms $\{V_k\}_{k=1}^r$ of $-L$.
\item \label{it:unital-MU} $L_U\e = 0 $ for all $U\in\SU(n)$.
\item \label{it:unital-JU} $J_U\e = J_U^\top\e$ for all $U\in\SU(n)$.
\item \label{it:unital-lambda} $\derv(\e/n) = \{0\}$.
\item \label{it:unital-reach} $\overline{\reach_{\ref{eq:relaxed-control-system}}(\lambda)} \subseteq \{\mu : \mu \prec \lambda \}$ for all $\lambda \in \Delta^{n-1}$.
\end{enumerate}
%We call such systems \emph{unital}.
\end{lemma}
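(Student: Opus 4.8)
The plan is to dispatch the equivalences (i)--(v) by direct computation and then fold in the majorization statement (vi) through the barycenter. For (i) $\Leftrightarrow$ (ii) I would simply evaluate $L(\mathds 1)$: since $\ad_{H_0}(\mathds 1)=[H_0,\mathds 1]=0$ and $\Gamma_{V_k}(\mathds 1)=V_k^*V_k-V_kV_k^*=-[V_k,V_k^*]$, we obtain $L(\mathds 1)=-\sum_{k=1}^r[V_k,V_k^*]$, so the two conditions are literally the same; the ``some (equivalently: each)'' clause is immediate because $L(\mathds 1)$ is a property of $-L$ and not of the chosen Lindblad terms (Lemma~\ref{lemma:freedom-of-reps}). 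For (i) $\Leftrightarrow$ (iii) I would use $\Ad_U(\mathds 1)=\mathds 1$ together with~\eqref{eq:LU} to write $-L_U\e=-\Pi_{\diag}(\Ad_U^{-1}(L(\mathds 1)))$. Unitality gives (iii) at once; for the converse I note that $L(\mathds 1)$ is Hermitian (each $[V_k,V_k^*]$ is), so diagonalizing it by some $U_0$, which may be taken in $\SU(n)$ since $\Ad$ ignores a global phase, forces the eigenvalues of $L(\mathds 1)$, read off the vanishing diagonal of $U_0^*L(\mathds 1)U_0$, all to be zero, whence $L(\mathds 1)=0$. The equivalence (iii) $\Leftrightarrow$ (iv) is then immediate from~\eqref{eq:def_JU}, since $-L_U\e=J(U)\e-\diag(J(U)^\top\e)\e=J_U\e-J_U^\top\e$ with $J_U=J(U)$, and (iii) $\Leftrightarrow$ (v) follows from $\derv(\e/n)=\tfrac1n\{-L_U\e:U\in\SU(n)\}$.

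The substantive content is (vi), which I would connect to unitality in both directions. For (i) $\Rightarrow$ (vi): by (iii) we have $L_U\e=0$ for all $U$, hence $M\e=0$ for every $-M\in\conv(\mf L)$; combined with $\e^\top M=0$ and the non-negativity of the off-diagonal entries, each one-parameter flow $e^{-tM}$ is doubly stochastic. Representing a solution of~\eqref{eq:relaxed-control-system} through its fundamental solution $\Phi(t)$ (Remark~\ref{rmk:reduced-aliter}), which solves $\dot\Phi=-M_t\Phi$, $\Phi(0)=\mathds 1$, one checks $\e^\top\Phi(t)=\e^\top$ and $\Phi(t)\e=\e$ by differentiating and using $\e^\top M_t=0=M_t\e$, so $\Phi(t)$ is doubly stochastic and $\lambda(t)=\Phi(t)\lambda_0$. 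The Hardy--Littlewood--P\'olya theorem then gives $\lambda(t)\prec\lambda_0$, and since $\{\mu:\mu\prec\lambda_0\}$ is closed the inclusion passes to $\overline{\reach_{\ref{eq:relaxed-control-system}}(\lambda_0)}$.

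For the converse (vi) $\Rightarrow$ (v) I would evaluate (vi) at the barycenter $\lambda=\e/n$. As $\e/n$ is the unique minimal element of the majorization order we have $\{\mu:\mu\prec\e/n\}=\{\e/n\}$, so $\overline{\reach_{\ref{eq:relaxed-control-system}}(\e/n)}=\{\e/n\}$. Since $\e/n$ is an interior point, if some $-L_U(\e/n)\in\derv(\e/n)$ were nonzero then the solution $t\mapsto e^{-tL_U}(\e/n)$ of~\eqref{eq:simplex-control-system} (a fortiori of~\eqref{eq:relaxed-control-system}) would leave $\e/n$ for small $t>0$, contradicting that the reachable set is the single point; hence $\derv(\e/n)=\{0\}$, which is (v). Together with the equivalences of the first paragraph, this closes the chain (i) $\Rightarrow$ (vi) $\Rightarrow$ (v) $\Rightarrow$ (iii) $\Rightarrow$ (i). I expect the main obstacle to be the forward direction (i) $\Rightarrow$ (vi): one must argue carefully that the fundamental solution of the relaxed differential inclusion, assembled from a merely measurable family of doubly stochastic generators, is genuinely doubly stochastic, and that majorization, preserved under products of doubly stochastic matrices and stable under the limits implicit in the time-ordered exponential, survives the passage to the closure of the reachable set. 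The remaining steps are routine matrix manipulations.
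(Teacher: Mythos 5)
Your proof is correct, and it follows the paper's overall architecture: link the algebraic conditions (i)--(ii) to the reduced-system conditions (iii)--(v) by computation, then handle (vi) via double stochasticity and majorization. The one genuine difference is the bridge between the two clusters. The paper connects (ii) $\Leftrightarrow$ (v) by invoking Lemma~\ref{lemma:MU-row-column-sums} together with the Schur--Horn theorem, which identifies $\derv(\e/n)$ exactly as the majorization polytope spanned by $\spec^\down\big(\sum_{k=1}^r[V_k,V_k^*]\big)$; unitality then corresponds to this polytope degenerating to $\{0\}$. You instead connect (i) $\Leftrightarrow$ (iii) directly via the identity $-L_U\e=-\Pi_{\diag}\big(\Ad_U^{-1}(L(\mathds1))\big)$, with the converse using only that a Hermitian matrix whose diagonal vanishes in its own eigenbasis is zero --- i.e.~the trivial direction of Schur--Horn. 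Your route is more elementary; the paper's buys a stronger intermediate fact (the exact shape of $\derv(\e/n)$, not merely when it is $\{0\}$), which has independent interest. For the (vi)-related steps you supply details the paper leaves terse --- the doubly stochastic fundamental solution $\Phi(t)$ and the evaluation of (vi) at the barycenter --- and these are sound; note only that $\Phi(t)\e=\e$ does not follow from differentiation alone but from uniqueness of solutions of $\dot\psi=-M_t\psi$, $\psi(0)=\e$ (the constant $\e$ being a solution since $M_t\e=0$ a.e.), a one-line fix. Likewise, entrywise non-negativity of $\Phi(t)$ for a measurable family of Metzler-type generators is standard but deserves the citation you implicitly gesture at; the paper is no more explicit on this point.
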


\begin{proof}
A direct computation shows $-L(\mathds1)=\frac12\sum_{k=1}^r[V_k,V_k^*]$ for any choice of Lindblad terms $\{V_k\}_{k=1}^r$ of $-L$; hence~\ref{it:unital} $\Leftrightarrow$ \ref{it:unital-commutator}.
%, showing that the expression is well-defined, i.e.~independent of the choice of Lindblad terms, and proving the equivalence of~\ref{it:unital} and~\ref{it:unital-commutator}. 
For \ref{it:unital-MU} $\Leftrightarrow$ \ref{it:unital-JU} $\Leftrightarrow$ \ref{it:unital-lambda}
note that by definition $-L_U=J(U)-\diag(J(U)^\top\e)$, which shows that $L_U\e=J(U)\e-J(U)^\top\e$. 
%Hence~\ref{it:unital-MU}, \ref{it:unital-JU}, and~\ref{it:unital-lambda} are equivalent. 
Next, Lemma~\ref{lemma:MU-row-column-sums} together with the Schur--Horn theorem \cite{Schur23,Horn54} shows that $\derv(\e/n)$ is equal to the majorization polytope spanned by the vector of eigenvalues of $\sum_{k=1}^r[V_k,V_k^*]$;
this shows the equivalence of~\ref{it:unital-commutator} and~\ref{it:unital-lambda}. 
Condition~\ref{it:unital-MU} shows that the time evolution of a unital system is doubly stochastic, which implies~\ref{it:unital-reach} see~\cite[Thm.~A.4]{MarshallOlkin}. Conversely, \ref{it:unital-reach} directly implies~\ref{it:unital-lambda}.
(The equivalence of~\ref{it:unital} and \ref{it:unital-reach} was also shown in~\cite{Yuan10}.)
\end{proof}
Importantly, unitality is independent of the Hamiltonian part of the generator, and hence of the control.
Note also that property~\ref{it:unital-reach} continues to hold
%for unital systems 
in the infinite-dimensional setting under appropriate assumptions~\cite{OSID19}.

Recall that $\mc V=\generate{\mathds1,V_1,\ldots,V_r}{alg}$---i.e.~the matrix algebra generated by the Lindblad terms and the identity---is called the relaxation algebra of $-L$,
and it is independent of the choice of Lindblad terms (Lemma~\ref{lemma:lindblad-alg-well-def}).
Moreover, we write $\Lat(\mc V)$ for the lattice of invariant subspaces for $\mc V$ (i.e.~its lazy subspaces), cf.~Appendix~\ref{app:lindblad-eq}.
For unital systems, stabilizability and reachability are to a large extent characterized by the structure of the relaxation algebra, which turns out to be particularly nice as the following result shows:

\begin{lemma}
If $-L$ is unital, then the corresponding lattice $\Lat(\cV)$ of lazy subspaces is orthocomplemented\footnote{This means that for every invariant subspace $S\in\Lat(\cV)$, the orthocomplement $S^\perp$ is also invariant.\label{footnote_orthocomplemented}
} and $\cV$ is a $*$-algebra\footnote{$\cV$ is a $*$-algebra if for every $V\in\cV$ it holds that $V^*\in\cV$.}.
\end{lemma}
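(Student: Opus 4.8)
The plan is to first establish that $\Lat(\cV)$ is orthocomplemented by a direct block computation that exploits unitality, and only then to deduce the $*$-algebra property from it. Recall from Lemma~\ref{lemma:unital-systems} (\ref{it:unital}~$\Leftrightarrow$~\ref{it:unital-commutator}) that unitality of $-L$ is equivalent to $\sum_{k=1}^r[V_k,V_k^*]=0$, i.e.\ $\sum_{k=1}^r V_kV_k^*=\sum_{k=1}^r V_k^*V_k$, for any (equivalently, each) fixed choice of Lindblad terms. Recall also that $\Lat(\cV)$ consists exactly of the common invariant subspaces of the $V_k$, since $\cV$ is generated by the $V_k$ together with $\mathds1$.

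For the orthocomplementation, I would take $S\in\Lat(\cV)$ and write each $V_k$ in block form with respect to the orthogonal decomposition $\C^n=S\oplus S^\perp$. Invariance of $S$ forces the lower-left block to vanish, so $V_k=\left(\begin{smallmatrix}A_k&B_k\\0&D_k\end{smallmatrix}\right)$ and hence $V_k^*=\left(\begin{smallmatrix}A_k^*&0\\B_k^*&D_k^*\end{smallmatrix}\right)$. Comparing the upper-left blocks of $\sum_k V_kV_k^*$ and $\sum_k V_k^*V_k$ under the unitality condition gives $\sum_k(A_kA_k^*+B_kB_k^*)=\sum_k A_k^*A_k$. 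Taking the trace and using $\tr(A_kA_k^*)=\tr(A_k^*A_k)$, the $A_k$-terms cancel and one is left with $\sum_k\tr(B_kB_k^*)=0$; since each summand is non-negative, $B_k=0$ for every $k$. Thus each $V_k$ is block diagonal, so $S^\perp$ is invariant as well, i.e.\ $S^\perp\in\Lat(\cV)$, which proves orthocomplementation.

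It remains to deduce that $\cV$ is a $*$-algebra. The computation above shows that every common invariant subspace of the $V_k$ is in fact \emph{reducing} (invariant under all $V_k$ and all $V_k^*$). I would then decompose $\C^n$ orthogonally into $\cV$-irreducible submodules---possible precisely because every invariant subspace has an invariant orthocomplement---and group them into isotypic components, which are mutually orthogonal since $P_W$ restricted to an inequivalent irreducible is an intertwiner and hence zero by Schur's lemma. Burnside's theorem shows $\cV$ acts as the full endomorphism algebra on each irreducible type, so the only remaining point is that, within an isotypic component, the intertwiners identifying equivalent irreducible summands are unitary up to scalars. This is exactly what orthocomplementation supplies, since the orthocomplement of the graph of a nonzero intertwiner $T$ between two equivalent summands is again invariant only if $T$ is a scalar multiple of a unitary. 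Consequently $\cV$ coincides with a block-diagonal algebra $\bigoplus_\alpha \End(W_\alpha)\otimes\mathds1_{m_\alpha}$ with the summands orthonormally identified, hence is self-adjoint. Equivalently, one may simply invoke the classical finite-dimensional fact that a unital operator algebra is self-adjoint if and only if its invariant subspace lattice is orthocomplemented.

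The main obstacle is precisely this last step: orthocomplementation is an elementary consequence of the trace identity, but upgrading it to the $*$-algebra property requires the representation-theoretic observation that orthocomplementation forces the intertwiners between equivalent irreducible components to be scalar multiples of unitaries. If one prefers to avoid spelling this out, the cleanest route is to cite the standard equivalence between reductive and self-adjoint algebras in finite dimensions; the content specific to this paper is entirely contained in the short block-trace argument driven by $\sum_k[V_k,V_k^*]=0$.
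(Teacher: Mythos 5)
Your proof is correct, and its core mechanism is the same as the paper's, but the technical realization differs in both halves. For orthocomplementation, the paper works inside the reduced-system formalism: it picks a unitary $U$ whose first $\dim S$ columns span the lazy subspace $S$, invokes Lemma~\ref{lemma:lazy} to get block-triangularity of $J(U)$, and then uses the unitality criterion $J(U)\e=J(U)^\top\e$ of Lemma~\ref{lemma:unital-systems} plus non-negativity of the entries of $J(U)$ to kill the remaining off-diagonal block. Your block decomposition $V_k=\left(\begin{smallmatrix}A_k&B_k\\0&D_k\end{smallmatrix}\right)$ with the trace cancellation $\sum_k\tr(B_kB_k^*)=0$ is literally the same computation in different notation---$\sum_k\tr(B_kB_k^*)$ equals the sum of the entries of the relevant off-diagonal block of $J(U)$---but your version is more self-contained, as it needs neither Lemma~\ref{lemma:lazy}~(v) nor the row/column-sum formulas, only the identity $\sum_k[V_k,V_k^*]=0$. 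For the second half, the paper simply cites \cite[Thm.~11.5.1]{Gohberg06} for the implication ``unital algebra with orthocomplemented invariant-subspace lattice $\Rightarrow$ $*$-algebra,'' whereas you sketch a proof of that theorem (orthogonal isotypic decomposition, Schur, Burnside, and the graph-of-an-intertwiner argument showing intertwiners between equivalent summands are scalar multiples of unitaries); your sketch is sound, and you also note the citation route as an alternative, which is exactly what the paper does. One point worth making explicit either way: the argument uses that $\cV$ contains the identity (the paper flags this at the end of its proof), since the lattice-theoretic implication fails for non-unital algebras.
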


\begin{proof}
Let $S\subseteq\mathbb C^n$ be any lazy subspace of $-L$; w.l.o.g.~$S\neq\{0\}$, $S\neq\mathbb C^n$.
Choose a unitary $U$ such that the first $k$ columns of $U$ span $S$. Then by Lemma~\ref{lemma:lazy}~\ref{it:unitary} the matrix $-L_U$ (and thus $J_U$) is block triangular.
Using Lemma~\ref{lemma:unital-systems}~\ref{it:unital-JU} we compute
\begin{align*}
0&=\sum_{j=1}^{\operatorname{dim}S}\big( (J_U\e )_j-(J_U^\top\e)_j \big)%\\
%&=\sum_{j=1}^{\operatorname{dim}S}\sum_{k=1}^n J_{jk}(U)-\sum_{j=1}^{\operatorname{dim}S}\sum_{k=1}^n J_{kj}(U)\\
%&
=\sum_{j=1}^{\operatorname{dim}S}\sum_{k=\operatorname{dim}S+1}^n J_{jk}(U)-\sum_{j=1}^{\operatorname{dim}S}\sum_{k=\operatorname{dim}S+1}^n J_{kj}(U)\,;
\end{align*}
but the second term vanishes due to $J_U$ being block triangular.
Thus, because $J_{jk}(U)\geq 0$ for all $j,k$, $J_{jk}(U)=0$ for all $1\leq j\leq\operatorname{dim}S<k\leq n$, as well. 
Thus $J_U$ must be block diagonal, hence the orthogonal complement of $S$ is also invariant. This in turn implies that $\cV$ is a $*$-algebra~\cite[Thm.~11.5.1]{Gohberg06}, as claimed.
Note that this uses that $\cV$ contains the identity.
\end{proof}
Note that the converse is not true, as can be seen by considering the following example of the Bloch equations in Lindblad form:
Choosing the Lindblad terms $\sigma_+$, $2\sigma_-$, and $\sigma_z$, the system is clearly not unital but as the operators do not have a common eigenvector, the lattice of lazy subspaces is trivial and hence orthocomplemented.

The structure theorem for $*$-algebras~\cite[Thm.~5.6]{Farenick01} shows that up to a change of orthonormal basis, $*$-algebras have a particularly simple form: Let $\mc A\subseteq\mathbb C^{n\times n}$ be a $*$-algebra and let $m$ be the dimension of the center of $\mc A$. Then there exists $U\in\U(n)$ as well as positive integers $\{q_i\}_{i=1}^m$ and $\{k_i\}_{i=1}^m$ such that
\begin{align}\label{eq:star-algebra}
U\mc AU^* = \bigoplus_{i=1}^m \C^{q_i\times q_i}\otimes \mathds1_{k_i}\,.
\end{align}
Clearly it holds that $\sum_{i=1}^m q_i k_i = n$.
The vector of block-sizes in~\eqref{eq:star-algebra}
\begin{align*}
\tau=(\underbrace{q_1,\ldots,q_1}_{k_1},\ldots,\underbrace{q_m,\ldots,q_m}_{k_m})
\end{align*}
is of special importance as shown in the following lemma.
Either way in this case we say that $\mc A$ is \emph{of type $\tau$};
moreover, given $-L\in\wkl(n)$ unital we say that $-L$ is of type $\tau$ if the relaxation algebra $\cV$ is of type $\tau$.
To properly state the results in this section it is convenient to define the notion of refinement:

\begin{definition}[Refinement]
Let $n,k,l$ be positive integers and let $v\in\N^k$ and $w\in\N^l$ be vectors of positive integers such that their elements sum to $n$ respectively. We say that $v$ is a \emph{refinement} of $w$ if $Av=w$ for some $A\in\{0,1\}^{l\times k}$ with $\e^\top A=\e^\top$.
%such that there is exactly one $1$ per column. %(geometric lattice?).
\end{definition}
Such matrices $A$ form a finite semigroup, and hence refinement is a preorder. Vectors differing only up to permutation are equivalent and we will not distinguish between them. Hence, in the following, we will think of $\tau$ as a multiset, usually represented in non-increasing order.
%\marginpar{Algorithm for deciding refinement? related to NP-hard bin packing?}

The following result is a direct consequence of the Krull--Schmidt Theorem, cf.~\cite[Prop.~3.2.5]{Hazewinkel04}.
%cf.~\cite[Thm.~10.10]{Isaacs09}.

\begin{lemma}
Let $\mc A$ be a $*$-algebra of type $\tau$. Then for any block diagonalization of $\mc A$ the vector of block sizes is refined by the vector $\tau$.
\end{lemma}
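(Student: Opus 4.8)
The plan is to read a block diagonalization of $\mc{A}$ representation-theoretically, as a direct-sum decomposition of $\C^n$ into $\mc{A}$-invariant subspaces (a decomposition with respect to which every element of $\mc{A}$ is block diagonal is exactly one whose blocks are invariant). First I would unpack what the type $\tau$ records. By the structure~\eqref{eq:star-algebra}, viewing $\C^n$ as an $\mc{A}$-module gives $\C^n\cong\bigoplus_{i=1}^m \mc{S}_i^{\oplus k_i}$, where $\mc{S}_i$ is the irreducible module of dimension $q_i$ on which $\mc{A}$ acts, through its $i$-th central projection, as all of $\C^{q_i\times q_i}$. Thus $\tau$ is precisely the multiset listing, for each $i$, the dimension $q_i$ of $\mc{S}_i$ repeated $k_i$ times; that is, $\tau$ is the list of dimensions of the simple constituents of $\C^n$, counted with multiplicity, and $k_i$ is the multiplicity of $\mc{S}_i$ in $\C^n$.

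Next I would fix a block diagonalization $\C^n = W_1\oplus\cdots\oplus W_l$ into $\mc{A}$-invariant subspaces, with block sizes $w=(\dim W_1,\ldots,\dim W_l)$. Since~\eqref{eq:star-algebra} exhibits $\mc{A}$ as a direct sum of full matrix algebras, $\mc{A}$ is semisimple, so each submodule $W_j$ is itself a direct sum of irreducibles, say $W_j\cong\bigoplus_{i=1}^m\mc{S}_i^{\oplus m_{ij}}$, whence $\dim W_j=\sum_{i=1}^m m_{ij}q_i$. The crucial input is the Krull--Schmidt Theorem~\cite[Prop.~3.2.5]{Hazewinkel04}: assembling the decompositions of the $W_j$ produces a decomposition of $\C^n$ into irreducibles whose multiplicities must coincide with those of the canonical one, forcing $\sum_{j=1}^l m_{ij}=k_i$ for every $i$.

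With these multiplicities in hand the refinement is pure bookkeeping. The multiset $\tau$ contains, for each $i$, exactly $k_i$ copies of the value $q_i$, and I would distribute them among the $l$ blocks by assigning $m_{ij}$ of the copies of $q_i$ to block $j$. Because $\sum_j m_{ij}=k_i$, every entry of $\tau$ is assigned to exactly one block, so the associated matrix $A\in\{0,1\}^{l\times(\sum_i k_i)}$ (one $1$ per column, recording the block to which each simple constituent is sent) satisfies $\e^\top A=\e^\top$. Finally $(A\tau)_j=\sum_i m_{ij}q_i=\dim W_j=w_j$, i.e.\ $A\tau=w$, which exhibits $w$ as refined by $\tau$.

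I expect the only genuine content to be the Krull--Schmidt step that pins down $\sum_j m_{ij}=k_i$; everything else is translating ``block diagonalization'' into a decomposition into submodules and repackaging the multiplicities into the $\{0,1\}$-matrix required by the definition of refinement. One minor point to treat with care is that the given block diagonalization need not be orthogonal, but since the refinement statement involves only the dimensions $\dim W_j$, orthogonality is irrelevant to the argument.
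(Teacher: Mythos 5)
Your proof is correct and follows exactly the route the paper intends: the paper offers no written proof at all, merely asserting the lemma is ``a direct consequence of the Krull--Schmidt Theorem,'' and your argument---identifying blocks with $\mc A$-submodules, decomposing each into simples by semisimplicity, and invoking Krull--Schmidt to force $\sum_j m_{ij}=k_i$ before the final bookkeeping with the $\{0,1\}$-matrix---is precisely that argument carried out in detail.
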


We will see below that the type of a unital system determines many of its control-theoretic properties.

%\begin{proof}
%We will use the Jordan-H{\"o}lder theorem, see Result~\ref{res:jordan-hoelder} applied to $\C^n$ seen as an $A$-module. Since submodules are invariant subspaces, this module has finite length. Choose an orthonormal basis such that $A$ is block diagonal as in the right hand side of~\eqref{eq:star-algebra}. Then the number of blocks is $\sum_i k_i$ and the block sizes are given by $v$. Considering the series of strictly increasing submodules given by this block diagonalization we see that each quotient is simple, and hence we have a composition series. This means that the block sizes are indeed unique up to permutation and any other block diagonalization has block sizes which are refined by $v$. (This could also be done using Krull-Schmidt).
%\end{proof}

\subsection{Stabilizability}

In contrast to general systems, where the exact shape of the stabilizable set $\stab_{\ref{eq:simplex-control-system}}$ is very difficult to describe, unital systems allow for a complete characterizations in algebraic terms using the relaxation algebra $\cV$, as we will show in Theorem~\ref{thm:stab-unital}. We begin with a simple result about stabilizable states in unital systems:

\begin{lemma} \label{lemma:unital-strongly-stab}
For unital systems, the set of stabilizable states equals the set of strongly stabilizable states.
\end{lemma}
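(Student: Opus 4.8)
The plan is to prove the only nontrivial inclusion, namely that every stabilizable state is strongly stabilizable; the reverse is immediate, since strong stabilizability means $0\in\derv(\lambda)$, which certainly gives $0\in\conv(\derv(\lambda))$. First I would record the two structural facts that drive the argument. By Lemma~\ref{lemma:unital-systems}~\ref{it:unital-MU} unitality gives $L_U\e=0$ for every $U\in\SU(n)$, while $\e^\top L_U=0$ holds for all induced vector fields (as $\mf L\subseteq\mf{stoch}(n)$). Hence, writing $A:=-L_U$, the matrix $A$ has nonnegative off-diagonal entries together with vanishing column sums (always) and vanishing row sums (by unitality).

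The key device is a strictly convex Lyapunov function. I would take $\phi(\mu):=\|\mu\|_2^2$, so $\nabla\phi(\lambda)=2\lambda$, and compute the first-order change of $\phi$ along an achievable derivative. Setting $B:=A+A^\top$, the symmetric matrix $B$ inherits nonnegative off-diagonal entries and zero row sums, so the standard identity for symmetric matrices with zero row sums yields
\[
\langle\nabla\phi(\lambda),\,-L_U\lambda\rangle
=2\lambda^\top A\lambda
=\lambda^\top B\lambda
=-\sum_{i<j}B_{ij}(\lambda_i-\lambda_j)^2\le0 .
\]
Thus every achievable derivative is a nonincreasing direction for $\phi$. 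The crucial observation is that equality is rigid: if the left-hand side vanishes, then $B_{ij}(\lambda_i-\lambda_j)^2=0$ for every pair $i<j$, and this already forces $-L_U\lambda=0$. Indeed $(-L_U\lambda)_i=\sum_{j\ne i}A_{ij}(\lambda_j-\lambda_i)$, and for each pair either $\lambda_i=\lambda_j$ (killing the factor $\lambda_j-\lambda_i$) or else $B_{ij}=A_{ij}+A_{ji}=0$, whence $A_{ij}=0$ by nonnegativity; in both cases the summand is zero.

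To finish I would invoke stabilizability: $0\in\conv(\derv(\lambda))$ means $0$ can be written as a finite convex combination $0=\sum_i\mu_i(-L_{U_i}\lambda)$ with $\mu_i>0$ (and $\sum_i\mu_i=1$). Pairing with $\nabla\phi(\lambda)=2\lambda$ gives $0=\sum_i\mu_i\langle\nabla\phi(\lambda),-L_{U_i}\lambda\rangle$, a positively weighted sum of nonpositive terms, so each term vanishes; by the rigidity above each $-L_{U_i}\lambda=0$. Hence $0\in\mf L\lambda=\derv(\lambda)$, so $\lambda$ is strongly stabilizable (in fact every $U_i$ appearing in the convex combination already stabilizes $\lambda$). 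I expect the only step requiring genuine care to be this rigidity — that an individual achievable derivative cannot decrease $\phi$ to first order without actually vanishing at $\lambda$ — which rests on the explicit quadratic-form identity and the nonnegativity of the off-diagonal entries; everything else is convexity bookkeeping, and the essential use of unitality is precisely in supplying the zero row sums that turn $\lambda^\top B\lambda$ into a sum of signed squared differences.
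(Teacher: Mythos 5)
Your proof is correct, but it takes a genuinely different route from the paper's. The paper argues via majorization: by unitality each $\mathds 1-\epsilon L_{U_k}$ (for $\epsilon=(\max_{j,k}|(L_{U_k})_{jj}|)^{-1}$) is doubly stochastic, so the stabilizability identity $\lambda=\sum_k\mu_k(\mathds 1-\epsilon L_{U_k})\lambda$ exhibits $\lambda$ as a convex combination of points in the majorization polytope $\{\mu\in\Delta^{n-1}:\mu\prec\lambda\}$; since $\lambda$ is an extreme point of that polytope (citing Dahl and Marshall--Olkin), each term must equal $\lambda$, i.e.\ $L_{U_k}\lambda=0$. You instead run a quadratic-form (purity) argument: with $A=-L_U$ having nonnegative off-diagonal entries, zero column sums (always) and zero row sums (unitality, Lemma~\ref{lemma:unital-systems}), you get the graph-Laplacian identity $\langle 2\lambda,A\lambda\rangle=-\sum_{i<j}(A_{ij}+A_{ji})(\lambda_i-\lambda_j)^2\le 0$, and your rigidity step is sound: vanishing of each summand plus $(A\lambda)_i=\sum_{j\neq i}A_{ij}(\lambda_j-\lambda_i)$ and nonnegativity force $A\lambda=0$, so pairing the convex combination with $2\lambda$ kills every term individually. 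Both proofs deliver the same sharpened conclusion (every $U_k$ in the convex combination already strongly stabilizes $\lambda$). The trade-off: the paper's proof is shorter given the cited majorization facts and fits the leitmotif that unital dynamics only move downward in the majorization order (Lemma~\ref{lemma:unital-systems}, item on $\mu\prec\lambda$), whereas yours is self-contained and elementary, replacing the extreme-point citation by an explicit computation that uses only the sign pattern and row/column-sum structure of the induced vector fields---essentially trading one Schur-convexity witness ($\|\cdot\|_2^2$) for the full majorization order.
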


\begin{proof}
%%F: Old proof, too little detail
%%Both sets contain $\e/n$, so consider $\lambda$ distinct from $\e/n$. 
%By Lemma~\ref{lemma:unital-systems}~\ref{it:unital-reach}, the set $\derv(\lambda)$, for $\lambda\in\Delta^{n-1}$, is always contained in a pointed wedge, hence no non-trivial convex combination can obtain $0$.
Let $\lambda\in\Delta^{n-1}$ be stabilizable, i.e.~there exist $\ell\in\mathbb N$, $\mu_1,\ldots,\mu_\ell>0$, and $U_1,\ldots,U_\ell\in\SU(n)$ such that $-\sum_{k=1}^\ell\mu_k L_{U_k}\lambda=0$ and $\sum_{k=1}^\ell\mu_k=1$.
W.l.o.g.~there exists $k$ such that $L_{U_k}\neq 0$.
Then $\epsilon:=(\max_{j,k}|(L_{U_k})_{jj}|)^{-1}>0$ is well-defined and---similar to the proof of Lemma~\ref{lemma:triang-stab}---the matrix $\mathds 1-\epsilon L_{U_k}$ is doubly stochastic for all $k$;
here
we used unitality of the system together with Lemma~\ref{lemma:unital-systems}~\ref{it:unital-MU}.
This lets us compute
$
\lambda=\lambda-\epsilon(\sum_{k=1}^\ell\mu_k L_{U_k}\lambda)=\sum_{k=1}^\ell\mu_k (\mathds 1-\epsilon L_{U_k})\lambda
$,
meaning we expressed $\lambda$---which is an extreme point of the majorization polytope $\{\lambda'\in\Delta^{n-1}\,:\lambda'\prec\lambda\}$ \cite[Thm.~1]{Dahl10}---as a non-trivial convex combination of elements of $\{\lambda'\in\Delta^{n-1}\,:\lambda'\prec\lambda\}$ \cite[Ch.~2, Thm.~B.2]{MarshallOlkin}.
By definition of an extreme point this is only possible if $(\mathds 1-\epsilon L_{U_k})\lambda=\lambda$ for all $k$; hence $-L_{U_k}\lambda=0$ so $\lambda$ is strongly stabilizable.
\end{proof}

\begin{theorem} %[Stabilizable set for unital systems] 
\label{thm:stab-unital}
Assume that $-L\in\wkl(n)$ is unital and let $\lambda\in\Delta^{n-1}$. 
Then $\lambda\in\stab_{\ref{eq:simplex-control-system}}$ if and only if 
%some (equivalently:~each) choice of Lindblad terms can be simultaneously block diagonalized, such that the vector of block sizes 
the type of $-L$ is a refinement of the vector of multiplicities of $\lambda$.
\end{theorem}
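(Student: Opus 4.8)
The plan is to reduce the whole statement to the structure of the fixed-point algebra of the dissipative part and then translate it into the block decomposition~\eqref{eq:star-algebra}. By Lemma~\ref{lemma:unital-strongly-stab} the set $\stab_{\ref{eq:simplex-control-system}}$ of a unital system coincides with its set of strongly stabilizable points, so it suffices to decide when there exists $U\in\SU(n)$ with $-L_U\lambda=0$. Writing $\rho:=U\diag(\lambda)U^*$ and $L_{\mathrm{diss}}:=\sum_{k=1}^r\Gamma_{V_k}$, and using that $L_U$ is independent of the Hamiltonian part (the term $\iu\ad_{H_0}$ contributes nothing to the diagonal in an eigenbasis of $\rho$), I would first record the elementary identity $(L_U\lambda)_i=\langle i|U^*L_{\mathrm{diss}}(\rho)U|i\rangle$. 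Thus $-L_U\lambda=0$ says exactly that $U^*L_{\mathrm{diss}}(\rho)U$ has vanishing diagonal.

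The key auxiliary lemma I would prove is a dissipation identity valid for every state $\rho$,
\[
2\tr\!\big(\rho\,L_{\mathrm{diss}}(\rho)\big)=\sum_{k=1}^r\tr\!\big([V_k,\rho]^*[V_k,\rho]\big)+\tr\!\Big(\rho^2\sum_{k=1}^r[V_k^*,V_k]\Big),
\]
which follows from a short computation together with the fact that $\tr(\rho V_k\rho V_k^*)$ is real. In the unital case the last term vanishes by Lemma~\ref{lemma:unital-systems}, so $2\tr(\rho L_{\mathrm{diss}}(\rho))=\sum_k\|[V_k,\rho]\|_{\mathrm{HS}}^2\geq0$, with equality if and only if $\rho$ commutes with every $V_k$ and hence, since $\rho=\rho^*$, with every $V_k^*$, i.e.\ $\rho\in\cV'$. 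The same computation shows $\cV'\subseteq\ker L_{\mathrm{diss}}$.

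For the ``only if'' direction I would argue that strong stabilizability forces $\rho$ into the commutant: since the diagonal of $U^*L_{\mathrm{diss}}(\rho)U$ vanishes, pairing against the eigenvalues gives $\tr(\rho L_{\mathrm{diss}}(\rho))=\sum_i\lambda_i\langle i|U^*L_{\mathrm{diss}}(\rho)U|i\rangle=0$, whence $\rho\in\cV'$ by the equality case above. Invoking the structure theorem~\eqref{eq:star-algebra}, in an adapted basis $\cV'=\bigoplus_{i=1}^m\mathds1_{q_i}\otimes\C^{k_i\times k_i}$, so $\rho=\bigoplus_i\mathds1_{q_i}\otimes M_i$ with $M_i$ Hermitian, and each eigenvalue of $M_i$ becomes an eigenvalue of $\rho$ with multiplicity scaled by $q_i$. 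The $\sum_i k_i$ eigenvalue slots of the $M_i$ carry weights $q_i$ and thus reproduce exactly the multiset $\tau$; grouping them by which distinct value of $\lambda$ they realize is precisely a $\{0,1\}$-matrix $A$ with $\e^\top A=\e^\top$ and $A\tau$ equal to the vector of multiplicities of $\lambda$, i.e.\ $\tau$ refines it. For the ``if'' direction I would run this backwards: given a refinement $A\tau=\mu$, assign to each slot of the $M_i$ the value of $\lambda$ dictated by $A$, producing $\rho=\bigoplus_i\mathds1_{q_i}\otimes M_i\in\cV'$ with spectrum $\lambda$; then $L_{\mathrm{diss}}(\rho)=0$, so with $H=-H_0$ one gets $-(\iu\ad_H+L)(\rho)=0$, and Proposition~\ref{prop:ham-stab-is-strongly-stab}\,(i) makes $\lambda$ strongly stabilizable. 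The main obstacle is the ``only if'' step: the real content is upgrading the coordinate condition ``diagonal of $U^*L_{\mathrm{diss}}(\rho)U$ vanishes'' to the basis-free statement $\rho\in\cV'$, which succeeds precisely because pairing with the eigenvalues reconstitutes the nonnegative quadratic form $\tr(\rho L_{\mathrm{diss}}(\rho))$; the remaining matching of eigenvalue multiplicities with the refinement preorder is then bookkeeping via the structure theorem, where the well-definedness of the type $\tau$ (Krull--Schmidt) is used implicitly.
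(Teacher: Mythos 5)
Your proof is correct, but it follows a genuinely different route from the paper's. The paper argues combinatorially on the reduced side: for ``$\Leftarrow$'' it block-diagonalizes the Lindblad terms according to the type $\tau$, permutes $\lambda$ so that it is constant on each block, and invokes blockwise unitality ($L_U\e=0$, Lemma~\ref{lemma:unital-systems}) to get $L_U\lambda'=0$; for ``$\Rightarrow$'' it shows via a sign argument that whenever two distinct values of $\lambda$ meet a single irreducible block of $L_U$, the smallest such value has strictly positive derivative (non-negativity from unitality, and vanishing would force the block to decouple), then closes with Lemma~\ref{lemma:unital-strongly-stab}. You instead work on the operator side: your key lemma is the dissipation identity
\begin{equation*}
2\tr\bigl(\rho\,\textstyle\sum_k\Gamma_{V_k}(\rho)\bigr)=\sum_{k=1}^r\bigl\|[V_k,\rho]\bigr\|_{\mathrm{HS}}^2+\tr\Bigl(\rho^2\sum_{k=1}^r[V_k^*,V_k]\Bigr),
\end{equation*}
which in the unital case makes $\tr(\rho\sum_k\Gamma_{V_k}(\rho))$ a non-negative quadratic form vanishing exactly on the commutant $\cV'$; strong stabilizability of $\lambda$ then becomes ``$\lambda$ is the spectrum of a state in $\cV'$,'' and the refinement condition drops out of the structure theorem~\eqref{eq:star-algebra} since $\cV'=\bigoplus_i\mathds1_{q_i}\otimes\C^{k_i\times k_i}$ in the adapted basis. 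Both routes share the reduction via Lemma~\ref{lemma:unital-strongly-stab} and the $*$-algebra structure theorem, but the arguments in between are distinct. Your approach buys a conceptual dividend: your quadratic form is (up to a factor) the rate of purity loss, so you get a Lyapunov-type interpretation of unital dynamics, and you exhibit the stabilized state as a genuine fixed point of the uncontrolled dissipator (with compensating Hamiltonian $H=-H_0$), identifying precisely which unitaries $U$ achieve $-L_U\lambda=0$; the paper's argument is more elementary and stays entirely within the doubly-stochastic machinery of the reduced system. Two small points to tighten: the claim ``the same computation shows $\cV'\subseteq\ker\sum_k\Gamma_{V_k}$'' is a slight overstatement---it needs the separate (one-line) observation that $\rho\in\cV'$ gives $\Gamma_{V_k}(\rho)=[V_k^*,V_k]\rho$, whose sum vanishes by unitality---and your appeal to~\eqref{eq:star-algebra} should explicitly cite that $\cV$ is a $*$-algebra for unital $-L$ (the lemma preceding~\eqref{eq:star-algebra}), without which the type of $-L$ and the commutant computation are not available.
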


\begin{proof}
``$\Leftarrow$'':
First assume that all Lindblad terms are in block-diagonal form such that the block sizes give a refinement of the multiplicities of $\lambda$. In particular $L_U$ will have the same block structure. Then there exists a permutation $\lambda'$ of $\lambda$ such that in the expression $L_U\lambda'$, all elements of $\lambda'$ which are multiplied with a given block of $L_U$ must have the same value. Thus Lemma~\ref{lemma:unital-systems}~\ref{it:unital-MU} shows $L_U\lambda'=0$ and hence $\lambda\in\stab_{\ref{eq:simplex-control-system}}$.

``$\Rightarrow$'':
%Conversely, 
Assume that no such block diagonal structure is achievable. 
Then the product $L_U\lambda$ will always have at least two distinct elements of $\lambda$ multiplied with a single block of $L_U$. 
Now consider a block where this happens. 
Then we argue that (at least one copy of) the smallest value of $\lambda$ falling into this block must have a strictly positive derivative.
Without loss of generality we assume that $L_U$ consists of a single block, and all copies of the smallest element of $\lambda$ are exactly the first $s$ elements of $\lambda$. 
Towards a contradiction assume that $\dot\lambda_i = 0$ for all $1\leq i\leq s$ (note that by unitality the smallest element of $\lambda$ cannot strictly decrease).
This means that $(L_U)_{ij}=0$ for all $i\leq s < j$. % maybe explain this more
By Lemma~\ref{lemma:unital-systems}~\ref{it:unital-lambda} this implies that $(L_U)_{ji}=0$, contradicting the assumption that $L_U$ consists of a single block.
Hence $\lambda$ is not strongly stabilizable and by Lemma~\ref{lemma:unital-strongly-stab} it is not stabilizable at all.
\end{proof}

As a special case we can characterize unital systems which are stabilizable (that is, every state is stabilizable).

\begin{corollary}%[Unital stabilizable]
\label{lemma:unital-stabilizable}
%Given any $-L\in\wkl(n)$
The following statements are equivalent.\smallskip
\begin{enumerate}[(i)]
\item \label{it:unital-stab-def} $-L$ is unital and $\stab_{\ref{eq:simplex-control-system}}=\Delta^{n-1}$.
\item \label{it:unital-stab-diag} The $U^*V_kU$ are simultaneously diagonal for some $U\in\SU(n)$ and for some (equivalently: every) choice of $\{V_k\}_{k=1}^r$. 
\item \label{it:unital-stab-normal} The $V_k$ are normal and commute for some (equivalently: every) choice of $\{V_k\}_{k=1}^r$.
\item \label{it:unital-stab-JU} $J(U)$ is diagonal (equivalently: $L_U=0$) for some $U\in\SU(n)$.\smallskip
\end{enumerate}
If $-L$ satisfies any---and hence all---of the above, we call it \emph{unital stabilizable}. 
% Such systems satisfy that $L$ is Hermitian (converse?).}
\end{corollary}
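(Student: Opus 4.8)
The plan is to first dispatch the three ``algebraic'' conditions \ref{it:unital-stab-diag}, \ref{it:unital-stab-normal} and \ref{it:unital-stab-JU} by elementary linear algebra, and only then to connect them to the control-theoretic condition \ref{it:unital-stab-def} through Theorem~\ref{thm:stab-unital}. For \ref{it:unital-stab-diag} $\Leftrightarrow$ \ref{it:unital-stab-JU} I would read off from~\eqref{eq:def_JU} that $J_{ij}(U)=\sum_{k=1}^r|(U^*V_kU)_{ij}|^2$, so $J(U)$ has vanishing off-diagonal entries exactly when every $U^*V_kU$ is diagonal; moreover, since $-L_U=J(U)-\diag(J(U)^\top\e)$, a diagonal $J(U)$ satisfies $\diag(J(U)^\top\e)=J(U)$ and hence $L_U=0$, while $L_U=0$ forces $J(U)=\diag(J(U)^\top\e)$ to be diagonal. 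For \ref{it:unital-stab-diag} $\Leftrightarrow$ \ref{it:unital-stab-normal} I would invoke the standard spectral fact that a family of matrices is simultaneously unitarily diagonalizable if and only if its members are normal and pairwise commute. The ``some, equivalently every, choice'' clauses I would settle by recalling that the relaxation algebra $\cV=\generate{\mathds1,V_1,\ldots,V_r}{alg}$ is independent of the choice of Lindblad terms (Lemma~\ref{lemma:lindblad-alg-well-def}); each of \ref{it:unital-stab-diag}--\ref{it:unital-stab-JU} says precisely that $\cV$ is unitarily diagonalizable, which is a property of $\cV$ alone.

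Next I would note that each of \ref{it:unital-stab-diag}, \ref{it:unital-stab-normal}, \ref{it:unital-stab-JU} already forces $-L$ to be unital: if the $V_k$ are normal then $[V_k,V_k^*]=0$ for each $k$, so $\sum_{k=1}^r[V_k,V_k^*]=0$ and unitality follows from Lemma~\ref{lemma:unital-systems}~\ref{it:unital-commutator}. Thus under any of the algebraic conditions we are in the hypothesis of Theorem~\ref{thm:stab-unital}, and it remains to show that for unital $-L$ one has $\stab_{\ref{eq:simplex-control-system}}=\Delta^{n-1}$ if and only if $\cV$ is unitarily diagonalizable. By Theorem~\ref{thm:stab-unital}, every $\lambda$ is stabilizable exactly when the type $\tau$ of $-L$ refines the multiplicity vector of every $\lambda\in\Delta^{n-1}$.

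The crucial observation is that the binding case is a regular $\lambda$, whose multiplicity vector is $(1,\ldots,1)$: since every entry of $\tau$ is a positive integer and refinement groups these entries into blocks summing to the target coordinates, $\tau$ can refine $(1,\ldots,1)$ only if $\tau=(1,\ldots,1)$, i.e.~all block dimensions $q_i$ in~\eqref{eq:star-algebra} equal $1$. Conversely $\tau=(1,\ldots,1)$ refines every multiplicity vector, since ones can always be grouped to sum to any prescribed positive integers totalling $n$. Finally $\tau=(1,\ldots,1)$ means $U\cV U^*=\bigoplus_{i=1}^m\C\mathds1_{k_i}$ by~\eqref{eq:star-algebra}, i.e.~$\cV$ consists of simultaneously diagonal matrices, which is exactly condition \ref{it:unital-stab-diag}. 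This closes the loop \ref{it:unital-stab-def} $\Leftrightarrow$ \ref{it:unital-stab-diag} in both directions.

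The only genuinely non-routine step I anticipate is recognizing that quantifying the refinement condition of Theorem~\ref{thm:stab-unital} over \emph{all} $\lambda$ collapses to the single regular case and hence to $\tau=(1,\ldots,1)$; once this is in hand, translating $\tau=(1,\ldots,1)$ back into the diagonal normal form of $\cV$ via~\eqref{eq:star-algebra} is immediate, and the remaining equivalences are standard spectral theory.
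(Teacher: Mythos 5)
Your proposal is correct and takes essentially the same approach as the paper: (ii)~$\Leftrightarrow$~(iv) is read off from the formula for $J(U)$, (ii)~$\Leftrightarrow$~(iii) is the spectral theorem for commuting normal matrices (the paper simply cites \cite[Thm.~2.5.5]{HJ1}), and (i)~$\Leftrightarrow$~(ii) is deduced from Theorem~\ref{thm:stab-unital}, with your reduction of the all-$\lambda$ refinement condition to the regular case $\tau=(1,\ldots,1)$ being exactly what that deduction requires. Your two additional observations---that (ii)--(iv) already force unitality so that Theorem~\ref{thm:stab-unital} is applicable, and that the ``some/every choice'' clauses follow from well-definedness of the relaxation algebra---are details the paper leaves implicit.
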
 

\begin{proof}
\ref{it:unital-stab-def} $\Leftrightarrow$ \ref{it:unital-stab-diag} is a consequence of Theorem~\ref{thm:stab-unital}.
\ref{it:unital-stab-diag} $\Leftrightarrow$ \ref{it:unital-stab-normal}:
\cite[Thm.~2.5.5]{HJ1}.
\ref{it:unital-stab-diag} $\Leftrightarrow$ \ref{it:unital-stab-JU} is trivial.
\end{proof}
Note that one could also use Theorem~\ref{thm:stab-system} instead of Theorem~\ref{thm:stab-unital} in the proof above.\smallskip

Due to Lemma~\ref{lemma:unital-strongly-stab}, in a unital stabilizable system every state is also strongly stabilizable, and any unitary $U$ which diagonalizes all Lindblad operators satisfies $L_U\lambda=0$ due to property~\ref{it:unital-stab-JU} above.
Hence using Proposition~\ref{prop:ham-stab-is-strongly-stab} one can find a corresponding compensating Hamiltonian for each regular $\lambda\in\Delta^{n-1}$. %\marginpar{what about non-regular $\lambda$?}
Finally let us characterize the following (trivial) case:

\begin{lemma} %[Purely Hamiltonian] 
\label{lemma:purely-Hamiltonian}
%Given any $-L\in\wkl(n)$, 
The following are equivalent.\smallskip
\begin{enumerate}[(i)]
\item \label{it:pure-ham-def} $L=\iu\ad_H$ for some Hermitian matrix $H$.
\item \label{it:pure-ham-eig} All eigenvalues of $-L$ are on the imaginary axis.
\item \label{it:pure-ham-id} For some (equivalently: each) choice of Lindblad terms $\{V_k\}_{k=1}^r$ of $-L$, all $V_k$ are multiples of the identity.
\item \label{it:pure-ham-lazy} Every subspace is lazy for $-L$.
%\item \label{it:pure-ham-LUs} $L_U=0$ for all $U\in\SU(n)$.
\item \label{it:pure-ham-JUs} $J(U)$ is diagonal (equivalently: $L_U=0$) for all $U\in\SU(n)$.\smallskip
\end{enumerate}
We call such systems \emph{purely Hamiltonian}.
\end{lemma}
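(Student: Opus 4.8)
The plan is to prove the cycle \ref{it:pure-ham-id} $\Rightarrow$ \ref{it:pure-ham-def} $\Rightarrow$ \ref{it:pure-ham-eig} $\Rightarrow$ \ref{it:pure-ham-id}, which settles the ``some/each'' clause in \ref{it:pure-ham-id} for free, and then to hang \ref{it:pure-ham-lazy} and \ref{it:pure-ham-JUs} on \ref{it:pure-ham-id} by two short equivalences. The only implication needing real work is \ref{it:pure-ham-eig} $\Rightarrow$ \ref{it:pure-ham-id}; everything else is a direct computation.

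For the easy links: \ref{it:pure-ham-id} $\Rightarrow$ \ref{it:pure-ham-def} follows since $\Gamma_{c\mathds1}=0$ (if $V_k=c_k\mathds1$ then $\tfrac12(|c_k|^2\rho+\rho|c_k|^2)-c_k\rho\overline{c_k}=0$), so the dissipative part vanishes and $L=\iu\ad_{H_0}$. For \ref{it:pure-ham-def} $\Rightarrow$ \ref{it:pure-ham-eig}, the eigenvalues of $\ad_H$ on $\C^{n\times n}$ are the differences $\lambda_i-\lambda_j$ of eigenvalues of the Hermitian $H$, hence real, so $-\iu\ad_H$ has purely imaginary spectrum. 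For \ref{it:pure-ham-id} $\Leftrightarrow$ \ref{it:pure-ham-lazy}: scalar $V_k$ leave every subspace invariant, so every subspace is lazy; conversely, if every one-dimensional subspace is $V_k$-invariant, then every vector is an eigenvector of each $V_k$, forcing $V_k=c_k\mathds1$. For \ref{it:pure-ham-id} $\Leftrightarrow$ \ref{it:pure-ham-JUs}: by \eqref{eq:def_JU} one has $J_{ij}(U)=\sum_k|\langle i|U^*V_kU|j\rangle|^2$, so $J(U)$ is diagonal iff $U^*V_kU$ is diagonal for every $k$; demanding this for all $U\in\SU(n)$ means each $V_k$ is diagonal in every orthonormal basis, which forces all its eigenvalues to coincide, i.e.\ $V_k$ scalar (the converse being clear), and when $J(U)$ is diagonal the identity $-L_U=J(U)-\diag(J(U)^\top\e)$ collapses to $L_U=0$.

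The crux \ref{it:pure-ham-eig} $\Rightarrow$ \ref{it:pure-ham-id} I would prove by computing the trace of $-L$ as a linear operator on $\C^{n\times n}$, which I write $\operatorname{Tr}(-L)$. Using that a superoperator $\rho\mapsto A\rho B$ vectorizes to $A\otimes B^\top$, whose trace is $\tr(A)\tr(B)$, and applying this termwise to $-L(\rho)=-\iu[H_0,\rho]+\sum_k\big(V_k\rho V_k^*-\tfrac12(V_k^*V_k\rho+\rho V_k^*V_k)\big)$ (the Hamiltonian part contributing $0$), I get
\[
\operatorname{Tr}(-L)=\sum_{k=1}^r\big(|\tr V_k|^2-n\,\tr(V_k^*V_k)\big).
\]
This number is real, and by Cauchy--Schwarz $|\tr V_k|^2=|\langle\mathds1,V_k\rangle|^2\le\langle\mathds1,\mathds1\rangle\,\langle V_k,V_k\rangle=n\,\tr(V_k^*V_k)$, with equality iff $V_k$ is a multiple of $\mathds1$; hence $\operatorname{Tr}(-L)\le0$ with equality iff every $V_k$ is scalar. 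On the other hand $\operatorname{Tr}(-L)$ is the sum of the eigenvalues of $-L$, so if these all lie on the imaginary axis it is purely imaginary; being also real, it vanishes, whence each summand vanishes and each $V_k$ is scalar. Since $\operatorname{Tr}(-L)$ depends only on $-L$ and not on the chosen representation, this conclusion holds for every choice of Lindblad terms simultaneously, which is precisely the ``some (equivalently: each)'' content of \ref{it:pure-ham-id}.

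The main obstacle is really just isolating this one representation-independent invariant: once the superoperator trace is recognized as $\sum_k\big(|\tr V_k|^2-n\,\tr(V_k^*V_k)\big)$, the sign analysis and the rigidity of the Cauchy--Schwarz equality case do all the work, and no heavier spectral machinery (peripheral spectrum, compact semigroups of channels, etc.) is needed. The only subtlety to state carefully is that Hermiticity-preservation of $-L$ makes its spectrum symmetric about the real axis, so $\operatorname{Tr}(-L)$ is automatically real and the imaginary-spectrum hypothesis genuinely pins it to $0$.
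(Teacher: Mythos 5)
Your proposal is correct, and it diverges from the paper's proof precisely at the one implication that requires substance. The paper handles (ii)$\Rightarrow$(i) by citing an external structure theorem (\cite[Thm.~18]{BNT08b}), and it obtains the ``some/each'' content of (iii) from the well-definedness of the relaxation algebra (Lemma~\ref{lemma:lindblad-alg-well-def}); the remaining links (iii)$\Leftrightarrow$(iv), (iii)$\Rightarrow$(v), and (v)$\Rightarrow$(iv) via Lemma~\ref{lemma:LU-zeros} are dispatched exactly as quickly as you dispatch yours. Your route replaces the citation by a self-contained invariant: the superoperator trace $\operatorname{Tr}(-L)=\sum_{k}\big(|\tr V_k|^2-n\,\tr(V_k^*V_k)\big)$, which is manifestly representation-independent, non-positive by Cauchy--Schwarz, and zero exactly when every $V_k$ is scalar; combined with the realness of the trace (forced by Hermiticity-preservation, or read off from the formula itself), the imaginary-spectrum hypothesis pins it to zero and the Cauchy--Schwarz equality case does the rest. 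This buys two things: the argument is elementary and free of outside machinery, and the ``each choice'' clause of (iii) falls out automatically because the trace is an invariant of $-L$ rather than of a representation---so you do not need the separate well-definedness lemma that the paper leans on. What the paper's approach buys in exchange is brevity and consistency: \cite[Thm.~18]{BNT08b} and Lemmas~\ref{lemma:lindblad-alg-well-def} and~\ref{lemma:LU-zeros} are already in place for other results, so the proof reduces to cross-references. Both arguments are sound; yours would stand alone in a paper that did not already carry that infrastructure.
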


\begin{proof}
\ref{it:pure-ham-def} $\Rightarrow$ \ref{it:pure-ham-eig} is obvious and the converse follows from~\cite[Thm.~18]{BNT08b}.
\ref{it:pure-ham-def} $\Leftrightarrow$ \ref{it:pure-ham-id}: This is clear from the fact that the relaxation algebra is well-defined, cf. Lemma~\ref{lemma:lindblad-alg-well-def}.
\ref{it:pure-ham-id} $\Leftrightarrow$ \ref{it:pure-ham-lazy}: Elementary.
\ref{it:pure-ham-id} $\Rightarrow$ \ref{it:pure-ham-JUs}: Immediate from the definitions.
\ref{it:pure-ham-JUs} $\Rightarrow$ \ref{it:pure-ham-lazy}: Follows from Lemma~\ref{lemma:LU-zeros}.
\end{proof}

It is clear that purely Hamiltonian systems are always unital stabilizable, which in turn are always unital. Moreover, these inclusions are strict. 

%\marginpar{Maybe add unital relaxing to tie up intro remarks}

\subsection{Reachability}

Similar to stabilizability, reachability properties are also highly dependent on the structure of the relaxation algebra $\cV$. 
From Lemma~\ref{lemma:unital-systems}~\ref{it:unital-reach} we know that the reachable set is contained in the majorization polytope of the initial state.
Hence it is natural to ask when the reachable set is as large as it could be.
Due to the continuity of solutions, it is clear that in the reduced control system not every point in the majorization polytope can be reached. 
This however is an artifact of the Weyl symmetry, and one should really ask which states in the Weyl chamber of the initial state can be reached. %\marginpar{Relate to~\cite{Styliaris19}.}

\begin{theorem} %[Reachable set for unital systems] 
\label{thm:reach-unital}
Assume that $-L\in\wkl(n)$ is unital and consider an initial state $\lambda\in\Delta^{n-1}_\down$ in the ordered Weyl chamber. 
Then exactly one of the following is the case: \smallskip
\begin{enumerate}[(i)]
\item \label{it:unital-reach-Ham} If $-L$ is purely Hamiltonian, then $\overline{\reach_{\ref{eq:relaxed-control-system}}(\lambda)} = \{\lambda\}$.
\item \label{it:unital-reach-all} If $-L$ is of type $(2,1,\ldots,1)$ or type $(1,\ldots,1)$ (i.e.~unital stabilizable) but not purely Hamiltonian, then
$$\overline{\reach_{\ref{eq:relaxed-control-system}}(\lambda)}\cap\Delta^{n-1}_\down 
= \{\mu\in\Delta^{n-1}_\down:\mu\prec\lambda\}
%\quad \forall\lambda\in\Delta
\,.
$$
\item \label{it:unital-reach-strict} \marginpar{to do: either prove or relax this statement and specify below what "cover all possibilites" means}
If $-L$ is of type greater than $(2,1,\ldots,1)$, then 
$$
\overline{\reach_{\ref{eq:relaxed-control-system}}(\lambda)}\cap\Delta^{n-1}_\down  \begin{cases}
=\{\e/n\}  &\text{if }\lambda=\e/n\\
\subsetneq \{\mu\in\Delta^{n-1}_\down:\mu\prec\lambda\} 
%\quad \forall\lambda\in\Delta^{n-1} \text{ non-degenerate}
&\text{else}\,.
\end{cases}
$$
\end{enumerate}
%\marginpar{and hence for all $\lambda\neq\e/n$?}
\end{theorem}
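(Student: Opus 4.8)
The plan is to prove the common upper bound from unitality, dispose of the two trivial cases, and then split the work into the reachability (lower) bound for \ref{it:unital-reach-all} and the strictness for \ref{it:unital-reach-strict}. Case \ref{it:unital-reach-Ham} is immediate: if $-L$ is purely Hamiltonian then $L_U=0$ for all $U\in\SU(n)$ by Lemma~\ref{lemma:purely-Hamiltonian}~\ref{it:pure-ham-JUs}, so $\derv(\lambda)=\{0\}$ at every point and the only solution is constant, giving $\overline{\reach_{\ref{eq:relaxed-control-system}}(\lambda)}=\{\lambda\}$. For the upper bounds in \ref{it:unital-reach-all} and \ref{it:unital-reach-strict} I would invoke Lemma~\ref{lemma:unital-systems}~\ref{it:unital-reach}, which yields $\overline{\reach_{\ref{eq:relaxed-control-system}}(\lambda)}\subseteq\{\mu:\mu\prec\lambda\}$; intersecting with $\Delta^{n-1}_\down$ gives the containment $\subseteq\{\mu\in\Delta^{n-1}_\down:\mu\prec\lambda\}$ in both remaining cases. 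The subcase $\lambda=\e/n$ of \ref{it:unital-reach-strict} then follows from $\derv(\e/n)=\{0\}$ (Lemma~\ref{lemma:unital-systems}~\ref{it:unital-lambda}). This leaves exactly two genuine tasks: the reverse inclusion for \ref{it:unital-reach-all}, and the strict inclusion for \ref{it:unital-reach-strict} when $\lambda\neq\e/n$.

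For the lower bound in \ref{it:unital-reach-all} the engine is the Hardy--Littlewood--P\'olya theorem: every $\mu\prec\lambda$ is obtained from $\lambda$ by finitely many $T$-transforms, each mixing only two coordinates, so it suffices to realize an arbitrary $T$-transform between any prescribed pair as the time-$t$ flow of some $-M\in\conv\mf L$, chain these, and fold the path back into the ordered chamber using the permutation symmetry of~\eqref{eq:relaxed-control-system} (cf.~\cite[Lem.~A.2]{MDES23}). For type $(2,1,\ldots,1)$ I would put the Lindblad terms in canonical block form and observe that $-L_{\mathds1}$ is supported on the single $2\times2$ block; by unitality it is a \emph{balanced} (symmetric) $T$-transform generator on that coordinate pair, and the covariance $-L_{UP}=P^\top(-L_U)P$ transports it to any pair via a permutation $P$. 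For type $(1,\ldots,1)$ the $V_k$ may be taken simultaneously diagonal; writing $w_i$ for the vector of $i$-th diagonal entries across the $V_k$, a rotation in the $(i,j)$-coordinate plane produces a pure $T$-transform generator between $i$ and $j$ whenever $w_i\neq w_j$, and since $-L$ is not purely Hamiltonian at least two of the $w_i$ differ, so the mixing graph (with an edge wherever $w_i\neq w_j$) is connected. In both cases the flows of these generators realize all averaging strengths, so composing them reaches, in the closure, every $\mu\in\Delta^{n-1}_\down$ with $\mu\prec\lambda$.

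For the strictness in \ref{it:unital-reach-strict} with $\lambda\neq\e/n$ the idea is that a type strictly greater than $(2,1,\ldots,1)$ prevents isolating a pure $T$-transform: with either a non-scalar block of size $\geq3$ or at least two non-scalar blocks, the matrices $U^*V_kU$ cannot be simultaneously stripped of the off-diagonal entries coupling one chosen coordinate pair to the rest (they span a full matrix factor), so every nonzero $-L_U$ that moves mass between two coordinates must co-move mass with a third. The plan is to upgrade this qualitative fact to a quantitative obstruction in the spirit of Lemmas~\ref{lemma:bounded-flow-ratio} and~\ref{lemma:bounded-s-flow-ratio}: exhibit a face of the majorization polytope, adapted to the block structure, through which the relaxed dynamics cannot push $\lambda$, so that a suitable $\mu\prec\lambda$ in the Weyl chamber stays unreachable, while $\e/n$ remains reachable by Proposition~\ref{prop:closure-of-reach-contractible}. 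This last step I expect to be the main obstacle: as noted in Remark~\ref{rmk:directly-reach}, the one-dimensional flow-ratio bound of Lemma~\ref{lemma:bounded-flow-ratio} does not obviously extend to invariant subspaces of dimension greater than one, so producing an explicit unreachable $\mu$ uniformly over all types $>(2,1,\ldots,1)$ and all $\lambda\neq\e/n$ is the delicate part of the argument.
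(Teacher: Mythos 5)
Your handling of case \ref{it:unital-reach-Ham}, of the majorization upper bound via Lemma~\ref{lemma:unital-systems}~\ref{it:unital-reach}, of the subcase $\lambda=\e/n$, and of the lower bound in case \ref{it:unital-reach-all} is correct and essentially the paper's own route: the paper likewise produces a single non-diagonal $L_U$ that is block diagonal with blocks of sizes $(2,1,\ldots,1)$ (this subsumes type $(1,\ldots,1)$, where one rotates inside a coordinate plane on which the diagonal entries of the $V_k$ differ), notes that unitality forces the $2\times 2$ block to be a symmetric two-level mixing generator, transports it to arbitrary coordinate pairs via $L_{U\pi}=\pi^\top L_U\pi$, and concludes with \cite[Ch.~2, Thm.~B.6]{MarshallOlkin}. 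Your separate treatment of the two types, and the connectivity of your ``mixing graph'', are harmless variants of the same T-transform argument.

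The genuine gap is case \ref{it:unital-reach-strict} with $\lambda\neq\e/n$, which you leave open by your own admission. You isolate the correct qualitative fact---for type greater than $(2,1,\ldots,1)$ every non-zero element of $\conv\mf L$ moves at least three coordinates of $\lambda$, the largest of which strictly decreases and the smallest of which strictly increases (exactly the mechanism in the proof of Theorem~\ref{thm:stab-unital})---but you then reach for a higher-dimensional analogue of the flow-ratio bound of Lemma~\ref{lemma:bounded-flow-ratio}, which, as Remark~\ref{rmk:directly-reach} concedes, is not available; consequently no unreachable $\mu$ is ever exhibited. The paper needs no flow-ratio machinery here. It observes that the edges of the majorization polytope incident to $\lambda$ are precisely the directions mixing two \emph{neighboring} entries of $\lambda$, so by the qualitative fact above no achievable derivative can point along such an edge. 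One then takes a unit vector $v$ along such an edge and a linear functional $\alpha$ (normalized so that $\alpha(\lambda)=0$) with $\alpha(v)=0$ which is strictly positive on the rest of the polytope; continuity of $(U,\mu)\mapsto -\alpha(L_U\mu)$ and compactness yield a uniform bound $-\alpha(L_U\mu)\geq\epsilon>0$ on a compact neighborhood $V$ of $\lambda$ disjoint from $\stab_{\ref{eq:simplex-control-system}}$, whence any solution of \eqref{eq:relaxed-control-system} starting at $\lambda$ satisfies $\alpha(\mu(T))\geq T\epsilon>0$ while it remains in $V$. Thus the points of that edge lying outside $V$---which are ordered and majorized by $\lambda$---are not in $\overline{\reach_{\ref{eq:relaxed-control-system}}(\lambda)}$, giving the strict inclusion. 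This local supporting-hyperplane argument is the missing step: it realizes the ``face through which the dynamics cannot push $\lambda$'' that you were looking for, but the face is an edge and the obstruction is proved pointwise near $\lambda$, not through a global flow-ratio estimate.
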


\begin{proof}
It is clear that the three cases are mutually exclusive and cover all possible $-L$.
\ref{it:unital-reach-Ham}: This follows most easily from Lemma~\ref{lemma:purely-Hamiltonian}~\ref{it:pure-ham-JUs}.
\ref{it:unital-reach-all}: Let $U$ be a unitary such that $L_U$ is block diagonal with blocks of sizes $(2,1,1,\ldots,1)$ (this includes the $(1,1,\ldots,1)$ case). 
Because $L$ is not purely Hamiltonian, we can choose $L_U$ such that it is not diagonal, and thus the off-diagonal elements in the block of size $2$ are non-zero.
In addition, we have access to all $L_{U\pi}=\pi^\top L_U\pi$ where $\pi$ is any permutation matrix meaning we have (approximate) access to arbitrary two-level weight shifts (also called ``T-transforms''). 
By \cite[Ch.~2, Thm.~B.6]{MarshallOlkin} this implies the claim.
%this means that we can move along the edges of the majorization polytope within the ordered Weyl chamber, and with this all states in the Weyl chamber are approximately reachable.
%Thus we can approximate any $T$-transform using $M_U$ and permutations. Since any state in the majorization polytope of $\lambda$ is reachable from $\lambda$ using a finite sequence of $T$-transforms, we have that $\overline{\reach(\lambda)}  = \{\mu:\mu\prec\lambda\}$.

\ref{it:unital-reach-strict}:
W.l.o.g.~$\lambda\neq\e/n$.
By assumption, $M_U$ must always have either one block of size at least $3$ or at least two blocks of size at least $2$. 
This means that at least $3$ elements of $\lambda$ will be acted on by $M_U$. 
In particular, as in the proof of Theorem~\ref{thm:stab-unital}, the largest (smallest) of these elements will have a strictly negative (positive) derivative. 
Since the edges of the majorization polytope incident to $\lambda$ correspond to the mixing of two neighboring elements in $\lambda$, no derivative pointing along an edge can be achieved. 
%
%Since $\dot\Uplambda(\lambda)$ is compact, there cannot be a sequence of achievable derivatives converging to a derivative pointing along an edge. Normalizing all achievable derivatives yields another compact set $D(\lambda)$ with a strictly positive minimal distance to each unit vector pointing along an edge. 
%
Let $v$ be a unit vector pointing along one of the edges. Let $\alpha$ be a linear functional with $\alpha(v)=0$ such that $\alpha$ is strictly positive on any point in the majorization polytope that does not lie on the edge defined by $v$. Consider the function
$
f_{U,\alpha}(\lambda) = -\alpha(L_U\lambda)
$
on a compact neighborhood $V$ of $\lambda$ which is disjoint from $\stab_{\ref{eq:simplex-control-system}}$. 
%Then $f$ is well-defined and continuous, and 
%by Lemma~\ref{lemma:min-max-continuous}, 
One can show that $f^-(\lambda) = \min_U f_{U,\alpha}(\lambda)$ is continuous on $V$. By shrinking $V$ (while keeping $\lambda$ inside) we can assume that $f^-(\lambda)\geq\epsilon$ on $V$ for some $\epsilon>0$.
Let $\mu(t)$ be any solution to~\eqref{eq:relaxed-control-system} starting at $\lambda$, and let $T>0$ be such that $\mu(t)\in V$ for all $t\in[0,T]$. Then
$$
\alpha(\mu(T)) 
= \int_0^T \partial_t \alpha(\mu(t))dt
= \int_0^T \alpha (\partial_t \mu(t))dt
\geq T\epsilon > 0\,.
$$
Thus no point on an edge incident to $\lambda$ which is not in $V$ can be in $\overline{\reach_{\ref{eq:relaxed-control-system}}(\lambda)}$.
\end{proof}
This recovers the main result of~\cite{Styliaris19} where it was shown, using similar arguments, and based on~\cite{rooney2018}, that a generator $-L$ can effect all state transfers respecting majorization if and only if we are in case~\ref{it:unital-reach-all}. 
Moreover, the case of a single Lindblad term $V_k$ which is not a multiple of the identity (again covered by case~\ref{it:unital-reach-all}) continues to hold in the infinite dimensional case~\cite{OSID19}.

Note that the results of Theorem~\ref{thm:reach-unital} immediately lift to the full control system~\eqref{eq:bilinear-control-system} via Proposition~\ref{prop:reach-equiv} and using the fact that $\rho\prec\sigma$ if and only if $\spec^\down(\rho)\prec\spec^\down(\sigma)$.

\begin{corollary}
For $-L$ unital, the reduced control system is reverse coolable if and only if case~\ref{it:unital-reach-all} of Theorem~\ref{thm:reach-unital} is satisfied.
\end{corollary}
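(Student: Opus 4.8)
The plan is to read off the result directly from Theorem~\ref{thm:reach-unital} by specializing the initial state to the pure state $\lambda=e_1$ and using that $e_1$ is the majorization-maximal point of the simplex. First I would unwind the definition: the reduced system is reverse coolable precisely when $\overline{\reach_{\ref{eq:simplex-control-system}}(e_1)}\supseteq\Delta^{n-1}_\down$. Since $e_1$ lies in the ordered Weyl chamber and since $\mu\prec e_1$ holds for every $\mu\in\Delta^{n-1}$, the majorization polytope of $e_1$ intersected with the chamber is the whole chamber, i.e.~$\{\mu\in\Delta^{n-1}_\down:\mu\prec e_1\}=\Delta^{n-1}_\down$. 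Recalling that~\eqref{eq:simplex-control-system} and~\eqref{eq:relaxed-control-system} share the same reachable sets after closure, reverse coolability is therefore equivalent to $\overline{\reach_{\ref{eq:relaxed-control-system}}(e_1)}\cap\Delta^{n-1}_\down=\Delta^{n-1}_\down$.

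Next I would run the trichotomy of Theorem~\ref{thm:reach-unital} against this condition, using that the three cases are mutually exclusive and exhaustive. In case~\ref{it:unital-reach-Ham} (purely Hamiltonian) one has $\overline{\reach_{\ref{eq:relaxed-control-system}}(e_1)}=\{e_1\}$, which for $n\geq2$ is a proper subset of $\Delta^{n-1}_\down$, so the system is not reverse coolable. In case~\ref{it:unital-reach-all} (type $(2,1,\ldots,1)$ or $(1,\ldots,1)$, not purely Hamiltonian) the theorem gives $\overline{\reach_{\ref{eq:relaxed-control-system}}(e_1)}\cap\Delta^{n-1}_\down=\{\mu\in\Delta^{n-1}_\down:\mu\prec e_1\}=\Delta^{n-1}_\down$, so the system is reverse coolable. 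Finally, in case~\ref{it:unital-reach-strict} (type strictly above $(2,1,\ldots,1)$), since $e_1\neq\e/n$ for $n\geq2$, the ``else'' branch applies and yields $\overline{\reach_{\ref{eq:relaxed-control-system}}(e_1)}\cap\Delta^{n-1}_\down\subsetneq\{\mu\prec e_1\}=\Delta^{n-1}_\down$, so again the system fails to be reverse coolable. Hence reverse coolability holds if and only if we are in case~\ref{it:unital-reach-all}.

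There is no substantial obstacle here, as the statement is essentially a corollary of the already-proven reachability trichotomy; the only points requiring care are the two book-keeping facts that drive the argument, namely that $e_1$ sits at the top of the majorization order (so its majorization polytope exhausts $\Delta^{n-1}_\down$) and that the hypothesis $n\geq2$ guarantees $e_1\neq\e/n$, which is exactly what places us in the strict branch of case~\ref{it:unital-reach-strict} rather than the trivial $\{\e/n\}$ branch. I would also make explicit the appeal to the fact that the closures of the reachable sets of~\eqref{eq:simplex-control-system} and~\eqref{eq:relaxed-control-system} coincide, so that a statement proven for the relaxed system~\eqref{eq:relaxed-control-system} transfers to the reverse coolability condition, which is phrased in terms of~\eqref{eq:simplex-control-system}.
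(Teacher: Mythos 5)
Your proposal is correct and is exactly the argument the paper intends: the corollary is stated without proof as an immediate consequence of Theorem~\ref{thm:reach-unital}, obtained by specializing to $\lambda=e_1$, noting $\mu\prec e_1$ for all $\mu\in\Delta^{n-1}$, and checking the three mutually exclusive cases. Your added care about the coincidence of closures of the reachable sets of~\eqref{eq:simplex-control-system} and~\eqref{eq:relaxed-control-system} and about $e_1\neq\e/n$ simply makes explicit what the paper leaves implicit.
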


\subsection{Accessibility} \label{sec:unital-access}

In our discussion of accessibility in Section~\ref{sec:accessible} we forwent the unital case, which we will treat now.

\begin{proposition}\label{prop:unital-dir-access}
Let $-L\in\wkl(n)$ unital be given.
If $n=2$, the reduced control system~\eqref{eq:simplex-control-system} is generically (cf.~footnote~\ref{footnote:generic})
directly accessible if and only if $-L$ is not purely Hamiltonian.
If $n>2$, then exactly one of the following holds.\smallskip
\begin{enumerate}
\item The reduced control system~\eqref{eq:simplex-control-system} is generically directly accessible.
\item The reduced control system~\eqref{eq:simplex-control-system} is nowhere directly accessible and each $J(U)$ has identical off-diagonal elements.
\end{enumerate}
\end{proposition}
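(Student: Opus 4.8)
The plan is to first establish the stated ``exactly one of'' structure by a real--analyticity argument, and then to identify the two cases algebraically. Fix a basis $M_1,\dots,M_p$ of the linear span $W:=\linspan\mf L=\linspan\{-L_U:U\in\SU(n)\}$. Since $U\mapsto -L_U\lambda$ is real analytic, direct accessibility at $\lambda$ is equivalent to $\rank[\,M_1\lambda\mid\cdots\mid M_p\lambda\,]=n-1$, and the set where this maximal rank is attained is the complement of the common zero set of the $(n-1)\times(n-1)$ minors. Hence the directly accessible set is either empty or the complement of a proper real--algebraic subset of $\Delta^{n-1}$, i.e.\ open, dense and of full measure. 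This yields the dichotomy for every $n$. For $n=2$ one has $\R^n_0\cong\R$, so accessibility at $\lambda$ just means $\derv(\lambda)\neq\{0\}$; since unitality forces $-L_U=c(U)(\e\e^\top-2\mathds1)$, this holds at $\lambda\neq\e/2$ precisely when some $-L_U\neq 0$, i.e.\ when $-L$ is not purely Hamiltonian (Lemma~\ref{lemma:purely-Hamiltonian}).

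\textbf{Easy direction ($n>2$).} If every $J(U)$ has identical off--diagonal entries, with common value $c(U)$, then unitality forces the diagonal of $-L_U=J(U)-\diag(J(U)^\top\e)$ to equal $-(n-1)c(U)$, so $-L_U=c(U)\,T$ with $T:=\e\e^\top-n\mathds1$. Consequently $\derv(\lambda)=\{c(U)\,T\lambda\}\subseteq\R\,(\e-n\lambda)$, which has dimension at most $1<n-1$. Thus the system is nowhere directly accessible, which is case (2).

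\textbf{Hard direction: reduction ($n>2$).} For the converse suppose some $J(U_0)$ has two distinct off--diagonal entries and set $M:=-L_{U_0}$. By unitality $M$ is balanced ($M\e=0=\e^\top M$) with non--negative off--diagonals, and $M\notin\R T$. By permutation covariance $-L_{U_0P}=P^\top M P\in\mf L$, so $W$ contains the $S_n$--submodule $W_0=\linspan\{P^\top M P:P\in S_n\}$. Two facts drive the argument. First, the average $\bar M=\tfrac1{n!}\sum_P P^\top M P$ is $S_n$--central and balanced, hence $\bar M\in\R T$; since $\langle M,T\rangle=-n\,\tr M>0$ (non--negativity of the off--diagonals gives $\tr M<0$), the coefficient is nonzero and $T\in W_0$. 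As $T$ acts as $-n\,\mathds1$ on $\R^n_0$, the vector $\e-n\lambda$ lies in $\derv(\lambda)$ for every $\lambda$. Second, $W_0\supsetneq\R T$. By the dichotomy it now suffices to produce a \emph{single} $\lambda$ with $\dim W_0\lambda=n-1$.

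\textbf{The main obstacle.} For $n=3$ the reduction finishes cleanly: were no such $\lambda$ to exist, nowhere accessibility would give $M\lambda\parallel(P^\top M P)\lambda$ for all $\lambda$ and $P$; when $M$ is invertible on $\R^3_0$ this forces $P^\top M P=c_PM$, and conjugation--invariance of the (nonzero) trace gives $c_P=1$, so $M$ is $S_n$--central, i.e.\ $M\in\R T$, a contradiction, while the residual rank--one case is excluded because the permutations of a non--constant vector are not all parallel. The hard part is to carry this through for $n\geq 4$, where nowhere accessibility only says that $\{P^\top M P\lambda\}$ lies in a hyperplane (not a line) for each $\lambda$, so the pairwise argument fails. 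I would attack it representation--theoretically: decompose $W_0\subseteq V\otimes V$ (with $V=\R^n_0$) into isotypic components, use that $T\in W_0$ supplies the trivial part together with the distinguished direction $\e-n\lambda$, and show that non--negativity of $M$ forces the nontrivial part of $W_0$ to contain a component on which the evaluation $\lambda\mapsto(\text{spanning vectors})$ is generically injective. Positivity is essential here, and is exactly what must be exploited: it produced $T\in W_0$, and it excludes the degenerate low--dimensional modules (e.g.\ the sign representation, which for $n=3$ would satisfy $\dim W_0\lambda\equiv n-2$ but cannot be generated by a matrix with non--negative off--diagonals). Once generic surjectivity of the evaluation is established, the dichotomy upgrades ``accessible at one point'' to ``generically accessible'' (case (1)), and the contrapositive shows that nowhere accessibility implies every $-L_U\in\R T$, i.e.\ every $J(U)$ has identical off--diagonal elements.
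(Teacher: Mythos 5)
Your dichotomy-via-analyticity argument, the $n=2$ case, the easy direction, the reduction of the hard direction to exhibiting a \emph{single} $\lambda$ with $\dim W_0\lambda=n-1$, the averaging argument showing $T=\e\e^\top-n\mathds1\in W_0$, and the $n=3$ analysis are all sound. But the proof is genuinely incomplete for $n\geq4$, which is the bulk of the statement: the decisive step is only announced (``I would attack it representation-theoretically \dots and show that non-negativity of $M$ forces the nontrivial part of $W_0$ to contain a component on which the evaluation is generically injective''), and it is never carried out. As you observe yourself, for $n\geq4$ nowhere-accessibility only confines each $W_0\lambda$ to a hyperplane of $\R^n_0$ rather than a line, so the pairwise-parallelism trick that closes $n=3$ is unavailable, and nothing in the proposal replaces it. A stated plan at exactly the point of difficulty is a gap, not a proof.

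The missing idea is elementary and works uniformly in $n$: test direct accessibility at a \emph{vertex} of the simplex instead of at a generic interior point (this is how the paper argues). Under unitality one has $L_{U_0}\e=0$, so if $M=-L_{U_0}\notin\R T$ then some column of $J(U_0)$, say the $j$-th, must be non-constant off the diagonal: were every column constant off-diagonal with value $c_j$ in column $j$, the vanishing row sums of $M$ would give $\sum_{i\neq j} c_i=(n-1)c_j$ for every $j$, forcing all $c_j$ to coincide and hence $M\in\R T$. Now set $v:=Me_j$; then $v_i\geq0$ for $i\neq j$, $v_j=-\sum_{i\neq j}v_i<0$, and $(v_i)_{i\neq j}$ is non-constant. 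For every permutation $P$ fixing $e_j$ we have $P^\top MPe_j=P^\top v\in W_0e_j$, and the span of this orbit under the stabilizer $S_{n-1}$ contains all differences $e_a-e_b$ with $a,b\neq j$ (take $\sigma v-\tau_{ab}\sigma v$ for suitable transpositions $\tau_{ab}$) together with $v$ itself; since $v_j\neq0$, this span is all of $\R^n_0$. Hence the system is directly accessible at $e_j$, and your own dichotomy upgrades this to generic direct accessibility, completing the hard direction for every $n>2$. The same vertex test also delivers the negative branch directly: if every column of every $J(U)$ is constant off the diagonal, unitality forces all off-diagonal entries of each $J(U)$ to coincide, which is your case (2).
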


\begin{proof}
If $n=2$ and $-L$ is not purely Hamiltonian, then by Lemma~\ref{lemma:purely-Hamiltonian} there is some $U\in\SU(2)$ such that $L_U$ is not identically $0$. 
But then $L_U$ vanishes at only one point on $\Delta^1$, and hence reduced control system is generically directly accessible.

Now let $n>2$.
We consider accessibility at the vertices $e_i$ of the simplex.
If there is some $i\in\{1,\ldots,n\}$ and some $U\in\SU(n)$ such that $-L_Ue_i$ is not invariant under all permutations stabilizing $e_i$, then the system is accessible at $e_i$ and hence using~\cite[Prop.~4.14]{MDES23} it is generically directly accessible.
If no such $e_i$ and $U$ exist, then all columns of all $J(U)$ are constant on their off-diagonal elements.
It is easy to see that this implies that all off-diagonal elements of all $J(U)$ are the same.
In this case $\mf L$ is one dimensional, and hence the system is nowhere directly accessible.
\end{proof}
As in the non-unital case, if the reduced system~\eqref{eq:simplex-control-system} is generically directly accessible, then by~\cite[Prop.~4.15]{MDES23} the full bilinear system~\eqref{eq:bilinear-control-system} is generically accessible.

A sufficient condition for being nowhere directly accessible is the following:

\begin{corollary}
If the non-Hamiltonian part of $-L$ is unitarily invariant, then $\mf L$ contains a single vector field, which is necessarily permutation invariant. If, in addition, $n>2$, then the system is nowhere directly accessible.
\end{corollary}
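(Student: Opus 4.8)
The plan is to show that unitary invariance of the dissipative part forces the induced vector field $-L_U$ to not depend on $U$, and then to read off all three assertions from this single fact. Write $-L=-\iu\ad_{H_0}-D$ with non-Hamiltonian part $D:=\sum_{k=1}^r\Gamma_{V_k}$, so that ``unitarily invariant'' means $\Ad_W\circ D=D\circ\Ad_W$ for all $W\in\U(n)$. As a preliminary consistency check with the surrounding (unital) subsection, I would note that this hypothesis already forces $-L$ to be unital: applying the identity to $\mathds1$ gives $\Ad_W(D(\mathds1))=D(\mathds1)$ for all $W$, so $D(\mathds1)$ commutes with every unitary and is thus a multiple of $\mathds1$; since $D(\mathds1)=-\sum_k[V_k,V_k^*]$ is traceless, it must vanish.

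The key step is the collapse of the $U$-dependence. From the definition~\eqref{eq:LU}, $-L_U=-\Pi_{\diag}\circ\Ad_U^{-1}\circ L\circ\Ad_U\circ\diag$. The Hamiltonian term contributes nothing, because for any diagonal matrix $\diag(\lambda)$ the diagonal entries of $\iu[\Ad_U^{-1}(H_0),\diag(\lambda)]$ all vanish and are therefore annihilated by $\Pi_{\diag}$; hence $-L_U=-\Pi_{\diag}\circ\Ad_U^{-1}\circ D\circ\Ad_U\circ\diag$. Invoking unitary invariance with $W=U$ gives $\Ad_U^{-1}\circ D\circ\Ad_U=D$, so $-L_U=-\Pi_{\diag}\circ D\circ\diag=:-L_\star$ is manifestly independent of $U$, i.e.\ $\mf L=\{-L_\star\}$ is a singleton. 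Permutation invariance then follows from the equivariance $PL_UP^\top=L_{UP^\top}$ (\cite[Lem.~A.2]{MDES23}): since every $L_U$ equals $L_\star$, this reads $PL_\star P^\top=L_\star$ for every permutation matrix $P$, i.e.\ $L_\star$ commutes with all permutations.

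Finally, for the accessibility claim when $n>2$, observe that $\derv(\lambda)=\mf L\lambda=\{-L_\star\lambda\}$ is a single vector at every $\lambda$, so $\dim\linspan(\derv(\lambda))\le1$; since $\R^n_0$ has dimension $n-1\ge2$, direct accessibility (which demands $\linspan(\derv(\lambda))=\R^n_0$) can hold at no point, placing us in case~(2) of Proposition~\ref{prop:unital-dir-access}. I expect no real obstacle in this argument: the only step requiring a moment's care is the vanishing of the Hamiltonian contribution to $-L_U$, a one-line computation on diagonal entries, which is precisely what allows unitary invariance of the dissipative part alone to suffice; the remainder is bookkeeping with~\eqref{eq:LU} and the known equivariance of $U\mapsto L_U$.
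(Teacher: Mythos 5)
Your proof is correct and follows exactly the argument the paper leaves implicit (the corollary is stated without proof as an immediate consequence): unitary invariance collapses $-L_U=-\Pi_{\diag}\circ\Ad_U^{-1}\circ L\circ\Ad_U\circ\diag$ to a single $U$-independent vector field since the Hamiltonian part is killed by $\Pi_{\diag}$, permutation invariance follows from the equivariance $PL_UP^\top=L_{UP^\top}$, and for $n>2$ a one-dimensional $\mf L$ can never span $\R^n_0$, which is precisely the observation used at the end of the proof of Proposition~\ref{prop:unital-dir-access}. Your added check that the hypothesis forces unitality is a nice touch consistent with the corollary's placement in the unital section.
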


\section{Conclusions}

We have established a rigorous correspondence principle between the dynamics of the reduced control system~\eqref{eq:simplex-control-system} and its parent, 
the full bilinear control system~\eqref{eq:bilinear-control-system} based on interrelating their solutions to the respective equations 
of motion in the Equivalence Theorem~\ref{thm:equivalence}.

For key quantum engineering questions we have given answers for both systems~\eqref{eq:simplex-control-system} 
and~\eqref{eq:bilinear-control-system}:\smallskip
\begin{enumerate}[(1)]
\item {\em Stabilizability:} Theorem~\ref{thm:stab-system} characterizes stabilizability of the reduced system in terms of simultaneously triangularizable Lindblad terms $\{V_k\}$---and via Lie's Theorem to solvable Lie algebras.
On the full system~\eqref{eq:bilinear-control-system} this implies approximate viability of every $\SU(n)$-orbit.
\item Generic {\em Accessibility} of~\eqref{eq:bilinear-control-system} is settled for non-unital systems in Prop.~\ref{prop:accessibility}, and for unital systems in Prop.~\ref{prop:unital-dir-access}.
\item Asymptotic {\em Coolability:} Theorem~\ref{thm:asymptotic-coolability} relates coolability of 
both~\eqref{eq:simplex-control-system} and~\eqref{eq:bilinear-control-system} to common right eigenvectors 
of the $\{V_k\}_{k=1}^r$ that are not common left eigenvectors. If a system is coolable, then a time-independent Hamiltonian is sufficient.
\item Approximate {\em Controllability:} Proposition~\ref{prop:approx-ctrl} relates approximate controllability for~\eqref{eq:simplex-control-system} and~\eqref{eq:bilinear-control-system} to coolability and reverse coolability in~\eqref{eq:simplex-control-system}.
\item {\em Reachability:} Theorem~\ref{thm:reach-unital} settles approximate reachability of all majorized states in unital systems for~\eqref{eq:simplex-control-system} and~\eqref{eq:bilinear-control-system}.
\end{enumerate}

\smallskip
More details of what can be concluded from the dynamics of the reduced system~\eqref{eq:simplex-control-system} for the full system~\eqref{eq:bilinear-control-system} are summarized in Table~\ref{tab:conclusion}.
\color{black}

%\medskip
%\BLUE{TODO:
%\begin{itemize}
%\item time-dependent vs. time-independent control (incorporate above)
%\item Outlook: infinite-dimensional systems
%\item fast control on subgroups only (quotient spaces) (e.g. entanglement, complicated)
% \item (relaxation algebra $\cV$ vs Lie wedge) T: Let us keep this for ourselves!
%\end{itemize}
%}

\begin{comment}
\begin{table}
\color{magenta}
\caption{Conclusion for Full System (B) from Reduced System (R)}
\begin{tabular}{l cr}
\hline\hline\\[-1.5mm]
  & ------------ correspondence ------------   &  \\
property	 & reduced system \eqref{eq:simplex-control-system} vs. full system \eqref{eq:bilinear-control-system}  &  ref. \\[1mm]
\hline\\[-1mm]
solution to eqn.\ of motion & one to one  & Thm.~\ref{thm:equivalence} \\[0.5mm] 
viable faces & ??? &Prop.~\ref{prop:viable-faces}\\[0.5mm]
stabilizability & via $\{V_k\}$ forming nilpot. alg. &Thm.~\ref{thm:stab-system}\\[0.5mm]
accessibility &one to one &Prop.~\ref{prop:accessibility} \\[0.5mm]
reachability & ??? & TODO \\[1.5mm]
asymptotic coolability & one to one &Thm.~\ref{thm:asymptotic-coolability} \\[0.5mm]
reverse coolability & ??? &Prop.~\ref{prop:rev-coolable} \\[1.5mm]

unital stabilizability & ??? &Thm.~\ref{thm:stab-unital}\\[0.5mm]
unital reachability & ??? &Thm.~\ref{thm:reach-unital}\\[0.5mm]
unital accessibility & ??? &Prop.~\ref{prop:unital-dir-access}\\[0.5mm]
\hline\hline\\[-1.5mm]
\multicolumn{3}{l} \mbox{comment: can also use one to one on the orbit}
\end{tabular}
\end{table}
\end{comment}

\begin{sidewaystable}
\centering
\def\arraystretch{1.3}
\caption{The Equivalence Theorem~\ref{thm:equivalence} shows how %that 
the full control system~\eqref{eq:bilinear-control-system} and the reduced control system~\eqref{eq:simplex-control-system} are equivalent. %[in a certain sense on the level of solutions]. 
The table summarizes the equivalence for %the level of 
several important control-theoretic notions giving the corresponding %and give some 
algebraic conditions. %corresponding to the properties in question. 
}\vspace{-2mm}
\label{tab:conclusion}
\begin{tabular}{llll}
\hline\hline\\[-2.5mm]
Reduced control system~\eqref{eq:simplex-control-system} & Algebraic condition & Full control system~\eqref{eq:bilinear-control-system} & Reference \\[2.5mm]
\hline\hline\\[-2.5mm]
(regular) strongly stabilizable point & kernel of some $L_U$ & \makecell[tl]{(regular) strongly stabilizable\\ point in orbit} & Prop.~\ref{prop:ham-stab-is-strongly-stab}\\[5.0mm]
%\hline
viable face of dimension $d-1$ & lazy subspace of dimension $d$ & \makecell[tl]{collecting subspace via\\ Hamiltonian control} & Prop.~\ref{prop:viable-faces} \\[5.0mm]
%\hline
stabilizable system & simultaneously triangularizable & $\SU(n)$-orbits approximately viable & Thm.~\ref{thm:stab-system} \\[5.0mm]
%\hline
generically directly accessible & non-unital or accessible at vertex & generically accessible & \makecell[tl]{Prop.~\ref{prop:accessibility} \&\\ Prop.~\ref{prop:unital-dir-access}} \\[5.0mm]
%\hline
asymptotically coolable & common right but not left eigenvector & \makecell[tl]{asymptotically coolable with \\ time-independent Hamiltonian} & Thm.~\ref{thm:asymptotic-coolability} \\[5.0mm]
%\hline
directly reachable face & \makecell[tl]{flow ratios, \\collecting but not enclosing} & time-independent Hamiltonian & \makecell[tl]{Lem.~\ref{lemma:ham-reachable} \&\\ Prop.~\ref{prop:directly-reachable-faces}}\\[5.0mm]
%\hline
%indirectly reachable face &&& \\
reverse coolable & lazy subspaces of each dimension & approx. reachability from pure state & Prop.~\ref{prop:rev-coolable} \\[3.0mm]
%\hline
approx. controllable & --- & approx. controllable & Prop.~\ref{prop:approx-ctrl} \\[3.0mm]
\hline\\[-2.5mm]
%unital &&&\\
unital stabilizable point & relaxation $*$-algebra and refinement & \makecell[tl]{(regular) strongly stabilizable\\ point in orbit} & Thm.~\ref{thm:stab-unital} \\[2.0mm]
%\hline
\makecell[tl]{unital reachability of \\majorization polytope} & type of relaxation $*$-algebra & \makecell[tl]{approx. reachability of\\ majorized states} & Thm.~\ref{thm:reach-unital} \\[7.0mm]
\hline\hline\\[-4.5mm]
\multicolumn{4}{l}{NB: the entries in a row are not always exactly equivalent, but some conditions might only be necessary or sufficient. For precise}\\[-2.5mm]
\multicolumn{4}{l}{statements we refer the the relevant results.}
\end{tabular}
%\MAGE{NB: the entries in a row are not always exactly equivalent, but some conditions might only be necessary or sufficient. For precise statements we refer the the relevant results.}
\end{sidewaystable}

\section*{Acknowledgments}
The project was funded i.a.~by the Excellence Network of Bavaria under ExQM, by {\it Munich Quantum Valley} of the Bavarian State Government with funds from Hightech Agenda {\it Bayern Plus} (E.M., F.v.E. and T.S.H.), as well as the Einstein Foundation (Einstein Research Unit on quantum devices) and the {\sc math}+ Cluster of Excellence (F.v.E.).

\appendix

\section{Proof of the Equivalence Theorem~\ref{thm:equivalence}} \label{app:equivalence}
The proof of Theorem~\ref{thm:equivalence} has been carried out in greater generality in~\cite{MDES23}.
Here we only show how the bilinear control system~\eqref{eq:bilinear-control-system} can be reinterpreted in the setting of semisimple orthogonal symmetric Lie algebras. 
The Lie algebra in question is $\mf{sl}(n,\C)=\mf{su}(n)\oplus\mf{herm}_0(n,\C)$, where $\mf{herm}_0(n,\C)$ denotes the traceless Hermitian matrices. %, see for instance~\cite[Ex.~1.1]{Malvetti22-diag}.
Since for density matrices $\rho$ it holds that $\tr(\rho)=1$, we will consider the shifted operator $\bar\rho=\rho-\mathds1/n$ which satisfies $\bar\rho\in\mf{herm}_0(n,\C)$.
The corresponding shifted Kossakowski--Lindblad generator has the form $\bar L(\bar\rho):=L(\rho)=L(\bar\rho)+L(\mathds1)/n$, which is affine linear. 
Since $\Ad_U(\bar\rho)=\overline{\Ad_U(\rho)}$ and $\ad_H(\rho)=\ad_H(\bar\rho)$, the ``shifted'' control system
$
\dot{\bar\rho} = -(\ad_H+\bar L)(\bar\rho)
$
and~\eqref{eq:bilinear-control-system} are state space equivalent\footnote{Two control systems are state space equivalent if there is a smooth diffeomorphism between their state spaces which also maps the drift and control vector fields of one system to the other.}. 
After this transformation, the new system is in the form considered in~\cite{MDES23}, and hence we can define the equivalent reduced control system. 
The reduced state space is $\mf{diag}_0(n,\R)\cong\R^n_0$,
%:=\{x\in\R^n:x_1+\cdots+x_n=0\}$, 
and the induced vector fields are
$
-\bar L_U \bar\lambda = -\Pi_{\diag}\circ\Ad_U^{-1}\circ \bar L \circ \Ad_U\circ\diag(\bar\lambda).
$
Applying the definitions one finds that $L_U\lambda=\bar L_U\bar\lambda$. 
This shows that the ``shifted'' reduced control system is state space equivalent to the reduced control system~\eqref{eq:simplex-control-system}. 
This (linear) state space equivalence shows that show that~\cite[Thms.~3.8 \& 3.14]{MDES23} imply the Equivalence Theorem~\ref{thm:equivalence}. More generally, all results from~\cite{MDES23} presupposing an affine linear drift term can be applied to our system.

\begin{comment}
Now let us derive the explicit form of the induced vector fields given in~\eqref{eq:LU}.
Recall that $L=\iu\ad_{H_0}+\sum_{k=1}^r\Gamma_{V_k}$;
%It is easy to show that $\Ad_U^{-1}\circ L\circ\Ad_U = \iu\ad_{\Ad_U^{-1}(H_0)}+\sum_{k=1}^r\Gamma_{\Ad_U^{-1}(V_k)}$.
plugging this into~\eqref{eq:LU} we obtain
\begin{align} 
-(L_U)_{ij} 
&= \langle i| (\Ad_U^{-1}\circ  L \circ \Ad_U)(|j\rangle\langle j|)  |i\rangle \notag\\
%&= -\langle i} \Ad_U^\star(L) (\ketbra{j}) |i\rangle
&= \sum_{k=1}^r |(U^*V_kU)_{ij}|^2 - \delta_{ij} \sum_{k=1}^r (U^*V_k^*V_kU)_{ii}\notag
\\&= \sum_{k=1}^r |(U^*V_kU)_{ij}|^2 - \delta_{ij} \sum_{k=1}^r \sum_{\ell=1}^n|(U^*V_kU)_{\ell i}|^2
= J_{ij}(U) - \delta_{ij}\sum_{\ell=1}^nJ_{\ell i}(U)\,.\label{eq:J-comp}
\end{align}

Next we derive an explicit form of the compensating Hamiltonian $-\iu H_c=\ad_\rho^{-1}\circ\Pi_\rho^\perp\circ L(\rho)$.
Using basic equivariance properties and the fact that, by regularity of $\rho$, $\Pi_\lambda=\Pi_\diag$, we find
$-\iu H_c 
%=\ad_\rho^{-1}\circ\Pi_\rho^\perp\circ L
= \Ad_U\circ\ad_\lambda^{-1}\circ\Pi_{\diag}^\perp\circ\Ad_U^{-1}\circ L\circ\Ad_U(\diag(\lambda))$.
Hence, if we denote $\tilde V_k:=\Ad_U^{-1}(V_k)$, then this becomes
\begin{align} \label{eq:comp-ham}
-\iu(\Ad_U^{-1}(H_c+H_0))_{ij}
&=
\sum_{k=1}^r \frac{\langle i|\Gamma_{\tilde V_k}(\lambda)|j\rangle}{\lambda_i-\lambda_j}
\notag\\&=
\sum_{k=1}^r \frac{\frac{\lambda_i+\lambda_j}2 \langle i|\tilde V_k^*\tilde V_k|j\rangle - \sum_{\ell=1}^n\lambda_\ell\langle i|\tilde V_k|\ell\rangle\langle \ell|\tilde V_j^*|j\rangle}{\lambda_i-\lambda_j}
\end{align}
as desired.
\end{comment}

\section{Some Technical Results} \label{app:comps}
Here we collect some useful properties of the matrices $J(U)$ and $L_U$.

\begin{lemma} \label{lemma:LU-zeros}
Given arbitrary $\{V_k\}_{k=1}^r\subset\mathbb C^{n\times n}$ as well as any $i,j\in\{1,\ldots,n\}$ with $i\neq j$, the following statements hold.\smallskip
\begin{enumerate}[(i)]
\item \label{it:zeros-1} Given $\alpha_1,\ldots,\alpha_m>0$ with $\sum_{k=1}^m\alpha_k=1$ as well as $U_1,\ldots,U_m\in\SU(n)$
define $M:=\sum_{k=1}\alpha_kL_{U_k}$. If $M_{ij}=0$, then $(L_{U_k})_{ij}=0$ for all $k=1,\ldots,m$.
\item \label{it:zeros-2} If $(L_U)_{ij}=0$ for some $U\in\SU(n)$, then $(U^*V_kU)_{ij}=0$ for all $k=1,\ldots,r$.
\end{enumerate}
\end{lemma}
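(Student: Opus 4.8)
The plan is to reduce both statements to the explicit formula for the off-diagonal entries of the induced vector fields recorded in~\eqref{eq:def_JU}--\eqref{eq:J-comp}. The crucial observation is that for $i\neq j$ the diagonal correction term $\diag(J(U)^\top\e)$ contributes nothing, so that
\[
(-L_U)_{ij}=J_{ij}(U)=\sum_{k=1}^r|(U^*V_kU)_{ij}|^2\,,
\]
which is manifestly a finite sum of non-negative reals. In particular $(L_U)_{ij}\leq 0$ for every $U\in\SU(n)$ and every $i\neq j$, which is just the fact that each $-L_U$ generates a stochastic matrix.

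For part~\ref{it:zeros-2} I would argue directly from this formula: if $(L_U)_{ij}=0$, then $J_{ij}(U)=0$, and since a finite sum of non-negative terms vanishes only when each term does, we get $|(U^*V_kU)_{ij}|^2=0$, hence $(U^*V_kU)_{ij}=0$, for all $k=1,\ldots,r$.

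For part~\ref{it:zeros-1} I would apply the same sign information to the convex combination. For $i\neq j$ one writes $M_{ij}=\sum_{k=1}^m\alpha_k(L_{U_k})_{ij}=-\sum_{k=1}^m\alpha_k J_{ij}(U_k)$, where each summand $\alpha_k J_{ij}(U_k)$ is non-negative because $\alpha_k>0$ and $J_{ij}(U_k)\geq 0$. Hence $M_{ij}\leq 0$, and the hypothesis $M_{ij}=0$ forces every summand to vanish; since $\alpha_k>0$ this gives $J_{ij}(U_k)=0$, i.e.\ $(L_{U_k})_{ij}=0$, for all $k$.

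I do not expect a genuine obstacle here: the entire argument rests on the elementary principle that a finite sum of non-negative numbers is zero exactly when each term is zero. The only point requiring care is the restriction to off-diagonal indices $i\neq j$, which is precisely what lets one discard the diagonal term in~\eqref{eq:def_JU} and read off the sign structure of the entries; for $i=j$ the analogous claim fails since the diagonal entries of $-L_U$ are non-positive rather than non-negative.
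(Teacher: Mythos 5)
Your proposal is correct and follows essentially the same route as the paper's proof: the paper likewise handles part (ii) via the identity $(-L_U)_{ij}=J_{ij}(U)=\sum_{k=1}^r|(U^*V_kU)_{ij}|^2$ for $i\neq j$, and part (i) via the non-negativity of the off-diagonal entries of $-L_{U_k}$ (the stochastic-generator property), so that a vanishing convex combination forces each term to vanish. Your write-up is only slightly more explicit in tracing both sign facts back to the formula~\eqref{eq:def_JU}, which is a fine (and equivalent) way to present it.
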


\begin{proof}
\ref{it:zeros-1}: As $-L_{U_k}$ is the generator of a stochastic matrix
we know $(-L_{U_k})_{ij}\geq 0$ because $i\neq j$. Thus
$(-M)_{ij}=0$ implies $(-\alpha_kL_{U_k})_{ij}= 0$ for all $k$, hence $(L_{U_k})_{ij}=0$.
\ref{it:zeros-2}: Because $i\neq j$ we know that $(-L_U)_{ij}=(J(U))_{ij}$
so the former being zero forces $(U^*V_kU)_{ij}=0$ for all $k$.
\end{proof} 

\begin{lemma} \label{lemma:MU-row-column-sums}
Let $\{V_k\}_{k=1}^r$ be a family of $n$-dimensional Lindblad terms and let $U\in\U(n)$ be arbitrary.
% and let $J_{ij}(U)=\sum_k |(U^* V_k U)_{ij}|^2$, as before. Then t
The row and column sums of $J(U)$ (as defined in~\eqref{eq:def_JU}) are
$$
(J(U) \e)_i = \sum_{k=1}^r \langle u_i| V_k V_k^* |u_i\rangle, \quad  
(J(U)^\top \e)_j =\sum_{k=1}^r \langle u_j|V_k^* V_k |u_j\rangle 
$$
respectively, where $|u_i\rangle = U|i\rangle$. 
In particular if $\prec$ denotes standard majorization, then $J(U)\e\prec\spec^\down(\sum_{k=1}^r V_k V_k^*)$ and $J(U)^\top\e \prec\spec^\down(\sum_{k=1}^r V_k^*V_k)$. 
It follows that
\begin{equation*}
\begin{split}
((J(U)-J(U)^\top)\e)_i&=\sum_{k=1}^r \langle u_i| [V_k,V_k^*] |u_i\rangle\,,\\
((J(U)+J(U)^\top)\e)_i&=\sum_{k=1}^r \langle u_i| \{V_k,V_k^*\} |u_i\rangle\,,
\end{split}
\end{equation*}
where $\{\cdot,\cdot\}$ denotes the anti-commutator.
So $(J(U)-J(U)^\top)\e\prec \spec^\down(\sum_{k=1}^r [V_k,V_k^*])$ and $(J(U)+J(U)^\top)\e\prec\spec^\down(\sum_{k=1}^r\{V_k,V_k^*\})$.
\end{lemma}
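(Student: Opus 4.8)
The plan is to compute the row and column sums of $J(U)$ directly from its definition in~\eqref{eq:def_JU} and to recognize each as the vector of diagonal entries of a Hermitian matrix in the rotated orthonormal basis $|u_i\rangle=U|i\rangle$; the majorization claims then follow at once from the Schur--Horn theorem.

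First I would expand the $i$-th row sum. By definition $\langle i|J(U)|j\rangle=\sum_{k=1}^r|(U^*V_kU)_{ij}|^2$, so summing over $j$ gives $\sum_{k=1}^r\sum_j|(U^*V_kU)_{ij}|^2$. For each $k$ the inner sum is exactly the $(i,i)$ entry of $(U^*V_kU)(U^*V_kU)^*=U^*V_kV_k^*U$, which equals $\langle u_i|V_kV_k^*|u_i\rangle$. This yields $(J(U)\e)_i=\sum_{k=1}^r\langle u_i|V_kV_k^*|u_i\rangle$. The $j$-th column sum is handled symmetrically: summing $|(U^*V_kU)_{ij}|^2$ over $i$ produces the $(j,j)$ entry of $(U^*V_kU)^*(U^*V_kU)=U^*V_k^*V_kU$, i.e.\ $\langle u_j|V_k^*V_k|u_j\rangle$. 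The commutator and anticommutator identities for $(J(U)\mp J(U)^\top)\e$ are then immediate from linearity, using $V_kV_k^*-V_k^*V_k=[V_k,V_k^*]$ and $V_kV_k^*+V_k^*V_k=\{V_k,V_k^*\}$ termwise.

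For the majorization statements the key observation is that $\{|u_i\rangle\}_{i=1}^n$ is an orthonormal basis, since $U$ is unitary, so each of the four vectors above is the list of diagonal entries of a Hermitian operator with respect to this basis. The operators $\sum_k V_kV_k^*$ and $\sum_k V_k^*V_k$ are manifestly Hermitian (indeed positive semidefinite), and $\sum_k[V_k,V_k^*]$, $\sum_k\{V_k,V_k^*\}$ are Hermitian as well because each $[V_k,V_k^*]$ and each $\{V_k,V_k^*\}$ is self-adjoint. The Schur--Horn theorem~\cite{Schur23,Horn54} then gives that each diagonal vector is majorized by the non-increasingly ordered vector of eigenvalues of the corresponding operator, which is precisely the claimed relations involving $\spec^\down$.

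There is essentially no serious obstacle in this argument; it is a direct computation followed by a single invocation of Schur--Horn. The only point requiring a moment's care is verifying that the \emph{commutator} sum $\sum_k[V_k,V_k^*]$ is genuinely Hermitian, so that Schur--Horn is applicable at all—this holds because $[V_k,V_k^*]^*=(V_kV_k^*-V_k^*V_k)^*=V_kV_k^*-V_k^*V_k=[V_k,V_k^*]$, even though individual commutators of unrelated Hermitian-conjugate pairs need not be. Everything else is bookkeeping with the identities $(AA^*)_{ii}$ and $(A^*A)_{jj}$ for $A=U^*V_kU$.
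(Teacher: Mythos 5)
Your proposal is correct and follows essentially the same route as the paper's proof: a direct computation of the row and column sums (the paper writes $\sum_j|\langle u_i|V_k|u_j\rangle|^2=\langle u_i|V_kV_k^*|u_i\rangle$ via completeness of the basis $\{|u_j\rangle\}$, which is the same identity you obtain through $(AA^*)_{ii}$ for $A=U^*V_kU$), followed by a single invocation of the Schur--Horn theorem for all the majorization claims. Your extra remark verifying that $\sum_k[V_k,V_k^*]$ is Hermitian is a detail the paper leaves implicit, but there is no substantive difference in approach.
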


\begin{proof}
We compute the $i^\text{th}$ row sum
$$
(J(U)\e)_i
= \sum_{j=1}^n \langle i|J(U) |j\rangle
= \sum_{j,k=1}^{n,r} |\langle u_i| V_k |u_j\rangle|^2
%=\sum_{k=1}^r\|\langle u_i|V_k\|^2
= \sum_{k=1}^r \langle u_i| V_k V_k^* |u_i\rangle.
$$
The computation for the $j^\text{th}$ column is analogous, and the other claims follow immediately
using the Schur--Horn theorem \cite{Schur23,Horn54}.
\end{proof}

\section{Structure of the Lindblad Equation} \label{app:lindblad-eq}
In this section we recall some results pertaining to the structure theory of the Lindblad equation based primarily on~\cite{BNT08b}, taking care to present them in their proper mathematical context.
Since many of these results can also be found in other sources, such as~\cite{Kraus08,TV09,Schirmer10}, often with significantly differing terminology, we give precise definitions of all concepts used in this paper. 
See also~\cite{Deschamps16} and references therein for related results (obtained in the Heisenberg picture) which extend into the infinite-dimensional setting.
We introduce several matrix algebras generated by the Lindblad terms and
consider their invariant subspace lattices.
Once this structure is understood, we can modify the Lindblad equation to achieve certain dynamical properties.
%This will be done below in Appendix~\ref{app:time-indep} for systems with Hamiltonian control.

%The celebrated result by Lindblad~\cite{Lindblad76} and Kossakowski~\cite{GKS76} states that the generators of dynamical semigroups of completely positive, trace-preserving maps are exactly of the form $-L$ where $L=\iu\ad_{H_0}+\sum_{k=1}^r\Gamma_{V_k}$ with $\Gamma_{V_k}(\rho) = \frac{1}{2}(V_k^*V_k\rho+\rho V_k^*V_k)- V_k\rho V_k^*$. A convenient alternative representation of Kossakowski--Lindblad generators is the following:

The celebrated result by Gorini, Kossakowski, and Sudarshan~\cite{GKS76},
and Lindblad~\cite{Lindblad76}
establishes the form of the generators of so-called \emph{quantum-dynamical semigroups} (i.e.~of continuous maps $t\mapsto\Phi_t$ on $\mathbb R_+$ into the completely positive trace-preserving linear maps, which satisfy $\Phi_0={\mathds1}$ and $\Phi_t\circ\Phi_s=\Phi_{t+s}$ for all $t,s\geq0$).
%dynamical semigroups of completely positive trace-preserving maps. 
\begin{remark}
For our purposes \textnormal{completely positive dynamical semigroups} (that is, quantum-dynamical semigroups without the trace preservation condition) and their generators 
%the form~\eqref{eq:qds} is 
turn out to be more convenient.
Hence we will formulate many results for these more general objects.
\end{remark}
First, generators of completely positive dynamical semigroups can be characterized as follows, cf.~\cite[Thm.~3]{Lindblad76}:
%, \cite[Thm.~7.1]{Wolf_Course2012}: %F: I wouldn't cite lecture notes unless necessary

\begin{lemma} %\label{lemma:gksl-alt} %\marginpar{does $K$ have a name? F: not that I know of at least}
A linear map $-L\in\mc L(\C^{n\times n})$ is the generator of a completely positive dynamical semigroup if and only if there is some completely positive\footnote{Recall that a map $\phi$ is completely positive if and only if it can be written in the form $\sum_k V_k(\cdot)V_k^*$, and any such operators $V_k$ are called Kraus operators.} linear map $\phi=\sum_{k=1}^r V_k(\cdot)V_k^*$ and some $K\in\C^{n\times n}$ such that 
\begin{align} \label{eq:qds}
-L(\rho) = 
%\sum_{k=1}^rV_k\rho V_k^*
\phi(\rho)
 - K\rho - \rho K^*.
\end{align}
Moreover, $t\mapsto e^{-tL}$ is a quantum-dynamical semigroup if and only if $-L$ is additionally trace-annihilating, that is,~\eqref{eq:qds} holds and
%(and hence a Kossakowski--Lindblad generator) if and only if 
there exists $H_0\in\mathbb C^{n\times n}$ Hermitian such that
\begin{align} \label{eq:lindblad2}
K = \iu H_0 + \frac12\sum_{k=1}^rV_k^*V_k\,.
\end{align}
\end{lemma}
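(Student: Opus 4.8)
The plan is to prove the two equivalences in turn, reducing the trace-annihilation statement to a one-line computation once the dissipative form of the generator is in hand. Throughout I would associate to a linear map $\Psi\in\mc L(\C^{n\times n})$ its Choi matrix $\hat\Psi:=(\Psi\otimes\id)(|\Omega\rangle\langle\Omega|)\in\C^{n^2\times n^2}$, where $|\Omega\rangle:=\sum_{i=1}^n|i\rangle\otimes|i\rangle$, and recall that $\Psi$ is completely positive if and only if $\hat\Psi\geq 0$.

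For the ``if'' direction of the first equivalence I would write $-L=\phi+\Lambda$ with $\Lambda(\rho):=-K\rho-\rho K^*$. Then $e^{t\Lambda}(\rho)=e^{-tK}\rho(e^{-tK})^*$ is a single-Kraus conjugation, hence completely positive for every $t$, while for $t\geq 0$ the map $e^{t\phi}=\sum_{k\geq 0}\tfrac{t^k}{k!}\phi^k$ is a convergent sum of completely positive maps with non-negative coefficients (each $\phi^k$ being completely positive), hence completely positive. By the Lie--Trotter product formula $e^{-tL}=\lim_{m\to\infty}\big(e^{(t/m)\phi}e^{(t/m)\Lambda}\big)^m$, and since the cone of completely positive maps is closed under composition and under limits, $e^{-tL}$ is completely positive for all $t\geq 0$.

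The ``only if'' direction is the crux and is essentially Lindblad's theorem. Here $\Phi_t=e^{-tL}$ is completely positive by hypothesis, so $\hat\Phi_t\geq 0$ for all $t\geq 0$, while $\hat\Phi_0=|\Omega\rangle\langle\Omega|$. For any $|\psi\rangle\perp|\Omega\rangle$ the function $t\mapsto\langle\psi|\hat\Phi_t|\psi\rangle$ is non-negative and vanishes at $t=0$, so differentiating at $t=0$ gives $\langle\psi|G|\psi\rangle\geq 0$, where $G:=\tfrac{d}{dt}\hat\Phi_t|_{t=0}$ is the Choi matrix of $-L$. Thus $G$ is positive semidefinite on $|\Omega\rangle^\perp$, i.e.\ $-L$ is conditionally completely positive. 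Next, as $K$ ranges over $\C^{n\times n}$, the Choi matrices of the correction maps $\rho\mapsto -K\rho-\rho K^*$ range over exactly the Hermitian matrices $X$ whose compression $P_\perp X P_\perp$ vanishes, where $P_\perp:=\mathds1-|\Omega\rangle\langle\Omega|/n$. Since $P_\perp G P_\perp\geq 0$ is fixed, a matrix-completion (Schur complement) argument lets me choose such an $X$ that annihilates the $|\Omega\rangle$-to-$|\Omega\rangle^\perp$ block of $G$ and inflates the $|\Omega\rangle$-diagonal entry, so that $\hat\phi:=G+X\geq 0$. The map $\phi$ with Choi matrix $\hat\phi$ is then completely positive, and Choi's theorem yields $\phi=\sum_{k=1}^r V_k(\cdot)V_k^*$; unwinding the construction gives $-L(\rho)=\phi(\rho)-K\rho-\rho K^*$ for the corresponding $K$.

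Finally, for the second equivalence I would note that $t\mapsto e^{-tL}$ is trace-preserving for all $t\geq 0$ if and only if $\tr(-L(\rho))=0$ for all $\rho$ (differentiate $\tr(\Phi_t(\rho))$ at $t=0$ for one implication, integrate for the converse). Using the form just obtained, $\tr(-L(\rho))=\tr\big((\sum_{k=1}^r V_k^*V_k-K-K^*)\rho\big)$, which vanishes for all $\rho$ precisely when $K+K^*=\sum_{k=1}^r V_k^*V_k$, i.e.\ the Hermitian part of $K$ equals $\tfrac12\sum_{k=1}^r V_k^*V_k$. Writing the anti-Hermitian part as $\iu H_0$ with $H_0$ Hermitian then gives $K=\iu H_0+\tfrac12\sum_{k=1}^r V_k^*V_k$, as claimed. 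The main obstacle is the only-if direction of the first equivalence: extracting conditional complete positivity from the semigroup property and converting it, through the matrix-completion step, into the explicit dissipative form; everything else is routine.
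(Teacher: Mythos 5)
Your proof is correct, but the comparison here is asymmetric: the paper does not prove this lemma at all. It states it as a recollection of known results, deferring the first equivalence to Lindblad's theorem (\cite[Thm.~3]{Lindblad76}) and pointing to \cite[Thm.~14.7]{EL77} for the reformulation of \eqref{eq:qds} as conditional complete positivity. What you supply is the standard self-contained finite-dimensional argument that those references establish in greater generality: the ``if'' direction via the Trotter product formula and closedness of the cone of completely positive maps under composition, conic combination, and limits; the ``only if'' direction by differentiating the Choi matrix $\hat\Phi_t$ at $t=0$ to extract conditional complete positivity ($P_\perp G P_\perp\geq 0$), then absorbing the defect by a correction of the form $-K(\cdot)-(\cdot)K^*$ and invoking Choi's theorem for the Kraus form of $\phi$; and the trace condition by a one-line computation. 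All steps check out: your identification of the correction-map Choi matrices with the Hermitian $X$ satisfying $P_\perp XP_\perp=0$ is exactly right (both sets equal $\{|u\rangle\langle\Omega|+|\Omega\rangle\langle u|: u\in\C^{n^2}\}$, and this set is symmetric under negation, so your sign bookkeeping $\hat\phi=G+X$ versus $G-Y$ is harmless), and your ``Schur complement'' step is in fact simpler than advertised---cancelling the off-diagonal block and inflating the $|\Omega\rangle\langle\Omega|$ coefficient already produces a block-diagonal positive semidefinite matrix, so no Schur-complement analysis is needed (you do implicitly use that $G$ is Hermitian, which follows since it is a limit of the Hermitian matrices $(\hat\Phi_t-\hat\Phi_0)/t$). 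The trade-off between the two routes: the paper's citation keeps its appendix compact and rests on the general (C*-algebraic, dimension-independent) theory, whereas your argument is elementary and self-contained but exploits finite dimensionality through Choi's theorem---which is precisely the paper's setting, so it would serve as a legitimate stand-alone proof there.
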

Note that condition~\eqref{eq:qds} is also known as \textit{conditional complete positivity} of $-L$ \cite[Thm.~14.7]{EL77}.
Either way, plugging~\eqref{eq:lindblad2} into~\eqref{eq:qds} recovers the Lindblad equation given in the introduction~\eqref{eq:lindblad-equation}.\smallskip

It is important to note that the choice of Hamiltonian $H_0$ and of Lindblad terms $V_k$ is not unique. The following well-known result characterizes the freedom of representation of a given Kossakowski--Lindblad generator, cf.~\cite[Eq.~(3.72) \& (3.73)]{BreuPetr02}.
%, \cite[Prop.~7.4]{Wolf_Course2012}.
%\marginpar{\todo{check signs}} F: checked, everything is correct

\begin{lemma}
\label{lemma:freedom-of-reps}
Let $\{K,V_k:k=1,\ldots,r\}$ be a representation of some generator of a dynamical semigroup of completely positive maps. Given $\{V_k'\}_{k=1}^s$, if it holds that
\begin{align} \label{eq:unitary-shuffling}
V_k' = \sum_{j=1}^{r} u_{kj}V_j \quad\text{ for all }\, k=1,\ldots, s
\end{align}
for some $U:=(u_{kj})_{k,j=1}^{s,r}\in\mathbb C^{s\times r}$ which
satisfies either $UU^*=\mathds1_s$ (if $r\geq s$) or $U^*U=\mathds1_r$ (if $r\leq s$),
%entries of a unitary matrix of size $\max\{r,s\}$, and the smaller list of Kraus operators is padded by zeros, 
then $\{K,V_k:k=1,\ldots,r\}$ and $\{K,V'_k:k=1,\ldots,s\}$ are equivalent. Similarly, if $\{V_k'\}_{k=1}^s$ and $K'$ satisfy, for some $c_k\in\C$ and $\lambda\in\R$, that
\begin{equation} \label{eq:scalar-shift}
\begin{split}
V_k'&=V_k+c_k\mathds1\\
K'&=K+\sum_{k=1}^r\overline{c}_kV_k+\left(\iu\lambda + \frac{1}{2}\sum_{k=1}^r|c_k|^2\right)\mathds1\,,
\end{split}
\end{equation}
then $\{K,V_k:k=1,\ldots,r\}$ and $\{K',V'_k:k=1,\ldots,s\}$ are equivalent. Conversely, if two representations $\{K,V_k:k=1,\ldots V_r\}$ and $\{K',V_k':k=1,\ldots V_s\}$ define the same generator,
%such that all $V_k,V_k'$ are traceless, 
then they are related by some sequence of the transformations above.
\end{lemma}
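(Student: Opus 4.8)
The plan is to prove the two directions separately: the displayed transformations manifestly preserve the generator (an easy computation), whereas the converse---that these exhaust all freedom---is the substantive part and proceeds by reducing every representation to a canonical form whose data are uniquely determined by $-L$.

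\emph{Forward direction.} For the unitary shuffling \eqref{eq:unitary-shuffling} I would substitute $V_k'=\sum_j u_{kj}V_j$ into both the completely positive part $\phi'(\rho)=\sum_{k}V_k'\rho(V_k')^*$ and into $\sum_k (V_k')^*V_k'$. Collecting terms produces the Gram factor $\sum_k u_{kj}\overline{u_{kl}}=(U^*U)_{lj}$; after padding the shorter of the two lists with zero operators so that $U$ becomes a genuine unitary of size $\max(r,s)$, this equals $\delta_{jl}$, whence $\phi'=\phi$ and $\sum_k(V_k')^*V_k'=\sum_j V_j^*V_j$, so that keeping $K$ fixed indeed leaves $-L$ unchanged. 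For the scalar shift \eqref{eq:scalar-shift} I would expand $\phi'(\rho)=\sum_k(V_k+c_k\mathds1)\rho(V_k+c_k\mathds1)^*$ and check term by term that the extra contributions $\sum_k(\overline{c}_kV_k\rho+c_k\rho V_k^*+|c_k|^2\rho)$ are cancelled exactly by the change $K'\rho+\rho K'^*-(K\rho+\rho K^*)$ dictated by the prescribed $K'$; the purely imaginary part $\iu\lambda\mathds1$ drops out because it contributes $\iu\lambda\rho-\iu\lambda\rho=0$, which is why $\lambda$ is free. Both checks are short and purely computational.

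\emph{Converse direction.} Here I would fix, once and for all, a Hilbert--Schmidt orthonormal basis $\{F_\mu\}_{\mu=0}^{n^2-1}$ of $\C^{n\times n}$ with $F_0=\mathds1/\sqrt n$ and $\tr F_\mu=0$ for $\mu\geq1$. First I would use a scalar shift (\eqref{eq:scalar-shift} with $\lambda=0$ and $c_k=-\tr(V_k)/n$) to arrange that every Lindblad term of each representation is traceless; this is legitimate because it is one of the allowed moves, and it reduces the claim to the statement that two representations with traceless $V_k$ defining the same $-L$ differ only by a unitary shuffling together with a replacement $H_0\mapsto H_0+\lambda\mathds1$. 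Expanding $-L$ in the elementary maps $\rho\mapsto F_\mu\rho F_\nu^*$ gives $-L(\rho)=\sum_{\mu,\nu}a_{\mu\nu}F_\mu\rho F_\nu^*$, and because these maps are linearly independent and span $\mathcal L(\C^{n\times n})$ the coefficient matrix $(a_{\mu\nu})$ is uniquely determined by $-L$. The key step is then to read off from $(a_{\mu\nu})$, using trace-annihilation of $-L$ (i.e.\ the constraint \eqref{eq:lindblad2}) to eliminate the index-$0$ row and column, that the traceless Hermitian Hamiltonian $H$ and the positive semidefinite Kossakowski matrix $c=(c_{\mu\nu})_{\mu,\nu\geq1}$ are both uniquely fixed.

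\emph{Conclusion and main obstacle.} Once uniqueness of $(H,c)$ is in hand, writing each traceless $V_k=\sum_{\mu\geq1}d_{\mu k}F_\mu$ turns the relation ``same generator'' into $DD^*=c=D'D'^*$ for the two coefficient matrices $D,D'$; the standard fact that any two factorizations of a fixed positive semidefinite matrix are related by a partial isometry $W$ (extending to a unitary after padding the shorter factor with zero columns) yields $D'=DW$, which is exactly the unitary shuffling $V_k'=\sum_j u_{kj}V_j$ with $u_{kj}=w_{jk}$, and the conditions $UU^*=\mathds1_s$ resp.\ $U^*U=\mathds1_r$ are precisely the conditions for $U$ to be such a corner of a unitary. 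Uniqueness of $H$ pins down the traceless part of $H_0$, leaving exactly the multiple-of-identity freedom absorbed by $\lambda$; combined with the initial tracelessness reduction, this exhibits the two original representations as related by a sequence of the allowed transformations. I expect the main obstacle to be the uniqueness of the Kossakowski data $(H,c)$: disentangling the Hamiltonian (commutator) contribution from the dissipative contribution requires care, and is where the linear independence of the maps $\rho\mapsto F_\mu\rho F_\nu^*$ together with trace-annihilation do the real work.
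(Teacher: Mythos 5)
Your overall strategy parallels the paper's in its skeleton: both proofs dispose of the two displayed transformations by direct computation (the paper cites the uniqueness theorem for Kraus operators for \eqref{eq:unitary-shuffling}, you re-derive it via the Gram-factor computation plus the standard fact that two factorizations $DD^*=D'(D')^*$ of a fixed positive semidefinite matrix are related by an isometry---these are the same fact), and both reduce the converse to traceless Lindblad terms via a scalar shift and then separate the $K$-part from the completely positive part by a linear-algebraic uniqueness argument. Where you genuinely differ is the middle step: the paper vectorizes the identity $\sum_k V_k(\cdot)V_k^*-K_1(\cdot)-(\cdot)K_1^*=\sum_k V_k'(\cdot)(V_k')^*-K_1'(\cdot)-(\cdot)(K_1')^*$ and takes a partial trace to conclude $K_1'=K_1+\iu\lambda\mathds1$, whereas you expand everything in the orthonormal operator basis $\{F_\mu\}$ and use uniqueness of the coefficients of the linearly independent maps $\rho\mapsto F_\mu\rho F_\nu^*$. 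Both routes are legitimate.

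There is, however, one genuine flaw in your converse: you invoke ``trace-annihilation of $-L$ (i.e.\ the constraint \eqref{eq:lindblad2})'' to eliminate the index-$0$ row and column. That constraint is not among the hypotheses. The lemma is stated for generators of \emph{completely positive} dynamical semigroups, where $K$ is an arbitrary matrix decoupled from the $V_k$; trace preservation, equivalently \eqref{eq:lindblad2}, is exactly the additional condition that may fail here. Consequently your appeal to ``the traceless Hermitian Hamiltonian $H$ and the Kossakowski matrix $c$''---i.e.\ the GKS standard form, which presupposes trace preservation---proves the converse only for quantum-dynamical semigroups, not in the stated generality. The repair is easy and actually simplifies your argument: with traceless $V_k$ one has $\sum_k V_k\rho V_k^*=\sum_{\mu,\nu\geq1}(DD^*)_{\mu\nu}F_\mu\rho F_\nu^*$, while, writing $K=\sum_\mu\kappa_\mu F_\mu$ and using $\rho=\sqrt{n}\,\rho F_0^*=\sqrt{n}\,F_0\rho$, the terms $K\rho$ and $\rho K^*$ contribute only to entries with $\nu=0$ or $\mu=0$. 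Uniqueness of the coefficients then shows directly that the $(\mu\geq1,\nu\geq1)$ block determines $DD^*$, the $(\mu\geq1,0)$ column determines the traceless part of $K$, and the $(0,0)$ entry determines $\Re\tr(K)$---leaving precisely the freedom $K'=K+\iu\lambda\mathds1$ with $\lambda\in\R$, with no trace-annihilation needed. This is exactly what the paper's vectorization-plus-partial-trace step accomplishes; with that correction your proof goes through and, apart from re-proving Kraus uniqueness instead of citing it, delivers the same conclusion.
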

\begin{proof}
For the reader's convenience we provide a short proof.
The first statement follows from~\eqref{eq:qds} together with the characterization of ``uniqueness'' of Kraus operators \cite[Coro.~2.23]{Watrous18}.
Next, \eqref{eq:scalar-shift} is a straightforward computation.
Finally, if $\{K,V_k:k=1,\ldots V_r\}$ and $\{K',V_k':k=1,\ldots V_s\}$ define the 
same generator, then shifting $V_k\to V_k-\frac{{\rm tr}(V_k)}{n}\mathds 1$, 
$V_k'\to V_k'-\frac{{\rm tr}(V_k')}{n}\mathds 1$ turns $K,K'$ into $K_1,K_1'$ 
(according to~\eqref{eq:scalar-shift}), respectively. 
Thus we may assume that all $V_k,V_k'$ are traceless.
%$K_1=K_1'$ because the Lindblad operators being traceless implies uniqueness of the $K$-matrix, cf.~\cite{Davies80unique}, \cite[Thm.~2.2]{GKS76}.
In this case, vectorizing \cite[Ch.~2.4]{MN07} $\sum_k V_k(\cdot)V_k^*-K_1(\cdot)-(\cdot)K_1^*=\sum_k V_k'(\cdot)(V_k')^*-K_1'(\cdot)-(\cdot)(K_1')^*$ yields
$
\sum_k \overline{V_k}\otimes V_k-\mathds1\otimes K_1-\overline{K_1}\otimes\mathds 1=\sum_k \overline{V_k'}\otimes V_k'-\mathds1\otimes K_1'-\overline{K_1'}\otimes\mathds 1\,.
$
Using ${\rm tr}(V_k)={\rm tr}(V_k')=0$ for all $k$, taking the partial trace over the first system shows
$
\overline{{\rm tr}(K_1-K_1')}\frac{\mathds 1}n=-(K_1-K_1')
$.
Taking the trace again yields ${{\rm tr}(K_1-K_1')}\in i\mathbb R$, meaning there exists $\lambda\in\mathbb R$ such that $K_1'=K_1+i\lambda\mathds 1$ (in accordance with~\eqref{eq:scalar-shift}).
This also means that the shifted $V_k,V_k'$ satisfy $\sum_k V_k(\cdot)V_k^*=\sum_k V_k'(\cdot)(V_k')^*$ which---again by
\cite[Coro.~2.23]{Watrous18}---shows that there exists an isometry $U$ relating the two sets via~\eqref{eq:unitary-shuffling}. This concludes the proof.
\end{proof}
In particular this result shows that the Lindblad terms can always be chosen traceless.
Indeed, this is one way to ensure that Kossakowski--Lindblad generators decompose uniquely: if ${\rm tr}(V_j)=0$ for all $j$, then there exists unique $H_0\in\mathfrak{su}(n)$ such that $-L=-\iu\ad_{H_0} -\sum_{k=1}^r\Gamma_{V_k}$ and if  for all $j$, cf.~\cite[Thm.~2.2]{GKS76}, \cite{Davies80unique}.
%\marginpar{into Hamiltonian and dissipative part?? F: yes; from cited papers the precise statement is: ``$K$ is uniquely determined [...] if ${\rm tr}(V_j)=0$ for all $j$''}

\begin{remark}
For the Lindblad equation, the transformation of $H_0$ resulting from the shift~\eqref{eq:scalar-shift} is given by
$
H' = H_0 + \frac{\iu}{2}\sum_{k=1}^r (c_k^*V_k-c_kV_k^*)  %F: fixed sign by switching i from denominator to numerator
+ \lambda \mathds1 %F:?????
$.
\end{remark}

The central objects considered in studying the structure of the Lindblad equation are the following matrix algebras:

\begin{definition} \label{def:lindblad-algebras}
Let $-L\in\mathcal L(\mathbb C^{n\times n})$ be the generator of a completely positive dynamical semigroup represented by $\{K,V_k:k=1,\ldots,r\}$. We define the following unital subalgebras\footnote{
%All matrix algebras considered here are defined over the complex numbers. 
A unital matrix algebra is one which contains the identity matrix.
% The use of the word ``unital'' here has nothing to do with unital quantum systems, and 
Since all our matrix algebras will be unital, we will omit this term in the following to avoid confusion.
}
of $\mathbb C^{n\times n}$:
$$
\cV = \generate{\mathds1,V_k:k=1,\ldots,r}{alg},
\qquad
\cV^+ = \generate{\mathds1,K,V_k:k=1,\ldots,r}{alg},
$$
called the \emph{relaxation algebra} and \emph{extended relaxation algebra},
respectively.
\end{definition}
Crucially, as a direct consequence of Lemma~\ref{lemma:freedom-of-reps} these algebras are well-defined:
%, i.e.~they do not depend on the choice of Hamiltonian $H_0$ and of Lindblad terms $V_k$:

\begin{lemma} \label{lemma:lindblad-alg-well-def}
The algebras $\cV$ and $\cV^+$ are well-defined, meaning that they only depend on $-L$, and not on the chosen representation.
\end{lemma}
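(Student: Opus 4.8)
The plan is to reduce the statement to an assertion about linear spans and then feed in the explicit description of the representation freedom from Lemma~\ref{lemma:freedom-of-reps}. The elementary observation I would record first is that the unital algebra generated by a collection of matrices depends only on the linear span of that collection together with $\mathds1$, i.e.~$\generate{S}{alg}=\generate{\linspan(S\cup\{\mathds1\})}{alg}$: the right-hand side contains $S$ and is an algebra, while the left-hand side contains $\linspan(S\cup\{\mathds1\})$ and is an algebra. Consequently it suffices to show that the two subspaces
\[
W:=\linspan\{\mathds1,V_1,\ldots,V_r\},\qquad W^+:=\linspan\{\mathds1,K,V_1,\ldots,V_r\}
\]
are independent of the chosen representation, for then $\cV=\generate{W}{alg}$ and $\cV^+=\generate{W^+}{alg}$ are manifestly independent as well.

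By the converse part of Lemma~\ref{lemma:freedom-of-reps}, any two representations of the same generator are connected by a finite sequence of the moves~\eqref{eq:unitary-shuffling} and~\eqref{eq:scalar-shift}, so it is enough to verify invariance of $W$ and $W^+$ under each move separately. The scalar shift~\eqref{eq:scalar-shift} is the easy case: there $V_k'=V_k+c_k\mathds1\in W$, while $K'=K+\sum_k\overline{c}_kV_k+(\text{scalar})\mathds1\in W^+$, which gives $W'\subseteq W$ and $(W')^+\subseteq W^+$; inverting~\eqref{eq:scalar-shift} (solve for the $V_k$ and $K$ in terms of the primed quantities) yields the reverse inclusions, so both subspaces are preserved. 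Under~\eqref{eq:unitary-shuffling} the matrix $K$ is untouched, so once I know $\linspan\{V_k'\}=\linspan\{V_j\}$ the invariance of both $W$ and $W^+$ follows at once.

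The only genuine point is therefore to show $\linspan\{V_k'\}_{k=1}^s=\linspan\{V_j\}_{j=1}^r$ under~\eqref{eq:unitary-shuffling}. The inclusion $\subseteq$ is immediate since each $V_k'$ is a linear combination of the $V_j$. For $\supseteq$ the isometry hypothesis is essential. When $r\le s$ and $U^*U=\mathds1_r$ it is direct: $\sum_k\overline{u_{kj}}V_k'=\sum_{j'}(U^*U)_{jj'}V_{j'}=V_j$, so each $V_j$ lies in $\linspan\{V_k'\}$. The subtle case is $r\ge s$ with $UU^*=\mathds1_s$, where the number of Lindblad terms genuinely drops and one cannot simply invert the relation. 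Here I would use that~\eqref{eq:unitary-shuffling} leaves $K$ fixed and hence preserves the completely positive part $\phi=\sum_kV_k(\cdot)V_k^*$ of the generator; the linear span of a family of Kraus operators is an invariant of $\phi$ (pad both families with zero operators to a common length and apply the unitary from~\cite[Cor.~2.23]{Watrous18}, which preserves spans as an invertible change of basis). This gives $\linspan\{V_k'\}=\linspan\{V_j\}$ and finishes the argument.

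I expect this last case to be the main obstacle: it is precisely where the naive linear-combination bookkeeping fails, because shrinking the number of Lindblad terms could a priori shrink their span, and one must invoke the structure theory of Kraus operators (equivalently, that the span equals the range of the Choi matrix) to rule this out. Everything else is routine span arithmetic.
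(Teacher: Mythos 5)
Your proof is correct and follows the same route the paper takes: the paper states this lemma without proof, as a ``direct consequence of Lemma~\ref{lemma:freedom-of-reps}'', and your argument is precisely that consequence worked out in detail (reduction to invariance of the spans $W$ and $W^+$, then checking the scalar-shift and isometry moves separately). Your identification and resolution of the only non-trivial point---the isometry move with $r\ge s$, where the span could a priori shrink, handled via the invariance of the span of Kraus operators of the fixed completely positive part---is exactly the step the paper leaves implicit, and it also explains the paper's remark that including $\mathds1$ in the definition of the algebras is necessary.
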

%%F: proof if straightfoward, comment out!
%\begin{proof}
%Let $\{K,V'_k:k=1,\ldots,s\}$ be another representation of $L$ satisfying~\eqref{eq:unitary-shuffling} and let $\tilde\cV$ and $\tilde\cV^+$ denote the corresponding algebras. Clearly each $V_k'$ is a complex linear combination of the $V_k$ and hence contained in $\cV$ and $\cV^+$. Trivially it holds that $K\in\cV^+$. This shows that $\tilde\cV\subseteq\cV$ and $\tilde\cV^+\subseteq\cV^+$. The reversed inclusions are analogous. Now consider another representation $\{K',V'_k:k=1,\ldots,s\}$ of $L$ satisfying~\eqref{eq:scalar-shift}. Each $V_k'$ is a linear combination of $V_k$ and $\mathds1$ and $K'$ is a linear combination of $K$, all $V_k$, and $\mathds1$. Hence again $\tilde\cV\subseteq\cV$ and $\tilde\cV^+\subseteq\cV^+$, and again the reverse inclusion is analogous. Since by Lemma~\ref{lemma:freedom-of-reps} any two equivalent representations of $L$ can be transformed into each other using a composition of these transformation, they must generate the same algebras by the above argument. Hence the algebras only depend on $L$, and hence are well defined.
%\end{proof}
%
\noindent Note, however, that\smallskip
\begin{itemize}
\item the algebras are not sufficient to determine the generator. 
\item for Lemma~\ref{lemma:lindblad-alg-well-def} to hold it is necessary to include the identity in the definition of the (extended) relaxation algebra. \smallskip
\end{itemize}

Below we will be interested in invariant subspaces of the matrix algebras defined above, so let us take a brief moment to recall some basic facts. 
A subspace $S\subseteq\mathbb C^n$ is \emph{invariant} for a set of matrices $\mc A$ if $AS\subseteq S$ for all $A\in\mc A$. 
If $P_S$ is a projection
%\footnote{A linear operator $P$ on $\cH$ is called a projection if $P^2=P$, i.e.~if it is idempotent. If $S$ is the range of $P$, then $P$ is a projection onto $S$.}
onto $S$, then $S$ is invariant for $\mc A$ if and only if $(\mathds1-P_S)AP_S=0$ for all $A\in\mc A$.
Moreover, the set of all invariant subspaces of $\mc A$ forms a lattice, meaning that if $S$ and $T$ are invariant, then so are $S\cap T$ and $S+T$.
Since the invariant subspaces of $\mc A$ and $\generate{\mc A}{alg}$ coincide, we will work with the (well-defined) algebras $\cV$ and $\cV^+$.
For some mathematical background we refer to~\cite{Farenick01,Radjavi00} for matrix algebras, to~\cite{Gohberg06} for invariant subspaces, and to~\cite{Birkhoff67} for lattice theory.

The following lemma---which is an immediate consequence of~\eqref{eq:qds}---is quite useful in relating the algebras defined above to the structure of dynamical semigroups.

\begin{lemma} \label{lemma:qds-projectors}
Let $-L\in\mathcal L(\mathbb C^{n\times n})$ be the generator of a completely positive dynamical semigroup represented by $\{K,V_k:k=1,\ldots r\}$.
Let $O,P,Q,R\in\mathbb C^{n\times n}$ be orthogonal projections. Then for all $A\in\mathbb C^{n\times n}$ it holds that
\begin{align*}
O\;L(PAR)\;Q=\sum_{k=1}^r (OV_kP) A (QV_kR)^* - (OKP)A(RQ) - (OP)A(QKR)^*
\end{align*}
\end{lemma}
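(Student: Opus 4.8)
The plan is to read the identity off directly from the explicit form of the generator in~\eqref{eq:qds}, using nothing beyond associativity of matrix multiplication and the self-adjointness of orthogonal projections; no further semigroup or Lie-theoretic input is required. Concretely, I would substitute $\rho=PAR$ into the formula $-L(\rho)=\sum_{k=1}^r V_k\rho V_k^*-K\rho-\rho K^*$ provided by~\eqref{eq:qds}, and then multiply the resulting expression by $O$ on the left and by $Q$ on the right.

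First, linearity of the generator together with linearity of one-sided multiplication gives
\begin{align*}
O(-L)(PAR)Q = \sum_{k=1}^r O V_k P A R V_k^* Q - O K P A R Q - O P A R K^* Q .
\end{align*}
Each summand here is a single product of fixed matrices, so the only remaining work is to regroup. I would collect the leading factors as $OV_kP$, $OKP$, and $OP$, and recognise the trailing factors as adjoints: since $O,P,Q,R$ are orthogonal projections we have $Q^*=Q$ and $R^*=R$, so that $R V_k^* Q=(QV_kR)^*$ and $R K^* Q=(QKR)^*$. This turns the three groups into
\begin{align*}
O(-L)(PAR)Q = \sum_{k=1}^r (OV_kP)\,A\,(QV_kR)^* - (OKP)\,A\,(RQ) - (OP)\,A\,(QKR)^*,
\end{align*}
which is precisely the asserted identity (read with $-L$ the generator of~\eqref{eq:qds}), valid for every $A\in\mathbb{C}^{n\times n}$ since $A$ was arbitrary throughout.

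The proof has essentially no obstacle: the statement is a bookkeeping consequence of~\eqref{eq:qds}. The only two points demanding a moment's care are keeping track of which projection is absorbed into which factor during the regrouping, and invoking self-adjointness of $Q$ and $R$ to form the adjoints $(QV_kR)^*$ and $(QKR)^*$ — this is exactly the place where \emph{orthogonality} of the projections is used, since for a non-self-adjoint idempotent the trailing factors would not simplify to these adjoints.
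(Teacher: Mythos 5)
Your proof is correct and takes essentially the same approach as the paper, which offers no separate argument at all: it introduces the lemma as ``an immediate consequence of~\eqref{eq:qds},'' and that immediate consequence is precisely your substitution $\rho=PAR$ followed by left/right multiplication and regrouping, using only $Q^*=Q$ and $R^*=R$. Your parenthetical on the sign is also well taken: as literally printed, the left-hand side $O\,L(PAR)\,Q$ is inconsistent with the sign convention of~\eqref{eq:qds}, and the identity holds for $O\,(-L)(PAR)\,Q$ exactly as you derived (a slip in the paper's statement, not a gap in your proof).
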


%\begin{proof}
%This follows immediately from~\eqref{eq:qds}. %the fact that $-L(A)=\sum_{k=1}^r V_k A V_k^* - KA - AK^*$.
%\end{proof}

\begin{lemma} \label{lemma:lazy}
Let $-L\in\mathcal L(\mathbb C^{n\times n})$ be the generator of a completely positive dynamical semigroup represented by $\{K,V_k:k=1,\ldots,r\}$.
Given any $U\in\U(n)$ and any subspace $\{0\}\neq S\subseteq\mathbb C^n$---where $P_S$ denotes the orthogonal projection onto $S$---the following are equivalent: \smallskip
\begin{enumerate}[(i)]
\item \label{it:lazy-inv-Vk} $S$ is a common invariant subspace of all $V_k$.
\item \label{it:lazy-inv-V} $S$ is an invariant subspace of $\cV$.
\item \label{it:lazy-all} $P^\perp_S L(P_SA P_S)P^\perp_S=0$ for all $A\in\mathbb C^{n\times n}$.
\item \label{it:lazy-one} $P^\perp_S L(\rho)P^\perp_S=0$ for some state $\rho$ with $\supp(\rho)=S$.
\item \label{it:unitary} If the first $\operatorname{dim}S$ columns of $U^*$ span $S$, then $-L_U$ is block triangular, i.e.~for all $1\leq j\leq \operatorname{dim}S<i\leq n$ one has $(-L_U)_{ij}=0$.
\end{enumerate} \smallskip
If one and thus all of these conditions hold,
we say that $S$ is a \emph{lazy subspace} (of $-L$). 
\end{lemma}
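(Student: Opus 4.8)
The plan is to route every equivalence through condition~\ref{it:lazy-inv-Vk} as a hub: I would prove \ref{it:lazy-inv-Vk}~$\Leftrightarrow$~\ref{it:lazy-inv-V}, \ref{it:lazy-inv-Vk}~$\Leftrightarrow$~\ref{it:unitary}, \ref{it:lazy-inv-Vk}~$\Leftrightarrow$~\ref{it:lazy-all}, and then close the loop with the chain \ref{it:lazy-all}~$\Rightarrow$~\ref{it:lazy-one}~$\Rightarrow$~\ref{it:lazy-inv-Vk}. The two ``algebraic'' equivalences come essentially for free. For \ref{it:lazy-inv-Vk}~$\Leftrightarrow$~\ref{it:lazy-inv-V} I would invoke the standard fact (already recalled before the lemma) that a subspace is invariant for a set of matrices if and only if it is invariant for the unital algebra they generate, since invariance is preserved under linear combinations, products, and the identity acts trivially. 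For \ref{it:lazy-inv-Vk}~$\Leftrightarrow$~\ref{it:unitary} I would use the explicit off-diagonal formula~\eqref{eq:J-comp}: for $i\neq j$ one has $(-L_U)_{ij}=J_{ij}(U)=\sum_{k=1}^r|(U^*V_kU)_{ij}|^2\geq0$. Because this is a sum of non-negative terms, the block-triangularity requirement ``$(-L_U)_{ij}=0$ for all $1\leq j\leq\dim S<i\leq n$'' holds if and only if $(U^*V_kU)_{ij}=0$ for all $k$ and all such $(i,j)$, i.e.\ each $U^*V_kU$ has vanishing lower-left block; conjugating the coordinate splitting by $U$ turns this into invariance of $S$ under every $V_k$, which is exactly~\ref{it:lazy-inv-Vk}.

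The core computational input for the remaining equivalences is Lemma~\ref{lemma:qds-projectors}. Applying it with $O=Q=P_S^\perp$ and $P=R=P_S$, the two ``cross'' terms carry a factor $RQ=P_SP_S^\perp=0$ and $OP=P_S^\perp P_S=0$ respectively, so they drop out, leaving
\[
P_S^\perp\,L(P_S A P_S)\,P_S^\perp=\sum_{k=1}^r (P_S^\perp V_kP_S)\,A\,(P_S^\perp V_kP_S)^*.
\]
Writing $W_k:=P_S^\perp V_kP_S$, this is a completely positive map in $A$. For \ref{it:lazy-inv-Vk}~$\Leftrightarrow$~\ref{it:lazy-all} I would argue that this map vanishes identically if and only if every $W_k=0$: evaluating at $A=|x\rangle\langle x|$ gives a sum $\sum_k|W_kx\rangle\langle W_kx|$ of positive semidefinite rank-one operators equal to $0$, forcing $W_kx=0$ for all $k$ and all $x$. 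Finally $W_k=(\mathds1-P_S)V_kP_S=0$ for all $k$ is precisely the invariance criterion for $S$ recalled in the text, i.e.\ condition~\ref{it:lazy-inv-Vk}.

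The implication \ref{it:lazy-all}~$\Rightarrow$~\ref{it:lazy-one} is immediate by taking any state supported on $S$, e.g.\ $\rho=P_S/\dim S$, for which $\rho=P_S\rho P_S$ and hence $P_S^\perp L(\rho)P_S^\perp=0$. The real work, and the step I expect to be the main obstacle, is \ref{it:lazy-one}~$\Rightarrow$~\ref{it:lazy-inv-Vk}, where one must upgrade a condition tested on a single state to the full structural invariance. Here I would again use the displayed identity with $A=\rho$ (legitimate since $\rho=P_S\rho P_S$), obtaining $0=\sum_k W_k\rho W_k^*$. Each summand is positive semidefinite, so each vanishes; writing $W_k\rho W_k^*=(W_k\rho^{1/2})(W_k\rho^{1/2})^*=0$ yields $W_k\rho^{1/2}=0$, so $W_k$ annihilates the range of $\rho^{1/2}$, which equals $\supp(\rho)=S$. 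Since $W_k=P_S^\perp V_kP_S$ already annihilates $S^\perp$, it vanishes on all of $\C^n=S\oplus S^\perp$, giving $W_k=0$ and hence~\ref{it:lazy-inv-Vk}. The delicate point throughout is that positivity is what converts ``the diagonal block in the $S^\perp$ sector is zero on one state'' into ``each individual $W_k$ is zero''; once that is secured, the support argument closes the loop and all five statements are equivalent.
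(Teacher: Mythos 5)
Your proposal is correct and takes essentially the same route as the paper's own proof: both rest on the Kraus-type identity from Lemma~\ref{lemma:qds-projectors} to reduce (iii) and (iv) to the vanishing of the operators $P_S^\perp V_k P_S$, both use positivity together with $\supp(\rho)=\supp(\sqrt{\rho})=S$ to force $P_S^\perp V_k P_S=0$ in the step (iv)$\Rightarrow$(i), and both read off the equivalence with (v) from the non-negative sum formula~\eqref{eq:J-comp}. The only difference is cosmetic: you additionally prove (iii)$\Rightarrow$(i) directly via rank-one evaluations, which is logically redundant since the chain (i)$\Rightarrow$(iii)$\Rightarrow$(iv)$\Rightarrow$(i)---which you also establish---already closes the loop.
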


\begin{proof} % \cite[Lem.~9]{BNT08b}
\ref{it:lazy-inv-Vk} $\Leftrightarrow$ \ref{it:lazy-inv-V}:
Note that by Lemma~\ref{lemma:lindblad-alg-well-def} the invariant subspaces of the $V_k$ are well-defined and exactly the invariant subspaces of $\cV$. 
\ref{it:lazy-inv-Vk} $\Rightarrow$ \ref{it:lazy-all}:
By Lemma~\ref{lemma:qds-projectors},
%\begin{align*}
$P_S^\perp L(P_SAP_S)P_S^\perp=\sum_{k=1}^r (P_S^\perp V_kP_S) A (P_S^\perp V_kP_S)^*$.
%\end{align*} 
\ref{it:lazy-all} $\Rightarrow$ \ref{it:lazy-one}: Trivial. 
\ref{it:lazy-one} $\Rightarrow$ \ref{it:lazy-inv-Vk}:
%Now assume there is $\rho$ satisfying \ref{it:lazy-one}. 
We compute
$
0=P_S^\perp L(P_S\rho P_S)P_S^\perp=\sum_{k=1}^r (P_S^\perp V_kP_S\sqrt{\rho}) (P_S^\perp V_kP_S\sqrt{\rho})^*$,
which shows $P_S^\perp V_kP_S\sqrt{\rho}=0$ for all $k$. Since $\supp(\rho)=\supp(\sqrt\rho)=S$, this forces $P_S^\perp V_kP_S=0$ for all $k$.
\ref{it:lazy-inv-Vk} $\Leftrightarrow$ \ref{it:unitary}: This is due to $(-L_U)_{ij}=\sum_{k=1}^r|\langle Ui|V_k|U_j\rangle|^2$ as follows from~\eqref{eq:J-comp}.
\end{proof}

Intuitively, a lazy subspace is one which is invariant under $-L$ to first order, hence the name. Importantly, they do not depend on $K$ (or $H_0$), and thus they are independent of the control. A stronger notion of invariance is the following (cf.~\cite{Shabani05,BNT08b}), which is an immediate consequence of \cite[Lem.~11]{BNT08b}:

\begin{lemma} \label{lemma:collecting}
Let $-L\in\mathcal L(\mathbb C^{n\times n})$ be the generator of a completely positive dynamical semigroup represented by $\{K,V_k:k=1,\ldots,r\}$.
Given any subspace $\{0\}\neq S\subseteq\mathbb C^n$ where $P_S$ denotes the orthogonal projection onto $S$,
the following are equivalent:\smallskip
\begin{enumerate}[(i)]
\item \label{it:invariant-etL} $P_S\mf{pos}_1(n)P_S$ is invariant under $e^{-tL}$ for all $t\geq0$, that is, for all $\rho\in \mf{pos}_1(n)$
it holds that $P_Se^{-tL}(P_S\rho P_S)P_S=e^{-tL}(P_S\rho P_S)$.
\item \label{it:invariant-L} $P_S L(P_S\rho P_S)P_S = L(P_S\rho P_S)$.
%\item \label{it:invariant-full-rank} There is a fixed point $\rho$ of $-L$ with $\supp(\rho)\subseteq S$. %(fixed equiv invar). 
\item \label{it:invariant-algebra} $S$ is an invariant subspace for $\cV^+$.\smallskip
\end{enumerate}
If one and thus all of these conditions hold,
we say that $S$ is a \emph{collecting subspace}. 
\end{lemma}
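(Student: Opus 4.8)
The plan is to prove the chain \ref{it:invariant-etL} $\Leftrightarrow$ \ref{it:invariant-L} $\Leftrightarrow$ \ref{it:invariant-algebra}, handling the first equivalence through the generator--semigroup correspondence and the second through the block decomposition furnished by Lemma~\ref{lemma:qds-projectors}. Throughout I would write $P := P_S$, $P^\perp := \mathds1 - P$, and let $\mathcal B(S) := P\,\C^{n\times n}\,P$ denote the space of operators supported on $S$. The key observation setting everything up is that condition~\ref{it:invariant-L} is precisely the statement $L(\mathcal B(S)) \subseteq \mathcal B(S)$, and that the operators $P\rho P$ with $\rho\in\mf{pos}_1(n)$ already exhaust the density matrices supported on $S$ and hence span $\mathcal B(S)$; so it suffices to test every condition on such operators and extend by linearity.

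For \ref{it:invariant-etL} $\Leftrightarrow$ \ref{it:invariant-L}: since $\mathcal B(S)$ is a finite-dimensional linear subspace, it is invariant under the linear map $L$ if and only if it is invariant under $e^{-tL} = \sum_{k\ge0}\frac{(-t)^k}{k!}L^k$ for every $t\ge0$. The forward implication is immediate from the power series (and finite-dimensionality), while the converse follows by differentiating the identity $P\,e^{-tL}(P\rho P)\,P = e^{-tL}(P\rho P)$ at $t=0$, using that $A\mapsto PAP$ is continuous and linear; this yields $P\,L(P\rho P)\,P = L(P\rho P)$, which extends to all of $\mathcal B(S)$ by linearity.

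For \ref{it:invariant-L} $\Leftrightarrow$ \ref{it:invariant-algebra} I would apply Lemma~\ref{lemma:qds-projectors} with both inner projections equal to $P$ and the outer projections ranging over $\{P, P^\perp\}$. Condition~\ref{it:invariant-L} amounts to the vanishing on $\mathcal B(S)$ of the three blocks $P^\perp L(\cdot)P$, $P L(\cdot)P^\perp$, and $P^\perp L(\cdot)P^\perp$. In the last block the $K$-terms drop out (because $PP^\perp = 0$), leaving $P^\perp L(PAP)P^\perp = \sum_k (P^\perp V_k P)A(P^\perp V_k P)^*$; evaluating at $A = P$ produces a sum of positive semidefinite terms, which vanishes if and only if $P^\perp V_k P = 0$ for every $k$, i.e.\ $S$ is invariant for all $V_k$. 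Granting this, the $V_k$-contribution to the off-diagonal blocks disappears and $P^\perp L(PAP)P$ collapses to $-(P^\perp K P)A P$ (the $P L(\cdot)P^\perp$ block being its adjoint counterpart); testing again at $A = P$ forces $P^\perp K P = 0$. Hence~\ref{it:invariant-L} holds exactly when $S$ is a common invariant subspace of $\{\mathds1, K, V_1,\dots,V_r\}$, which is invariance under $\cV^+$ since the invariant subspaces of a set of matrices coincide with those of the algebra it generates.

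The routine parts are the semigroup/generator equivalence and the bookkeeping of the four blocks. The one genuinely delicate point is the order of extraction in \ref{it:invariant-L} $\Leftrightarrow$ \ref{it:invariant-algebra}: one must first exploit the positivity of the $(P^\perp,P^\perp)$ block to conclude $P^\perp V_k P = 0$, and only then can the off-diagonal blocks be simplified so as to isolate the condition $P^\perp K P = 0$. Attempting to read off the $K$-condition before eliminating the $V_k$-terms would entangle the two contributions and obscure the argument.
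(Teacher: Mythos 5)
Your proof is correct, and it is necessarily a different route from the paper's, because the paper gives no proof of this lemma at all: it is stated as an immediate consequence of~\cite[Lem.~11]{BNT08b}. Your argument supplies the missing details in a self-contained way, and it does so with exactly the toolkit the paper itself uses for the neighbouring Lemma~\ref{lemma:lazy}: the block identity of Lemma~\ref{lemma:qds-projectors}, the positivity argument (a vanishing sum $\sum_{k} M_k^{\phantom{*}} M_k^*=0$ forces every $M_k=0$), evaluation at $A=P_S$ (admissible since $P_S/\dim S\in\mf{pos}_1(n)$ and all conditions are linear, with the sandwiched density matrices spanning $P_S\,\C^{n\times n}P_S$), and the fact that invariant subspaces of a generating set coincide with those of the algebra it generates, so that adjoining $\mathds1$ and passing to $\cV^+$ changes nothing. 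The equivalence of (i) and (ii) via the power series of $e^{-tL}$ in one direction and differentiation at $t=0$ in the other is the standard finite-dimensional semigroup--generator correspondence and is unobjectionable. Your remark on the order of extraction---first kill $P_S^\perp V_kP_S$ using the positive-semidefinite $(P_S^\perp,P_S^\perp)$ block, and only then read off $P_S^\perp K P_S=0$ from the off-diagonal blocks---is indeed the one delicate point, and the same two-step structure appears implicitly in the paper's proof of Lemma~\ref{lemma:lazy}. One cosmetic caveat: the right-hand side of Lemma~\ref{lemma:qds-projectors} as printed actually equals $O\,(-L)(PAR)\,Q$ under the paper's sign convention $-L(\rho)=\phi(\rho)-K\rho-\rho K^*$, but since every deduction you make is of the form ``this block vanishes identically,'' the overall sign is immaterial to your argument.
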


%\begin{proof} % \cite[Lem.~11 \&~12]{BNT08b}. ~\cite{Shabani05}
%\ref{it:invariant-etL} $\Leftrightarrow$ \ref{it:invariant-L}: Straightforward.
%\ref{it:invariant-L} $\Leftrightarrow$ \ref{it:invariant-algebra}:
%By Lemma~\ref{lemma:qds-projectors} it holds that 
%\begin{align*}
%P_SL(P_SAP_S)P_S^\perp  &= \sum_{k=1}^r (P_SV_kP_S)A(P_S^\perp V_kP_S)^* -P_SA(P_S^\perp KP_S), \\
%P_S^\perp L(P_SAP_S)P_S &= \sum_{k=1}^r (P_S^\perp V_kP_S)A(P_SV_kP_S)^* -(P_S^\perp KP_S)AP_S.
%\end{align*}
%With this one proceeds similarly to the proof of Lemma~\ref{lemma:lazy}.
%\end{proof}
%
Collecting subspaces are closely related to fixed points, see~\cite[Prop.~5 \&~Lem.~12]{BNT08b}.
%F: either "fixed point" or "stationary state"

\begin{lemma} \label{lemma:conf-fixed}
The following hold: \smallskip
\begin{enumerate}[(i)]
\item If $\rho\in\mf{pos}_1(n)$ is a fixed point, then $\supp(\rho)$ is collecting.
%\marginpar{\color{red}F: Baumgartner statement: ``If $\rho$ is fixed point, then $S$ a collecting \textbf{or an enclosure}''(???) why can this not happen here?}
\item If $S\subseteq\mathbb C^n$ is collecting, then there is a fixed point $\rho$ with $\supp(\rho)\subseteq S$.
\end{enumerate}
\end{lemma}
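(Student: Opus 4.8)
The plan is to prove the two implications separately, working from the representation $-L(\rho)=\sum_{k=1}^r V_k\rho V_k^*-K\rho-\rho K^*$ in~\eqref{eq:qds} and using the characterization of collecting subspaces from Lemma~\ref{lemma:collecting}~\ref{it:invariant-algebra}: a subspace $S$ is collecting precisely when it is a common invariant subspace of all the $V_k$ \emph{and} of $K$, i.e.\ $P_S^\perp V_kP_S=0$ for all $k$ and $P_S^\perp KP_S=0$.

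For part (i), I set $S=\supp(\rho)$, so that $\rho=P_S\rho P_S$ with $\rho$ invertible on $S$. First I would pair the fixed-point equation $L(\rho)=0$ with $P_S^\perp$ by computing $\tr\big(P_S^\perp(-L(\rho))\big)=0$. Because $\rho$ is supported on $S$, cyclicity of the trace makes both $\tr(P_S^\perp K\rho)$ and $\tr(P_S^\perp\rho K^*)$ vanish (each contains a factor $P_SP_S^\perp=0$), leaving $\sum_{k=1}^r\tr\big((P_S^\perp V_k)^*(P_S^\perp V_k)\rho\big)=0$. Each summand is a trace of a product of positive semidefinite matrices, hence nonnegative, so every summand is zero; rewriting the $k$-th summand as $\tr\big((P_S^\perp V_k\sqrt\rho)^*(P_S^\perp V_k\sqrt\rho)\big)$ forces $P_S^\perp V_k\sqrt\rho=0$, and since the range of $\sqrt\rho$ is $S$ this yields $P_S^\perp V_kP_S=0$ for all $k$. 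With this $V_k$-invariance in hand, $\sum_k V_k\rho V_k^*=P_S(\sum_k V_k\rho V_k^*)P_S$ is supported on $S$, so reading off the $P_S^\perp(\cdot)P_S$ block of $L(\rho)=0$ kills the dissipator and the $\rho K^*$ terms and leaves $P_S^\perp K\rho=P_S^\perp KP_S\,\rho=0$; invertibility of $\rho$ on $S$ then gives $P_S^\perp KP_S=0$. Hence $S$ is invariant for all $V_k$ and for $K$, i.e.\ collecting.

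For part (ii), I would use that $-L$ is a Kossakowski--Lindblad generator and hence trace-preserving. The collecting property in the form of Lemma~\ref{lemma:collecting}~\ref{it:invariant-etL} says $e^{-tL}$ maps $P_S\mf{pos}_1(n)P_S$ into itself, so $e^{-tL}$ restricts to a continuous, trace-preserving semigroup on the nonempty compact convex set $\mf{pos}_1(S)$ of density matrices supported on $S$. I would then build a stationary state by time-averaging: fix any $\rho_0\in\mf{pos}_1(S)$, set $\bar\rho_T=\tfrac1T\int_0^Te^{-tL}(\rho_0)\,dt$, and observe $\|e^{-sL}\bar\rho_T-\bar\rho_T\|\le\tfrac{2s}{T}\sup_{t\ge0}\|e^{-tL}\rho_0\|\to0$ as $T\to\infty$. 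A compactness argument extracts a limit point $\rho$ of the $\bar\rho_T$, which satisfies $e^{-sL}\rho=\rho$ for all $s$, hence $L(\rho)=0$; since every $\bar\rho_T$ lies in the closed set $\mf{pos}_1(S)$, we get $\supp(\rho)\subseteq S$.

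The trace manipulations and the averaging argument are routine. The step requiring the most care---the main obstacle---is the block analysis in part (i): one must first extract the $V_k$-invariance solely from the positivity of the ``diagonal'' pairing $\tr(P_S^\perp(-L(\rho)))$ of the dissipator, and only afterwards isolate the Hamiltonian-type condition $P_S^\perp KP_S=0$ from the off-diagonal block. The order matters, since $K$ enters the off-diagonal block and cannot be controlled by the positivity argument directly; it only becomes accessible once the $V_k$-terms are known to preserve $S$.
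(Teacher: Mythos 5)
Your proof is correct. Note that the paper itself does not spell out an argument for this lemma: it simply defers to~\cite[Prop.~5 \& Lem.~12]{BNT08b}, so any self-contained derivation is necessarily a ``different route,'' and yours is a clean one. Your part (i) is in fact the same technique the paper uses elsewhere in its appendix: the proof of Lemma~\ref{lemma:lazy}, implication (iv) $\Rightarrow$ (i), extracts $P_S^\perp V_k P_S=0$ from a vanishing positive-semidefinite sum via the factorization through $\sqrt{\rho}$, exactly as you do; your additional observation---that the $K$-invariance $P_S^\perp K P_S=0$ must be read off the off-diagonal block \emph{after} the $V_k$-invariance is established, since positivity alone cannot control $K$---is the right way to order the argument, and your use of the invertibility of $\rho$ on its support to cancel $\rho$ is sound. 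Your part (ii) replaces the reference's existence argument by a Krylov--Bogolyubov time average: the collecting property (Lemma~\ref{lemma:collecting}~(i)) plus trace preservation keeps the averages $\bar\rho_T$ in the compact convex set of states supported in $S$, the telescoping estimate kills $\|e^{-sL}\bar\rho_T-\bar\rho_T\|$, and a limit point is a stationary state with $\supp(\rho)\subseteq S$ because that set is closed. One small point worth making explicit if you write this up: you invoke trace preservation of $e^{-tL}$, which holds because the lemma concerns Kossakowski--Lindblad generators $-L\in\wkl(n)$ (i.e.\ $K=\iu H_0+\tfrac12\sum_k V_k^*V_k$); for a general completely positive dynamical semigroup as in Lemma~\ref{lemma:collecting} part (ii) of the statement would fail, so this hypothesis is genuinely used.
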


Lemmas~\ref{lemma:lazy} and~\ref{lemma:collecting}, together with the preceding discussion, show that the lazy subspaces form a lattice of invariant subspaces, and that the collecting subspaces form a sublattice thereof\,\footnote{
To be precise, we only obtain lattices after including the trivial subspace $\{0\}$ which we excluded from both definitions.}.
A simple but important consequence is that lazy subspaces are exactly those which can be made collecting using a time-independent Hamiltonian:

\begin{corollary} \label{coro:stab-ham}
Let $-L\in\wkl(n)$ be a Kossakowski--Lindblad generator.
If $S\subseteq\mathbb C^n$ is a lazy subspace, then there exists a Hamiltonian $H_S$ such that $S$ is collecting for $-L+\iu\ad_{H_S}$.
Indeed, any Hamiltonian satisfying %\marginpar{F: I've added some $i$'s; Emanuel, please check! if these $i$'s are \textbf{not} needed, please provide a proof of this corollary}
$$
P_S^\perp H_S P_S
=
-P_S^\perp \Big(H_0+\tfrac1{2\iu}\sum_{k=1}^rV_k^*V_k\Big) P_S
$$
will do the job. We call $H_S$ a \emph{stabilizing Hamiltonian} for $S$.
%\marginpar{F: now with these $i$'s in place, is it obvious that such $H_S$ always exists?}
\end{corollary}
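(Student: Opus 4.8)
The plan is to reduce the statement to the algebraic characterization of collecting subspaces furnished by Lemma~\ref{lemma:collecting}, and then to observe that a time-independent Hamiltonian supplies exactly the freedom needed to upgrade a lazy subspace into a collecting one. First I would recall that, since $S$ is lazy, it is a common invariant subspace of all the $V_k$ (Lemma~\ref{lemma:lazy}~\ref{it:lazy-inv-Vk}), and that this property is untouched by modifying the Hamiltonian because the Lindblad terms are unchanged. By Lemma~\ref{lemma:collecting}~\ref{it:invariant-algebra}, $S$ is collecting for a generator precisely when $S$ is invariant for its extended relaxation algebra $\cV^+=\generate{\mathds1,K,V_k}{alg}$. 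Since invariance under an algebra is equivalent to invariance under a generating set, and $S$ is already invariant under all $V_k$, it follows that $S$ is collecting if and only if it is in addition invariant under the single operator $K$, i.e.\ $P_S^\perp K P_S=0$.

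Next I would compute the operator associated with the modified generator $-L+\iu\ad_{H_S}$. Writing this map in GKSL form shows that its Lindblad terms remain $\{V_k\}$, while only the Hamiltonian part changes, so that $K$ gets replaced by $\widetilde K=K-\iu H_S$ (the relation $(K-\iu H_S)^*=K^*+\iu H_S$ confirming the bookkeeping). Imposing the collecting condition $P_S^\perp\widetilde K P_S=0$ then gives $P_S^\perp K P_S=\iu\,P_S^\perp H_S P_S$; solving for the off-diagonal block and substituting $K=\iu H_0+\tfrac12\sum_{k=1}^rV_k^*V_k$, together with $\tfrac1{\iu}=-\iu$, rearranges precisely into the displayed linear relation for $P_S^\perp H_S P_S$ (up to the sign convention chosen for $\ad_{H_S}$ relative to $K$). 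The conceptual point that makes the equation solvable is that the dissipative contribution $\tfrac12\sum_kV_k^*V_k$ to $K$ is \emph{fixed}, whereas $H_S$ lets us freely prescribe the block $P_S^\perp H_S P_S$; thus the Hamiltonian can be chosen to cancel the unwanted block $P_S^\perp K P_S$.

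Finally, for the existence claim I would exhibit a Hermitian $H_S$ realizing the required off-diagonal block. Only $P_S^\perp H_S P_S$ is constrained, so decomposing everything with respect to $\C^n=S\oplus S^\perp$ I can take the two diagonal blocks to vanish and let $P_S H_S P_S^\perp$ be the adjoint of the prescribed block $B:=-P_S^\perp\big(H_0+\tfrac1{2\iu}\sum_{k=1}^rV_k^*V_k\big)P_S$; concretely $H_S:=B+B^*$, which is manifestly Hermitian and has the correct $(S^\perp,S)$-block. This $H_S$ makes $S$ collecting for $-L+\iu\ad_{H_S}$, and a collecting subspace then supports a fixed point by Lemma~\ref{lemma:conf-fixed}. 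I do not expect a genuine obstacle here: the entire argument is the reduction of ``collecting'' to the single-operator invariance condition $P_S^\perp K P_S=0$ plus the remark that the Hamiltonian block is unconstrained. The only delicate point is the sign bookkeeping relating $\ad_{H_S}$, the definition of $K$, and the factor $\tfrac1{2\iu}$, which I would track explicitly to land on the stated formula.
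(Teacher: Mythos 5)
Your proof is correct and takes essentially the same approach as the paper, which treats the corollary as an immediate consequence of Lemmas~\ref{lemma:lazy} and~\ref{lemma:collecting}: since the $V_k$ already leave $S$ invariant, collecting reduces to the single condition $P_S^\perp K' P_S=0$ on the modified $K$, and this only prescribes the off-diagonal block of the Hermitian matrix $H_S$, which can always be realized (your $H_S=B+B^*$ is exactly the paper's remark that the condition ``only determines matrix elements below the diagonal''). The sign discrepancy you flag is the paper's own inconsistency: the displayed formula matches the convention $-(\iu\ad_{H_S}+L)$ used elsewhere (e.g.\ Proposition~\ref{prop:viable-faces} and Lemma~\ref{lemma:unique-attractive-fp}), not the literal reading $-L+\iu\ad_{H_S}$ in the corollary's statement.
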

It is clear that such $H_S$ always exists since, in an appropriate basis, the condition only determines matrix elements below the diagonal.

As a consequence of Corollary~\ref{coro:stab-ham} and~\cite[Thm.~13]{BNT08b} we have the following relation between minimal collecting subspaces and extremal fixed points.
%\marginpar{\color{red}F: this theorem only shows \ref{it:minimal-collecting} $\Rightarrow$ \ref{it:unique-fixed}; what about the converse??}

\begin{lemma} \label{lemma:min-collecting}
Let $-L\in\wkl(n)$ be a Kossakowski--Lindblad generator and let $S\subseteq\mc H$ be a subspace.
The following are equivalent: \smallskip
\begin{enumerate}[(i)]
\item \label{it:minimal-collecting} $S$ is a minimal\,\footnote{A minimal collecting subspace is one which does not contain a smaller collecting subspace. It is an atom in the lattice of collecting subspaces.} collecting subspace of $S$.
\item \label{it:extremal-fixed} There is an extremal\,\footnote{An extremal fixed point is one which is not a non-trivial convex combination of two distinct fixed points, that is, it is an extreme point of the convex set of fixed points.} fixed point with support equal to $S$.
\item \label{it:unique-fixed} $-L$ has a unique fixed point on $S$ and it has full rank (on $S$).
\end{enumerate}
\end{lemma}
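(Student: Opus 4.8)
The plan is to prove \ref{it:extremal-fixed} $\Leftrightarrow$ \ref{it:unique-fixed} directly by a convexity argument, and then to tie \ref{it:minimal-collecting} to these via the correspondence between collecting subspaces and supports of fixed states recorded in Lemma~\ref{lemma:conf-fixed}. Throughout, a \emph{fixed point} will mean a fixed state $\rho\in\mf{pos}_1(n)$, and the recurring technical device is the following: if $\rho$ has full rank on $S=\supp(\rho)$ and $\sigma$ is any fixed state with $\supp(\sigma)\subseteq S$, then $\rho-\epsilon\sigma\geq0$ for all sufficiently small $\epsilon>0$ (since $\rho$ is positive definite on $S$), so that $\tau:=(\rho-\epsilon\sigma)/(1-\epsilon)$ is again a fixed state and $\rho=\epsilon\sigma+(1-\epsilon)\tau$.

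For \ref{it:extremal-fixed} $\Rightarrow$ \ref{it:unique-fixed} I would take $\rho$ extremal with $\supp(\rho)=S$, so $\rho$ has full rank on $S$; if some fixed state $\sigma\neq\rho$ satisfied $\supp(\sigma)\subseteq S$, the device above would express $\rho$ as a genuine convex combination of the distinct fixed states $\sigma$ and $\tau$ (one readily checks $\sigma\neq\tau$, else $\rho=\sigma$), contradicting extremality; hence $\rho$ is the unique fixed point on $S$. For the converse \ref{it:unique-fixed} $\Rightarrow$ \ref{it:extremal-fixed}, full rank gives $\supp(\rho)=S$, while any decomposition $\rho=ta+(1-t)b$ into fixed states forces $\supp(a),\supp(b)\subseteq S$, whence $a=b=\rho$ by uniqueness.

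To reach \ref{it:minimal-collecting}, I would first note for \ref{it:unique-fixed} $\Rightarrow$ \ref{it:minimal-collecting} that $S=\supp(\rho)$ is collecting by Lemma~\ref{lemma:conf-fixed}(i); were some nonzero $S'\subsetneq S$ collecting, Lemma~\ref{lemma:conf-fixed}(ii) would yield a fixed state supported in $S'$, distinct from $\rho$ and violating uniqueness, so $S$ is minimal. For \ref{it:minimal-collecting} $\Rightarrow$ \ref{it:unique-fixed} I would pass to the compact convex set $\mathcal F_S$ of fixed states supported on $S$, nonempty by Lemma~\ref{lemma:conf-fixed}(ii), and pick an extreme point $\rho$ (Krein--Milman in finite dimensions). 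Its support is collecting and nonzero, so minimality forces $\supp(\rho)=S$, i.e.~$\rho$ has full rank on $S$; the device then shows $\rho$ is the only element of $\mathcal F_S$, which is exactly \ref{it:unique-fixed}. These structural facts about collecting subspaces and their supporting fixed states are precisely what Lemma~\ref{lemma:conf-fixed} distills from \cite[Thm.~13]{BNT08b} and Corollary~\ref{coro:stab-ham}.

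The main obstacle is the implication \ref{it:minimal-collecting} $\Rightarrow$ \ref{it:unique-fixed}: upgrading the lattice-theoretic minimality of $S$ to genuine \emph{uniqueness} of the fixed state rather than merely to existence of a full-rank one. The key realization is that an extreme point of $\mathcal F_S$ necessarily has collecting support, so minimality pins that support to all of $S$ and thereby makes the full-rank positivity estimate $\rho-\epsilon\sigma\geq0$ available; this estimate, together with Krein--Milman, is where the argument is delicate and would fail if the support condition were overlooked.
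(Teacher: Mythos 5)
Your proof is correct. It is worth noting that the paper itself gives no argument for this lemma: it is stated as a direct consequence of Corollary~\ref{coro:stab-ham} and the structure theorem \cite[Thm.~13]{BNT08b}, so in the paper the content rests entirely on that external machinery. Your route is genuinely different and more elementary: you use only Lemma~\ref{lemma:conf-fixed} together with finite-dimensional convexity --- the compact convex set $\mathcal{F}_S$ of fixed states supported in $S$ has an extreme point, and whenever $\rho$ is full rank on $S$ the perturbation $\rho-\epsilon\sigma\geq0$ turns any second fixed state $\sigma$ supported in $S$ into a nontrivial convex decomposition $\rho=\epsilon\sigma+(1-\epsilon)\tau$ of fixed states. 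This device correctly handles all the implications: it gives (ii)$\,\Rightarrow\,$(iii) and (i)$\,\Rightarrow\,$(iii) by contradiction with extremality, while (iii)$\,\Rightarrow\,$(ii) and (iii)$\,\Rightarrow\,$(i) follow from the support inclusion $ta\leq\rho\Rightarrow\supp(a)\subseteq\supp(\rho)$ and from Lemma~\ref{lemma:conf-fixed}, respectively. What your approach buys is self-containedness (only Lemma~\ref{lemma:conf-fixed} is imported, itself a restatement of \cite[Prop.~5 \& Lem.~12]{BNT08b}); what the paper's citation buys is brevity and consistency with the surrounding appendix, which leans on \cite{BNT08b} throughout. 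One small correction of emphasis: in your final paragraph you present ``an extreme point of $\mathcal{F}_S$ necessarily has collecting support'' as the key realization, but \emph{every} fixed state has collecting support by Lemma~\ref{lemma:conf-fixed}(i); extremality plays no role there. Where extremality is genuinely needed --- and where your proof does use it --- is in upgrading the decomposition $\rho=\epsilon\sigma+(1-\epsilon)\tau$ to the conclusion $\mathcal{F}_S=\{\rho\}$.
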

This result recovers~\cite[Thm.~2]{Kraus08} and~\cite[Prop.~2]{Schirmer10}. \smallskip

%\begin{proof} (how to prove without invoquing same argument twice?)
%\ref{it:extremal-fixed} $\iff$ \ref{it:unique-fixed}: Clearly if $S$ supports a unique fixed point it is extremal. Conversely, if $S$ supports an extremal fixed point and some other fixed point, then all states on the line connecting them are fixed, contradicting extremality.
%\ref{it:minimal-collecting} $\iff$ \ref{it:unique-fixed}: If $S$ supports a unique fixed point of full rank then by Corollary~\ref{coro:stab-ham} $S$ is minimal collecting. Conversely if $S$ is minimal collecting there exists a fixed point, which by minimality must have full rank and again as above must be unique.
%\end{proof}

We say that a subspace $S$ is \emph{decaying} if $\tr(P_S e^{-Lt}\rho)\to0$ as $t\to\infty$ for all states $\rho\in\mf{pos}_1(n)$. 
The complement of a decaying subspace is called an \emph{asymptotic} subspace.
Note that the property of being decaying is preserved under taking sums\footnote{This follows from the fact that if $\tr(P_S\rho)=0$ and $\tr(P_T\rho)=0$, then $\supp(\rho)\subseteq S^\perp\cap T^\perp=(S+T)^\perp$ and hence $\tr(P_{S+T}\rho)=0$.} and subspaces.
Hence there exists a unique \emph{maximal decaying subspace}, and the decaying subspaces are exactly all of its subspaces.
Its complement is then the unique \emph{minimal asymptotic subspace}. 
Note that this coincides with the so-called ``four-corners decomposition'' introduced in~\cite{Albert16}.

\begin{lemma} %[Minimal asymptotic subspaces] 
\label{lemma:min-asymptotic}
Let $S\subseteq\mathbb C^n$ be a subspace.
The following are equivalent: \smallskip
\begin{enumerate}[(i)]
\item \label{it:min-asymptotic} $S$ is the minimal asymptotic subspace.
\item \label{it:all-fixed} $S$ is the smallest subspace containing the support of all fixed points $\rho\in\ker L$.
\item \label{it:all-min} $S$ is the span of all minimal invariant subspaces.
\end{enumerate} \smallskip
%\item \label{it:semisimple} The extended Lindblad algebra $\cV^+$ is semisimple (i.e.~its radical is trivial).\footnote{By radical we mean the Jacobson radical, which in our case coincides with the nil radical, see~\cite[Thm.~4.8]{Farenick01}.}
%\item \label{it:atomistic} The lattice of collecting subspaces (and $\{0\}$) is atomistic.\footnote{A lattice with least element $0$ is atomistic if every element is a least upper bound of a set of atoms, which are the minimal non-zero elements.}
%\end{enumerate}
\end{lemma}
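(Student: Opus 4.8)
The plan is to prove the chain of equivalences by first establishing \ref{it:all-fixed} $\Leftrightarrow$ \ref{it:all-min} via a convexity argument, and then \ref{it:min-asymptotic} $\Leftrightarrow$ \ref{it:all-fixed} by splitting into two inclusions, the second of which is the crux. For \ref{it:all-fixed} $\Leftrightarrow$ \ref{it:all-min}, note that the set of fixed-point states $\mathrm{Fix}:=\ker L\cap\mf{pos}_1(n)$ is compact and convex, so by the Krein--Milman theorem every element is a convex combination of extreme points. By Lemma~\ref{lemma:min-collecting} the extreme points of $\mathrm{Fix}$ are precisely the states whose support is a minimal collecting subspace, i.e.\ a minimal (nonzero) invariant subspace of $\cV^+$. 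Since the support of a convex combination $\sum_k p_k\rho_k$ with all $p_k>0$ equals $\sum_k\supp(\rho_k)$ (as $v\in\ker(\sum_k p_k\rho_k)$ forces $\langle v|\rho_k|v\rangle=0$, hence $v\in\ker\rho_k$, for each $k$), the smallest subspace containing the supports of all fixed points coincides with the span of the supports of the extreme fixed points, which is exactly the span of all minimal collecting subspaces.

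For \ref{it:min-asymptotic} $\Leftrightarrow$ \ref{it:all-fixed}, write $S_1$ for the minimal asymptotic subspace and $S_2$ for the subspace in \ref{it:all-fixed}, so that $S_1^\perp$ is the maximal decaying subspace. The inclusion $S_2\subseteq S_1$ is easy: if $\rho$ is a fixed point then $\tr(P_{S_1^\perp}\rho)=\tr(P_{S_1^\perp}e^{-Lt}\rho)\to 0$ as $t\to\infty$ because $S_1^\perp$ is decaying, whence $\tr(P_{S_1^\perp}\rho)=0$ and $\supp(\rho)\subseteq S_1$; taking the span over all fixed points yields $S_2\subseteq S_1$. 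The reverse inclusion $S_1\subseteq S_2$ is equivalent to showing that $S_2^\perp$ is decaying, and this is the main obstacle. The difficulty is that for a fixed state $\sigma$ the trajectory $e^{-Lt}\sigma$ need not converge, since $-L$ may have peripheral eigenvalues on the imaginary axis producing persistent rotating coherences. To handle this I would pass to the ergodic (Ces\`aro) average $\bar\sigma_T:=\tfrac1T\int_0^T e^{-Lt}\sigma\,dt$, which satisfies $-L\bar\sigma_T=\tfrac1T(e^{-LT}\sigma-\sigma)\to 0$ and, by the mean ergodic theorem together with compactness of the state space, converges to a fixed point $\bar\sigma_\infty\in\ker L$. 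Hence $\supp(\bar\sigma_\infty)\subseteq S_2$, and the non-negative function $f(t):=\tr(P_{S_2^\perp}e^{-Lt}\sigma)$ has vanishing time average, $\tfrac1T\int_0^T f=\tr(P_{S_2^\perp}\bar\sigma_T)\to\tr(P_{S_2^\perp}\bar\sigma_\infty)=0$.

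It then remains to upgrade the vanishing time average to genuine convergence $f(t)\to 0$. Using the spectral decomposition of $-L$, the function $f$ is a sum of a term decaying to zero (from eigenvalues with strictly negative real part) and an almost-periodic term (from the peripheral eigenvalues); a non-negative almost-periodic function with zero mean is identically zero, so the almost-periodic part vanishes and $f(t)\to 0$, proving $S_2^\perp$ decaying and thus $S_1\subseteq S_2$. Alternatively, one can invoke the structure theorem for the asymptotic subspace (the four-corners decomposition of~\cite{Albert16}): on $S_1$ the peripheral dynamics decomposes into blocks $\mathbb C^{a_k}\otimes\mathbb C^{b_k}$ carrying a full-rank stationary factor $\omega_k$, so the fixed states $\tfrac{\mathds1_{a_k}}{a_k}\otimes\omega_k$ have full support on each block and together span $S_1$, giving $S_1\subseteq S_2$ directly.

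I expect the almost-periodicity step---equivalently, the extraction of full-support fixed points from the rotating asymptotic dynamics---to be the main technical point, ruling out persistent population on $S_2^\perp$ due to peripheral coherences; the remaining inclusions and the convexity argument for \ref{it:all-fixed} $\Leftrightarrow$ \ref{it:all-min} are routine.
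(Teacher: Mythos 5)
Your proof is correct, but for the key step it takes a genuinely different route from the paper. The equivalence \ref{it:all-fixed} $\Leftrightarrow$ \ref{it:all-min} and the easy inclusion (supports of fixed points lie in the minimal asymptotic subspace) are handled exactly as in the paper---the paper simply cites Lemma~\ref{lemma:min-collecting} and calls the inclusion ``easy to see''; your Krein--Milman decomposition and the fact that $\supp(\sum_k p_k\rho_k)=\sum_k\supp(\rho_k)$ for $p_k>0$ just make that explicit. The difference is the hard inclusion $S_1\subseteq S_2$: the paper disposes of it in one line by citing \cite[Prop.~15]{BNT08b}, whereas you give a self-contained argument via Ces\`aro averages and almost-periodicity (your ``alternative'' route via the four-corners decomposition of \cite{Albert16} is essentially the same move as the paper's citation, just to a different reference). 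Your ergodic argument buys independence from the structure theory of \cite{BNT08b}, at the cost of two steps you should make explicit: first, the decomposition of $f(t)=\tr(P_{S_2^\perp}e^{-Lt}\sigma)$ into a decaying part $g$ plus a trigonometric polynomial $h$ requires that the peripheral eigenvalues of $-L$ be semisimple, which holds because the semigroup $\{e^{-tL}\}_{t\geq0}$ is bounded (it preserves the spanning compact set $\mf{pos}_1(n)$), so Jordan blocks on the imaginary axis would force polynomial growth; second, your principle ``a non-negative almost-periodic function with zero mean vanishes'' is applied to $h$, not to $f$, and $h=f-g$ is not obviously non-negative---one must first argue that $h\geq0$ everywhere, which follows from $h\geq -g\to0$ together with Bohr recurrence (a strictly negative value of $h$ would recur at arbitrarily large times, contradicting the decay of $g$). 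With these two points filled in, your proof is complete and arguably more informative than the paper's, since it exhibits concretely why peripheral (rotating) coherences cannot sustain population outside the span of the fixed-point supports; the paper's version is shorter but opaque without consulting \cite{BNT08b}.
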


\begin{proof} 
The equivalence of~\ref{it:all-fixed} and~\ref{it:all-min} follows from Lemma~\ref{lemma:min-collecting}.
It is easy to see that the support of a fixed point is contained in the minimal asymptotic subspace.
The converse is shown in~\cite[Prop.~15]{BNT08b}.
This proves the result.
%Now let $S$ be the minimal asymptotic subspace.
\end{proof}

An algebra of complex matrices which is closed under taking the adjoint is called a $*$-algebra.
If $\mc A$ is a $*$-algebra, then $S$ is invariant under $\mc A$ if and only if $P_SA=AP_S$ for all $A\in\mc A$, where $P_S$ is the orthogonal projection onto $S$.

\begin{definition}%[relaxation $*$-algebras]
%Let $-L\in\wkl(n)$ be given and let $(H,\{V_k\}_k)$ be a choice of Hamiltonian and Lindblad terms. 
We define the \emph{relaxation $*$-algebra} $\cV_*$ and the \emph{extended relaxation $*$-algebra} $\cV_*^+$ as
$$
\cV_* = \generate{\mathds1,V_k,V_k^*:k=1,\ldots,r}{alg}\,,
\qquad
\cV_*^+ = \generate{\mathds1,H,V_k,V_k^*:k=1,\ldots,r}{alg}\,.
$$
\end{definition}
Clearly, $\cV_*$ and $\cV_*^+$ are the $*$-algebras generated by $\cV$ and $\cV^+$ respectively, and hence they are well-defined.

\begin{lemma}
%If $S$ is the minimal asymptotic subspace, then 
The following are equivalent: \smallskip
\begin{enumerate}[(i)]
\item \label{it:no-decay} There is no decay, i.e.~$\mathbb C^n$ is the minimal asymptotic subspace.
\item \label{it:orthocompl} The lattice of collecting subspaces is orthocomplemented (cf.~footnote~\ref{footnote_orthocomplemented}) and so $\cV^+$ is a $*$-algebra.
\end{enumerate}
\end{lemma}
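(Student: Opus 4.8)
The plan is to reduce condition (i) to the existence of a faithful stationary state and to observe that the two assertions packaged in (ii) are equivalent to one another, so that only the orthocomplementation of the collecting lattice requires real work. First I would record that, by Lemma~\ref{lemma:min-asymptotic}, the minimal asymptotic subspace $A$ is the smallest subspace containing the supports of all fixed points; hence (i), i.e.\ $A=\C^n$, is equivalent to the existence of a faithful (full-rank) fixed point $\omega\in\mf{pos}_1(n)$, obtained by averaging finitely many fixed points whose supports span. On the other side, a unital matrix algebra is a $*$-algebra exactly when its invariant-subspace lattice is orthocomplemented (one direction is elementary, the converse is \cite[Thm.~11.5.1]{Gohberg06}, just as used for $\cV$ in the unital case). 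Since collecting subspaces are the invariant subspaces of the unital algebra $\cV^+$, the two halves of (ii) coincide, and the task becomes: no decay $\iff$ every collecting subspace is an enclosure.

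For the implication (ii)$\Rightarrow$(i) I would argue directly. The minimal asymptotic subspace $A$ is itself collecting, being the span of the minimal collecting subspaces (Lemma~\ref{lemma:min-asymptotic}\ref{it:all-min}). If (ii) holds, then $A^\perp$ is collecting as well, so by Lemma~\ref{lemma:conf-fixed} it would support a fixed point $\rho$ with $\supp(\rho)\subseteq A^\perp$. But every fixed-point support lies in $A$, forcing $\supp(\rho)\subseteq A\cap A^\perp=\{0\}$; since states have trace one this is impossible unless $A^\perp=\{0\}$, i.e.\ $A=\C^n$, which is no decay.

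The substantive direction is (i)$\Rightarrow$(ii). Given a faithful fixed point $\omega$ and a collecting subspace $S$, I would block-decompose everything along $\C^n=S\oplus S^\perp$. Collecting means $V_k$ and $K=\iu H_0+\tfrac12\sum_k V_k^*V_k$ are block upper triangular, so that $S^\perp$ fails to be collecting only through the off-diagonal blocks $B_k:=P_SV_kP_S^\perp$ and $P_SKP_S^\perp$. Projecting the stationary equation $\sum_k V_k\omega V_k^*=K\omega+\omega K^*$ onto the $(S^\perp,S^\perp)$ corner and setting $\sigma:=P_S^\perp\omega P_S^\perp$ (positive definite on $S^\perp$ by faithfulness), the corner equation rearranges into
\[
-\tilde L'(\sigma)=\tfrac12\big\{\sum_k B_k^*B_k,\,\sigma\big\},
\]
where $-\tilde L'$ is the genuine Lindblad generator on $S^\perp$ with Hamiltonian $P_S^\perp H_0P_S^\perp$ and Lindblad terms $P_S^\perp V_kP_S^\perp$. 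Taking the trace annihilates the left-hand side (Lindblad generators are trace-annihilating), leaving $\sum_k\tr(\sigma\,B_k^*B_k)=0$; faithfulness of $\sigma$ and positivity of $B_k^*B_k$ then force $B_k=0$ for all $k$, i.e.\ $[P_S,V_k]=0$. Feeding this back into $K$ collapses $P_SKP_S^\perp$ to $\iu P_SH_0P_S^\perp$, and the collecting condition $P_S^\perp KP_S=0$ combined with $[P_S,V_k]=0$ yields $P_S^\perp H_0P_S=0$, hence $P_SH_0P_S^\perp=0$ by Hermiticity. Thus $S^\perp$ is invariant under all of $\cV^+$, i.e.\ collecting, establishing orthocomplementation.

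The main obstacle is precisely this last implication: the payoff step is recognizing the $(S^\perp,S^\perp)$ block of the stationary equation as a bona fide Lindblad generator applied to $\sigma$ plus a manifestly positive ``leakage'' term $\tfrac12\{\sum_k B_k^*B_k,\sigma\}$, so that a single trace against the faithful state kills the dynamical part and isolates $B_k=0$. Everything else—the equivalence of no decay with a faithful fixed point, the direction (ii)$\Rightarrow$(i), and the final passage to the $*$-algebra statement—is short once Lemmas~\ref{lemma:min-asymptotic} and~\ref{lemma:conf-fixed} and the Gohberg orthocomplementation criterion are invoked.
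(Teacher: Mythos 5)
Your proposal is correct, but it takes a genuinely different route from the paper. The paper disposes of the hard direction (i)$\Rightarrow$(ii) by citing the proof of \cite[Prop.~14]{BNT08b} and then invoking \cite[Thm.~11.5.1]{Gohberg06} for the $*$-algebra statement, and it handles (ii)$\Rightarrow$(i) by noting that an orthocomplemented lattice is atomistic and applying Lemma~\ref{lemma:min-asymptotic}. You instead make the hard direction self-contained: reducing no decay to the existence of a faithful fixed point (the averaging argument is fine, since supports of positive semi-definite matrices add under convex combination), then running the block decomposition of the stationary equation $\sum_k V_k\omega V_k^*=K\omega+\omega K^*$ along $S\oplus S^\perp$, identifying the $(S^\perp,S^\perp)$ corner as a bona fide Lindblad generator applied to $\sigma=P_S^\perp\omega P_S^\perp$ plus the positive leakage term $\tfrac12\{\sum_k B_k^*B_k,\sigma\}$, and killing the latter with a trace against the faithful $\sigma$; the bookkeeping ($B_k=0$, then $P_S^\perp H_0P_S=0$ by Hermiticity) is correct. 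Your (ii)$\Rightarrow$(i) is also a small variant: rather than atomisticity, you note that the minimal asymptotic subspace $A$ is collecting (as a sum of collecting subspaces, via Lemma~\ref{lemma:min-asymptotic}), so orthocomplementation plus Lemma~\ref{lemma:conf-fixed} would place a fixed point's support inside $A\cap A^\perp=\{0\}$, a contradiction unless $A=\C^n$. What each approach buys: the paper's proof is three lines at the cost of outsourcing the substance to \cite{BNT08b}; yours is fully self-contained within the paper's toolkit and, as a bonus, makes explicit that under no decay every collecting subspace satisfies $[P_S,V_k]=[P_S,H_0]=0$, i.e.\ is an enclosure in the sense of Lemma~\ref{lemma:enclosure}, which is exactly the structural fact the lemma encodes.
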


\begin{proof}
The proof of~\cite[Prop.~14]{BNT08b} shows that if there is no decay, the lattice of collecting subspaces is orthocomplemented.
Thus, by~\cite[Thm.~11.5.1]{Gohberg06} $\cV^+$ is a $*$-algebra.
Conversely, if the lattice of collecting subspaces is orthocomplemented, it is atomistic\footnote{A lattice with least element $0$ is atomistic if every element is a least upper bound of a set of atoms, which are the minimal non-zero elements.}, and hence by Lemma~\ref{lemma:min-asymptotic} there is no decay.
\end{proof}

\begin{lemma} \label{lemma:enclosure}
Let $S\subseteq\mathbb C^n$ be a subset, and $P_S$ the orthogonal projection onto $S$.
Then the following are equivalent: \smallskip
\begin{enumerate}[(i)]
\item \label{it:biconf} Both $S$ and $S^\perp$ are collecting.
\item \label{it:star-invariant} $S$ is an invariant subspace of $\cV_*^+$.
\item \label{it:commute} $[P_S,V_k]=[P_S,H]=0$ for $k=1,\ldots,r$.
%\item $S$ is collecting and both $S$ and $S^\perp$ are lazy.
\end{enumerate}
We call such subspaces \emph{enclosures}.
%Clearly the orthocomplement of an enclosure is also an enclosure.
\end{lemma}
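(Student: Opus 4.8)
The plan is to establish the chain (i) $\Leftrightarrow$ (iii) $\Leftrightarrow$ (ii) by passing through the block decomposition $\mathbb C^n = S\oplus S^\perp$ and the commutant characterization of invariance for $*$-algebras. I would dispatch (ii) $\Leftrightarrow$ (iii) first, as it is immediate. Since $\cV_*^+$ is the $*$-algebra generated by $\mathds1, H_0, V_k, V_k^*$, the stated fact that invariance under a $*$-algebra $\mc A$ is equivalent to $P_S$ commuting with every element of $\mc A$ reduces the question to the generators: $S$ is invariant under $\cV_*^+$ iff $[P_S,H_0]=0$, $[P_S,V_k]=0$, and $[P_S,V_k^*]=0$ for all $k$. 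Because $[P_S,V_k]=0$ is equivalent to $[P_S,V_k^*]=0$ (take adjoints, using $P_S^*=P_S$), and $[P_S,\mathds1]=0$ is automatic, this collapses exactly to condition (iii) (with $H=H_0$).

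For (i) $\Leftrightarrow$ (iii) I would invoke Lemma~\ref{lemma:collecting}, which identifies collecting subspaces with invariant subspaces of $\cV^+$, together with the elementary fact that $S$ is invariant for a set of matrices iff $P_S^\perp A P_S=0$ for each generator. Writing each matrix in block form relative to $S\oplus S^\perp$, invariance of $S$ under $\cV^+$ says the lower-left blocks of $K$ and of all $V_k$ vanish, while invariance of $S^\perp$ (note $P_{S^\perp}=P_S^\perp$) says the upper-right blocks vanish. Hence (i) holds iff $K$ and every $V_k$ are block diagonal, i.e.\ iff $[P_S,V_k]=0$ and $[P_S,K]=0$. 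The one substantive step is translating $[P_S,K]=0$ into $[P_S,H_0]=0$: assuming $[P_S,V_k]=0$ for all $k$ (which forces $[P_S,V_k^*]=0$ and hence $[P_S,\sum_{k=1}^r V_k^*V_k]=0$), the identity $K=\iu H_0+\tfrac12\sum_{k=1}^r V_k^*V_k$ gives $[P_S,K]=\iu\,[P_S,H_0]$, so the two commutators vanish together. This closes (i) $\Leftrightarrow$ (iii), and combining with the previous paragraph yields the full equivalence.

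I do not expect a genuine obstacle; the only point requiring care is that one cannot read off $[P_S,H_0]=0$ from $[P_S,K]=0$ in isolation, since $K$ also carries the dissipative contribution $\tfrac12\sum_{k=1}^r V_k^*V_k$. For this reason I would present (i) $\Leftrightarrow$ (iii) as a single combined argument in which the conditions on the $V_k$ are established first and then fed into the analysis of $K$, rather than treating $K$ and the $V_k$ independently. Everything else is routine block-matrix bookkeeping, so the proof should be short.
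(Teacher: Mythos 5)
Your proposal is correct and follows essentially the same route as the paper: both arguments rest on Lemma~\ref{lemma:collecting}, the block decomposition along $S\oplus S^\perp$ (equivalently, the commutant characterization of invariance for $*$-algebras), and the identity $K-K^*=2\iu H_0$, which lets one trade $[P_S,K]=0$ for $[P_S,H_0]=0$ once $[P_S,V_k]=0$ is in hand. The only difference is organizational: the paper pivots through (ii), citing that $\cV_*^+$ is the $*$-algebra generated by $\cV^+$, whereas you pivot through (iii) and spell out explicitly the one substantive step that the paper compresses into ``clear'' and ``elementary''.
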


\begin{proof}
\ref{it:biconf} $\Leftrightarrow$ \ref{it:star-invariant}: This is clear since $\cV_*^+$ is the $*$-algebra generated by $\cV^+$.
\ref{it:star-invariant} $\Leftrightarrow$ \ref{it:commute}: Elementary.
\end{proof}
Enclosures imply the existence of conserved quantities and (dynamical) symmetries of the Kossakowski--Lindblad generator~\cite{BNT08b,Albert14}, but they are not necessary.

\begin{lemma}
\label{lemma:generate-matrix-algebra}
Assume that $A\in\C^{n\times n}$ is not a scalar multiple of the identity. Then there exists a Hermitian matrix $B\in\C^{n\times n}$ such that they generate the entire matrix algebra, that is, $\generate{A,B}{alg}=\C^{n\times n}$.
\end{lemma}

\begin{proof}
Let $B\in\mathbb R^{n\times n}$ is any diagonal matrix with distinct non-zero diagonal elements.
If we can find $U\in\U(n)$ unitary such that $(U^*AU)_{jk}\neq 0$ for all $j\neq k$, then \cite[Lem.~2]{Laffey92} implies that the algebra generated by $\{U^*AU,B\}$ is all of $\mathbb C^{n\times n}$;
in particular $\generate{A,UBU^*}{alg}=\C^{n\times n}$ which would conclude the proof.
%As $A$ is not a multiple of the identity, for any $j\neq k$ there exists $V\in\U(n)$ unitary such that $(V^*AV)_{jk}\neq0$.
%Since $V\mapsto(V^*AV)_{jk}$ is a real analytic function, it's zero set is Zariski closed, and by the above it is a proper subset of $\U(n)$ 
Consider the function $U\mapsto (U^*AU)_{ij}$ defined on the unitary group and let $Z_{ij}$ denote the corresponding zero set. 
Since the unitary group is a real analytic manifold and since the function is real analytic, $Z_{ij}$ is either all of $\U(n)$ or it has open dense complement in $\U(n)$, cf.~\cite[Thm.~6.3.3]{Krantz02}.
If $A$ is not a multiple of the identity, none of the $Z_{ij}$ equal $\U(n)$. 
Hence the union of all $Z_{ij}$ has open dense complement, and hence there is $U$ such that $(U^*AU)_{jk}\neq 0$ for all $j\neq k$, as desired.
%As $A$ is not a multiple of the identity there obviously exists $V\in\mathbb C^{n\times n}$ unitary such that $V^*AV$ is not diagonal.
%(If $A$ is diagonal, one can use the Fourier matrix $F$, since $F^*AF$ is circulant, and if it were diagonal, it would be a scalar multiple of the identity.) Hence we assume without loss of generality that $A$ is not diagonal.
%Now let $P:=\sum_{j=1}^n e_je_{j+1}^\top$ where $e_{n+1}:=e_1$; because $P$ is unitary there exists $H\in\mathbb C^{n\times n}$ Hermitian matrix such that $P=e^{iH}$.
%(such $H$ exists by \cite[Coro.~11.10]{Hall03})
%We claim that the matrix $U$ we are looking for can be chosen $Ve^{iHt}$ for some $t\in\mathbb R$.
%Indeed---because the permutation $P$ consists of only one cycle---for
%all $j\neq k$
%there exists $\ell$ such that $((P^*)^\ell V^*AV P^\ell)_{jk}\neq 0$.
% Then the images of this arc under all cyclic permutations of the form $P^k$ form a directed cycle in the graph. 
%Thus $\{t\mapsto (e^{-iHt}V^*AVe^{iHt})_{jk}\,:\,j\neq k\}$ consists of non-zero analytic functions, meaning they are non-zero except for isolated $t$.
%Hence there exists $t\in\mathbb R$ such that $(e^{-iHt}V^*AVe^{iHt})_{jk}\neq 0$ for all $j\neq k$, as desired.
\end{proof}

\begin{corollary}
\label{coro:generate-matrix-algebra}
Let $A,C\in\C^{n\times n}$ such that $A$ is not a multiple of the identity. Then there exists $B\in\C^{n\times n}$ Hermitian such that $\generate{A,B+C}{alg}=\C^{n\times n}$.
\end{corollary}

\begin{proof}
%By Lemma~\ref{lemma:generate-matrix-algebra} (and its proof) there exists $B\in\mathbb C^{n\times n}$ real diagonal with simple spectrum and of full rank such that $\generate{A,UBU^*}{alg}=\C^{n\times n}$.
Let $B,U$ be as in the proof of Lemma~\ref{lemma:generate-matrix-algebra}.
%It suffices to show that $\generate{A,\tfrac{B}{\epsilon}+C}{alg}=\C^{n\times n}$ for some $\epsilon>0$.
By \cite[Ch.~2~§1]{Kato80}, for all $\epsilon>0$ small enough, there exist analytic curves $S(\epsilon)$ and $B(\epsilon)$ such that $B+\epsilon U^*CU = S(\epsilon)B(\epsilon)S(\epsilon)^{-1}$, with $S(0)=\mathds1$ and $B(\epsilon)$ diagonal with distinct non-zero diagonal elements. 
Then
\begin{align*}
\generate{A,\tfrac{UBU^*}{\epsilon}+C}{alg}
&=
U\generate{U^*AU,S(\epsilon)B(\epsilon)S(\epsilon)^{-1}}{alg}U^*
\\&= 
US(\epsilon)\;\generate{S(\epsilon)^{-1}AS(\epsilon),B(\epsilon)}{alg}\;S(\epsilon)^{-1}U^*\,.
\end{align*}
Since for $\epsilon$ small enough it still holds that $(S(\epsilon)^{-1}U^*AU S(\epsilon))_{jk}\neq 0$ for all $j\neq k$, the result follows again from \cite[Lem.~2]{Laffey92}.
\end{proof}

Using Corollary~\ref{coro:generate-matrix-algebra} together with~\cite[Thm.~12]{TV09} we obtain a useful extension of Corollary~\ref{coro:stab-ham}.

\begin{lemma} \label{lemma:unique-attractive-fp}
Let $S\subseteq\mathbb C^n$ be a lazy subset for $-L$. Then there exists a Hamiltonian $H$ such that $-(\iu\ad_H+L)$ has a unique (attractive) fixed point with support $S$ if and only if $S$ is not an enclosure. 
\end{lemma}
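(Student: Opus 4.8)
The plan is to fix a basis adapted to $S$ and treat the two implications separately. Choosing $U\in\U(n)$ with $S=\linspan(Ue_1,\dots,Ue_d)$ and conjugating, I may assume $S=\linspan(e_1,\dots,e_d)$; laziness then means (Lemma~\ref{lemma:lazy}) that every $V_k$ is block upper triangular, $V_k=\big(\begin{smallmatrix}A_k&B_k\\0&D_k\end{smallmatrix}\big)$ with $A_k$ acting on $S$ and $D_k$ on $S^\perp$. Since the Hamiltonian is ours to choose, the only intrinsic obstruction coming from Lemma~\ref{lemma:enclosure} is whether the $V_k$ by themselves already force $S^\perp$ to be invariant, i.e.\ whether $[P_S,V_k]=0$ (equivalently all $B_k=0$); in that case $S$ and $S^\perp$ are both lazy and a block-diagonal Hamiltonian completes $S$ to an enclosure. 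Thus ``$S$ is not an enclosure'' is read as the statement that some $B_k\neq0$, i.e.\ that there is genuine dissipative transport from $S^\perp$ into $S$.

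For the ``only if'' direction I argue contrapositively, in fact proving the stronger statement that block-diagonal $V_k$ already obstruct. Suppose all $V_k$ are block diagonal and that $-(\iu\ad_H+L)$ had a fixed point $\rho$ with $\supp(\rho)=S$. Then $S$ is collecting (Lemma~\ref{lemma:conf-fixed}), i.e.\ invariant under $\cV^+$; as $\tfrac12\sum_kV_k^*V_k$ is block diagonal, the requirement that $K=\iu(H_0+H)+\tfrac12\sum_kV_k^*V_k$ map $S$ into $S$ forces the $(S^\perp,S)$ block of the total Hamiltonian to vanish, and Hermiticity then makes the whole Hamiltonian block diagonal. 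But then $S^\perp$ is collecting as well, so Lemma~\ref{lemma:conf-fixed} yields a second fixed point supported in $S^\perp$, contradicting uniqueness.

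For the ``if'' direction I construct $H$ in two stages. First, laziness lets me invoke Corollary~\ref{coro:stab-ham} to pin down the $(S^\perp,S)$ block of $H$ so that $S$ becomes collecting, guaranteeing at least one fixed point supported in $S$. Second, I use the remaining freedom in the block of $H$ acting inside $S$, together with Corollary~\ref{coro:generate-matrix-algebra}, to perturb the restricted generator so that $\{A_k\}$ and this block generate all of $\mathbb C^{d\times d}$; by~\cite[Thm.~12]{TV09} the restricted generator on $S$ then has a unique attractive full-rank fixed point $\rho_S$. Finally, because some $B_k\neq0$ the complement $S^\perp$ is not lazy and hence not collecting, so no fixed point can be trapped outside $S$; combining this with the asymptotic-subspace description of Lemmas~\ref{lemma:min-collecting} and~\ref{lemma:min-asymptotic} identifies $S$ as the minimal asymptotic subspace and $\rho=\rho_S\oplus0$ as the unique globally attractive fixed point.

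The delicate point — and where~\cite[Thm.~12]{TV09} carries the real weight — is reconciling the two stages. The perturbation of the in-$S$ block furnished by Corollary~\ref{coro:generate-matrix-algebra} must leave intact the off-diagonal block that makes $S$ collecting, and one must verify that incoming transport from $S^\perp$ genuinely enforces global attractivity rather than mere local stability. The hardest subcase is when the restrictions $A_k=P_SV_kP_S$ are all scalar, so that the dissipation induced strictly inside $S$ is trivial: here a Hamiltonian alone cannot render the in-$S$ generator primitive, and one is forced to exploit the transport encoded in the nonzero $B_k$ to drive $S$ toward a unique full-rank state while preventing its support from leaking into $S^\perp$. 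Controlling this interplay is the crux of the argument.
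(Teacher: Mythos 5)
Your ``only if'' half is correct, and it is essentially the argument the paper itself uses elsewhere (in the proof of Lemma~\ref{lemma:ham-reachable}, which the paper's own proof of this lemma does not even spell out); your decision to read ``$S$ is not an enclosure'' as the Hamiltonian-independent condition $[P_S,V_k]\neq 0$ for some $k$ is also the right one, since the control can always absorb $H_0$. The gap lies in your ``if'' direction.

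The problematic step is ``because some $B_k\neq 0$ the complement $S^\perp$ is not lazy and hence not collecting, so no fixed point can be trapped outside $S$.'' This is a non sequitur: the support of a fixed point need only be \emph{some} collecting subspace, and a proper subspace $T\subsetneq S^\perp$ can perfectly well be collecting even though $S^\perp$ is not. Your two stages fix the $(S^\perp,S)$ block of $H$ (Corollary~\ref{coro:stab-ham}) and the $(S,S)$ block (Corollary~\ref{coro:generate-matrix-algebra}), but never constrain the $(S^\perp,S^\perp)$ block, which is exactly what is needed to destroy such $T$. Concretely, take $n=3$, $S=\linspan(e_1)$, $H_0=0$, and a single Lindblad term $V=|e_1\rangle\langle e_2|$: here $S$ is lazy and not an enclosure, stage 1 pins $P_S^\perp HP_S=0$, and stage 2 is vacuous since $\dim S=1$, so $H=0$ is a legitimate output of your construction; yet $\linspan(e_3)$ is collecting, both $|e_1\rangle\langle e_1|$ and $|e_3\rangle\langle e_3|$ are fixed points, and nothing is unique or attractive. (The conclusion of the lemma does hold in this example, but only for an $H$ coupling $e_3$ to $e_2$.) Destroying all such $T$ simultaneously is precisely the content of \cite[Thm.~12]{TV09}, and that is how the paper deploys it: TV09 furnishes $H$ outside of $S$ making $S$ \emph{attractive}, whence by Lemma~\ref{lemma:min-asymptotic} every fixed point is supported in $S$; the leftover freedom on the $(S,S)$ block together with Corollary~\ref{coro:generate-matrix-algebra} and Lemma~\ref{lemma:min-collecting} then makes $S$ minimal collecting, i.e.\ gives uniqueness within $S$. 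You invoke TV09 instead for the within-$S$ uniqueness (where it is not needed --- Lemma~\ref{lemma:min-collecting} suffices) and leave the global statement, where TV09 actually carries the weight, unproven.

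A second issue concerns the subcase you single out as the crux, namely all $A_k=P_SV_kP_S$ scalar with $\dim S\geq 2$. Your claim that the transport in $B_k$ can then be exploited ``to drive $S$ toward a unique full-rank state'' is not merely unproven but impossible: any fixed point with support $S$ forces $S$ to be collecting (Lemma~\ref{lemma:conf-fixed}), and once $S$ is collecting, states supported on $S$ remain so and evolve only through the $A_k$ and the Hamiltonian; with scalar $A_k$ this within-$S$ dynamics is purely Hamiltonian, whose fixed-point set is never a singleton when $\dim S\geq 2$. So in this subcase no Hamiltonian whatsoever yields a unique fixed point with support $S$ --- a blind spot shared by the paper, whose appeal to Corollary~\ref{coro:generate-matrix-algebra} tacitly requires some $A_k$ to be non-scalar --- and in particular your proposed resolution of it cannot succeed.
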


\begin{proof}
In \cite[Thm.~12]{TV09} it is shown how to choose $H$ such that $S$ is attractive.
We are still free to choose $H$ on $S$ itself.
So using Corollary~\ref{coro:generate-matrix-algebra} we can make sure that $S$ is minimal collecting and hence by Lemma~\ref{lemma:min-collecting} it supports a unique fixed point.
\end{proof}

\bibliographystyle{siamplain}
\bibliography{../Bibliography/general.bib}
%\printbibliography

\end{document}